\newcommand{\bra}[1]{\langle #1|}
\newcommand{\ket}[1]{|#1\rangle}
\newcommand{\braket}[2]{\langle #1|#2\rangle}
\newtheorem{theorem}{Theorem}
\begin{document}

%
%
%

\thispagestyle{empty}
\begin{center}
\includegraphics[width=\textwidth]{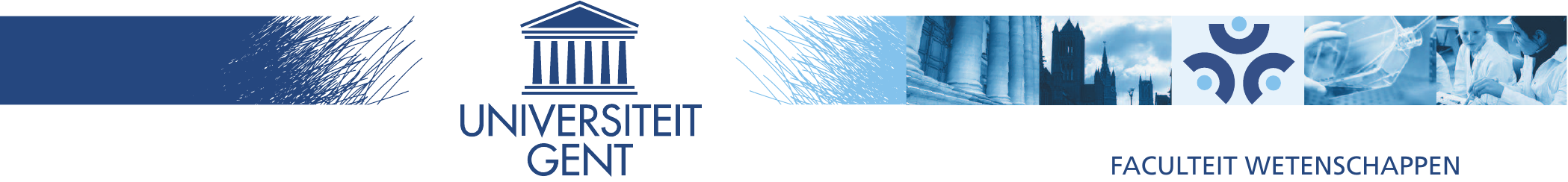}
\end{center}
\begin{flushright}
\vspace*{1.2cm}

{\Large \bf \textsf{Variational determination of the two-particle density matrix as a quantum many-body technique\\}}
\vspace{0.2cm}
\rule{\textwidth}{1mm}\\
\vspace{0.4cm}
{\large  \textsf{Brecht Verstichel}\\}
\end{flushright}

\vspace*{\fill}
\par
\noindent
{\normalsize \textsf{Promotor: Prof. dr. Dimitri Van Neck}}\\
{\normalsize \textsf{Co-promotor: Prof. dr. Patrick Bultinck}}\\
\vspace{0.2cm}
\\
{\normalsize \textsf{Proefschrift ingediend tot het behalen van de academische graad van\\
Doctor in de Wetenschappen: Fysica}}\\
\vspace{-0.2cm}\\
\noindent
{\normalsize \textsf{Universiteit Gent}}\\
{\normalsize \textsf{Faculteit Wetenschappen}}\\
{\normalsize \textsf{Vakgroep Fysica en Sterrenkunde}}\\
{\normalsize \textsf{Academiejaar 2011-2012}}\\

\clearpage{\pagestyle{empty}\cleardoublepage}

%

\frontmatter

\tableofcontents

\mainmatter

\chapter{Introduction}
In the introductory Section, one is supposed to set the stage for the coming Chapters. This stage is the quantum many-body problem. Its importance lies in the fact that it provides the fundamental description of processes in fields as varied as atomic, molecular, solid state and nuclear physics. Apart from the thrill of exploring physical phenomena at the quantum level there is also the realization that predicting and manipulating such microscopic processes has powered much of the 20th century technology, and is likely to lead to further breakthroughs in our 21st century.

In fact, the quantum mechanical description of many interacting particles is a problem that has been around since the dawn of quantum mechanics. Already in 1929 Dirac wrote \cite{dirac}:
\begin{quote}
The underlying physical laws necessary for the mathematical theory of a large part of physics and the whole of chemistry are thus completely known, and the difficulty is only that the exact application of these laws leads to equations much too complicated to be soluble.
\end{quote}
What Dirac meant by this is that, in principle, the formalism to treat many-electron problems arising in the study of atoms, molecules and solids is completely known. Assuming that one can neglect relativity and that electrons interact solely through the Coulomb interaction, quantum mechanics provides the recipe by which the problem can be tackled. This recipe consists of solving the Schr\"odinger equation:
\begin{equation}
\hat{H}\Psi = E\Psi~.
\label{schrodinger}
\end{equation}
Mathematically, this is nothing but an ordinary eigenvalue equation in a linear space endowed with an inproduct, called Hilbert space in the context of quantum mechanics. The $N$-particle wave function $\Psi$ is a vector in Hilbert space. The linear and Hermitian operator $\hat{H}$ is called the Hamiltonian, and contains all the information about the interparticle interactions. The eigenvalue $E$ denotes the possible energies of the system.
When identical particles are considered, an extra permutation symmetry of the wave function $\Psi$ has to be imposed, better known as the Pauli-principle for electrons.

The problem is that the direct application of this recipe is necessarily limited to small systems, because of the exponential scaling of Hilbert space with the number of particles involved. This makes the solution of the eigenvalue problem in Eq.~(\ref{schrodinger}) problematic as it scales as $O(n^3)$ where $n$ is the dimension of Hilbert space. A linear spin chain, consisting of sites with spin-$\frac{1}{2}$ degrees of freedom, provides a simple example that illustrates this dramatic scaling of Hilbert space. A single spin-$\frac{1}{2}$ site is described by a wave function consisting of two components, a spin-up and a spin-down term:
\begin{equation}
\ket{\Psi} = C_\uparrow \ket{\uparrow} + C_\downarrow\ket{\downarrow}~.
\end{equation}
Only two complex numbers $C_\sigma$ are needed to completely characterize this state. For describing two sites one requires:
\begin{equation}
\ket{\Psi} = C_{\uparrow\uparrow} \ket{\uparrow\uparrow} 
+ C_{\downarrow\uparrow}\ket{\downarrow\uparrow}
+ C_{\uparrow\downarrow} \ket{\uparrow\downarrow}
+ C_{\downarrow\downarrow}\ket{\downarrow\downarrow}
~,
\end{equation}
so four numbers are needed. It is easy to appreciate that for an $L$-spin state the general wavefunction is:
\begin{equation}
\ket{\Psi} = \sum_{\sigma_i \in \{\uparrow,\downarrow\}} C_{\sigma_1\ldots\sigma_L}\ket{\sigma_1,\ldots,\sigma_L}~,
\end{equation}
and one requires $2^L$ numbers to completely describe the state. This observation led Dirac to remark:
\begin{quote}
It therefore becomes desirable that approximate practical methods of applying quantum mechanics should be developed, which can lead to an explanation of the main features of complex atomic systems without too much computation.
\end{quote}
Over the last eighty years a great variety of such approximate methods have been developed \cite{fetter_walecka,bijbel,QMC,schollwock,verstraete}: from perturbation theory and cluster expansions over self-consistent field and variational methods to renormalization group methods and stochastical techniques like Quantum Monte Carlo. Most of these approximate methods try to somehow capture the relevant information, present in the wave function, in a reduced object. 

If the quantum many-body problem is the stage, the main protagonist in this thesis is the reduced density matrix. The reduced density matrix method discussed in this thesis is an approximate method that tries to replace the wave function, with its exponentially scaling number of variables, with the two-particle density matrix (2DM), for which only a quartically scaling number of variables are needed. This is a very efficient reduction since these are just the degrees of freedom needed for the exact evaluation of the energy.

The reduced density matrix makes its first appearance in the work of Dirac, in which the single-particle density matrix (1DM) is used in the description of Hartree-Fock theory \cite{dirac_dm}. Husimi \cite{husimi} was the first to note that, for a system of identical particles interacting only in a pairwise manner, the energy can be expressed exactly as a function of the 2DM. This becomes very clear in second-quantized notation (see {\it e.g.} \cite{fetter_walecka,bijbel}), where a system of identical particles interacting pairwise is described by the general Hamiltonian:
\begin{equation}
\hat{H} = \sum_{\alpha\beta}t_{\alpha\beta}a^\dagger_\alpha a_\beta + \frac{1}{4}\sum_{\alpha\beta\gamma\delta}V_{\alpha\beta;\gamma\delta}a^\dagger_\alpha a^\dagger_\beta a_\delta a_\gamma~.
\label{2pham}
\end{equation}
The expectation value for the energy of any ensemble of $N$-particle wave functions $\ket{\Psi_i^N}$ with positive weights $w_i$, can then be expressed as a function of the 2DM alone:
\begin{equation}
\sum_iw_i\bra{\Psi^N_i}\hat{H}\ket{\Psi^N_i}= \mathrm{Tr}~\Gamma H^{(2)} = \frac{1}{4}\sum_{\alpha\beta\gamma\delta}\Gamma_{\alpha\beta;\gamma\delta}H^{(2)}_{\alpha\beta;\gamma\delta}~,
\label{ener_func}
\end{equation}
in which we have introduced the 2DM:
\begin{equation}
\Gamma_{\alpha\beta;\gamma\delta} = \sum_iw_i\bra{\Psi^N_i}a^\dagger_\alpha a^\dagger_\beta a_\delta a_\gamma \ket{\Psi^N_i}~,\qquad\text{with}\qquad\sum_iw_i=1~,
\label{2DM_intro}
\end{equation}
and the reduced two-particle Hamiltonian,
\begin{equation}
H^{(2)}_{\alpha\beta;\gamma\delta} = \frac{1}{N-1}\left(\delta_{\alpha\gamma}t_{\beta\delta} - \delta_{\alpha\delta}t_{\beta\gamma} - \delta_{\beta\gamma}t_{\alpha\delta} + \delta_{\beta\delta}t_{\alpha\gamma}\right) + V_{\alpha\beta;\gamma\delta}~.
\label{reduced_ham}
\end{equation}
The idea to use the 2DM as a variable in a variational scheme was first published in literature by L\"owdin in his groundbreaking article \cite{lowdin}, but even earlier, in 1951, John Coleman tried a practical variational calculation on Lithium. To his surprise, the energy he obtained was far too low, after which he realized the variation was performed over too large a class of 2DM's \cite{rdm_book}. Independently and unaware of the work by L\"owdin and Coleman, Joseph Mayer \cite{mayer} used the 2DM in a study of the electron gas. In a reply to Mayer's paper, Tredgold \cite{tredgold} pointed out the unphysical nature of the results, and suggested that additional conditions on the density matrix are needed to improve on them. 

These results led Coleman, in his seminal review paper \cite{coleman}, to formulate the $N$-rep\-re\-sen\-ta\-bi\-li\-ty problem. This is the problem of finding the necessary and sufficient conditions which a reduced density matrix has to fulfil to be derivable from a statistical ensemble of physical wave functions, {\it i.e.} expressible as in Eq.~(\ref{2DM_intro}). In this paper he also derived the necessary and sufficient conditions for ensemble $N$-representability of the 1DM (see also Section~\ref{n_rep_1dm}), and some bounds on the eigenvalues of the 2DM. A big step forward was the derivation of the $\mathcal{Q}$ and $\mathcal{G}$ matrix positivity conditions by Garrod and Percus \cite{garrod}. These were practical constraints, which allowed for a computational treatment of the problem. The first numerical calculation using these conditions on the Beryllium atom \cite{fusco,garrod_mih_ros} was very encouraging, as the results were highly accurate. It turned out, however, that Beryllium, due to its simple electronic stucture, is a special case where these conditions perform very well. A subsequent study showed that these conditions do not work well at all for nuclei \cite{mihailovic,rosina}. This disappointing result, together with the computational complexity of the problem, caused activity in the field to diminish for the next 25 years. The change came with the development of a new numerical technique, called semidefinite programming, which turned out to be very suited for the determination of the 2DM under matrix positivity constraints. Maho Nakata {\it et al.} \cite{nakata_first} were the first to use a standard semidefinite programming package to calculate the ground-state energies of some atoms and molecules, and obtained quite accurate results. He was quickly followed by Mazziotti \cite{mazziotti}. These results reinvigorated interest in the method, and sparked of a lot of developments. New $N$-representability conditions were introduced, {\it e.g.} the three-index $\mathcal{T}$ conditions (see Section~\ref{three_index}), set forth by Zhao {\it et al.} \cite{zhao}, which led to mHartree accuracy \cite{hammond,nakata_last,mazz_T_con,Gido_T_con,mazz_book,braams_book} for molecules near equilibrium geometries. 

In recent years interest in the method has been growing, as the variational determination of the 2DM results in a lower bound, which is highly complementary to the upper bound obtained in variational approaches based on the wave function. In addition, the method is essentially non-perturbative in nature, and has a completely different structure unrelated to other many-body techniques. A lot of activity has been devoted to the search for new $N$-representability conditions, which improve the result in a computationally cheap way \cite{dimi,qsep}. There have also been efforts to improve the semidefinite programming algorithms by adapting them to the specific problem of density matrix optimization \cite{maz_prl,maz_bp,primal_dual}, allowing the study larger systems.

In this thesis we discuss the framework of density matrix optimization, its applications to physical systems, and the contributions we made to the field. In Chapter~\ref{n_rep} the $N$-representability problem is introduced. We start by discussing its formal definition, and how useful necessary constraints can be derived. A non-exhaustive overview of what is known about $N$-representability is provided, and some of the non-standard approaches we developed are discussed. The next Chapter is devoted to the semidefinite formulation of density matrix optimization. We present a detailed analysis of the three different semidefinite programming algorithms that were developed during the PhD. A comparison is made of their performance using the one-dimensional Hubbard model as a benchmark. Chapter~\ref{symmetry} is rather technical, and deals with how the semidefinite programming algorithms can be made more performant by exploiting symmetries. In Chapter~\ref{applications} we present the results that are obtained when applying this method to a variety of physical systems, and how new $N$-representability constraints are derived when the results are not satisfactory. In the application of the method to the one-dimensional Hubbard model a drastic failure of the standard two-index $N$-representability constraints in the strong-correlation limit is observed. An analysis is made of why these constraints fail, and what the relevant correlations are that need to be included in order to fix the problem. In Chapter~\ref{v2.5DM} an approach is presented which includes these correlations, without becoming computionally too expensive. We present results which show that the strong-correlation limit is well described using this approach, and that the quality of the results is improved for the whole phase diagram. Finally, in Chapter~\ref{conclusions} we draw conclusions about the method, discuss what is needed to make it a competitive electronic structure method, and provide ideas for future research.

\chapter{\label{n_rep}$N$-representability}

In the introduction we introduced the $N$-representability problem, and how it was discovered historically. In this Chapter we try to give an overview of what is known about exact and approximate $N$-representability of reduced density matrices. In Section \ref{def_n_rep} the $N$-representability problem is placed in the broader context of what is known in mathematics as marginal problems, and different definitions of $N$-representability are introduced. In Section \ref{standard_n_rep} we proceed by deriving the necessary and sufficient $N$-representability conditions for the one-particle density matrix, and some necessary conditions on the two-particle density matrix that are well known and frequently used. The last part of this Chapter deals with non-standard $N$-representability constraints, such as subsystem constraints or diagonal conditions. For more information on some of the mathematical concepts used in this Chapter we refer to Appendix \ref{math_notes}.

\section{\label{def_n_rep}Definitions of $N$-representability}

The $N$-representability problem, as introduced by Coleman \cite{coleman}, is actually a special case of a set of problems known in mathematics as marginal problems \cite{classical_marg}. Given a probability distribution with $N$ variables, $p(x_1,\ldots,x_N)$, a $k$-marginal distribution is defined as:
\begin{equation}
^k_Np(x_{i_1},\ldots,x_{i_k}) = \sum_{x_1,\ldots,x_{i_1 - 1},x_{i_1 + 1},\ldots,x_{i_k -1},x_{i_k + 1},\ldots,x_N}p(x_1,\ldots,x_N)~.
\label{clas_marg}
\end{equation}
The classical marginal problem can be formulated as the question what the conditions are that a set of $\left(\begin{matrix}N\\k\end{matrix}\right)$ $k$-marginal distributions has to fulfil to be consistently derivable from an $N$-variable probability distribution as in Eq.~(\ref{clas_marg}). The logical quantum extension to this problem is to replace the probability distribution with a wave function:
\begin{equation}
p(x_1,\ldots,x_N) \rightarrow \psi^*(x_1,\ldots,x_N) \psi(x_1,\ldots,x_N)~.
\label{translate_cm2qm}
\end{equation}
The right-hand side of Eq.~(\ref{translate_cm2qm}) is just the $N$-particle density matrix ($N$DM) expressed in some basis $\ket{x_1\ldots x_N}$ as introduced by Johann Von Neumann \cite{von_neumann}: 
\begin{equation}
{}^ND= \sum_i w_i \ket{\Psi^N_i}\bra{\Psi^N_i} \qquad\text{with weights}\qquad w_i \geq 0 \qquad\text{and}\qquad \sum_i w_i = 1~.
\label{von_neumann_DM}
\end{equation}
He considers the $N$DM to be an object better suited for quantum mechanics, because it describes a system as a statistical ensemble of pure states, as opposed to the wave function framework which can only handle pure states. This is an advantage, because in reality a system is entangled with its environment, and a measurement on the system is described by a statistical ensemble of states, and not by a pure state. Because of the description of an $N$DM as a statistical ensemble of pure-state density matrices, the set of $N$DM's is convex (see Appendix \ref{math_notes}). From its definition in Eq.~(\ref{von_neumann_DM}) it is clear that an $N$DM must be Hermitian, positive semidefinite and have unit trace. 
As with classical probability distributions, we can define marginal or reduced $p$-density matrices ($p$DM), where $(N-p)$ particles are traced out:
\begin{equation}
^p_ND_{\alpha_{i_1}\ldots\alpha_{i_p};\beta_{i_1}\ldots\beta_{i_p}} = \sum_{\lambda_1\ldots\lambda_{i_1 - 1}\lambda_{i_1+1}\ldots} {}^ND_{\lambda_1\ldots\lambda_{i_1 -1}\alpha_{i_1}\lambda_{i_1 + 1}\ldots;\lambda_1\ldots\lambda_{i_1 -1}\beta_{i_1}\lambda_{i_1 + 1}\ldots}~.
\label{pDM}
\end{equation}
The quantum marginal problem can be defined as establishing the conditions that a set of $\left(\begin{matrix}N\\p\end{matrix}\right)$ $p$DM's must fulfil to be consistently derivable from an $N$DM as in Eq.~(\ref{pDM}) \cite{klyachko}. For systems of identical particles, fermions or bosons, every $p$DM derivable by (\ref{pDM}) has to be the same, because they can be mapped onto each other by a simple permutation of the indices. The $N$-representability problem is just the quantum marginal problem limited to systems of identical particles. Rewritten in the language of second quantization, a $p$DM is $N$-representable if it can be derived from an ensemble of $N$-particle wave functions:
\begin{equation}
^p_N\Gamma_{\alpha_1\ldots\alpha_p;\beta_1\ldots\beta_p} = \sum_i w_i~ \bra{\Psi^N_i}a^\dagger_{\alpha_1}\ldots a^\dagger_{\alpha_p}a_{\beta_{p}}\ldots a_{\beta_1}\ket{\Psi^N_i}~.
\label{pDM_sq}
\end{equation}
When looking at Eq.~(\ref{pDM_sq}) we see that the $p$DM is Hermitian and positive semidefinite, the trace is $\frac{N!}{(N-p)!}$, but it is not at all clear which additional conditions a physical $p$DM has to satisfy. In physical terms, the question we are asking is: what are the constraints that are put on $p$-particle properties of a quantum system when it is part of a larger system of $N$ identical particles. 
\subsection{\label{dual_N_rep}Dual definition of $N$-representability}
Using the variational principle, and the fact that the set of $N$-representable $p$DM's is convex, we can derive an alternative definition of $N$-representability, which turns out to be very useful to derive necessary conditions. Let's first define a real symmetric $p$-particle operator:
\begin{equation}
\hat{H}^{(p)} = \sum_{\alpha_1\ldots\alpha_p}\sum_{\beta_1\ldots\beta_p} H^{(p)}_{\alpha_1\ldots\alpha_p;\beta_1\ldots\beta_p}a^\dagger_{\alpha_1}\ldots a^\dagger_{\alpha_p}a_{\beta_p}\ldots a_{\beta_1}~.
\label{p-ham}
\end{equation}
The expectation value of this operator in any statistical ensemble of $N$-particle states can be expressed using only the $p$DM derived from this state:
\begin{eqnarray}
\nonumber\sum_iw_i\bra{\Psi^N_i}\hat{H}^{(p)}\ket{\Psi^N_i} &=& \sum_{\alpha_1\ldots\alpha_p}\sum_{\beta_1\ldots\beta_p}{}^p_N\Gamma_{\alpha_1\ldots\alpha_p;\beta_1\ldots\beta_p} H^{(p)}_{\alpha_1\ldots\alpha_p;\beta_1\ldots\beta_p}\\
&=& \mathrm{Tr}~\left[{}^p_N\Gamma~H^{(p)}\right]~.
\label{p-exp}
\end{eqnarray}
Being an expectation value, the trace on the right of Eq.~(\ref{p-exp}) cannot be lower than the lowest eigenvalue $E^N_0\left(H^{(p)}\right)$. A lower value therefore implies that the $^p_N\Gamma$ used cannot be physical. As such the first part of our proof can be stated: if a $p$DM is $N$-representable, then for all possible $p$-particle operators $H^{(p)}_\nu$:
\begin{equation}
\mathrm{Tr}~\left[{}^p_N\Gamma~H^{(p)}_\nu\right] \geq E^N_0\left(H^{(p)}_\nu\right)~.
\end{equation}
Suppose that we are given a $p$DM, $^p_N\Gamma^*$, which is not $N$-representable. From the separating hyperplane theorem for convex sets \cite{nagy} it follows that a such a $p$DM can always be separated from the $N$-representable set by a hyperplane. A hyperplane $\mathcal{C}$ in $p$DM space (see Appendix \ref{math_notes}) is defined by a number $x$ and a matrix $O^{(p)}$ as:
\begin{equation}
^p_N\Gamma \in \mathcal{C}\qquad \text{if} \qquad\mathrm{Tr}~\left[{}^p_N\Gamma~O^{(p)}\right] = x~.
\label{hyperplane}
\end{equation}
The fact that a non $N$-representable $p$DM is always separated from the $N$-representable set by a hyperplane means that one can always find a $p$-Hamiltonian for which:
\begin{equation}
\mathrm{Tr}~\left[{}^p_N\Gamma^*~H^{(p)}\right] < x~,
\end{equation}
whereas for all the $^p_N\Gamma$ in the $N$-representable set:
\begin{equation}
\mathrm{Tr}~\left[{}^p_N\Gamma~H^{(p)}\right] > x~.
\end{equation}
This means that:
\begin{equation}
\mathrm{Tr}~\left[{}^p_N\Gamma^*~H^{(p)}_\nu\right] < x \leq E^N_0\left(H^{(p)}\right) \leq \mathrm{Tr}~\left[{}^p_N\Gamma~H^{(p)}_\nu\right]~.
\end{equation}
This implies that for a non $N$-representable $p$DM, we can always find a $p$-Hamiltonian for which the expectation value of the energy is lower than the ground-state energy of the Hamiltonian.
The dual definition of $N$-representability can now be stated as:
\begin{theorem}
\label{theorem_dual_n_rep}
A $p$DM is $N$-representable if and only if 
\begin{equation}
\mathrm{Tr}~\left[{}^p_N\Gamma~H^{(p)}_\nu\right] \geq E^N_0\left(H^{(p)}_\nu\right)~,
\label{eq_dual_n_rep}
\end{equation}
for all $p$-particle Hamiltonians $H^{(p)}_\nu$.
\end{theorem}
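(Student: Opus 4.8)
The statement is an equivalence, so the plan is to prove the two implications separately. The forward direction — that $N$-representability forces the inequality — follows directly from the variational principle. If ${}^p_N\Gamma$ arises from an ensemble $\{w_i,\ket{\Psi^N_i}\}$ as in Eq.~(\ref{pDM_sq}), then by Eq.~(\ref{p-exp}) the trace $\mathrm{Tr}[{}^p_N\Gamma\,H^{(p)}_\nu]$ equals the ensemble average $\sum_i w_i\bra{\Psi^N_i}\hat H^{(p)}\ket{\Psi^N_i}$. Each summand is a genuine expectation value of $\hat H^{(p)}$ in a normalized $N$-particle state and hence cannot fall below the lowest eigenvalue $E^N_0(H^{(p)}_\nu)$; since the $w_i$ are nonnegative and sum to one, the convex combination is bounded below by the same value, which is exactly Eq.~(\ref{eq_dual_n_rep}).

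For the converse I would argue by contraposition, exploiting the convexity of the $N$-representable set established above. Suppose ${}^p_N\Gamma^*$ is not $N$-representable. The set of $N$-representable $p$DM's is a closed convex set, so the separating hyperplane theorem \cite{nagy} yields a hyperplane — specified by a real number $x$ and a symmetric operator $O^{(p)}$ as in Eq.~(\ref{hyperplane}) — that strictly separates ${}^p_N\Gamma^*$ from this set. Reading $O^{(p)}$ as the matrix of a $p$-particle Hamiltonian $H^{(p)}$ of the form in Eq.~(\ref{p-ham}), I would fix the orientation so that $\mathrm{Tr}[{}^p_N\Gamma^*\,H^{(p)}] < x$ while $\mathrm{Tr}[{}^p_N\Gamma\,H^{(p)}] > x$ for every $N$-representable ${}^p_N\Gamma$. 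Because every pure-state $p$DM is trivially $N$-representable, taking ${}^p_N\Gamma$ to be the one built from the ground state of $\hat H^{(p)}$ shows $x \leq E^N_0(H^{(p)})$, and therefore $\mathrm{Tr}[{}^p_N\Gamma^*\,H^{(p)}] < E^N_0(H^{(p)})$. This violates Eq.~(\ref{eq_dual_n_rep}) for the choice $H^{(p)}_\nu = H^{(p)}$, completing the contrapositive.

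The main obstacle is this converse, and specifically the two identifications that make the separating hyperplane usable. First, one must check that the separating functional really corresponds to a legitimate $p$-particle Hamiltonian: the theorem only hands us a bounded linear functional on $p$DM space, and I would need its representing operator $O^{(p)}$ to be Hermitian so that the associated $\hat H^{(p)}$ is self-adjoint with a real spectrum and a well-defined $E^N_0$. Second — and this is the crux — I must relate the separating constant $x$ to the ground-state energy, i.e. establish $x \leq E^N_0(H^{(p)})$; this hinges on the infimum of $\mathrm{Tr}[{}^p_N\Gamma\,H^{(p)}]$ over the $N$-representable set being attained (the set is compact once Hermiticity and the trace are fixed) and equal to $E^N_0(H^{(p)})$. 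Care is also needed to justify that the set is genuinely \emph{closed}, so that a non-representable $\Gamma^*$ sits at positive distance from it and the separation is strict rather than merely weak.
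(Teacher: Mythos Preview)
Your argument is correct and mirrors the paper's proof essentially step for step: the forward direction via the variational principle applied to the ensemble, and the converse by contraposition using the separating hyperplane theorem, with the link $x \leq E^N_0(H^{(p)})$ obtained by evaluating the separating functional on the ground-state $p$DM. The technical caveats you flag (Hermiticity of the separating operator, closedness of the $N$-representable set) are points the paper simply takes for granted, so your treatment is if anything slightly more careful, but not a different route.
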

In Fig. \ref{n_rep_fig} an artist impression of this theorem is shown. The image is trying to convey the idea that the convex $N$-representable set is created by an infinite number of hyperplanes, which are defined as in Eq.~(\ref{hyperplane}) by all possible $p$-Hamiltonians and their ground-state energies.
\begin{figure}
\centering
\includegraphics[scale=0.5]{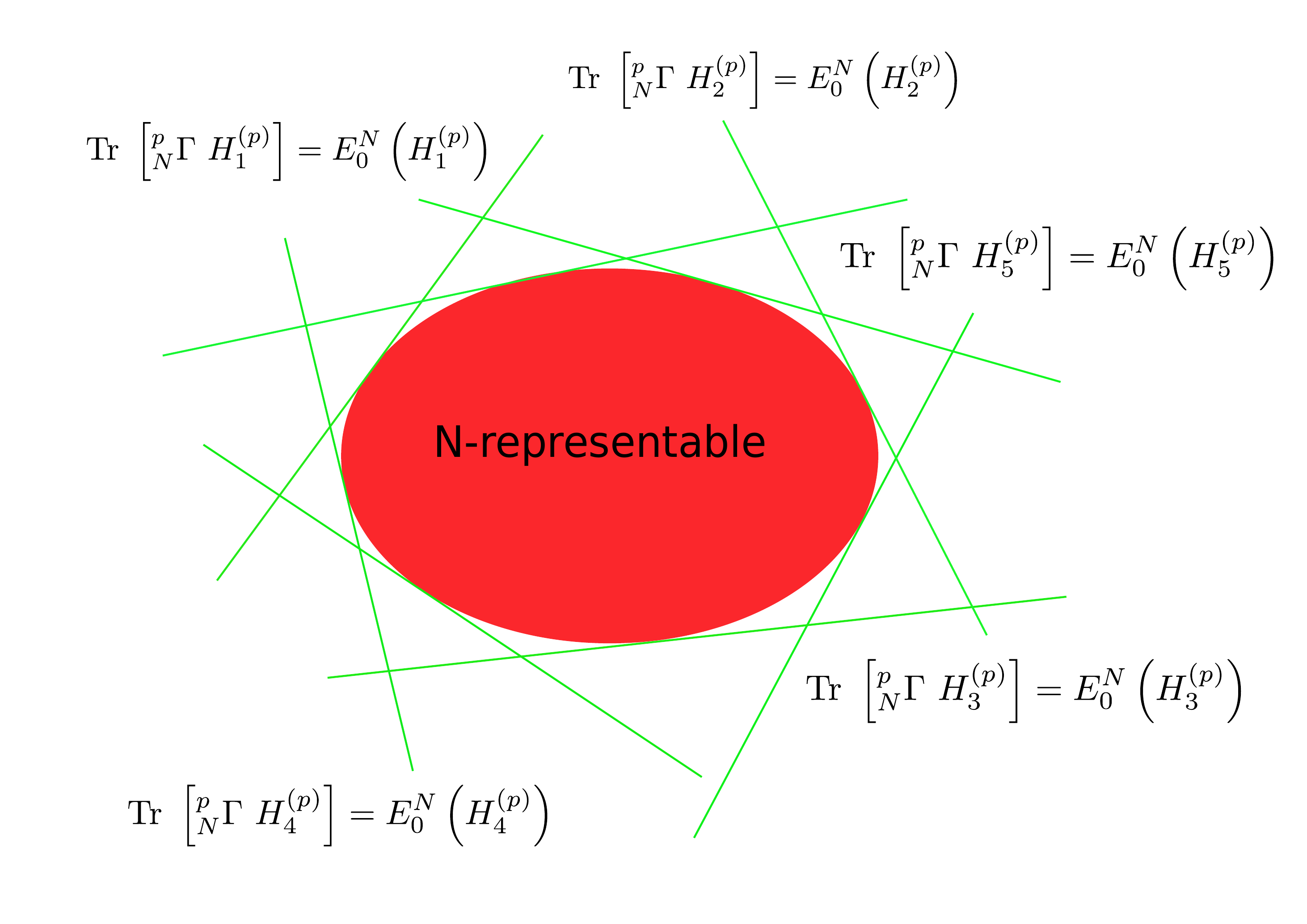}
\caption{\label{n_rep_fig} Schematic representation of Theorem \ref{theorem_dual_n_rep}, it shows the $N$-representable region demarcated by an infinite number of hyperplanes defined by $p$-Hamiltonians and their ground-state energies.}
\end{figure}

\section{\label{standard_n_rep}Standard $N$-representability conditions}
In the previous Section we discussed the origin of the $N$-representability problem, and provided two equivalent definitions. In the first, Eq.~(\ref{pDM_sq}), the $N$-particle wave function is referenced. It would of course be senseless to replace the wave function by the density matrix in a variational approach, and then reintroduce the wave function in order to impose $N$-representability. The second definition, in Eq.~(\ref{eq_dual_n_rep}), is no better in a practical sense, as it requires the knowledge of the ground-state energy of all $p$-particle Hamiltonians, which is of course exactly what we are trying to find. 

In this Section we derive some useful necessary $N$-representability constraints which are used in practical calculations. A necessary constraint is valid for all $N$-representable $p$DM's, but it could be valid for non $N$-representable $p$DM's too. {\it E.g.} all $N$-representable $p$DM's are antisymmetrical in their single-particle indices, but not all $p$-particle matrices that are antisymmetric in their single-particle indices are $N$-representable. It follows that if we optimize a $p$DM over a limited set of necessary constraints the resulting energy is a lower bound to the exact energy, because we have varied over a set that is too large, as can be seen in Figure \ref{nec_n_rep}.
\begin{figure}
\centering
\includegraphics[scale=0.5]{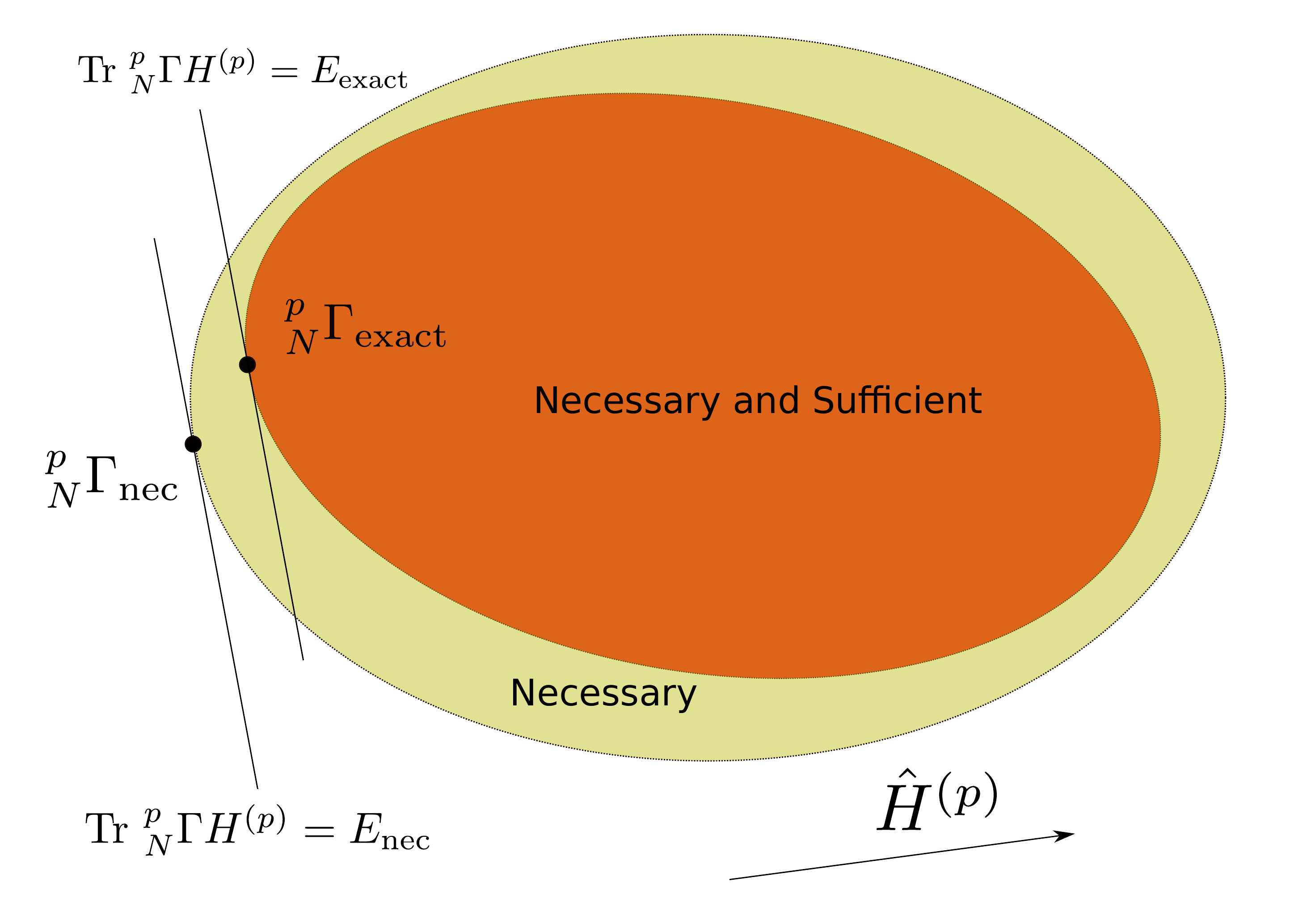}
\caption{\label{nec_n_rep} The set of $p$DM's which satisfies the necessary $N$-representability conditions is larger than the exact set, which means that the optimization of a 2DM under these constraints, for a certain Hamiltonian $\hat{H}^{(p)}$, leads to a lower bound in the energy: $E_{\text{nec}} < E_{\text{exact}}$.}
\end{figure}
\subsection{\label{n_rep_1dm}$N$-representability of the 1DM}
The first practical $N$-representability conditions were derived for the 1DM \cite{coleman}. This object is used so frequently we introduce a special symbol $\rho$ for it:
\begin{equation}
\rho_{\alpha\beta} = ~^1_N\Gamma_{\alpha\beta} = \sum_i w_i \bra{\Psi^N_i}a^\dagger_\alpha a_\beta\ket{\Psi^N_i}~.
\label{1DM}
\end{equation}
At this point it is interesting to note that $N$-representability is invariant under unitary transformations of the single-particle basis, as is seen from the definition of the reduced $p$DM in Eq.~(\ref{pDM}). This means that when we have an $N$-representable $p$DM, we can rotate the single-particle basis at will, and stay inside the $N$-representable set. A consequence for the 1DM is that the necessary and sufficient $N$-representability conditions must be expressible as a function of its eigenvalues alone!

From Eq.~(\ref{1DM}) we can derive some obvious necessary conditions on the 1DM, {\it i.e.}:
\begin{eqnarray}
\mathrm{Tr}~\rho &=& \sum_{\alpha} \rho_{\alpha\alpha} = N~,\\
\rho_{\alpha\beta} &=& \rho_{\beta\alpha}~,\\
\label{1DM_pos}\rho &\succeq& 0~.
\end{eqnarray}
A more systematic way of finding necessary conditions is through the dual definition of $N$-representability in Eq.~(\ref{eq_dual_n_rep}). The idea is to use a class of Hamiltonians for which a lower bound to the groundstate energy is known, and use this to obtain necessary constraints on the $p$DM. One such class consists of the manifestly positive Hamiltonians, for which we know the ground-state energy is larger than zero:
\begin{equation}
\hat{H}_\text{pos} = \sum_i \hat{B}^\dagger_i \hat{B}_i~,
\label{pos_ham}
\end{equation}
in which $\hat{B}^\dagger$ is a $p$-particle operator. For the 1DM there are two independent classes of Hamiltonians that can be constructed:
\begin{enumerate}
\item $\hat{B}^\dagger = \sum_\alpha p_\alpha a^\dagger_\alpha$ leads to the condition:
\begin{equation}
\sum_{\alpha\beta} p_\alpha \bra{\Psi^N}a^\dagger_\alpha a_\beta \ket{\Psi^N} p_\beta \geq 0~,
\end{equation}
which is equivalent to the already stated positivity condition on the 1DM (\ref{1DM_pos}).
\item $\hat{B}^\dagger = \sum_\alpha q_\alpha a_\alpha$ leads to the condition:
\begin{equation}
\sum_{\alpha\beta} q_\alpha \bra{\Psi^N}a_\alpha a^\dagger_\beta \ket{\Psi^N} q_\beta \geq 0~.
\label{q_index}
\end{equation}
This is a new matrix positivity condition, which can be expressed as a linear map of the 1DM through the use of anticommutation relations:
\begin{equation}
q \succeq 0 \qquad\text{where}\qquad q(\rho)_{\alpha\beta} = \delta_{\alpha\beta} - \rho_{\alpha\beta}~.
\label{q_con}
\end{equation}
The interpretation of this condition is that the probability of finding a hole in a particular single-particle state has to be positive. Equivalently, the number of particles occupying an orbital can't be larger than one.
\end{enumerate}
These two conditions combined lead to the result that the eigenvalues of a fermionic 1DM have to lie between 0 and 1, which is just an expression of the Pauli exclusion principle.

We now have derived necessary conditions for the 1DM. Are these sufficient to ensure $N$-representability? To prove this it is instructive to look at the structure of a 1DM derived from a Slater determinant. A Slater determinant has $N$ occupied single-particle orbitals, so the 1DM has $N$ eigenvalues equal to 1, and the rest is zero. An arbitrary 1DM that satifisies (\ref{1DM_pos}) and (\ref{q_con}) has eigenvalues between 0 and 1. It has been proved that one can always find an ensemble of Slater 1DM's which has the same eigenvalues and eigenvectors \cite{coleman}. From this fact it follows that the conditions are also sufficient, because every 1DM that satisfies them can be derived from an ensemble of Slater 1DM's.

Finding the ground-state energy of an arbitrary single-particle Hamiltonian $H^{(1)}_\nu$ can now be reformulated as a variational problem:
\begin{eqnarray}
\label{1DM_energy}E^N_0\left(H^{(1)}_\nu\right) &=& \min_{\rho} \mathrm{Tr}~\left[\rho H^{(1)}_\nu\right]~,\\
\text{u.c.t.}&&\left\{
   \begin{array}{l}
   \rho \succeq 0~,\\
   q(\rho) \succeq 0~,\\
   \mathrm{Tr}~\rho = N~.
   \end{array}
   \right.
   \label{1DM_constraints}
\end{eqnarray}
In the ground-state 1DM, the particles are distributed over the $N$ lowest eigenstates of the Hamiltonian $\hat{H}^{(1)}_\nu$, because of the conditions (\ref{1DM_constraints}). The energy expectation value of two-particle Hamiltonians cannot be expressed using the 1DM alone, one also needs the 2DM. We can however construct an \emph{uncorrelated} 2DM out of a 1DM:
\begin{equation}
^2_N\Gamma_{\alpha\beta;\gamma\delta} = \left(\rho\wedge\rho\right)_{\alpha\beta;\gamma\delta} = \rho_{\alpha\gamma}\rho_{\beta\delta} - \rho_{\alpha\delta}\rho_{\beta\gamma}~.
\label{uncor_2DM}
\end{equation}
Using this object in a variational optimization under the constraints (\ref{1DM_constraints}) leads to exactly the same result as with the Hartree-Fock approximation \cite{bijbel}.
\subsection{Necessary conditions for the 2DM}
In the previous Section we have seen that is relatively easy to derive the necessary and sufficient conditions for the 1DM. This should give us hope that it won't be that hard to derive conditions for the 2DM. Unfortunately it is almost impossible to derive necessary and sufficient conditions for the 2DM. In fact, the $N$-representability problem for the 2DM has been proven to belong to the Quantum Merlin-Arthur (QMA) complete complexity class \cite{qma}. This is not surprising, as we have seen in Theorem \ref{theorem_dual_n_rep} that the $N$-representability problem is equivalent to determining the ground-state energy for all possible two-particle Hamiltonians. Finding the ground-state energy for only one two-particle Hamiltonian already scales exponentially with system size, which gives an indication of the complexity of the $N$-representability problem. In this Section we show how people try to get around this and derive some frequently used necessary conditions on the 2DM. From now on we drop the indices on $\Gamma$ for the 2DM and define it as:
\begin{equation}
\Gamma_{\alpha\beta;\gamma\delta} = ~^2_N\Gamma_{\alpha\beta;\gamma\delta} = \sum_i w_i \bra{\Psi^N_i}a^\dagger_\alpha a^\dagger_\beta a_\delta a_\gamma\ket{\Psi^N_i}~.
\label{2DM}
\end{equation}
Immediately one sees some necessary conditions from Eq.~(\ref{2DM}):
\begin{eqnarray}
\mathrm{Tr}~\Gamma &=& \frac{1}{2}\sum_{\alpha\beta}\Gamma_{\alpha\beta;\alpha\beta} = \frac{N(N - 1)}{2}~,\\
\Gamma_{\alpha\beta;\gamma\delta} &=& \Gamma_{\gamma\delta;\alpha\beta}~,\\
\Gamma_{\alpha\beta;\gamma\delta} &=& -\Gamma_{\beta\alpha;\gamma\delta} = -\Gamma_{\alpha\beta;\delta\gamma} = \Gamma_{\beta\alpha;\delta\gamma}~,\\
\Gamma &\succeq& 0~.
\end{eqnarray}
These are, however, not at all sufficient to give a good approximation to the ground-state energy. For deriving further necessary conditions we use the same method as for the 1DM, using positive Hamiltonians with the structure of (\ref{pos_ham}).
\subsubsection{Two-index conditions}
Combining two creation and/or annihilation operators yields four different forms of the $\hat{B}^\dagger$ operator.
\paragraph{The $\mathcal{I}$ condition:}
the Hamiltonian constructed from $\hat{B}^\dagger = \sum_{\alpha\beta} p_{\alpha\beta} a^\dagger_\alpha a^\dagger_\beta$ has to have an expectation value larger than zero:
\begin{equation}
\sum_{\alpha\beta\gamma\delta} p_{\alpha\beta}\bra{\Psi^N}a^\dagger_\alpha a^\dagger_\beta a_\delta a_\gamma\ket{\Psi^N}p_{\gamma\delta} \geq 0~.
\end{equation}
This is just an expression of the fact that $\Gamma$ has to be positive semidefinite, which we already knew. Physically speaking, it is the expression of the fact that the probability of finding a two-particle pair is larger than zero. We refer to this condition as the $\mathcal{I}$ condition, for the identity condition.
\paragraph{The $\mathcal{Q}$ condition:}
in exactly the same way $\hat{B}^\dagger = \sum_{\alpha\beta}q_{\alpha\beta}a_\alpha a_\beta$ leads to:
\begin{equation}
\sum_{\alpha\beta\gamma\delta} q_{\alpha\beta}\bra{\Psi^N}a_\alpha a_\beta a^\dagger_\delta a^\dagger_\gamma\ket{\Psi^N}q_{\gamma\delta} \geq 0~,
\end{equation}
expressing the positive semidefiniteness of the two-hole matrix, {\it i.e.} the probability of finding a two-hole pair has to be larger than zero:
\begin{equation}
\mathcal{Q} \succeq  0\qquad\text{where}\qquad \mathcal{Q}_{\alpha\beta;\gamma\delta} = \sum_i w_i \bra{\Psi^N_i}a_\alpha a_\beta a^\dagger_\delta a^\dagger_\gamma\ket{\Psi^N_i}~,
\end{equation}
as first derived by C. Garrod and J. K. Percus in \cite{garrod}.
This matrix can be rewritten as a function of the 2DM and the 1DM by reordening the creation and annihilation operators using anticommutation relations, leading to the following linear matrix map:
\begin{equation}
\mathcal{Q}(\Gamma)_{\alpha\beta;\gamma\delta} = \delta_{\alpha\gamma}\delta_{\beta\delta} - \delta_{\beta\gamma}\delta_{\alpha\delta} +\Gamma_{\alpha\beta;\gamma\delta} - \left(\delta_{\alpha\gamma}\rho_{\beta\delta} - \delta_{\alpha\delta}\rho_{\beta\gamma} - \delta_{\beta\gamma}\rho_{\alpha\delta} + \delta_{\beta\delta}\rho_{\alpha\gamma}\right)~.
\label{Q_2DM}
\end{equation}
One can see that the combination of the $\mathcal{I}$ and $\mathcal{Q}$ conditions already insures that the Pauli principle is obeyed, as the $\rho$ and $q$ condition can be derived from them by tracing one index:
\begin{eqnarray}
\rho_{\alpha\gamma} &=& \frac{1}{N - 1}\sum_{\beta}\Gamma_{\alpha\beta;\gamma\beta}~,\\
q_{\alpha\gamma} &=& \frac{1}{M - N - 1}\sum_{\beta}\mathcal{Q}(\Gamma)_{\alpha\beta;\gamma\beta}~,
\end{eqnarray}
in which $M$ is the dimension of single-particle space. It is seen that $\rho$ and $q$ are automatically positive semidefinite if $\mathcal{I}$ and $\mathcal{Q}$ are. This, however, does not mean that enforcing the $\mathcal{I}$ and $\mathcal{Q}$ conditions leads us to a better result than Hartree-Fock. The set over which we are varying is too large, so we get a lower bound to the ground-state energy. Applying only $\mathcal{I}$ and $\mathcal{Q}$ can lead to considerable overestimation of the correlation energy, defined as the difference between the exact and the Hartree-Fock energy. 

Two more constraints can be defined in the same way, using $aa^\dagger$ or $a^\dagger a$ operators. They were first derived in \cite{garrod}, but in a different form (see Section~\ref{primes}).
\paragraph{The $\mathcal{G}_1$ condition:}
the particle-hole operator $\hat{B}^\dagger = \sum_{\alpha\beta}g^1_{\alpha\beta}a^\dagger_{\alpha}a_\beta$ leads to the following positivity expression:
\begin{equation}
\sum_{\alpha\beta\gamma\delta}g^1_{\alpha\beta}\bra{\Psi^N}a^\dagger_\alpha a_\beta a^\dagger_\delta a_\gamma \ket{\Psi^N}g^1_{\gamma\delta} \geq 0~,
\end{equation}
which is a mathematical translation of the fact that the probability of finding a particle-hole pair must be larger than zero. We call this condition the $\mathcal{G}_1$ condition:
\begin{equation}
\mathcal{G}_1 \succeq 0 \qquad\text{where}\qquad \left(\mathcal{G}_1\right)_{\alpha\beta;\gamma\delta} = \sum_iw_i\bra{\Psi^N_i}a^\dagger_\alpha a_\beta a^\dagger_\delta a_\gamma \ket{\Psi^N_i}~,
\label{G_con}
\end{equation}
which can again be rewritten as a matrix map of the 2DM using anticommutation relations:
\begin{equation}
\mathcal{G}_1\left(\Gamma\right)_{\alpha\beta;\gamma\delta} = \delta_{\beta\delta}\rho_{\alpha\gamma} - \Gamma_{\alpha\delta;\gamma\beta}~.
\label{G1_2DM}
\end{equation}
Note that there is no symmetry between the single-particle orbitals here. This condition turns out to be much more stringent than the $\mathcal{I}$ or $\mathcal{Q}$ conditions. The combined conditions $\mathcal{IQG}_1$ are known as the standard two-index conditions. Applying these can already lead to very good approximations for some systems, {\it e.g.} the Beryllium atom. For other systems, however, additional constraints are needed to obtain decent results.
\paragraph{The $\mathcal{G}_2$ condition:} the only remaining combination of creation an annihilation operators is $\hat{B}^\dagger = \sum_{\alpha\beta}g^2_{\alpha\beta}a_\alpha a^\dagger_\beta$, leading to:
\begin{equation}
\sum_{\alpha\beta\gamma\delta}g^2_{\alpha\beta}\bra{\Psi^N}a_\alpha a^\dagger_\beta a_\delta a^\dagger_\gamma \ket{\Psi^N}g^2_{\gamma\delta} \geq 0~,
\end{equation}
{\it i.e.} the probability of finding a hole-particle pair has to be positive. We call it the $\mathcal{G}_2$ condition:
\begin{equation}
\mathcal{G}_2 \succeq 0 \qquad\text{where}\qquad \left(\mathcal{G}_2\right)_{\alpha\beta;\gamma\delta} = \sum_iw_i\bra{\Psi^N_i}a_\alpha a^\dagger_\beta a_\delta a^\dagger_\gamma \ket{\Psi^N_i}~.
\end{equation}
Expressed as a function of the 2DM this becomes:
\begin{equation}
\mathcal{G}_2\left(\Gamma\right)_{\alpha\beta;\gamma\delta} = \delta_{\alpha\beta}\delta_{\gamma\delta} - \delta_{\alpha\beta}\rho_{\gamma\delta} - \delta_{\gamma\delta}\rho_{\alpha\beta} + \delta_{\alpha\gamma}\rho_{\beta\delta} - \Gamma_{\alpha\delta;\gamma\beta}~.
\label{G2_2DM}
\end{equation}
As it turns out this is not an independent condition, as is shown in Section~\ref{primes}.
\subsubsection{\label{three_index}Three-index conditions}
As mentioned in the previous paragraph, the two-index conditions sometimes give a very good result, but often stronger conditions are needed describe systems correctly. One example of stronger conditions are the so-called three-index conditions. They are also derived from positive Hamiltonians of the type (\ref{pos_ham}), but slightly different from those used in the previous paragraph. For these conditions, use is made of the fact that taking the anticommutator of fermionic three-particle operators lowers the rank by one, and the expectation value is therefore expressible as a function of the 2DM alone. The general form of the positive Hamiltonian used here is:
\begin{equation}
\hat{H} = \hat{B}^\dagger \hat{B} + \hat{B}\hat{B}^\dagger~,
\label{pos_ham_3i}
\end{equation}
in which the $\hat{B}^\dagger$'s are combinations of three creation and/or annihilation operators. There are three independent conditions that can be derived in this way. These conditions were first alluded to in the work of R.M. Erdahl \cite{erdahl}, and explicitly used in \cite{zhao,hammond}.
\paragraph{The $\mathcal{T}_1$ condition}
If $\hat{B}^\dagger = \sum_{\alpha\beta\gamma}t^1_{\alpha\beta\gamma}a^\dagger_\alpha a^\dagger_\beta a^\dagger_\gamma$ the positivity of the Hamiltonian leads to the following expression:
\begin{equation}
\sum_{\alpha\beta\gamma}\sum_{\delta\epsilon\zeta}t^1_{\alpha\beta\gamma}\left(\bra{\Psi^N}a^\dagger_\alpha a^\dagger_\beta a^\dagger_\gamma a_\zeta a_\epsilon a_\delta \ket{\Psi^N} + \bra{\Psi^N} a_\zeta a_\epsilon a_\delta a^\dagger_\alpha a^\dagger_\beta a^\dagger_\gamma\ket{\Psi^N}\right)t^1_{\delta\epsilon\zeta}\geq 0~.
\end{equation}
This expresses the positivity of the $\mathcal{T}_1$ matrix defined as:
\begin{equation}
\mathcal{T}_1(\Gamma)\succeq 0
\end{equation}
with
\begin{equation}
\left(\mathcal{T}_1\right)_{\alpha\beta\gamma;\delta\epsilon\zeta} = \sum_i w_i\left(\bra{\Psi^N_i}a^\dagger_\alpha a^\dagger_\beta a^\dagger_\gamma a_\zeta a_\epsilon a_\delta \ket{\Psi^N_i} + \bra{\Psi^N_i} a_\zeta a_\epsilon a_\delta a^\dagger_\alpha a^\dagger_\beta a^\dagger_\gamma\ket{\Psi^N_i}\right)~.
\label{T1_con}
\end{equation}
Due to the anticommutation of the $\hat{B}^\dagger$ and $\hat{B}$ this expression can still be written as a function of the 2DM when we anticommute the creation and annihilation operators:
\begin{eqnarray}
\nonumber\mathcal{T}_1\left(\Gamma\right)_{\alpha\beta\gamma;\delta\epsilon\zeta} &=&\delta_{\gamma\zeta}\delta_{\beta\epsilon}\delta_{\alpha\delta} - \delta_{\gamma\epsilon}\delta_{\alpha\delta}\delta_{\beta\zeta} + \delta_{\alpha\zeta}\delta_{\gamma\epsilon}\delta_{\beta\delta} - \delta_{\gamma\zeta}\delta_{\alpha\epsilon}\delta_{\beta\delta} + \delta_{\beta\zeta}\delta_{\alpha\epsilon}\delta_{\gamma\delta} -\delta_{\alpha\zeta}\delta_{\beta\epsilon}\delta_{\gamma\delta}\\
\nonumber&& -\left(\delta_{\gamma\zeta}\delta_{\beta\epsilon} - \delta_{\beta\zeta}\delta_{\gamma\epsilon}\right)\rho_{\alpha\delta} + \left(\delta_{\gamma\zeta}\delta_{\alpha\epsilon} - \delta_{\alpha\zeta}\delta_{\gamma\epsilon}\right)\rho_{\beta\delta} - \left(\delta_{\beta\zeta}\delta_{\alpha\epsilon}- \delta_{\alpha\zeta}\delta_{\beta\epsilon}\right)\rho_{\gamma\delta}\\
\nonumber&& + \left(\delta_{\gamma\zeta}\delta_{\beta\delta} - \delta_{\beta\zeta}\delta_{\gamma\delta}\right)\rho_{\alpha\epsilon} - \left(\delta_{\gamma\zeta}\delta_{\alpha\delta} - \delta_{\alpha\zeta}\delta_{\gamma\delta}\right)\rho_{\epsilon\beta} + \left(\delta_{\beta\zeta}\delta_{\alpha\delta} - \delta_{\alpha\zeta}\delta_{\beta\delta}\right)\rho_{\gamma\epsilon}\\
\nonumber&& - \left(\delta_{\beta\delta}\delta_{\gamma\epsilon} - \delta_{\beta\epsilon}\delta_{\gamma\delta}\right)\rho_{\alpha\zeta} + \left(\delta_{\gamma\epsilon}\delta_{\alpha\delta} - \delta_{\alpha\epsilon}\delta_{\gamma\delta}\right)\rho_{\beta\zeta} - \left(\delta_{\beta\epsilon}\delta_{\alpha\delta} - \delta_{\alpha\epsilon}\delta_{\beta\delta}\right)\rho_{\gamma\zeta}\\
\nonumber&&+ \delta_{\gamma\zeta}\Gamma_{\alpha\beta;\delta\epsilon} - \delta_{\beta\zeta}\Gamma_{\alpha\gamma;\delta\epsilon} + \delta_{\alpha\zeta}\Gamma_{\beta\gamma;\delta\epsilon} - \delta_{\gamma\epsilon}\Gamma_{\alpha\beta;\delta\zeta} + \delta_{\beta\epsilon}\Gamma_{\alpha\gamma;\delta\zeta} - \delta_{\alpha\epsilon}\Gamma_{\beta\gamma;\delta\zeta}\\
&& + \delta_{\gamma\delta}\Gamma_{\alpha\beta;\epsilon\zeta} - \delta_{\beta\delta}\Gamma_{\alpha\gamma;\epsilon\zeta} + \delta_{\alpha\delta}\Gamma_{\beta\gamma;\epsilon\zeta}~.
\label{T1}
\end{eqnarray}
Note that this is a matrix on three-particle space, and as such will be computationally much heavier to use. Using this condition does not make the $\mathcal{I}$ or $\mathcal{Q}$ condition redundant, since $\mathcal{T}_1$ implies positiveness of the sum $BB^\dagger + B^\dagger B$ and not of the individual terms.
\paragraph{The $\mathcal{T}_2$ condition}
The next three-index condition is derived trough the positivity of (\ref{pos_ham_3i}) using $\hat{B}^\dagger = \sum_{\alpha\beta\gamma}t^2_{\alpha\beta\gamma}a^\dagger_\alpha a^\dagger_\beta a_\gamma$, producing the inequality:
\begin{equation}
\sum_{\alpha\beta\gamma}\sum_{\delta\epsilon\zeta}t^2_{\alpha\beta\gamma}\left(\bra{\Psi^N}a^\dagger_\alpha a^\dagger_\beta a_\gamma a^\dagger_\zeta a_\epsilon a_\delta \ket{\Psi^N} + \bra{\Psi^N} a^\dagger_\zeta a_\epsilon a_\delta a^\dagger_\alpha a^\dagger_\beta a_\gamma\ket{\Psi^N}\right)t^2_{\delta\epsilon\zeta}\geq 0~.
\end{equation}
This expresses the positivity of the sum of the particle-particle-hole and the hole-hole-particle matrix $\mathcal{T}_2$:
\begin{equation}
\left(\mathcal{T}_2\right)_{\alpha\beta\gamma;\delta\epsilon\zeta} = \sum_i w_i \left(\bra{\Psi^N_i}a^\dagger_\alpha a^\dagger_\beta a_\gamma a^\dagger_\zeta a_\epsilon a_\delta \ket{\Psi^N_i} + \bra{\Psi^N_i} a^\dagger_\zeta a_\epsilon a_\delta a^\dagger_\alpha a^\dagger_\beta a_\gamma\ket{\Psi^N_i}\right)~,
\label{T2_con}
\end{equation}
which, following the standard recipe, becomes an expression of the 2DM alone:
\begin{eqnarray}
\nonumber\mathcal{T}_2(\Gamma)_{\alpha\beta\gamma;\delta\epsilon\zeta} &=& \left(\delta_{\alpha\delta}\delta_{\beta\epsilon} - \delta_{\alpha\epsilon}\delta_{\beta\delta}\right)\rho_{\gamma\zeta} + \delta_{\gamma\zeta}\Gamma_{\alpha\beta;\delta\epsilon}\\
\label{T2}&&- \delta_{\alpha\delta}\Gamma_{\gamma\epsilon;\zeta\beta} + \delta_{\beta\delta}\Gamma_{\gamma\epsilon;\zeta\alpha} + \delta_{\alpha\epsilon}\Gamma_{\gamma\delta;\zeta\beta} - \delta_{\beta\epsilon}\Gamma_{\gamma\delta;\zeta\alpha}~.
\end{eqnarray}
This is a more compact expression than the $\mathcal{T}_1$, but it turns out to be a stronger condition, much like the $\mathcal{G}$ condition was more stringent than $\mathcal{I}$ and $\mathcal{Q}$.
\paragraph{The $\mathcal{T}_3$ condition}
The last independent combination of creation/annihilation operators is $\hat{B}^\dagger = \sum_{\alpha\beta\gamma}t^3_{\alpha\beta\gamma}a^\dagger_\alpha a_\beta a^\dagger_\gamma$, which gives the following positivity expression:
\begin{equation}
\sum_{\alpha\beta\gamma}\sum_{\delta\epsilon\zeta}t^3_{\alpha\beta\gamma}\left(\bra{\Psi^N}a^\dagger_\alpha a_\beta a^\dagger_\gamma a_\zeta a^\dagger_\epsilon a_\delta \ket{\Psi^N} + \bra{\Psi^N} a_\zeta a^\dagger_\epsilon a_\delta a^\dagger_\alpha a_\beta a^\dagger_\gamma\ket{\Psi^N}\right)t^3_{\delta\epsilon\zeta}\geq 0~.
\end{equation}
The above equation expresses the positivity of the sum of the particle-hole-particle and the hole-particle-hole matrix $\mathcal{T}_3$:
\begin{equation}
\left(\mathcal{T}_3\right)_{\alpha\beta\gamma;\delta\epsilon\zeta} = \sum_i w_i \left(\bra{\Psi^N_i}a^\dagger_\alpha a_\beta a^\dagger_\gamma a_\zeta a^\dagger_\epsilon a_\delta \ket{\Psi^N_i} + \bra{\Psi^N_i} a_\zeta a^\dagger_\epsilon a_\delta a^\dagger_\alpha a_\beta a^\dagger_\gamma\ket{\Psi^N_i}\right)~.
\label{T3_con}
\end{equation}
Once more, the anticommutation of the creation and annihilation operators enables us to write this as a function of the 2DM alone:
\begin{eqnarray}
\nonumber\mathcal{T}_3(\Gamma)_{\alpha\beta\gamma;\delta\epsilon\zeta} &=& \delta_{\alpha\delta}\delta_{\beta\gamma}\delta_{\epsilon\zeta}-\delta_{\alpha\delta} \Gamma_{\epsilon\gamma;\beta\zeta} - \delta_{\alpha\zeta}\Gamma_{\epsilon\gamma;\delta\beta} - \delta_{\gamma\delta}\Gamma_{\alpha\epsilon;\beta\zeta} + \delta_{\beta\epsilon}\Gamma_{\alpha\gamma;\delta\zeta}\\
&&\nonumber- \delta_{\gamma\zeta}\Gamma_{\alpha\epsilon;\delta\beta} -\delta_{\alpha\delta}\delta_{\epsilon\zeta}\rho_{\beta\gamma} - \delta_{\alpha\delta}\delta_{\gamma\beta}\rho_{\epsilon\zeta} + \delta_{\alpha\zeta}\delta_{\gamma\beta}\rho_{\epsilon\delta} + \delta_{\gamma\delta}\delta_{\epsilon\zeta} \rho_{\alpha\beta}\\
\label{T3}&& + (\delta_{\alpha\delta}\delta_{\gamma\zeta} - \delta_{\alpha\zeta}\delta_{\gamma\delta})\rho_{\beta\epsilon}~.
\end{eqnarray}
In the next Section, however, we show that this condition becomes redundant the $\mathcal{T}_2$ condition is generalized.
\subsubsection{\label{primes}The primed conditions}
Up to now, we have derived necessary constraints using positive Hamiltonians, in which the positive Hamiltonians were built with some combination of creation and annihilation operators. In some cases it is possible to derive more stringent conditions, called \emph{primed} conditions, involving linear combinations of such operators. The first condition of this type to be derived was by Garrod and Percus \cite{garrod}, in what was also the first derivation of a $\mathcal{G}$-like condition.
\paragraph{The $\mathcal{G}'$ condition}
The idea of primed conditions is to generalize the operator $\hat{B}^\dagger$. For the $\mathcal{G}'$ condition it becomes:
\begin{equation}
\hat{B}^\dagger = \sum_{\alpha\beta}g_{\alpha\beta}'a^\dagger_\alpha a_\beta + C~,
\label{g_prime_op}
\end{equation}
in which $C$ is a constant operator. The positivity of the Hamiltonian constructed with this $\hat{B}^\dagger$ leads to the following expression:
\begin{eqnarray}
\nonumber\sum_i w_i \bra{\Psi^N_i}\hat{B}^\dagger \hat{B} \ket{\Psi^N_i} &=& \sum_{\alpha\beta\gamma\delta}g_{\alpha\beta}'~\left(\mathcal{G}_1\right)_{\alpha\beta;\gamma\delta}~g_{\gamma\delta}' + C \sum_{\alpha\beta}g_{\alpha\beta}'~\rho_{\alpha\beta} \\
&&\qquad + C \sum_{\gamma\delta}g_{\gamma\delta}'~\rho_{\gamma\delta} + C^2 \geq 0~,
\end{eqnarray}
which can be rewritten as:
\begin{equation}
\sum_{\alpha\beta\gamma\delta}g_{\alpha\beta}'~\left(\mathcal{G}_1\right)_{\alpha\beta;\gamma\delta}~g_{\gamma\delta}' + \left(C + \sum_{\alpha\beta}g_{\alpha\beta}'~\rho_{\alpha\beta}\right)^2 - \sum_{\alpha\beta\gamma\delta}g_{\alpha\beta}'~\rho_{\alpha\beta}\rho_{\gamma\delta}~g_{\gamma\delta}' \geq 0~.
\label{deriv_gpc}
\end{equation}
It is clear that we get the most stringent condition if we choose:
\begin{equation}
C = - \sum_{\alpha\beta}g_{\alpha\beta}'~\rho_{\alpha\beta}~,
\end{equation}
since this removes a strictly positive contribution to Eq.~(\ref{deriv_gpc}).
The remaining terms in Eq.~(\ref{deriv_gpc}) then express the positive semidefiniteness of the matrix:
\begin{equation}
\mathcal{G}'\left(\Gamma\right)_{\alpha\beta;\gamma\delta} = \delta_{\beta\delta}\rho_{\alpha\gamma} - \Gamma_{\alpha\delta;\gamma\beta} - \rho_{\alpha\beta}\rho_{\gamma\delta}~.
\label{G_prime}
\end{equation}
From the definition (\ref{g_prime_op}) it should be clear that when the $\mathcal{G}'$ is fulfilled, both $\mathcal{G}_1$ and $\mathcal{G}_2$ will be fulfilled too, since these are just special cases of $\mathcal{G}'$ with a specific value for the number $C$. The $\mathcal{G}_1$ condition is $\mathcal{G}'$ when $C = 0$. For the $\mathcal{G}_2$ condition it can be seen through anticommutation that:
\begin{eqnarray}
\nonumber \hat{B}^\dagger &=& \sum_{\alpha\beta}g^2_{\alpha\beta}a_\alpha a^\dagger_\beta\\
\nonumber&=& \sum_{\alpha\beta}g^2_{\alpha\beta}\delta_{\alpha\beta} - \sum_{\alpha\beta}g^2_{\alpha\beta}a^\dagger_\alpha a_\beta\\
&=& \sum_{\alpha\beta}g'_{\alpha\beta}~a^\dagger_\alpha a_\beta + C
\end{eqnarray}
with
\begin{equation}
g'_{\alpha\beta} = -g^2_{\alpha\beta} \qquad\text{and}\qquad C = \sum_{\alpha}g^2_{\alpha\alpha}~.
\end{equation}
Unfortunately $\mathcal{G'}$ is not linear in $\Gamma$ which hinders its use in a standard semidefinite program. However, one can show that the domain in 2DM space for which $\mathcal{G}'$ is positive is exactly the same as that where $\mathcal{G}_1$ is positive, {\it i.e.}
\begin{theorem}
\label{G_prime_eq_G}
The nullspaces of the $\mathcal{G}'$ and $\mathcal{G}_1$ matrix map coincide.
\end{theorem}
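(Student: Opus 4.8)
The plan is to read off from Eqs.~(\ref{G1_2DM}) and~(\ref{G_prime}) that the two maps differ by a single rank-one term,
\begin{equation}
\mathcal{G}'(\Gamma) = \mathcal{G}_1(\Gamma) - \ket{\rho}\bra{\rho}~,
\end{equation}
where $\ket{\rho}$ is the particle-hole vector with components $\rho_{\alpha\beta}$. Because $\ket{\rho}\bra{\rho}\succeq 0$ we have $\mathcal{G}'\preceq\mathcal{G}_1$, and the whole question reduces to the textbook one of how the null space of a positive semidefinite matrix changes when a rank-one projector is subtracted. That behaviour is dictated entirely by where $\ket{\rho}$ sits relative to the range and kernel of $\mathcal{G}_1$, so the first concrete task is to locate $\ket{\rho}$ exactly.

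The linchpin I would prove first is the identity $\mathcal{G}_1(\Gamma)\ket{\mathbb{1}} = N\ket{\rho}$, with $\ket{\mathbb{1}}$ the vector of components $\delta_{\alpha\beta}$. It follows by contracting Eq.~(\ref{G1_2DM}) against $\ket{\mathbb{1}}$ and using the antisymmetry of $\Gamma$ together with the single-index contraction $\sum_\gamma\Gamma_{\alpha\gamma;\beta\gamma}=(N-1)\rho_{\alpha\beta}$: the $\delta_{\beta\delta}\rho_{\alpha\gamma}$ piece gives $\rho_{\alpha\beta}$ and the $\Gamma$ piece gives $(N-1)\rho_{\alpha\beta}$, adding to $N\rho_{\alpha\beta}$. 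This identity does double duty. It shows that $\ket{\rho}=\tfrac1N\mathcal{G}_1\ket{\mathbb{1}}$ lies in the range of $\mathcal{G}_1$, and, pairing with $\bra{\mathbb{1}}$ and using $\braket{\mathbb{1}}{\rho}=\mathrm{Tr}\,\rho=N$, it fixes the normalisation $\bra{\mathbb{1}}\mathcal{G}_1\ket{\mathbb{1}}=N^2$.

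With $\ket{\rho}$ pinned inside the range, the inclusion $\ker\mathcal{G}_1\subseteq\ker\mathcal{G}'$ is immediate: a vector $\ket{v}$ with $\mathcal{G}_1\ket{v}=0$ is orthogonal to the range of the symmetric matrix $\mathcal{G}_1$, so $\braket{\rho}{v}=0$ and hence $\mathcal{G}'\ket{v}=-\ket{\rho}\braket{\rho}{v}=0$. For the reverse inclusion I would rule out spurious null directions by a Cauchy--Schwarz estimate in the semi-inner product defined by $\mathcal{G}_1\succeq 0$. Writing $\braket{\rho}{x}=\tfrac1N\bra{\mathbb{1}}\mathcal{G}_1\ket{x}$ gives $|\braket{\rho}{x}|^2\le\tfrac1{N^2}\bra{\mathbb{1}}\mathcal{G}_1\ket{\mathbb{1}}\,\bra{x}\mathcal{G}_1\ket{x}=\bra{x}\mathcal{G}_1\ket{x}$, where the normalisation $\bra{\mathbb{1}}\mathcal{G}_1\ket{\mathbb{1}}=N^2$ is exactly what turns the prefactor into unity. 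Thus $\bra{x}\mathcal{G}'\ket{x}=\bra{x}\mathcal{G}_1\ket{x}-|\braket{\rho}{x}|^2\ge 0$, which incidentally reproves $\mathcal{G}'\succeq 0$, and every $\ket{x}\in\ker\mathcal{G}'$ must saturate this Cauchy--Schwarz bound.

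The main obstacle is exactly this saturation analysis, because it is where $\ket{\mathbb{1}}$ threatens to produce an extra null direction. Equality in the semi-inner-product Cauchy--Schwarz inequality forces $\ket{x}$ to be $\mathcal{G}_1$-proportional to $\ket{\mathbb{1}}$, i.e. $\ket{x}\in\ker\mathcal{G}_1\oplus\mathrm{span}\,\ket{\mathbb{1}}$, so the delicate step is to show that the candidate direction $\ket{\mathbb{1}}$ contributes nothing genuinely new. The resolution I would pursue is that $\ket{\mathbb{1}}$ is the number-operator mode: it is the gauge freedom $g'_{\alpha\beta}\to g'_{\alpha\beta}+\lambda\delta_{\alpha\beta}$ in Eq.~(\ref{g_prime_op}), which only shifts the constant $C$ and leaves the operator $\hat{B}$ acting on the fixed-$N$ space unchanged. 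Modding out this redundant, non-dynamical direction collapses the saturation condition to $\ket{x}\in\ker\mathcal{G}_1$, so that the effective null spaces coincide -- which is the content of the theorem and, through $\mathcal{G}'\preceq\mathcal{G}_1$ in one direction and the Cauchy--Schwarz positivity in the other, simultaneously forces the positivity domains of $\mathcal{G}'$ and $\mathcal{G}_1$ to agree.
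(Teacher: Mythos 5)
Your proposal is correct, and your forward direction is the paper's argument in different clothing: the paper obtains $\mathrm{Tr}~\rho v=0$ by tracing the eigenvalue equation (\ref{proof_2_eq_1}), which is exactly your pairing $\bra{\mathbb{1}}\mathcal{G}_1\ket{v}$, and your linchpin identity $\mathcal{G}_1(\Gamma)\ket{\mathbb{1}}=N\ket{\rho}$ is the same fact the paper packages as $\sum_\gamma\mathcal{G}'(\Gamma)_{\alpha\beta;\gamma\gamma}=0$. The reverse inclusion is where you genuinely diverge. The paper constructs the $\mathcal{G}_1$-null vector explicitly, $v'_{\alpha\beta}=v_{\alpha\beta}-\left(\mathrm{Tr}~\rho v/N\right)\delta_{\alpha\beta}$, and verifies $\mathcal{G}_1(\Gamma)v'=0$ by pure algebra: $\mathcal{G}'(\Gamma)v'=0$ because $\mathcal{G}'(\Gamma)\ket{\mathbb{1}}=0$, and the rank-one term dies since $\mathrm{Tr}~\rho v'=\mathrm{Tr}~\rho v-\left(\mathrm{Tr}~\rho v/N\right)\mathrm{Tr}~\rho=0$. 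No positivity of $\mathcal{G}_1(\Gamma)$ is invoked anywhere, so the paper's equivalence holds for arbitrary, even indefinite, $\Gamma$. Your route instead assumes $\mathcal{G}_1(\Gamma)\succeq0$ and runs Cauchy--Schwarz in the associated semi-inner product; saturation then places $\ket{x}\in\ker\mathcal{G}_1\oplus\mathrm{span}\ket{\mathbb{1}}$, which is the paper's conclusion in abstract form --- indeed, writing $x=k+c\mathbb{1}$ with $k\in\ker\mathcal{G}_1$ and using $\braket{\rho}{k}=0$ gives $c=\mathrm{Tr}~\rho x/N$, recovering the explicit shift. What your approach buys is the implication $\mathcal{G}_1\succeq0\Rightarrow\mathcal{G}'\succeq0$ as a free by-product (the converse is trivial from $\mathcal{G}_1=\mathcal{G}'+\ket{\rho}\bra{\rho}$), i.e.\ it delivers directly the positivity-domain equivalence that the paper only extracts from the theorem afterwards; what it costs is generality, since off the positive-semidefinite set the Cauchy--Schwarz step has no content. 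Two smaller remarks: your informal ``gauge'' treatment of $\ket{\mathbb{1}}$ can be made rigorous in one line, $\mathcal{G}'\ket{\mathbb{1}}=\mathcal{G}_1\ket{\mathbb{1}}-\ket{\rho}\braket{\rho}{\mathbb{1}}=N\ket{\rho}-N\ket{\rho}=0$; and your observation that the null spaces agree only \emph{modulo} $\mathrm{span}\ket{\mathbb{1}}$ is actually sharper than the theorem's literal wording, since $\ket{\mathbb{1}}$ always lies in $\ker\mathcal{G}'$ but never in $\ker\mathcal{G}_1$ (as $\mathcal{G}_1\ket{\mathbb{1}}=N\ket{\rho}\neq0$) --- a point the paper's proof handles silently through the shift $v\rightarrow v'$, which annihilates any $v\propto\mathbb{1}$.
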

\begin{proof}
In the first part of the proof we show that when, for some $\Gamma$, $\mathcal{G}_1(\Gamma)$ has a zero eigenvalue, then $\mathcal{G}'(\Gamma)$ has a zero eigenvalue too. Suppose $\mathcal{G}_1(\Gamma)$ has a zero eigenvalue with eigenvector $v$:
\begin{equation}
\left(\mathcal{G}_1(\Gamma)v\right)_{\alpha\beta} = \sum_{\gamma\delta}\left(\delta_{\beta\delta}\rho_{\alpha\gamma}-\Gamma_{\alpha\delta;\gamma\beta}\right)~v_{\gamma\delta} = \sum_\gamma \rho_{\alpha\gamma}v_{\gamma\beta} - \sum_{\gamma\delta}\Gamma_{\alpha\delta;\gamma\beta}~v_{\gamma\delta} = 0~.
\label{proof_2_eq_1}
\end{equation}
If we trace Eq.~(\ref{proof_2_eq_1}), we get
\begin{equation}
N\sum_{\alpha\gamma}\rho_{\alpha\gamma}v_{\gamma\alpha} = 0~.
\end{equation}
This means that $\mathcal{G}'(\Gamma)$ also has a zero eigenvalue with the same eigenvector:
\begin{equation}
\left(\mathcal{G}'(\Gamma)v\right)_{\alpha\beta} = \sum_{\gamma\delta} \mathcal{G}_1(\Gamma)_{\alpha\beta;\gamma\delta}~v_{\gamma\delta} - \rho_{\alpha\beta}\sum_{\gamma\delta}\rho_{\gamma\delta}~v_{\gamma\delta} = 0~.
\end{equation}
In the second part we prove the inverse. Suppose $\mathcal{G}'(\Gamma)$ has a zero eigenvalue with eigenvector $v$, then $\mathcal{G}_1(\Gamma)$ also has a zero eigenvalue, with eigenvector $v'$:
\begin{equation}
v'_{\alpha\beta} = v_{\alpha\beta} - \left(\frac{\mathrm{Tr}~\rho v}{N}\right)\delta_{\alpha\beta}~.
\end{equation}
First we note that $v'$ is also an eigenvector of $\mathcal{G}'$ because:
\begin{equation}
\sum_{\gamma}\mathcal{G}'(\Gamma)_{\alpha\beta;\gamma\gamma} = 0~.
\end{equation}
Next, we multiply $\mathcal{G}_1$ with $v'$:
\begin{eqnarray}
\nonumber\sum_{\gamma\delta}\mathcal{G}_1(\Gamma)_{\alpha\beta;\gamma\delta}~v'_{\gamma\delta} &=& \sum_{\gamma\delta}\left(\mathcal{G'}(\Gamma)_{\alpha\beta\gamma\delta} + \rho_{\alpha\beta}\rho_{\gamma\delta}\right)v'_{\gamma\delta}\\
&=& \sum_{\gamma\delta}\rho_{\alpha\beta}\rho_{\gamma\delta}\left[v_{\gamma\delta} - \left(\frac{\mathrm{Tr}~\rho v}{N}\right)\delta_{\gamma\delta}\right]=0~.
\end{eqnarray}
\end{proof}
From Theorem \ref{G_prime_eq_G} we learn that using $\mathcal{G}'$ or $\mathcal{G}_1$ during the optimization yields the same results, since the edges of the convex sets for which $\mathcal{G}_1 \succeq 0$ and $\mathcal{G}'\succeq 0$ are the same! The $\mathcal{G}_1$ condition is to be preferred, because it is linear in $\Gamma$ and therefore easier to use in semidefinite programs. At the same time one sees that $\mathcal{G}_2$ is redundant, because it is included by $\mathcal{G}'$, and following from the above theorem, also by $\mathcal{G}_1$. So from now on we refer to $\mathcal{G}_1$ as the $\mathcal{G}$ condition and never use $\mathcal{G}_2$ again.  
\paragraph{The $\mathcal{T}_2'$ condition}
A primed condition can also be derived for the three-index case, through a generalization of the $\hat{B}^\dagger$ for the $\mathcal{T}_2$ condition,(see \cite{braams_book,maz_T2_prime}). The positive Hamiltonian has the form:
\begin{equation}
\hat{H} = \hat{\tilde{B}}^\dagger \hat{\tilde{B}} + \hat{B}\hat{B}^\dagger~,
\end{equation}
in which $\hat{\tilde{B}}^\dagger$ is the generalized $\hat{B}^\dagger$ of the $\mathcal{T}_2$:
\begin{equation}
\hat{\tilde{B}}^\dagger = \sum_{\alpha\beta\gamma}t^2_{\alpha\beta\gamma}a^\dagger_\alpha a^\dagger_\beta a_\gamma + \sum_\mu x_\mu a^\dagger_\mu~.
\end{equation}
To it is added the regular second part of the $\mathcal{T}_2$ Hamiltonian, to cancel out the three-particle terms appearing in the first part. The expression of positivity then leads to:
\begin{eqnarray}
\nonumber\sum_i w_i \bra{\Psi^N_i}\hat{H}\ket{\Psi^N_i} &=&\sum_{\alpha\beta\gamma} \sum_{\delta\epsilon\zeta} t^2_{\alpha\beta\gamma}\mathcal{T}_2(\Gamma)_{\alpha\beta\gamma;\delta\epsilon\zeta}~t^2_{\delta\epsilon\zeta} + \sum_{\alpha\beta\gamma\nu} t^2_{\alpha\beta\gamma}\omega_{\alpha\beta\gamma;\nu}x_\nu\\
&&\qquad\qquad + \sum_{\mu\delta\epsilon\zeta}x_\mu \omega^\dagger_{\mu;\delta\epsilon\zeta}t^2_{\delta\epsilon\zeta} + \sum_{\mu\nu}x_\mu\rho_{\mu\nu}x_\nu\geq 0~,
\label{positivity_T2P}
\end{eqnarray}
where
\begin{equation}
\omega_{\alpha\beta\gamma;\nu} = \sum_i w_i \bra{\Psi^N_i}a^\dagger_\alpha a^\dagger_\beta a_\gamma a_\nu\ket{\Psi^N_i} = \Gamma_{\alpha\beta;\nu\gamma}~.
\end{equation}
The positivity expressed in Eq.~(\ref{positivity_T2P}) can be translated to a matrix positivity condition:
\begin{equation}
\mathcal{T}_2'
=
\left(
\begin{matrix}
\left(\mathcal{T}_2\right)_{\alpha\beta\gamma;\delta\epsilon\zeta} & \omega_{\alpha\beta\gamma;\nu}\\
\omega^\dagger_{\mu;\delta\epsilon\zeta} & \rho_{\mu\nu}
\end{matrix}
\right)
\succeq 0~.
\label{T2P}
\end{equation}
One can see from Eq.~(\ref{T2P}) that this matrix has a slightly larger dimension than the regular $\mathcal{T}_2$ condition, but this is negligible compared to the size of the $\mathcal{T}_2$ matrix. It is also obvious that $\mathcal{T}_2'$ includes $\mathcal{T}_2$, as diagonal blocks of a positive definite matrix are also positive. What is more, the $\mathcal{T}_3$ can be shown to be a part of the $\mathcal{T}_2'$ class too, since the $\hat{B}^\dagger$ that creates the $\mathcal{T}_3$ condition is a special case of the general $\mathcal{T}_2'$ $\hat{\tilde{B}}^\dagger$:
\begin{eqnarray}
\nonumber \hat{B}^\dagger &=& \sum_{\alpha\beta\gamma}t^3_{\alpha\beta\gamma}a^\dagger_\alpha a_\beta a^\dagger_\gamma\\
&=& \sum_{\alpha\beta\gamma}\delta_{\beta\gamma}t^3_{\alpha\beta\gamma}a^\dagger_\alpha - \sum_{\alpha\beta\gamma}t^3_{\alpha\beta\gamma}a^\dagger_{\alpha}a^\dagger_\gamma a_{\beta}\\
&=& \sum_{\alpha\beta\gamma}t^2_{\alpha\beta\gamma}a^\dagger_\alpha a^\dagger_\beta a_\gamma + \sum_\mu x_\mu a^\dagger_\mu~,
\end{eqnarray}
in which
\begin{equation}
t^2_{\alpha\beta\gamma} = -t^3_{\alpha\gamma\beta} \qquad\text{and}\qquad x_\mu = \sum_{\beta}t^3_{\mu\beta\beta}~.
\end{equation}
As reported in \cite{braams_book,maz_T2_prime}, slightly better results are obtained when adding the $\mathcal{T}_2'$ condition compared to only applying the $\mathcal{T}_1$ and $\mathcal{T}_2$ conditions.
\section{Non-standard $N$-representability conditions}
Up to now we have focussed on constraints which can be expressed as matrix positivity constraints of linear maps of the 2DM. These constraints are commonly used in 2DM optimization. In some situations the two-index conditions suffice to obtain a good approximation of the ground-state properties, but often the results are not good enough. In a standard approach one would use the three-index conditions to increase accuracy, but this limits the size of the systems that can be studied as the computational cost increases dramatically. In this Section we introduce some non-standard $N$-representability conditions that can be used to increase accuracy without making the optimization computationally unfeasible.
\subsection{\label{subsystem}Subsystem constraints}
The first example of non-standard constraints we introduce are the subsystem constraints \cite{qsep}. These constraints were developed for the special case of diatomic dissociation (See section \ref{diatomic}), but are more generally applicable. In this Section we derive these constraints without a specific application in mind. The idea of subsystem constraints is to impose constraints on the 2DM restricted to some subspace of single-particle Hilbert space. To derive them however we first need to introduce the concept of $N$-representability for systems with a fractional number of particles.
\subsubsection{Fractional $N$-representability}
Until now we have only discussed systems with an integer number of particles. When we talk about a fractional particle number we don't mean that the particles are actually split up, but that the expectation value of the `number of particles'-operator is fractional. This means that the system is open, it can exchange particles with its environment, so the number of particles in the system fluctuates. The natural way to deal with these kinds of systems is by taking an ensemble of von Neumann density matrices with different particle numbers:
\begin{equation}
^{\bar{N}}\Gamma = \sum_{Ni} x^i_N \ket{\Psi^N_i}\bra{\Psi^N_i}\qquad\text{where}\qquad x^i_N \geq 0 \text{ , }\sum_{Ni} x^i_{N} = 1 \text{  and  }\sum_{Ni}x^i_{N} N = \bar{N}~.
\label{frac_NDM}
\end{equation}
The obvious generalization of Eq.~(\ref{frac_NDM}) to reduced density matrices is:
\begin{equation}
^p_{\bar{N}}\Gamma_{\alpha_1\ldots\alpha_p;\beta_1\ldots\beta_p} = \sum_{Ni} x^i_N \bra{\Psi^N_i}a^\dagger_{\alpha_1}\ldots a^\dagger_{\alpha_p}a_{\beta_p}\ldots a_{\beta_1}\ket{\Psi^N_i}~,
\label{frac_pDM}
\end{equation}
in which the $x^i_N$'s satisfy the same relations as for the $\bar{N}$DM. It is important to note that it is not possible to derive the 1DM from the 2DM as with integer $N$ density matrices. When a fractional-$N$ system is described by a Hamiltonian containing both one- and two-body interaction terms, one needs both the 1DM and the 2DM to express the energy. For these systems it is therefore more natural to consider the pair $(\rho,\Gamma)$ as fractional-$N$ representable, if they can both be derived by an ensemble as in Eq.~(\ref{frac_pDM}) with the same weights ${x_N^i}$.

The set of fractional-$N$ representable $(\rho,\Gamma)$'s is obviously convex, and the whole argumentation of Section~\ref{dual_N_rep} is applicable here as well. The dual definition of fractional $N$-representability can therefore be immediately stated. The pair $(\rho,\Gamma)$ is fractional-$N$ representable if, and only if, for all:
\begin{equation}
\hat{H} = \sum_{\alpha\beta} t_{\alpha\beta}a^\dagger_\alpha a_\beta + \frac{1}{4}\sum_{\alpha\beta\gamma\delta}V_{\alpha\beta;\gamma\delta}a^\dagger_\alpha a^\dagger_\beta a_\delta a_\gamma~,
\label{ham_sub}
\end{equation}
the energy expressed in terms of $(\rho,\Gamma)$ is larger than the ground-state energy of this Hamiltonian for an ensemble with average number of particles $\bar{N}$:
\begin{equation}
\mathrm{Tr}~t\rho + \mathrm{Tr}~V\Gamma \geq E^{\bar{N}}_0\left(\hat{H}\right)~.
\label{dual_N_rep_frac}
\end{equation}
\subsubsection{Subsystem 2DM's are fractional-$N$ representable}
Let us now take an arbitrary subspace, $\mathcal{V}$, of single-particle Hilbert space. We use Latin letters ($a,b,\ldots$) for single-particle orbitals that are in the subspace, as opposed to Greek letters ($\alpha,\beta,\ldots$) for general single-particle orbitals. In this Section we prove that if $_N\Gamma$ is an integer $N$-representable 2DM, then the pair $(\rho^{\mathcal{V}},\Gamma^{\mathcal{V}})$, defined as
\begin{eqnarray}
\label{2DM_sub}\Gamma^\mathcal{V}_{ab;cd} &=& _N\Gamma_{ab;cd}\\
\label{1DM_sub}\rho^\mathcal{V}_{ac} &=& \frac{1}{N-1}\sum_{\beta} {}_N\Gamma_{a\beta;c\beta}
\end{eqnarray}
is fractional-$N$ representable on the subspace $\mathcal{V}$, with $\bar{N} = \sum_{a}\rho^{\mathcal{V}}_{aa}$~. As $_N\Gamma$ is integer $N$-representable, we can write:
\begin{equation}
\Gamma^\mathcal{V}_{ab;cd} = \sum_i x_i \bra{\Psi^N_i}a^\dagger_a a^\dagger_b a_d a_c\ket{\Psi^N_i}~.
\label{N_rep_sub_2DM}
\end{equation}
We can expand each $|\Psi^N_i\rangle$ in Slater determinants, classified according to the number of subsystem orbitals they contain:
\begin{equation}
\ket{\Psi^N_i} = \sum_{j=0}^N \sum_{s_j \bar{s}_{N-j}} ~{}^iC^{j}_{s_j\bar{s}_{N-j}} \ket{s_j}\ket{\bar{s}_{N-j}}~,
\end{equation}
in which $s_j$ represents a set of $j$ subsystem orbitals, and $\bar{s}_{N-j}$ a set of $N-j$ orbitals not in the subsystem. Using the fact that the string of subsystem-type creation/annihilation operators in Eq.~(\ref{N_rep_sub_2DM}) does not change the number of subsystem orbitals, that it leaves the non-subsystem part of the Slater determinant unchanged, and using orthonormality of the $\bar{s}_{N-j}$ states, we see that 
\begin{equation}
\sum_i x_i \langle\Psi^N_i|a^\dagger_a a^\dagger_b a_d a_c  |\Psi^N_i\rangle = \sum_i x_i \sum_{j \bar{s}_{N-j}} \langle\Psi^{j}_{i\bar{s}_{N-j}}|a^\dagger_a a^\dagger_b a_d a_c  |\Psi^{j}_{i\bar{s}_{N-j}}\rangle~,
\end{equation}  
where 
\begin{equation}
\ket{\Psi^{j}_{i\bar{s}_{N-j}}}  = \sum_{s_j} {}^iC^{j}_{s_j \bar{s}_{N-j}} \ket{s_j} ~,
\end{equation}
is a state with $j$ particles in the Fock space generated by the subsystem orbitals. These states are not normalized, their norm is given by 
\begin{equation}
\braket{\Psi^{j}_{i\bar{s}_{N-j}}}{\Psi^{j}_{i\bar{s}_{N-j}}} = \sum_{s_j} |{}^iC^{j}_{s_j \bar{s}_{N-j}}|^2 = w^j_{i\bar{s}_{N-j}} ~.
\end{equation}
If we replace them by normalized states
\begin{equation}
|\tilde{\Psi}^{j}_{i\bar{s}_{N-j}}\rangle  = [w^j_{i\bar{s}_{N-j}}]^{-1/2} |\Psi^{j}_{i\bar{s}_{N-j}}\rangle ~,
\end{equation}
it follows that 
\begin{equation}
\Gamma^\mathcal{V}_{ab;cd} = \sum_{j;i\bar{s}_{N-j}} x_i w^j_{i\bar{s}_{N-j}} \langle\tilde{\Psi}^{j}_{i\bar{s}_{N-j}}|a^\dagger_a a^\dagger_b a_d a_c  |\tilde{\Psi}^{j}_{i\bar{s}_{N-j}}\rangle ~,
\end{equation}
where 
\begin{equation}
\sum_{j;i\bar{s}_{N-j}} x_i w^j_{i\bar{s}_{N-j}}=1~,
\end{equation}
because of the normalization of the original $N$-particle states. 
In an analogous way one shows that the subsystem 1DM $\rho^{\mathcal{V}}$ can be written as
\begin{equation}
\rho^\mathcal{V}_{ac} = \sum_{j;i\bar{s}_{N-j}} x_i w^j_{i\bar{s}_{N-j}} \langle\tilde{\Psi}^{j}_{i\bar{s}_{N-j}}|a^\dagger_a a_c  |\tilde{\Psi}^{j}_{i\bar{s}_{N-j}}\rangle  ~.
\end{equation}
This proves that $\Gamma^\mathcal{V}$ and $\rho^\mathcal{V}$ can be derived from the same ensemble of wave functions containing only orbitals in the subsystem. 
This ensemble has a fractional number of particles (in the subsystem space) given by: 
\begin{equation}
\bar{N}= \sum_a \rho^\mathcal{V}_{aa}= \sum_{j;i\bar{s}_{N-j}}j x_i w^j_{i\bar{s}_{N-j}}~.
\end{equation}

We can again harness the power of the dual formulation of $N$-representability (\ref{dual_N_rep_frac}) to introduce the subsystem constraints:
\begin{theorem}
\label{theo_sub_constr}
If $_N\Gamma$ is integer $N$-representable, then for an arbitrary subspace of single-particle Hilbert space $\mathcal{V}$, the pair $(\rho^\mathcal{V},\Gamma^{\mathcal{V}})$ as defined in Eqs.~(\ref{2DM_sub}) and (\ref{1DM_sub}) must obey the inequality:
\begin{equation}
\mathrm{Tr}~\rho^\mathcal{V} t^\mathcal{V} + \mathrm{Tr}~\Gamma^\mathcal{V} V^\mathcal{V} \geq E^{\bar{N}}_0\left(\hat{H}^\mathcal{V}\right)~,
\label{eq_sub_constr}
\end{equation}
with $\bar{N} = \mathrm{Tr}\rho^\mathcal{V}$, and for every Hamiltonian $\hat{H}^\mathcal{V}$ of the form (\ref{ham_sub}), defined in the subspace $\mathcal{V}$.
\end{theorem}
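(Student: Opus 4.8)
The plan is to obtain the theorem as an immediate corollary of two facts already in hand: the fractional-$N$ representability of $(\rho^\mathcal{V},\Gamma^\mathcal{V})$ established in the preceding subsection, and the dual characterization of fractional $N$-representability in Eq.~(\ref{dual_N_rep_frac}). No genuinely new computation is required; the work lies in recognizing that the dual theorem transports verbatim onto the subsystem Fock space.

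First I would recall the ensemble representation derived just above: for integer $N$-representable $_N\Gamma$, the pair $(\rho^\mathcal{V},\Gamma^\mathcal{V})$ of Eqs.~(\ref{2DM_sub}) and (\ref{1DM_sub}) satisfies
\begin{equation}
\Gamma^\mathcal{V}_{ab;cd} = \sum_{j;i\bar{s}_{N-j}} x_i w^j_{i\bar{s}_{N-j}} \langle\tilde{\Psi}^{j}_{i\bar{s}_{N-j}}|a^\dagger_a a^\dagger_b a_d a_c|\tilde{\Psi}^{j}_{i\bar{s}_{N-j}}\rangle~,
\end{equation}
together with the matching expression for $\rho^\mathcal{V}$, where the weights $x_i w^j_{i\bar{s}_{N-j}}$ are nonnegative and sum to one, and the normalized states $|\tilde{\Psi}^{j}_{i\bar{s}_{N-j}}\rangle$ are built exclusively from subsystem orbitals. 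This is exactly the assertion that $(\rho^\mathcal{V},\Gamma^\mathcal{V})$ is fractional-$N$ representable on $\mathcal{V}$ with average particle number $\bar{N} = \mathrm{Tr}\,\rho^\mathcal{V}$.

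Next I would apply Eq.~(\ref{dual_N_rep_frac}) with $\mathcal{V}$ itself playing the role of the single-particle Hilbert space. Since the reducing states live entirely in the Fock space over $\mathcal{V}$, any subspace Hamiltonian $\hat{H}^\mathcal{V}$ of the form (\ref{ham_sub}) with integrals $t^\mathcal{V},V^\mathcal{V}$ supported on $\mathcal{V}$ acts on them as a bona fide Hamiltonian on that space. Expanding the energy $\mathrm{Tr}\,\rho^\mathcal{V} t^\mathcal{V} + \mathrm{Tr}\,\Gamma^\mathcal{V} V^\mathcal{V}$ with the ensemble representations of $\rho^\mathcal{V}$ and $\Gamma^\mathcal{V}$ then gives
\begin{equation}
\mathrm{Tr}\,\rho^\mathcal{V} t^\mathcal{V} + \mathrm{Tr}\,\Gamma^\mathcal{V} V^\mathcal{V} = \sum_{j;i\bar{s}_{N-j}} x_i w^j_{i\bar{s}_{N-j}} \langle\tilde{\Psi}^{j}_{i\bar{s}_{N-j}}|\hat{H}^\mathcal{V}|\tilde{\Psi}^{j}_{i\bar{s}_{N-j}}\rangle \geq E^{\bar{N}}_0(\hat{H}^\mathcal{V})~,
\end{equation}
which is precisely the desired inequality (\ref{eq_sub_constr}). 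The final step is the content of the forward implication of the dual definition: a convex combination of normalized-state expectation values of $\hat{H}^\mathcal{V}$, drawn from an ensemble with average particle number $\bar{N}$, cannot lie below the ensemble ground-state energy $E^{\bar{N}}_0(\hat{H}^\mathcal{V})$.

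The one point I would treat with care — and the only real subtlety — is the legitimacy of reusing the dual theorem on the subspace. This hinges on the fact, secured in the previous subsection, that each $|\tilde{\Psi}^{j}_{i\bar{s}_{N-j}}\rangle$ contains only subsystem orbitals, so that the restriction to $\mathcal{V}$ is self-contained and nothing outside $\mathcal{V}$ enters either the density matrices or the action of $\hat{H}^\mathcal{V}$. Granting this, the theorem is nothing more than the specialization of Eq.~(\ref{dual_N_rep_frac}) to the ambient space $\mathcal{V}$.
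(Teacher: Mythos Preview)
Your proposal is correct and follows exactly the paper's approach: the theorem is presented there as an immediate consequence of the preceding derivation (that $(\rho^\mathcal{V},\Gamma^\mathcal{V})$ is fractional-$N$ representable on $\mathcal{V}$) together with the dual characterization in Eq.~(\ref{dual_N_rep_frac}). If anything, you have spelled out the convex-combination step more explicitly than the paper does.
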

\subsubsection{\label{applicability_of_ssc}Applicability of subsystem constraints}
Having established the explicit form of the subsystem constraints, one could ask the question: why would these constraints be important? One answer to that question is that Theorem~\ref{theo_sub_constr} does not hold for \emph{approximate} $N$-representability, using the matrix positivity conditions introduced in the previous Sections. In other words, the fact that the global 2DM of the $N$-particle system obeys necessary $N$-representability conditions does not imply that the subsystem pairs $(\rho^{\mathcal{V}},\Gamma^{\mathcal{V}})$ obey the same fractional $N$-representability conditions. In analogy to Eq.~(\ref{frac_pDM}) one could say that the pair $({}_{\bar{N}}\rho,{}_{\bar{N}}\Gamma)$ is \emph{approximate} fractional-$N$ representable under a set of necessary matrix positivity conditions $\mathcal{L}$, if and only if there exists a set of weights $x_N$ in which ${}_{\bar{N}}\rho$ and ${}_{\bar{N}}\Gamma$ can be expanded as:
\begin{align}
&{}_{\bar{N}}\rho = \sum_N x_N~ {}_N\rho~, \\
&{}_{\bar{N}}\Gamma = \sum_N x_N~ {}_N\Gamma~,
\label{2DM_frac_expansion}
\end{align}
where for every $_N\Gamma$ in the expansion holds that
\begin{equation}
\mathcal{L}(_N\Gamma) \succeq 0~.
\label{2DM_app_frac}
\end{equation}
If we have an approximate integer-$N$ representable $\Gamma$, {\it i.e.} for which $\mathcal{L}(\Gamma)\succeq 0$ for a set of $\mathcal{L}$'s. The constraints imposed by this on a subspace 2DM $\Gamma^\mathcal{V}$ are not sufficient to ensure approximate fractional-$N$ representability. The subspace $\mathcal{L}(\Gamma)^\mathcal{V}$'s are diagonal blocks of the full system $\mathcal{L}(\Gamma)$'s, and will therefore be positive semidefinite:
\begin{equation}
\mathcal{L}(\Gamma)^\mathcal{V}\succeq 0~.
\label{no_frac_nrep}
\end{equation}
This is, however, not sufficient to ensure that there exists an expansion of the form (\ref{2DM_frac_expansion}), in which, for every $_N\Gamma$, Eq.~(\ref{2DM_app_frac}) holds, which is required for it to be approximate fractional-$N$ representable. This means that 2DM optimization under matrix positivity conditions doesn't treat the full system and the subsystem on equal footing. This has the important consequence that the method is not size consistent (more on that subject in Section~\ref{diatomic}). Imposing additional constraints on the subsystem can fix this, as was shown in \cite{qsep} for diatomic molecules. For more general systems there are a plethora of subspaces and Hamiltonians to impose (\ref{eq_sub_constr}), so it is crucial to choose these carefully. A stochastical method in which these are optimized was proposed in \cite{shenvi}. Once the subspace and Hamiltonian are chosen one can calculate the right hand side of Eq.~(\ref{eq_sub_constr}) using exact diagonalization if the subsystem is small enough, or a density matrix optimization if it is larger. 

\subsection{The sharp conditions}
The sharp conditions are another type of condition that makes use of the dual definition of $N$-representability. Here, once again the positive Hamiltonians $\hat{B}^\dagger \hat{B}$ are used, but in a completely different way: now sharper bounds are imposed on the upper and lower eigenvalues of these Hamiltonians \cite{dimi}.
\subsubsection{Sharp bounds on $\mathcal{I}(\Gamma)$}
\paragraph{Canonical transformation}
We first show that a two-fermion creation operator:
\begin{equation}
\hat{B}^\dagger = \sum_{\alpha\beta}B_{\alpha\beta}a^\dagger_\alpha a^\dagger_\beta~,
\label{2p_creation}
\end{equation}
with $B$ a skew-symmetric matrix, can always be written as,
\begin{equation}
\hat{B}^\dagger = \frac{1}{\sqrt{2}}\sum_{\alpha} x_\alpha a^\dagger_{\alpha} a^\dagger_{\bar{\alpha}}~,
\label{pairing_operator}
\end{equation}
through a canonical transformation of the single-particle basis. Each single-particle state $\alpha$ has its paired state $\bar{\alpha}$. For the sake of completeness, a short proof follows.

The 1DM of the two-particle state constructed by $\hat{B}^\dagger$ has the form:
\begin{equation}
\rho = 2B^*B^T~.
\end{equation}
If we transform to the basis of natural orbitals, $\rho$ is diagonal and real:
\begin{equation}
\rho = \rho^*\qquad\text{which means that}\qquad BB^* = B^*B~.
\end{equation}
In this basis the commutator of $\rho$ and $B$ is:
\begin{eqnarray}
[\rho,B] &=& 2[B^*B^T,B] = -2[B^*B,B]\\
&=& -2(B^*BB - BB^*B)\\
&=& -2[B^*,B]B = 0~.
\end{eqnarray}
It follows that the matrix $B$ is block diagonal in the natural basis, with $(2\times2)$-blocks since the eigenvalues of $\rho$ are at least doubly degenerate \cite{coleman}:
\begin{equation}
[\rho,B]_{\alpha\beta} = B_{\alpha\beta}(\lambda_\alpha - \lambda_\beta) = 0~.
\end{equation}
Let us now define a new blockdiagonal matrix $D$ as:
\begin{equation}
D^{(\alpha)} = \sqrt{\frac{2}{\lambda_\alpha}}B^{(\alpha)}~,
\end{equation}
in which $B^{(\alpha)}$ is the $B$-block corresponding to the eigenvalue $\lambda_\alpha$. $D$ is skew-symmetric, but also unitary, which means that if we diagonalize it the eigenvalues are purely imaginary phases, $i$ and $-i$, and the two eigenvectors are complex conjugate, $X$ and $X^*$. If one more transformation is performed:
\begin{eqnarray}
X_- &=& \frac{1}{\sqrt{2}i}\left(X - X^*\right)~,\\
X_+ &=& \frac{1}{\sqrt{2}}\left(X + X^*\right)~,
\end{eqnarray}
every block in $D$ becomes:
\begin{equation}
\left(\begin{matrix}i & 0\\0 & -i\end{matrix}\right)\qquad\rightarrow\qquad\left(\begin{matrix}0 & 1\\-1 & 0\end{matrix}\right)~.
\end{equation}
The same transformation turns $B$ into:
\begin{equation}
B^{(\alpha)} = \left(\begin{matrix}0&\sqrt{\frac{\lambda_\alpha}{2}}\\-\sqrt{\frac{\lambda_\alpha}{2}}&0\end{matrix}\right)~,
\end{equation}
which means we can write $\hat{B}^\dagger$ as (\ref{pairing_operator}) with
\begin{equation}
x_\alpha = \sqrt{\frac{\lambda_\alpha}{2}} \qquad\text{and}\qquad x_{\bar{\alpha}} = -x_\alpha~.
\end{equation}
\paragraph{Pairing Hamiltonian}
It turns out that the Hamiltonian $\hat{B}^\dagger \hat{B}$, with $\hat{B}^\dagger$ a pairing operator (\ref{pairing_operator}), is exactly solvable through a Bethe-ansatz approach \cite{richardson,richardson3,gaudin}. The eigenstates can be classified according to the presence of unpaired single-particle states. As the Hamiltonian doesn't connect states with a different number of unpaired states, we can restrict the discussion to the fully-paired eigenstates; in other subspaces the unpaired states are blocked, and simply removed from the available single-particle space. The eigenstates are labeled by $n-1$ numbers $y$, in which $n$ is the number of pairs present in the eigenstate:
\begin{equation}
\ket{\Psi^N_{\{y\}}} = \phi^\dagger(0)\phi^\dagger(y_1)\ldots\phi^\dagger(y_{n-1})\ket{0}~,
\label{eig_pair}
\end{equation}
and a new pair-creation operator $\phi^\dagger$ is defined as:
\begin{equation}
\phi^\dagger(y) = \frac{1}{\sqrt{2}}\sum_\alpha \frac{x_\alpha}{1-yx_\alpha^2}a^\dagger_\alpha a^\dagger_{\bar{\alpha}}~.
\label{phi}
\end{equation}
One can show (see {\it e.g.} in \cite{bijbel}) that the action of the pairing Hamiltonian on the states (\ref{eig_pair}) is:
\begin{equation}
\hat{B}^\dagger \hat{B} \ket{\Psi^N_{\{y\}}} = E\left(\{y\}\right)\ket{\Psi^N_{\{y\}}} + \sum_{i = 1}^{n-1}V_i\left(\{y\}\right)\hat{B}^\dagger \hat{B}^\dagger\left(\prod_{j (\neq i) = 1}^{n - 1}\phi^{\dagger}(y_j)\right)\ket{0}~,
\label{BBPsi}
\end{equation}
in which
\begin{eqnarray}
V_i\left(\{y\}\right) &=& \sum_{\alpha}\frac{x_\alpha^2}{1 - y_ix_\alpha^2} + 4\left(\frac{1}{y_i} + \sum_{j ( \neq i) = 1}^{n-1}\frac{1}{y_i - y_j}\right)~,\\
E\left(\{y\}\right) &=& \sum_\alpha x_\alpha^2 -4\sum_{k = 1}^{n-1}\frac{1}{y_k}~.
\label{rich_ener}
\end{eqnarray}
From Eq.~(\ref{BBPsi}) it is clear that (\ref{eig_pair}) will only be an eigenstate if the $\{y\}$ satisfy the equations:
\begin{equation}
V_i\left(\{y\}\right) = 0~, \qquad\forall i~.
\label{richard_eq}
\end{equation}
The energy of the eigenstate is then given by $(\ref{rich_ener})$~. 

In general a system of non-linear equations can be hard to solve, but in this particular case, Newton's method can be used to find the largest eigenvalue.  The different solutions are separated by singularities \cite{rombouts,stijn}, so we have to find an initial point in the right compartment of $y$ space. For the highest eigenvalue, a good guess is that all $y$'s will be negative and close to zero. The following initial point:
\begin{equation}
y^0_{\alpha} = -\frac{\alpha}{10}\qquad\text{for}\qquad \alpha = 1,\ldots,n-1~,
\end{equation}
was always observed to converge to the highest eigenvalue.

It is now straightforward to introduce the new constraint, using the dual definition of $N$-representability for the pairing Hamiltonian:
\begin{equation}
-\mathrm{Tr}~\Gamma B^\dagger B \geq E^N_0(-\hat{B}^\dagger \hat{B}) = -\lambda_N^{\text{max}}\left[B\right]~.
\end{equation}
Note that the eigenvalue spectrum of the Hamiltonian $B^\dagger B$ only depends on the singular values $x$ as introduced in Eq.~(\ref{pairing_operator}).
We can rephrase this condition as follows: for every two-particle state, its occupation in the 2DM has to be smaller than the maximally allowed occupation in an $N$-particle system calculated through the Richardson equation:
\begin{equation}
\label{sharp_P}
\mathrm{Tr}~\Gamma B^\dagger B \leq \lambda_N^{\text{max}}\left[x(B)\right]~.
\end{equation}
It is interesting to see that for a structureless pairing operator:
\begin{equation}
\hat{X}^\dagger = \sqrt{\frac{1}{2M}}\sum_{\alpha}\sigma_\alpha a^\dagger_{\alpha}a^\dagger_{\bar{\alpha}}~,\qquad\text{with}\qquad \sigma_\alpha = -\sigma_{\bar{\alpha}} = \pm 1~,
\end{equation}
which has the highest maximal eigenvalue of all possible $\hat{B}^\dagger$'s, the maximal eigenvalue is 
\begin{equation}
\lambda^{\text{max}}_N\left[x(X)\right] = n\left(1 - \frac{2(n-1)}{M}\right)~.
\end{equation}
This is the sharpest upper bound that can be put on the eigenvalues $\Gamma$ without knowledge of the eigenvectors, which was derived by F. Sasaki \cite{sasaki}.
\paragraph{Finding the most stringent condition}
The constraint put forward in Eq.~(\ref{sharp_P}) is not practical yet, because we don't know how to choose $\hat{B}^\dagger$, and we can't impose the inequality for all $\hat{B}^\dagger$'s. In this Section we propose a method to find the $\hat{B}^\dagger$ that violates Eq.~(\ref{sharp_P}) the most. Given an arbitrary 2DM $\Gamma$, we introduce a cost function $F$:
\begin{equation}
F(B) = \frac{1}{\mathrm{Tr}~B^\dagger B}\left[\lambda^{\text{max}}_N\left[x(B)\right] - \mathrm{Tr}~\Gamma B^\dagger B\right]~,
\label{cost_richardson}
\end{equation}
which we minimize as a function of the matrix $B$. To optimize this function we need the gradient of $F$ with respect to the pair amplitudes $B_{\mu\nu}$. The most difficult term is the first one as we need an analytical formula which tells how the maximal eigenvalue varies with $B$. Using the chain rule we find:
\begin{equation}
\frac{\partial \lambda^{\text{max}}_N\left[x(B)\right]}{B_{\mu\nu}} = \sum_{\kappa>0} \left(\frac{\partial \lambda^{\text{max}}_N(x)}{\partial x_\kappa}\right)\left( \frac{\partial x_\kappa}{\partial B_{\mu\nu}}\right)~,
\label{deriv_lambda}
\end{equation}
where $\kappa > 0$ means we only sum over independent $x_\kappa$'s, not over the $x_{\bar{\kappa}}$.
The relationship between $B$ and $x$ is a unitary transformation:
\begin{equation}
B_{\alpha\beta} = \sqrt{2}\sum_{\kappa}U_{\kappa\alpha} x_\kappa U_{\bar{\kappa}\beta}\qquad\text{or inversely}\qquad x_\kappa = \frac{1}{\sqrt{2}}\sum_{\alpha\beta}U_{\kappa\alpha}B_{\alpha\beta}U_{\bar{\kappa}\beta}~.
\end{equation}
This means that the second term in Eq.~(\ref{deriv_lambda}) is:
\begin{equation}
\frac{\partial x_\kappa}{\partial B_{\mu\nu}} = \frac{1}{\sqrt{2}}\left[U_{\kappa\mu}U_{\bar{\kappa}\nu} - U_{\kappa\nu}U_{\bar{\kappa}\mu}\right]~.  \end{equation}
The first term can be derived from Eq.~(\ref{rich_ener}) by differentiation:
\begin{equation}
\frac{\partial \lambda^{\text{max}}_N}{\partial x_\mu}(x) = 4x_\mu + 4 \sum_{k = 1}^{n-1}\frac{1}{y_k^2(x)}\frac{\partial y_k}{\partial x_\mu}(x)~,
\end{equation}
which is written in function of the $y$'s. It is left to determine how the $y$'s change as a function of the structure coefficients $x$. The infinitesimally changed $y$'s have to remain a solution to the Richardson equations, which leads to the expression:
\begin{eqnarray}
\nonumber\frac{ \partial V_i}{\partial x_\mu}\left(y(x)\right) &=& \frac{4x_\mu}{\left(1 - y_i x_\mu^2\right)^2} +\left[2\sum_{\alpha=1}^M \frac{x_\alpha^4}{\left(1 - y_ix_\alpha^2\right)^2} - 4\left(\frac{1}{y_i^2} + \sum_{j (\neq i) = 1}^{n-1}\frac{1}{\left(y_i - y_j\right)^2}\right)\right]\frac{\partial y_i}{\partial x_\mu}\\
\label{lin_sys_deriv}&& \qquad+~4\sum_{j (\neq i) = 1}^{n-1} \frac{1}{(y_i - y_j)^2}\frac{\partial y_j}{\partial x_\mu} = 0~,
\end{eqnarray}
which yields $n$ \emph{linear} equations, that can easily be solved to obtain the derivatives. In summary, the gradient of the Eq.~(\ref{cost_richardson}) can be written as:
\begin{eqnarray}
\nonumber\frac{\partial F}{\partial B_{\mu\nu}} &=& \frac{1}{\mathrm{Tr}~B^\dagger B} \left[\frac{1}{\sqrt{2}}\sum_{\kappa>0}\frac{\partial\lambda^{\text{max}}_N}{\partial x_\kappa}\left[U_{\kappa\mu}U_{\bar{\kappa}\nu} - U_{\kappa\nu}U_{\bar{\kappa}\mu}\right] - \sum_{\gamma\delta}\Gamma_{\mu\nu;\gamma\delta}B_{\gamma\delta}\right]\\
\label{grad_sharp_P}&& -\frac{2B_{\mu\nu}}{\left(\mathrm{Tr}~B^\dagger B\right)^2}\left[\lambda^{\text{max}}_N\left[x(B)\right] - \mathrm{Tr}~\Gamma B^\dagger B\right]~.
\end{eqnarray}
After a 2DM optimization, the most violated condition can now be found using a non-linear conjugate gradient algorithm \cite{jrs}, and can be added as a constraint in a subsequent optimization. This can be done iteratively, until there are no violated constraints left.
\subsubsection{Sharp bounds on $\mathcal{Q}(\Gamma)$}
Having established a sharp bound on $\mathcal{I}(\Gamma)$, we can ask ourselves if the same thing can be done for the other constraint matrices. If an upper bound can be found for the eigenvalues of $\mathcal{Q}$, the following constraint can be formulated:
\begin{equation}
\mathrm{Tr}~\mathcal{Q}(\Gamma)B B^\dagger \leq E_{\text{max}}^N(\hat{B}\hat{B}^\dagger)~.
\label{sharp_Q}
\end{equation}
It turns out that the eigenvalues of $\hat{H} = \hat{B}\hat{B}^\dagger$ can also be found using a Richardson-like approach discussed in the previous paragraph. Suppose we have found an eigenvector of $\hat{B}\hat{B}^\dagger$ with eigenvalue $\lambda$:
\begin{equation}
\hat{B}\hat{B}^\dagger\ket{\Psi^N} = \lambda \ket{\Psi^N}~.
\end{equation}
If we let another $\hat{B}^\dagger$ operator act on the left and right side of this equation we have:
\begin{equation}
\hat{B}^\dagger\hat{B}\ket{\Psi^{N+2}} = \lambda \ket{\Psi^{N+2}}~,\qquad\text{with}\qquad\ket{\Psi^{N+2}} = \hat{B}^\dagger\ket{\Psi^N}~,
\end{equation}
which means that every non-zero eigenvalue of $\hat{B}\hat{B}^\dagger$ on the $N$-particle system is also an eigenvalue of $\hat{B}^\dagger \hat{B}$ on the $N+2$-particle system. In an analogous way one can show that the inverse is also true, which means that the maximal eigenvalue of $\hat{B}\hat{B}^\dagger$ on an $N$-particle system can be found by solving the Richardson equations (\ref{richard_eq}) for $N+2$ particles\,! Given an arbitrary 2DM $\Gamma$, the most violating two-hole operator $\hat{B}$ can be found by optimizing the cost function:
\begin{equation}
F_{Q}(B) = \frac{1}{\mathrm{Tr}~B^\dagger B}\left(\lambda^{\text{max}}_{N+2}\left(x(B)\right) - \mathrm{Tr}~\left[\mathcal{Q}\left(\Gamma\right)BB^\dagger \right]\right)~,
\end{equation}
for which the gradient can be calculated in the same way as for (\ref{grad_sharp_P}).

Using the approach introduced in the previous two paragraphs we constructed additional constraints to increase the accuracy of the standard two-index conditions. In practice, however, we found that for molecular and atomic systems, the sharp-$\mathcal{I}$ (\ref{sharp_P}) and sharp-$\mathcal{Q}$ (\ref{sharp_Q}) conditions are never violated if the $\mathcal{G}$ condition is active. So in general these constraints did not deliver the increased accuracy hoped for. As would be expected, the constraints did help for pairing type Hamiltonians:
\begin{equation}
\hat{H} = \sum_{\alpha}\epsilon_\alpha a^\dagger_\alpha a_\alpha - \frac{g}{2} \sum_{\alpha\beta} x_\alpha x_{\beta} a^\dagger_{\alpha}a^\dagger_{\bar{\alpha}}a_{\bar{\beta}}a_\beta~,
\end{equation}
when $g$ is large enough. They insure that for $g\rightarrow\infty$ the correct limit is found.
\subsubsection{Sharp bounds on $\mathcal{G}(\Gamma)$}
For the $\mathcal{G}$ condition, it is not possible to transform the Hamiltonian to a type solvable by the Richardson equations. For a general $\mathcal{G}$-type $\hat{B}^\dagger$ operator:
\begin{equation}
\hat{B}^\dagger = \sum_{\alpha\beta}B_{\alpha\beta}a^\dagger_\alpha a_\beta~,
\end{equation}
the problem seems to be very hard, and no solution has thus far been found. For a restricted class of $B^\dagger$ operators, when $B$ is a Hermitian matrix, it is possible to derive some constraints of the form:
\begin{equation}
\mathrm{Tr}~\left[\mathcal{G}(\Gamma)B^\dagger B\right] \leq E_{\text{max}}^N(\hat{B}^\dagger \hat{B})~,
\qquad\text{and}\qquad
\mathrm{Tr}~\left[\mathcal{G}(\Gamma)B^\dagger B\right] \geq E_{\text{min}}^N(\hat{B}^\dagger \hat{B})~.
\label{sharp_G}
\end{equation}
Compared to the solution for sharp-$\mathcal{I}$ and sharp-$\mathcal{Q}$, this derivation is relatively easy. Consider a Hermitian operator $\hat{B}^\dagger = \hat{B}$, which in its canonical basis looks like:
\begin{equation}
\hat{B}^\dagger = \sum_\alpha \epsilon_\alpha a^\dagger_\alpha a_\alpha~.
\end{equation}
Its eigenvectors on $N$-particle space are Slater determinants with as eigenvalues the sum of the energies of the $N$ occupied orbitals. Because $\hat{B}^\dagger$ is Hermitian, $\hat{B}^\dagger \hat{B} $ is just $\hat{B}^2$, which has the same eigenvectors as $\hat{B}^\dagger$, but with the eigenvalues squared. Now that we have diagonalized the Hamiltonian, we just have to find out for which set of $N$ single-particle energies $\mathcal{S}$, the eigenvalue:
\begin{equation}
E_\mathcal{S} = \left(\sum_{\alpha\in\mathcal{S}}\epsilon_\alpha\right)^2~,
\label{G_sharp_ener}
\end{equation}
is extremal. 
\paragraph{Maximal eigenvalue}
For the maximal eigenvalue there are only two possibilities, either the $N$ highest or the $N$ lowest single-particle energies sum up to the highest absolute value.
\paragraph{Minimal eigenvalue}
Because the different terms in the sum (\ref{G_sharp_ener}) can cancel each other, the lowest eigenvalue is much more difficult to obtain. Basically any combination of single-particle energies can have the lowest absolute value, which means it is a combinatorial optimization problem. 
This constraint is probably very important, as the $\mathcal{G}$ map seems most closely related to the structure of physical Hamiltonians studied in physics and chemistry. One example of where this constraint would surely improve results is in the case of a dispersion Hamiltonian:
\begin{equation}
\hat{D}(\hat{B}^\dagger,\lambda) = \left(\hat{B}^\dagger - \lambda\hat{\mathbb{1}}\right)^2~,
\label{disp_ham}
\end{equation}
where it has been shown that the standard matrix positivity conditions fail drastically \cite{dispersion}.
\subsection{\label{diag_constr}Diagonal constraints}
The diagonal conditions are a hierarchic set of linear inequalities that only involve the diagonal part of the 2DM. The first conditions of this type were derived by Weinhold and Wilson \cite{weinhold}, and later generalized and expanded by Davidson \cite{davidson_1,davidson_2}. Davidson derived these constraints using the dual definition of $N$-representability with the following Hamiltonian:
\begin{equation}
\hat{H} = A + \sum_{\alpha}B_\alpha a^\dagger_\alpha a_\alpha + \sum_{\alpha\beta}C_{\alpha\beta} a^\dagger_\alpha a^\dagger_\beta a_\beta a_\alpha~.
\label{ham_diag_constr}
\end{equation}
The expression of the energy expectation value of this type of Hamiltonian as a function of the 2DM only involves diagonal elements of the 2DM. The eigenstates of this Hamiltonian are Slater determinants, with energy eigenvalue:
\begin{equation}
E_\mathcal{S} = A + \sum_{\alpha\in\mathcal{S}}B_\alpha + \sum_{\alpha,\beta\in\mathcal{S}}C_{\alpha\beta}~. 
\end{equation}
A 2DM will satisfy the $N$-representability condition (\ref{eq_dual_n_rep}) for every Hamiltonian of the type (\ref{ham_diag_constr}) if it satisfies (\ref{eq_dual_n_rep}) for the extreme points of the convex set of Hamiltonians (\ref{ham_diag_constr}). These extreme points generate the diagonal inequalities, and in \cite{davidson_1,davidson_2} a complicated general algorithm is used to derive some of these. In \cite{davidson_3} however, a much simpler way of deriving these conditions is introduced, which involves the positivy of polynomials of the type:
\begin{equation}
\bra{\Psi^N}\sum_{\alpha\beta\gamma\ldots}y_{\alpha\beta\gamma\ldots}\hat{n}_\alpha \hat{n}_\beta\hat{n}_\gamma\ldots\ket{\Psi^N} \geq 0~,
\end{equation}
in which $\hat{n}_\alpha$ is the number operator $a^\dagger_\alpha a_\alpha$. The so-called $(2,2)$-conditions are given by:
\[
\begin{matrix}
\langle \hat{n}_\alpha \hat{n}_\beta \rangle &\geq& 0~&,\qquad &\langle (1 - \hat{n}_\alpha) \hat{n}_\beta \rangle &\geq& 0~,\\
\langle \hat{n}_\alpha (1 - \hat{n}_\beta) \rangle &\geq& 0~&\qquad\text{and}\qquad&\langle (1-\hat{n}_\alpha) (1-\hat{n}_\beta) \rangle &\geq& 0~,
\end{matrix}
\]
which are just the diagonal elements of the two-index conditions. In the same way one can derive $(3,3)$-conditions on the 3DM, in which the three-particle term sometimes has a positive and sometimes a negative prefactor. Adding these constraints together one can construct new constraints only containing diagonal elements of the 2DM, which are called the $(3,2)$-conditions, {\it e.g.} the sum:
\begin{equation}
\langle \hat{n}_\alpha \hat{n}_\beta \hat{n}_\gamma\rangle + \langle(1-\hat{n}_\alpha) (1-\hat{n}_\beta) (1-\hat{n}_\gamma)\rangle = \langle1-\hat{n}_\alpha -\hat{n}_\beta - \hat{n}_\gamma + \hat{n}_\alpha \hat{n}_\beta + \hat{n}_\alpha \hat{n}_\gamma + \hat{n_\beta}\hat{n}_\gamma\rangle~,
\end{equation}
which is the diagonal $\mathcal{T}_1$. It is possible to continue the construction of higher-order $(r,r)$-conditions in this way. By taking the appropriate combination of these $(r,r)$-conditions one derives higher-order constraints that can be expressed as a function of the diagonal terms of the 2DM alone, called $(r,2)$-conditions.

What is nice about these constraints is that they are computationally much cheaper to impose than the full matrix-positivity conditions. Imposing all fourth-order constraints would scale in the same way as imposing the full $\mathcal{G}$-matrix condition. A disadvantage of this technique is that the results are dependent on the choice of the single-particle basis used, {\it i.e.} we lose unitary invariance of our approximate $N$-representability. One can envision two ways to try to cure this, which have not been examined up to now (but should be in the future). The first way is by imposing that the optimized energy is stationary to infinitesimal unitary transformations. A unitary transformation $U$ is defined by its anti-hermitian generator $\epsilon$:
\begin{eqnarray}
U = e^{\lambda \epsilon}~.
\end{eqnarray}
An infinitesimal unitary transformation can therefore be expressed as $U = 1 + \lambda\epsilon$. The change in the energy expression under influence of this infinitesimal transformation is, to first order in $\lambda$:
\begin{eqnarray}
\nonumber E(U) &=& \mathrm{Tr}~\Gamma H + \frac{\lambda}{4}\sum_{\alpha\beta\gamma\delta}H_{\alpha\beta;\gamma\delta}\left(
\sum_{\alpha'}\epsilon^*_{\alpha'\alpha}\Gamma_{\alpha'\beta;\gamma\delta}\right.\\
\nonumber&&\left.\qquad\qquad\qquad+\sum_{\beta'}\epsilon^*_{\beta'\beta}\Gamma_{\alpha\beta';\gamma\delta}
+\sum_{\gamma'}\epsilon_{\gamma'\gamma}\Gamma_{\alpha\beta;\gamma'\delta}
+\sum_{\delta'}\epsilon_{\delta'\delta}\Gamma_{\alpha\beta;\gamma\delta'} \right)\\
\nonumber&=& \mathrm{Tr}~\Gamma H + \frac{\lambda}{2}\left(\sum_{\gamma\gamma'}\epsilon_{\gamma'\gamma}\sum_{\delta}\left(H\Gamma\right)_{\gamma\delta;\gamma'\delta} - \sum_{\alpha\alpha'}\epsilon_{\alpha\alpha'}\sum_\beta \left(H\Gamma\right)_{\alpha\beta;\alpha'\beta}\right)\\
\label{stationarity}&=& \mathrm{Tr}~\Gamma H + \frac{\lambda}{2}\sum_{\alpha\beta}\epsilon_{\alpha\beta}~\left(\overline{[\Gamma,H]}\right)_{\alpha\beta}~,
\end{eqnarray}
in which a bar over a matrix indicates that we trace over one pair of indices:
\begin{equation}
\bar{\Gamma}_{\alpha\gamma} = \sum_{\beta}\Gamma_{\alpha\beta;\gamma\beta}~.
\end{equation}
From Eq.~(\ref{stationarity}) it is clear that the energy will be stable to infinitesimal unitary transformations if we impose the constraint:
\begin{equation}
\overline{\left[\Gamma,H\right]} = 0~.
\end{equation}

A second way to improve the result obtained by diagonal conditions is by optimizing the single-particle basis. Suppose we have performed an optimization using only two-index conditions, and in the next iteration we want to impose diagonal three-index conditions of the $\mathcal{T}_2$ type:
\begin{equation}
\mathcal{D}^3_2\left(\Gamma\right)_{\alpha\beta\gamma} = \rho_{\gamma\gamma} +\Gamma_{\alpha\beta;\alpha\beta} - \Gamma_{\beta\gamma;\beta\gamma} - \Gamma_{\alpha\gamma;\alpha\gamma} \geq 0~.
\end{equation}
How to decide what single-particle basis to use in which to impose these? We could search for the most violating single-particle basis by minimizing the following functional:
\begin{eqnarray}
F_{\alpha\beta\gamma}(U) &=&  \sum_{\gamma'}U^*_{\gamma'\gamma}\rho_{\gamma'\gamma'}U_{\gamma'\gamma} + \sum_{\alpha'\beta'}U^*_{\alpha'\alpha}U^*_{\beta'\beta}\Gamma_{\alpha'\beta';\alpha'\beta'}U_{\alpha'\alpha}U_{\beta'\beta} \\
\nonumber&& - \sum_{\beta'\gamma'}U^*_{\beta'\beta}U^*_{\gamma'\gamma}\Gamma_{\beta'\gamma';\beta'\gamma'}U_{\beta'\beta}U_{\gamma'\gamma} - \sum_{\alpha'\gamma'}U^*_{\alpha'\alpha}U^*_{\gamma'\gamma}\Gamma_{\alpha'\gamma';\alpha'\gamma'}U_{\alpha'\alpha}U_{\gamma'\gamma}~,
\end{eqnarray}
with respect to the unitary matrix $U$.

\chapter{\label{SDP}Semidefinite programming}
The variational determination of the 2DM of a quantum system, henceforth called the v2DM technique, can be formulated mathematically as a semidefinite program (SDP). A semidefinite program is not a program, but a type of constrained optimization, in which a cost function is optimized under the constraint that a matrix remains positive semidefinite. There is a vast literature on this subject, and a nice duality theory has been established. In this Chapter we first give a short introduction to some important results from the literature, and show how to formulate density matrix optimization as an SDP. After this a number of standard algorithms, which have been tailored to the specific form of v2DM, are explained in some detail.
\section{\label{ha2DM}Hermitian adjoint maps}
For the following it is useful to introduce the Hermitian adjoints of matrix maps introduced in Section~\ref{standard_n_rep}. The Hermitian adjoint maps are defined through:
\begin{equation}
\label{gen_herm}
\mathrm{Tr}~\mathcal{L}_i(\Gamma) A = \mathrm{Tr}~\mathcal{L}_i^\dagger(A)\Gamma~,
\end{equation}
in which $A$ is a matrix of the same dimension as the image of the map $\mathcal{L}_i$ in question (\emph{e.g.} a three-particle matrix for a $\mathcal{T}_1$ map, \emph{etc.}), and the trace is a sum over the appropriate indices. 

It is important to note that from now on slightly different matrix maps are used compared to the ones introduced in Section \ref{standard_n_rep}. It is mathematically and computationally more pleasing to deal with maps that are homogeneous in $\Gamma$, so we add a term $\frac{2\mathrm{Tr}~\Gamma}{N(N-1)}$ to the non-homogeneous terms.

The $\mathcal{I}$ and $\mathcal{Q}$ maps are Hermitian, so they are identical to their Hermitian adjoints. For the other maps however this is not the case. Using Eq.~(\ref{gen_herm}) the Hermitian adjoint of the $\mathcal{G}$ map can be shown to have the form:
\begin{eqnarray}
\label{G_down}
\mathcal{G}^\dagger\left(A\right)_{\alpha\beta;\gamma\delta} &=& \frac{1}{N-1}\left[\delta_{\beta\delta}\overline{A}_{\alpha\gamma} - \delta_{\alpha\delta}\overline{A}_{\beta\gamma} - \delta_{\beta\gamma}\overline{A}_{\alpha\delta} + \delta_{\alpha\gamma}\overline{A}_{\beta\delta}\right]\\
\nonumber&&\qquad\qquad - A_{\alpha\delta;\gamma\beta} + A_{\beta\delta;\gamma\alpha} + A_{\alpha\gamma;\delta\beta} - A_{\beta\gamma;\delta\alpha}~,
\end{eqnarray}
in which a particle-hole matrix $A$ is mapped on two-particle matrix space and
\begin{equation}
\overline{A}_{\alpha\gamma} = \sum_\lambda A_{\alpha\lambda;\gamma\lambda}~.
\end{equation}
The $\mathcal{T}_1$ operator maps a two-particle matrix on a three-particle matrix, so its Hermitian adjoint has to map a three-particle matrix $A$ on two-particle matrix space. Solving Eq.~(\ref{gen_herm}) with $\mathcal{L} = \mathcal{T}_1$ one finds that:
\begin{eqnarray}
\label{T1_down}
\mathcal{T}^\dagger_1\left(A\right)_{\alpha\beta;\gamma\delta} &=& \frac{2}{N(N-1)}\left(\delta_{\alpha\gamma}\delta_{\beta\delta} - \delta_{\alpha\delta}\delta_{\beta\gamma}\right)\mathrm{Tr}~A + \overline{A}_{\alpha\beta;\gamma\delta}\\
\nonumber&&-\frac{1}{2(N-1)}\left[\delta_{\beta\delta}\overline{\overline{A}}_{\alpha\gamma} - \delta_{\alpha\delta}\overline{\overline{A}}_{\beta\gamma} - \delta_{\beta\gamma}\overline{\overline{A}}_{\alpha\delta} + \delta_{\alpha\gamma}\overline{\overline{A}}_{\beta\delta}\right]~,
\end{eqnarray}
with
\begin{eqnarray}
\overline{A}_{\alpha\beta;\gamma\delta} &=& \sum_\lambda A_{\alpha\beta\lambda;\gamma\delta\lambda}~,\\
\overline{\overline{A}}_{\alpha\gamma} &=& \sum_{\lambda\kappa} A_{\alpha\lambda\kappa;\gamma\lambda\kappa}~.
\end{eqnarray}
In the same way one can derive for $\mathcal{L}=\mathcal{T}_2$ that
\begin{eqnarray}
\label{T2_down}
\mathcal{T}^\dagger_2(A)_{\alpha\beta;\gamma\delta} &=& \frac{1}{2(N-1)}\left[\delta_{\beta\delta}\tilde{\tilde{A}}_{\alpha\gamma} - \delta_{\alpha\delta}\tilde{\tilde{A}}_{\beta\gamma} - \delta_{\beta\gamma}\tilde{\tilde{A}}_{\alpha\delta} + \delta_{\alpha\gamma}\tilde{\tilde{A}}_{\beta\delta}\right] + \overline{A}_{\alpha\beta;\gamma\delta}\\
\nonumber&&-\left[\tilde{A}_{\delta\alpha;\beta\gamma} - \tilde{A}_{\delta\beta;\alpha\gamma} - \tilde{A}_{\gamma\alpha;\beta\delta} + \tilde{A}_{\gamma\beta;\alpha\delta}\right]~,
\end{eqnarray}
where $A$ is a matrix on two-particle-one-hole space and
\begin{eqnarray}
\tilde{\tilde{A}}_{\alpha\gamma} &=& \sum_{\lambda\kappa}A_{\lambda\kappa\alpha;\lambda\kappa\gamma}~,\\
\overline{A}_{\alpha\beta;\gamma\delta} &=& \sum_{\lambda}A_{\alpha\beta\lambda;\gamma\delta\lambda}~,\\
\tilde{A}_{\alpha\beta;\gamma\delta} &=& \sum_{\lambda}A_{\lambda\alpha\beta;\lambda\gamma\delta}~.
\end{eqnarray}
The Hermitian adjoint of $\mathcal{T}_2'$ is slightly more involved. We construct the adjoint by demanding that:
\begin{equation}
\mathrm{Tr}~\mathcal{T}_2'(\Gamma) A =
\mathrm{Tr}~\left[
    \begin{pmatrix}
    \mathcal{T}_2 \left(\Gamma\right) & \omega \\
       \omega^\dagger & \rho
       \end{pmatrix}
       \begin{pmatrix}
       A_\mathcal{T} & A_\omega \\
          A_\omega^\dagger & A_\rho
   \end{pmatrix}\right]
   = \mathrm{Tr}~\mathcal{T}_2'^\dagger(A)\Gamma~,
\end{equation}
is fulfilled. This leads to the following expression for the $\mathcal{T}_2'^\dagger$ map:
\begin{eqnarray}
     \mathcal{T}_2'^\dagger \left( A \right)_{\alpha\beta;\gamma\delta} &=& 
\mathcal{T}_2^\dagger(A_\mathcal{T}) +
    \left( A_\omega \right)_{\alpha\beta\delta;\gamma} + \left( A_\omega \right)_{\gamma\delta\beta;\alpha} -  \left( A_\omega \right)_{\alpha\beta\gamma;\delta} - \left( A_\omega \right)_{\gamma\delta\alpha;\beta} \nonumber \\
 & & + \frac{1}{N-1}  \left( \delta_{\beta\delta}\left(A_\rho\right)_{\gamma\alpha} - \delta_{\alpha\delta}\left(A_\rho\right)_{\gamma\beta} - \delta_{\beta\gamma}\left(A_\rho\right)_{\delta\alpha}  + \delta_{\alpha\gamma}\left(A_\rho\right)_{\delta\beta}  \right).
    \label{T2P_down}
\end{eqnarray}
\section{Primal and dual semidefinite programs}
The general form of a semidefinite program \cite{vandenberghe}, in its primal formulation, is given by:
\begin{equation}
\max_X -\mathrm{Tr}~X~u^0 \qquad\text{u.c.t.}\qquad X\succeq0\qquad\text{and}\qquad \mathrm{Tr}~Xu^i = h^i~,
\label{primal_sdp}
\end{equation}
in which a matrix $X$ is varied to optimize a linear function $(-\mathrm{Tr}~Xu^0)$ under the constraint that it remains positive semidefinite ($X \succeq 0$), and under a set of equality constraints ($ \mathrm{Tr}~Xu^i = h^i$). This problem is completely defined by the set of matrices $u^\alpha = \{u^0,u^i\}$, and a vector $h$. As with many constrained optimization problems, one can define a dual optimization problem using the Lagrangian of the original problem \cite{boyd}:
\begin{equation}
\min_\gamma \sum_i \gamma_i h^i \qquad\text{u.c.t.}\qquad Z = u^0 + \sum_i\gamma_i~u^i\succeq0~.
\label{dual_sdp}
\end{equation}
In this case a matrix function of the dual variable $\gamma$ has to remain positive definite. An important feature of this duality is that the primal objective function always bounds the dual from below (and vice-versa), when both primal and dual variables satisfy all equality and inequality constraints. This can be inferred from:
\begin{equation}
\sum_i \gamma_ih^i + \mathrm{Tr}~Xu^0  = \sum_i \gamma_i \mathrm{Tr}~Xu^i + \mathrm{Tr}Xu^0 = \mathrm{Tr}~XZ \geq 0~,
\end{equation}
because both $Z$ and $X$ are positive semidefinite. When the primal and dual problems are getting closer to their optimal values, they will move towards each other. We call the difference between the primal and dual optimal value the primal-dual gap:
\begin{equation}
\eta = \mathrm{Tr}~XZ~.
\label{pd_gap}
\end{equation}
It has been shown that (when both primal and dual feasible regions\footnote{The feasible region consists of the set of variables for which all the equality and inequality constraints are satisfied.} aren't empty) the primal-dual gap vanishes at the optimal value of $X$ and $\gamma$ \cite{nesterov}. Because $X$ and $Z$ are also positive semidefinite at their optimum, it follows that the much stronger \emph{complementary slackness} condition holds when both $X$ and $Z$ are optimal:
\begin{equation}
XZ = 0 ~.
\label{compl_slack}
\end{equation}
\section{Formulation of v2DM as a semidefinite program}
In v2DM we want to optimize the energy by varying a matrix, the 2DM, under the constraints that it has the right particle number, and that some linear matrix maps of the 2DM are positive semidefinite, {\it i.e.}
\begin{eqnarray}
\label{v2DM}E^N_{\text{SDP}}\left(H_\nu\right) &=& \min_{\Gamma} \mathrm{Tr}~\left[\Gamma H^{(2)}_\nu\right]\\
\text{u.c.t.}&&\left\{
   \begin{array}{l}
   \mathrm{Tr}~\Gamma = \frac{N(N-1)}{2}~,\\
   \mathcal{L}_j\left(\Gamma\right) \succeq 0~.
   \end{array}
   \right.
   \label{2DM_constraints}
\end{eqnarray}
Here the $\mathcal{L}_j$ are a collection of constraints as defined in Section~\ref{standard_n_rep}. This is obviously a very similar problem to the SDP introduced in the previous Section. The explicit connection is worked out in this Section for both the primal and dual formulation.
\subsection{Primal formulation}
In a primal SDP one varies over a matrix $X$, which has to be positive semidefinite. It is therefore reasonable to have $X$ as a blockmatrix, with the different constraint matrices as blocks:
\begin{equation}
X = \bigoplus_j X_{\mathcal{L}_j} \succeq 0~.
\end{equation}
From now on we drop the index $j$ on the constraints $\mathcal{L}_j$ when talking about a general constraint different from $\mathcal{I}$.
The matrix $X$ contains free-ranging variables, so we need to put linear constraints on these variables to make sure that, for all matrix inequalities $\mathcal{L}$:
\begin{equation}
X_{\mathcal{L}} = \mathcal{L}\left(X_\mathcal{I}\right)~.
\label{primal_equalities}
\end{equation}
This can be done using the linear constraints of the primal SDP. If we introduce a complete and orthonormal basis of the constraint-matrix spaces, $\{g_\mathcal{L}^i\}$, we can reformulate the linear equalities that have to be fulfilled as (where the trace is on $\mathcal{L}$-space):
\begin{equation}
\mathrm{Tr}~X_{\mathcal{L}}g_\mathcal{L}^i = \mathrm{Tr}~\mathcal{L}\left(X_\mathcal{I}\right)~g_\mathcal{L}^i~,
\end{equation}
or, using the Hermitian adjoint maps defined in the previous Section:
\begin{equation}
\mathrm{Tr}~X_{\mathcal{L}}g_\mathcal{L}^i = \mathrm{Tr}~X_\mathcal{I}~\mathcal{L}^\dagger\left(g_\mathcal{L}^i\right)~.
\end{equation}
The equalities (\ref{primal_equalities}) can now be imposed in a standard primal SDP fashion:
\begin{equation}
\mathrm{Tr}~Xu_\mathcal{L}^i = 0~,
\end{equation}
in which $u_\mathcal{L}^i$ is a blockmatrix with two non-zero blocks, the $\mathcal{I}$ and the $\mathcal{L}$ block:
\begin{eqnarray}
\left(u^i_\mathcal{L}\right)_\mathcal{I} &=& -\mathcal{L}^\dagger\left(g^i_\mathcal{L}\right)~,\\
\left(u^i_\mathcal{L}\right)_\mathcal{L} &=& g^i_\mathcal{L}~.
\end{eqnarray}
As an example, when the active constraints are $\mathcal{IQG}$, the $u_\mathcal{G}^i$ matrices which impose the linear constraints on the $X_\mathcal{G}$ matrix are:
\begin{equation}
u^i_\mathcal{G} = 
\left(
\begin{matrix}
-\mathcal{G}^\dagger\left(g^i_\mathcal{G}\right) & 0 & 0\\
0&0&0\\
0&0&g^i_\mathcal{G}
\end{matrix}
\right)~.
\end{equation}
The particle number constraint can easily be written in the standard form:
\begin{equation}
\mathrm{Tr}~Xu^{\mathrm{Tr}}=\frac{N(N-1)}{2}~,
\end{equation}
where only the $\mathcal{I}$ block of $u^\mathrm{Tr}$ is non-zero, and equal to the unit matrix on two-particle space:
\begin{equation}
[(u^\mathrm{Tr})_\mathcal{I}]_{\alpha\beta;\gamma\delta} = \mathbb{1}_{\alpha\beta;\gamma\delta} = \delta_{\alpha\gamma}\delta_{\beta\delta} - \delta_{\alpha\delta}\delta_{\beta\gamma}~.
\label{unit}
\end{equation}
For the energy to be optimized we only have to set the $\mathcal{I}$ block of $u^0$ equal to the Hamiltonian and the rest equal to zero. Now we have everything to express v2DM as a primal SDP, and use standard SDP algorithms to solve it. This primal formulation was used by M. Nakata \cite{nakata_first} in his pioneering article on variational density matrix optimization.
\subsection{Dual formulation}
To express v2DM as a dual SDP it is useful to introduce a complete, orthonormal basis of traceless two-particle matrix space $\{f^i\}$, which satisfies the following relationships:
\begin{eqnarray}
\mathrm{Tr}~f^i &=& 0~,\\
\mathrm{Tr}~f^if^j &=& \delta^{ij}~,\\
f^i_{\alpha\beta;\gamma\delta} = -f^i_{\beta\alpha;\gamma\delta} &=& -f^i_{\delta\gamma;\alpha\beta} = f^i_{\delta\gamma;\beta\alpha}~.
\end{eqnarray}
In this basis every 2DM can be decomposed as:
\begin{equation}
\Gamma = \left(\frac{2\mathrm{Tr}~\Gamma}{M(M-1)}\right)\mathbb{1} + \sum_i \mathrm{Tr}~\left[\Gamma f^i\right] f^i~.
\label{f_decomp}
\end{equation}
If we define:
\begin{equation}
\gamma_i = \mathrm{Tr}~\Gamma f^i \qquad\text{and}\qquad h^i = \mathrm{Tr}~H^{(2)}f^i~,
\end{equation}
the energy expression can be written as:
\begin{equation}
E = \mathrm{Tr}~\Gamma H^{(2)} = \frac{2\left(\mathrm{Tr}~\Gamma\right)\left(\mathrm{Tr}~H^{(2)}\right)}{M(M-1)} + \sum_i \gamma_i h^i~.
\end{equation}
When the particle number is fixed, one can see that the only freedom one has to optimize the energy is in the traceless part of the matrix. This means that optimizing over the matrix $\Gamma$ under the particle number constraint is equivalent to:
\begin{equation}
\min_\gamma \sum_i \gamma_i h^i~.
\end{equation}
Since the matrix maps $\mathcal{L}$ are linear and homogeneous in $\Gamma$, one can write:
\begin{equation}
\mathcal{L}(\Gamma) = \frac{N(N-1)}{M(M-1)}\mathcal{L}(\mathbb{1}) + \sum_i \gamma_i \mathcal{L}(f^i)~.
\end{equation}
Defining:
\begin{equation}
u^0 = \frac{N(N-1)}{M(M-1)}\bigoplus_j \mathcal{L}_j\left(\mathbb{1}\right)\qquad\text{and}\qquad u^i = \bigoplus_j \mathcal{L}_j\left(f^i\right)~,
\label{u_def}
\end{equation}
it is clear that the constraint
\begin{equation}
Z = u^0 + \sum_i \gamma_i u^i \succeq 0~,
\label{dual_constr}
\end{equation}
is equivalent to constraining all the matrix maps to be positive definite. One can see that it is far more natural to write the v2DM problem as a dual than as a primal SDP. There are considerably less variables to be optimized, and the constraints are much easier to apply. In the rest of the thesis we therefore use the dual formulation.
\section{\label{int_point}Interior point methods}
The first class of algorithms to discuss are interior point methods \cite{nesterov}. These methods try to optimize the cost function while staying inside of the feasible region. Interior point methods are generally very stable and have nice convergence properties, but are computationally demanding because at every iteration a large linear system has to be solved. An important concept for interior point methods is the central path.
\subsection{The central path}
\begin{figure}
\centering
\includegraphics[scale=0.5]{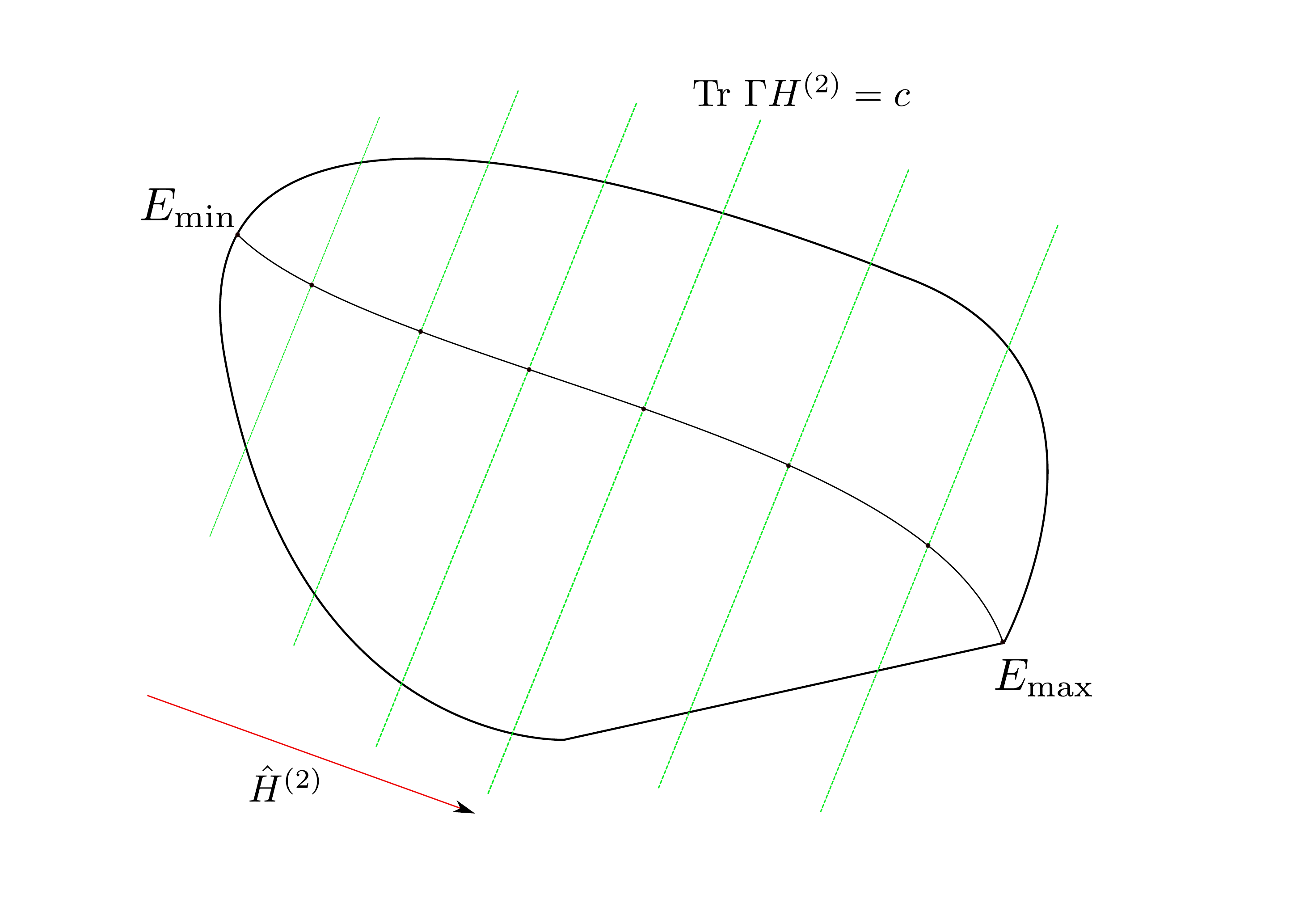}
\caption{\label{central_path} The central path is defined by the minima of the potential barrier $\phi$, in all Hamiltonian planes with allowed energies. One can see that both the maximal and minimal energy lie on the central path.}
\end{figure}
To define the central path we introduce a barrier function, which has the property of self-concordance\footnote{An $\mathbb{R}\rightarrow\mathbb{R}$ function $f(x)$ is self-concordant, if $|f'''(x)|\leq 2f''(x)^{\frac{3}{2}}$. A multi dimensional $\mathbb{R}^n\rightarrow \mathbb{R}$ function $g$ is self-concordant if the restriction of $g$ to any line segment is self-concordant. This property is primarily important in the convergence analysis of various algorithms.} \cite{self_concord}, defined on the feasible region:
\begin{equation}
\phi(\gamma) = -\ln{\det{Z(\gamma)}}~.
\label{barrier}
\end{equation}
This function is rather flat on the inside of the feasible region, becomes larger and larger when coming closer to the boundary and is $+\infty$ on the edges of the feasible region. The point that minimizes this potential is called the analytical center of the feasible region. For the following set of optimization problems:
\begin{equation}
\min_\gamma \phi(\gamma)\qquad\text{u.c.t.}\qquad  \sum_i \gamma_i h^i = e~,
\label{dual_optim_const_ener}
\end{equation}
the optima, for all allowed energies $E_{\text{min}} \leq e \leq E_{\text{max}}$, define the central path, as shown in Figure \ref{central_path}. Both the maximal and minimal energy obtainable lie on the central path. This is important because interior point methods try to follow the central path in their search for the optimum, since the barrier function becomes singular on the edge of the feasible region, which should be avoided as long a possible.

There are some nice properties associated with the central path of the primal and the dual problem. To see this, consider the optimality conditions for (\ref{dual_optim_const_ener}), which is that the gradient of Lagrangian of the problem has to vanish, {\it i.e.}:
\begin{equation}
\mathrm{Tr}~Z^{-1}u^i = \lambda h^i~,
\label{optimality_ce}
\end{equation}
with $\lambda$ some Lagrange multiplier. The fact that Eq.~(\ref{optimality_ce}) shows a remarkable similarity to the primal feasibility condition is no coincidence. The matrix $Z^{-1}/\lambda$ is not only primal feasible, but can be shown to be the solution to the following optimization problem:
\begin{equation}
\min_X -\ln\det X \qquad\text{u.c.t.}\qquad\mathrm{Tr}~Xu^i = h^i\qquad\text{and}\qquad -\mathrm{Tr}~Xu^0 = \eta - \frac{n}{\lambda}~,
\end{equation}
with $n$ the dimension of the matrix $X$. This means there is a pairing between points on the central path of the primal and the dual problem. For every point on the dual central path $Z_c$, there is one on the primal central path $X_c$, and up to a scale factor they are each others inverse:
\begin{equation}
X_c Z_c = \frac{\eta}{n}\mathbb{1}~.
\end{equation}
When we have reached to optimum, the primal-dual gap vanishes and we retain the complentary slackness condition:
\begin{equation}
X_c^* Z_c^* = 0~.
\end{equation}
\subsection{\label{pr_sdp}Dual-only potential reduction method}
This is the most transparent interior point algorithm, and it only considers the dual problem. It is a conceptually simple algorithm that is very flexible and easy to adapt to the specific structure of the problem at hand. The idea is to use a potential barrier function to impose the positive semidefiniteness of $Z$. If we minimize the following potential:
\begin{equation}
\phi(\gamma) = \sum_i \gamma_i h^i - t\ln\det Z(\gamma)~,
\label{dual_only_pot}
\end{equation}
for a certain value of $t$, the optimal value lies where the two competing terms in Eq.~(\ref{dual_only_pot}) are in balance. The energy term tries to move $\gamma$ in the negative direction of $h$, but as it gets closer to the edge of the feasible region, the potential gets steeper. The parameter $t$ decides where this balance occurs, and the smaller it becomes, the closer to the edge the optimum is. As a function of $t$, the optimum $\gamma^*(t)$ lies on the central path. For large $t$, $\gamma^*(t)$ lies around the analytic center, for $t\rightarrow0$, $\gamma^*(t)$ lies on the edge, which is exactly the solution to the dual SDP. The idea of the potential reduction method is to solve Eq.~(\ref{dual_only_pot}) for a large enough value of $t$, and use the solution as a starting point for a subsequent optimization with a smaller value of $t$. This is repeated until $t$ is small enough, and we have converged to a point close to the edge.
\subsubsection{Solution for fixed penalty}
To solve the optimization problem Eq.~(\ref{dual_only_pot}) we use the Newton-Raphson method. This method approximates the non-linear potential by its second-order Taylor expansion:
\begin{equation}
\phi(\gamma_0 + \delta\gamma) \approx \phi(\gamma_0) + \sum_i \delta\gamma_i \underbrace{\left(\frac{\partial \phi}{\partial \gamma_i}\right)}_{:= \nabla\phi^i} + \frac{1}{2}\delta\gamma_i\underbrace{\left(\frac{\partial^2 \phi}{\partial \gamma_i \partial \gamma_j}\right)}_{:=\mathcal{H}^{ij}}\delta\gamma_j~,
\label{newton}
\end{equation}
and then searches for the step $\delta\gamma$ that minimizes this expansion. The optimality conditions for Eq.~(\ref{newton}) lead to the following system of linear equations to determine $\delta\gamma$:
\begin{equation}
\sum_j \mathcal{H}^{ij} \delta\gamma_j = -\nabla\phi^i~.
\label{newt_eq}
\end{equation}
When sufficiently close to the optimum, the Newton method is known to converge quadratically. The only problem left is the solution of the linear equations (\ref{newt_eq}). 

\paragraph{The gradient:} the gradient of Eq.~(\ref{dual_only_pot}) reads:
\begin{equation}
\nabla\phi^i = h^i - t\mathrm{Tr}~Z^{-1}u^i~,
\label{formal_gradient}
\end{equation}
and can be rewritten as:
\begin{eqnarray}
\nonumber \nabla\phi^i &=& h^i - t\sum_j\mathrm{Tr}~\left[\mathcal{L}_j(\Gamma)^{-1}\mathcal{L}_j(f^i)\right]\\
&=& \mathrm{Tr}~\left[\left(H^{(2)} - t\sum_j \mathcal{L}_j^\dagger\left[\mathcal{L}_j\left(\Gamma\right)^{-1}\right]\right)f^i\right]~.
\end{eqnarray}
This means we can express the gradient as a matrix, without any reference to the basis $\{f^i\}$:
\begin{equation}
\nabla\phi = \hat{P}_{\mathrm{Tr}}\left(H^{(2)} - t\sum_j \mathcal{L}_j^\dagger\left[\mathcal{L}_j\left(\Gamma\right)^{-1}\right]\right)~,
\label{gradient}
\end{equation}
where $\hat{P}_{\mathrm{Tr}}$ is the projection on traceless 2DM-space:
\begin{equation}
\hat{P}_\mathrm{Tr}(A) = A - \frac{2\mathrm{Tr}~A}{M(M-1)}\mathbb{1}~.
\label{proj_Tr}
\end{equation}
\paragraph{The Hessian:}
the Hessian is the matrix formed by the second derivatives of Eq.~(\ref{dual_only_pot}):
\begin{equation}
\mathcal{H}^{ij} = t\mathrm{Tr}~\left[Z^{-1}u^i Z^{-1}u^j\right] = t \sum_k\mathrm{Tr}~\left[\mathcal{L}_k(\Gamma)^{-1}\mathcal{L}_k(f^i)\mathcal{L}_k(\Gamma)^{-1}\mathcal{L}_k(f^j)\right] ~.
\label{hessian_structure}
\end{equation}
The action of the Hessian on any traceless matrix $\Delta = \sum_i \delta_i f^i$ can be expressed as:
\begin{eqnarray}
\nonumber\sum_j\mathcal{H}^{ij}\delta_j &=& t \sum_k\mathrm{Tr}~\left[\mathcal{L}_k(\Gamma)^{-1}\left(\sum_j\mathcal{L}_k(f^j)\delta_j\right)\mathcal{L}_k(\Gamma)^{-1}\mathcal{L}_k(f^i)\right] \\
&=&t\sum_k \mathrm{Tr}~\left[\mathcal{L}^\dagger_k\left(\mathcal{L}_k(\Gamma)^{-1}\mathcal{L}_k(\Delta)\mathcal{L}_k(\Gamma)^{-1}\right)f^i\right]~.
\end{eqnarray}
It follows that the action of the Hessian on a traceless matrix $\Delta$ can be written as another matrix:
\begin{equation}
\mathcal{H}\Delta = t \hat{P}_{\mathrm{Tr}}\left[\sum_k \mathcal{L}^\dagger_k\left(\mathcal{L}_k(\Gamma)^{-1}\mathcal{L}_k(\Delta)\mathcal{L}_k(\Gamma)^{-1}\right)\right]~.
\label{hessian}
\end{equation}
It is important to notice that Eq.~(\ref{hessian}) constitutes a very efficient matrix-vector product for the linear system Eq.~(\ref{newt_eq}). If the dimension of single-particle space is $M$, a Hessian matrix-vector product would scale normally as $M^8$. The special structure of the Hessian for physical problems implies that the matrix-vector product in Eq.~(\ref{hessian}) scales as only $M^6$. In addition, the Hessian is positive definite and symmetric, and this makes the linear conjugate gradient method an attractive way to solve the linear system iteratively \cite{jrs}. In this way we do not need to construct and store the Hessian, but just use it through its action on matrices. We also have no need to choose an explicit basis $\{f^i\}$ because everything can be expressed in terms of matrices.
\paragraph{The line search:}
once we have found the direction $\Delta$ that minimizes the quadratic approximation of the potential (\ref{dual_only_pot}), we can speed up the convergence of the Newton method by minimizing (\ref{dual_only_pot}) as a function of the steplength $\alpha$ taken in this direction, {\it i.e.} find the $\alpha$ for which:
\begin{equation}
\nabla_\alpha\phi(\alpha) = \frac{ \partial }{\partial \alpha}\phi(\gamma + \alpha\delta\gamma)=0~.
\label{optim_ls}
\end{equation}
This function can be calculated analytically:
\begin{equation}
\nabla_\alpha\phi(\alpha) = \mathrm{Tr}~\Delta H^{(2)} - t \sum_j\mathrm{Tr}~\left[\mathcal{L}_j\left(\Gamma + \alpha\Delta\right)^{-1}\mathcal{L}_j(\Delta)\right]~,
\label{optim_ls_matrix}
\end{equation}
and evaluated for any $\alpha$ (given a 2DM $\Gamma$ and a direction $\Delta$). Every evaluation of this function involves inverting a matrix, which means this is quite slow. There is a way to simplify Eq.~(\ref{optim_ls_matrix}) by solving the generalized eigenvalue problem:
\begin{equation}
\mathcal{L}(\Delta) w = \lambda^\mathcal{L} \mathcal{L}(\Gamma) w~.
\label{gen_eig_ls}
\end{equation}
This can be transformed to a normal symmetric eigenvalue problem with real eigenvalues and orthogonal eigenvectors:
\begin{equation}
\left(\mathcal{L}(\Gamma)^{-\frac{1}{2}}\mathcal{L}(\Delta)\mathcal{L}(\Gamma)^{-\frac{1}{2}}\right) v = \lambda^\mathcal{L} v~\qquad\text{with}\qquad v=\mathcal{L}(\Gamma)^{\frac{1}{2}}w~.
\end{equation}
The completeness of the eigenvectors $v$ of the above problem, implies that:
\begin{eqnarray}
\label{L_Gamma_ls}\mathcal{L}(\Gamma) &=& \mathcal{L}(\Gamma)^{\frac{1}{2}}\left(\sum_i v_i v_i^T\right)\mathcal{L}(\Gamma)^{\frac{1}{2}}~,\\
\label{L_Delta_ls}\mathcal{L}(\Delta) &=& \mathcal{L}(\Gamma)^{\frac{1}{2}}\left(\sum_i \lambda^\mathcal{L}_i v_i v_i^T\right)\mathcal{L}(\Gamma)^{\frac{1}{2}}~.
\end{eqnarray}
Using Eqs.~(\ref{L_Gamma_ls}) and (\ref{L_Delta_ls}) it follows that:
\begin{equation}
\mathcal{L}\left(\Gamma + \alpha\Delta\right) = \mathcal{L}(\Gamma)^{\frac{1}{2}}\left[\sum_i (1 + \alpha\lambda^\mathcal{L}_i)v_iv_i^T \right]\mathcal{L}(\Gamma)^{\frac{1}{2}}~,
\end{equation}
and we can express the matrixproduct term in Eq.~(\ref{optim_ls_matrix}) as:
\begin{eqnarray}
\nonumber\mathcal{L}\left(\Gamma + \alpha\Delta\right)^{-1}\mathcal{L}(\Delta) &=& \mathcal{L}(\Gamma)^{-\frac{1}{2}}\left[\sum_i \left(\frac{1}{1 +\alpha\lambda^\mathcal{L}_i}\right)v_i v_i^T\right]\left[\sum_j \lambda^\mathcal{L}_j v_j v_j^T\right]\mathcal{L}(\Gamma)^{\frac{1}{2}}\\
\label{matrixprod_ls}&=& \mathcal{L}(\Gamma)^{-\frac{1}{2}}\left[\sum_i\left(\frac{\lambda^\mathcal{L}_i}{1 +\alpha\lambda^\mathcal{L}_i}\right)v_iv_i^T\right]\mathcal{L}(\Gamma)^{\frac{1}{2}}~,
\end{eqnarray}
in which the orthonormality of the eigenvectors has been used. The trace of Eq.~(\ref{matrixprod_ls}) can therefore be expressed purely in terms of the eigenvalues $\lambda^\mathcal{L}$:
\begin{equation}
\mathrm{Tr}~\left[\mathcal{L}\left(\Gamma + \alpha\Delta\right)^{-1}\mathcal{L}(\Delta)\right] = \sum_i\frac{\lambda^\mathcal{L}_i}{1 +\alpha\lambda^\mathcal{L}_i}~.
\end{equation}
It is now clear that Eq.~(\ref{optim_ls_matrix}) can be expressed in terms of the eigenvalues of Eq.~(\ref{gen_eig_ls}):
\begin{equation}
\nabla_\alpha\phi(\alpha) = \mathrm{Tr}~\Delta H^{(2)} - t \sum_j\left(\sum_i\frac{\lambda_i^{\mathcal{L}_j}}{1 + \alpha\lambda^{\mathcal{L}_j}_i}\right)~.
\label{lsfunc}
\end{equation}
Once the generalized eigenvalue problem Eq.~(\ref{gen_eig_ls}) is solved, one can evaluate (\ref{lsfunc}) as a \emph{scalar} function for any value of $\alpha$, and the bisection method to solve Eq.~(\ref{optim_ls}) becomes very efficient.
\paragraph{Optimal value for fixed $t$:}
the optimal value of the optimization problem for fixed $t$ is located where the gradient is zero:
\begin{equation}
\nabla\phi^i = h^i - t\mathrm{Tr}~\left[Z^{-1}u^i\right] = 0~.
\end{equation}
This point lies on the central path, and we see that the matrix:
\begin{equation}
X = tZ^{-1}~,
\end{equation}
is primal feasible. Based on Eq.~(\ref{optimality_ce}), we can estimate that the primal-dual gap, when we have optimized for a certain value of $t$, is given by:
\begin{equation}
\eta = \mathrm{Tr}~ZX = nt~,
\label{pd_gap_dual_only}
\end{equation}
and use this number as a convergence criterion in the program.
\subsubsection{Outline of the algorithm}
To summarize the different aspects of the method exposed in the previous paragraphs a schematic outline of the algorithm is given below:
\begin{algorithm}
\caption{\label{pr_algo} The dual-only potential reduction algorithm}
\begin{algorithmic}
\State Choose $\epsilon > 0~$;$~\epsilon_{\text{new}} > 0$;$~0 < \beta < 1$ 
\State $\Gamma = \frac{N(N-1)}{M(M-1)}\mathbb{1}$;~$t = 1$
\While{$nt > \epsilon$}
\While{$\delta_{\text{new}} > \epsilon_{\text{new}}$}\Comment{Newton-Raphson loop}
\State $\nabla\phi = \hat{P}_{\mathrm{Tr}}\left(H^{(2)} - t\sum_j \mathcal{L}_j^\dagger\left[\mathcal{L}_j\left(\Gamma\right)^{-1}\right]\right)~$
\State Solve $\mathcal{H}\Delta = -\nabla\phi$ for $\Delta$~\Comment{Linear Conjugate gradient method}
\State Diagonalize $\mathcal{L}_i(\Gamma)^{-\frac{1}{2}}\mathcal{L}_i(\Delta)\mathcal{L}_i(\Gamma)^{-\frac{1}{2}}\rightarrow \lambda^{\mathcal{L}_i}_j$
\State Solve $\mathrm{Tr}~\Delta H^{(2)} - t \sum_j\left(\sum_i\frac{\lambda_i^{\mathcal{L}_j}}{1 + \alpha\lambda^{\mathcal{L}_j}_i}\right) = 0$ for $\alpha$
\State $\Gamma \gets \Gamma + \alpha\Delta$
\State $\delta = \alpha \|\Delta\|$
\EndWhile
\State $t \gets \beta t$
\EndWhile
\end{algorithmic}
\end{algorithm}
\subsubsection{\label{linineq_do}Adding linear inequality constraints}
Suppose we want to add a number of additional constraints of the form:
\begin{equation}
\mathrm{Tr}~\Gamma C_i \geq c_i \qquad\text{for}\qquad i = 1,\ldots,m~.
\label{stand_linineq}
\end{equation}
How can we fit these constraints into the formalism derived in the previous Sections? For the dual-only potential reduction algorithm it turns out to be quite straightforward. We can rewrite Eq.~(\ref{stand_linineq})
\begin{equation}
L(\Gamma)_{ii} = \mathrm{Tr}~\Gamma C^0_i = \mathrm{Tr}~\left[\Gamma\left(C_i - \frac{2c_i}{N(N-1)}\mathbb{1}\right)\right]\geq 0~,
\label{linineq}
\end{equation}
as a diagonal matrix of dimension $m$ that has to be positive semidefinite. The standard formulation of the dual SDP can still be used, and we just have to change the matrices $\{u^0,u^i\}$ in Eq.~(\ref{u_def}) to include the new constraints:
\begin{equation}
u^0 = \frac{N(N-1)}{M(M-1)}\left[\bigoplus_j \mathcal{L}_j\left(\mathbb{1}\right)\right]\bigoplus L(\mathbb{1})~,\qquad u^i = \left[\bigoplus_j \mathcal{L}_j\left(f^i\right)\right]\bigoplus L(f^i)~.
\label{u_li}
\end{equation}
The changes in the algorithm pertain to the calculation of the gradient, the Hessian map and the line search function.
\paragraph{The gradient:}
the gradient is still of the form Eq.~(\ref{formal_gradient}), but is now extended as:
\begin{equation}
\nabla\phi^i = h^i - t\sum_j\mathrm{Tr}~\left[\mathcal{L}_j(\Gamma)^{-1}\mathcal{L}_j(f^i)\right] - t\sum_jL(\Gamma)_{jj}^{-1}L(f^i)_{jj}~.
\end{equation}
The last term can be written as
\begin{equation}
L_{jj}(f^i) = \mathrm{Tr}~\left[C^0_j f^i\right]~,
\end{equation}
so the matrix representation of the gradient reads:
\begin{equation}
\nabla\phi = \hat{P}_{\mathrm{Tr}}\left(H^{(2)} - t\sum_j \mathcal{L}_j^\dagger\left[\mathcal{L}_j\left(\Gamma\right)^{-1}\right] - t\sum_j L(\Gamma)^{-1}_{jj}C^0_j\right)~.
\end{equation}
\paragraph{The Hessian:}
when linear constraints are added, the Hessian matrix gets the form:
\begin{equation}
\mathcal{H}^{ij} = t \sum_k\mathrm{Tr}~\left[\mathcal{L}_k(\Gamma)^{-1}\mathcal{L}_k(f^i)\mathcal{L}_k(\Gamma)^{-1}\mathcal{L}_k(f^j)\right] + t\sum_kL(\Gamma)^{-2}_{kk} L(f^i)_{kk}L(f^j)_{kk}~.
\end{equation}
The action of the Hessian on a traceless matrix is now:
\begin{equation}
\sum_j\mathcal{H}^{ij}\delta_j = t \sum_k\mathrm{Tr}~\left[\mathcal{L}_k(\Gamma)^{-1}\mathcal{L}_k(\Delta)\mathcal{L}_k(\Gamma)^{-1}\mathcal{L}_k(f^i)\right] + t\sum_k\left(\frac{L(\Delta)_{kk}}{L(\Gamma)_{kk}^2}\right)L(f^i)_{kk}~,
\end{equation}
which can be written as a matrix:
\begin{equation}
\mathcal{H}\Delta = t \hat{P}_{\mathrm{Tr}}\left[\sum_k \mathcal{L}^\dagger_k\left(\mathcal{L}_k(\Gamma)^{-1}\mathcal{L}_k(\Delta)\mathcal{L}_k(\Gamma)^{-1}\right) + \sum_k \left(\frac{L(\Delta)_{kk}}{L(\Gamma)^2_{kk}}\right)C^0_k\right]~.
\label{hessian_li}
\end{equation}
\paragraph{The line search:}
the terms that are added to the line search function (\ref{optim_ls}) are already scalar, and if we define:
\begin{equation}
\lambda^L_i = \frac{L(\Delta)_{ii}}{L(\Gamma)_{ii}}~,
\end{equation}
the added terms are of exactly the same form as the original ones, and the new line search function becomes:
\begin{equation}
\nabla_\alpha\phi(\alpha) = \mathrm{Tr}~\Delta H^{(2)} - t \sum_j\left(\sum_i\frac{\lambda_i^{\mathcal{L}_j}}{1 + \alpha\lambda^{\mathcal{L}_j}_i}\right) -  t\sum_i \frac{\lambda^L_{i}}{1+\alpha\lambda^L_{i}}~.
\end{equation}
\subsubsection{\label{pr_sdp_le}Adding linear equality constraints}
Adding linear equality constraints of the form:
\begin{equation}
\mathrm{Tr}~\Gamma E_n = e_n~,
\label{equality_constr}
\end{equation}
can be necessary when we want to restrict the optimization to a certain spin expectation value, or if we want to impose some symmetry on the 2DM. It is relatively easy to include this in the algorithm: we just have change the $u^0$ and $u^i$ matrices slightly, and define a new orthogonal basis $\{f^i\}$ which is not only traceless, but also orthogonal to all the $\{E_n\}$'s. For this we first introduce a new orthogonal set of matrices $\{\tilde{E}_n\}$, which is a basis for the space spanned by the $\{E_n'\}=\{\mathbb{1},E_n\}$:
\begin{equation}
\tilde{E}_n = \sum_{nm}(S^{-\frac{1}{2}})_{nm}E'_m\qquad\text{with}\qquad S_{nm} = \mathrm{Tr}~E'_nE'_m~.
\end{equation}
Applying the same transformation to the corresponding scalars $\{e'_n\}=\{\frac{N(N-1)}{2},e_n\}$:
\begin{equation}
\tilde{e}_n = \sum_{nm}(S^{-\frac{1}{2}})_{nm}e'_m~,
\end{equation}
we can expand every 2DM satisfying the equality constraints (\ref{equality_constr}) as:
\begin{equation}
\Gamma = \sum_n \tilde{e}_n \tilde{E}_n + \sum_i \gamma_i f^i~,
\end{equation}
in which the set of matrices $\{f^i\}$ form an orthonormal basis that spans the orthogonal complement of $\{\tilde{E}_n\}$. This can be represented as a dual SDP, in which the $u^i$ matrices are still defined as in Eq.~(\ref{u_def}), but with the new definition for the $f^i$'s. The $u^0$ matrix can be written as:
\begin{equation}
u^0 = \bigoplus_i \mathcal{L}_i(f^0)~,
\end{equation}
where $f^0$ is defined as:
\begin{equation}
f^0 = \sum_n \tilde{e}_n \tilde{E}_n~.
\end{equation}
The algorithm itself remains basically unaltered. The only things that change are the gradient and the Hessian map. As $\{f^i\}$ is now not only traceless but also orthogonal to the $\tilde{E}_n$'s, we have to replace the projection $\hat{P}_\mathrm{Tr}$ in Eqs.~(\ref{gradient}) and (\ref{hessian}) with a new projection operator:
\begin{equation}
\hat{P}_f(A) = A - \sum_n \mathrm{Tr}~\left[A\tilde{E}_n\right]\tilde{E}_n~.
\label{P_f}
\end{equation}
\subsection{\label{primal_dual_algo}Primal-dual predictor-corrector method}
The next algorithm solves both the dual and the primal problem at the same time \cite{primal_dual}. It is a much more complex algorithm than the dual-only method, and conceptually not so simple to describe. At first sight, it would seem superfluous to solve both the primal and the dual version of the problem, as the solution to the dual alone gives us everything we need. But as we will show, knowledge of the primal problem can speed up convergence for the dual, and vice-versa. By exploiting the specific structure of the linear matrix maps, we are able to solve the primal and the dual at almost the same cost as solving the dual alone.
The method we use is a path-following method, which means we won't optimize some potential, but try to follow the central path which leads us to the optimal value. Starting from a feasible primal-dual starting point $(X,Z)$, we try to find a new primal-dual point:
\begin{equation}
(Z+\Delta_Z)(X+\Delta_X) = \nu\frac{\eta}{n}\mathbb{1}~,\qquad\text{with}\qquad 0 \leq \nu \leq 1~,
\label{least_square_cp}
\end{equation}
that has a reduced primal-dual gap and is located on the central path. The problem with Eq.~(\ref{least_square_cp}) is that the left-hand side is not a symmetric matrix. There are several ways to symmetrize this equation, which lead to systems of linear equations that determine the primal-dual direction $(\Delta_X,\Delta_Z)$. We follow the approach introduced by J.~F.~Sturm \cite{sturm}.
\subsubsection{Equations of motion}
To derive the equations of motion for $(\Delta_X,\Delta_Z)$ we first consider how a semidefinite program transforms under general linear transformations $L$. If the primal matrix $X$ transforms as (where $L^{-T} = (L^{-1})^T$):
\begin{equation}
\overline{X} = L^TXL~,
\end{equation}
obviously the conditions
\begin{equation}
\overline{X} \succeq 0 \qquad\text{and}\qquad X \succeq 0~,
\end{equation}
are equivalent. If the objective function and the linear constraints are to remain the same, {\it i.e.}
\begin{equation}
\mathrm{Tr}~\overline{X}\overline{u}^0 = \mathrm{Tr}~{X}{u}^0~,\qquad\text{and}\qquad\mathrm{Tr}~\overline{X}\overline{u}^i =\mathrm{Tr}~Xu^i = h^i~,
\end{equation}
the matrices $\{u^\alpha\} = \{u^0,u^i\}$ have to transform as:
\begin{equation}
\overline{u}^\alpha = L^{-1}u^\alpha L^{-T}~.
\end{equation}
This means that the dual matrix transforms in the same way:
\begin{equation}
\overline{Z} = \overline{u}^0+\sum_i\gamma_i\overline{u}^i = L^{-1} Z L^{-T}~.
\end{equation}
As a result, the primal-dual gap remains invariant,
\begin{equation}
\mathrm{Tr}~\overline{X}\overline{Z} = \mathrm{Tr}~L^T{X}LL^{-1}{Z}L^{-T} = \mathrm{Tr}~XZ~,
\end{equation}
and the central path is mapped on itself:
\begin{equation}
\overline{X}\overline{Z} = L^T{X}LL^{-1}{Z}L^{-T} = \frac{\eta}{n}\mathbb{1}~.
\end{equation}
Now consider the transformation $L_d$ that maps $X$ and $Z$ to the same matrix $V$:
\begin{equation}
L_d^TXL_d = L_d^{-1}ZL_d^{-T} = V~.
\label{transfo_to_V}
\end{equation}
For an $L_d$ satisfying (\ref{transfo_to_V}) the following also holds:
\begin{equation}
L_dL_d^TXL_dL_d^T = Z~,
\end{equation}
from which it follows that:
\begin{equation}
\left(X^{\frac{1}{2}}L_dL_d^TX^{\frac{1}{2}}\right)^{2} = X^{\frac{1}{2}}ZX^{\frac{1}{2}}~.
\end{equation}
All of this implies that $L_d$ satisfies the following equation:
\begin{equation}
D(X,Z) :=  X^{-\frac{1}{2}}\left(X^{\frac{1}{2}}ZX^{\frac{1}{2}}\right)^{\frac{1}{2}}X^{-\frac{1}{2}}=  L_dL_d^T ~,
\label{D}
\end{equation}
and $L_d$ can be constructed by taking a Cholesky decomposition of $D(X,Z)$. Even more, when we right-multiply $L_d$ with an arbitrary orthogonal transformation the product still satisfies (\ref{D}). It follows from Eq.~(\ref{transfo_to_V}) that this freedom can be exploited to transform $V$ into a diagonal matrix. 
For a feasible primal-dual point $(X,Z)$, we can use an $L_d$ of the form (\ref{D}) to transform it into $(\overline{X},\overline{Z}) = (V,V)$. We can now ask what step $(D_X,D_Z)$ in this transformed space we have to take in order to satisfy:
\begin{equation}
(V+D_X)(V+D_Z) = \frac{\nu\eta}{n}\mathbb{1}~,\qquad\text{u.c.t.}\qquad \mathrm{Tr}~D_X\overline{u}^i = 0\quad\text{and}\quad D_Z = \sum_i d^Z_i \overline{u}^i~.
\end{equation}
Ignoring the quadratic term, this reduces to:
\begin{equation}
V^2+V(D_X+D_Z) = \frac{\nu\eta}{n}\mathbb{1},~\qquad\text{or}\qquad D_X+D_Z = \frac{\nu\eta}{n}V^{-1} - V~,
\label{transf_eom}
\end{equation}
which expresses the primal-dual step in the transformed space as a function of $V$. If we transform this back to the original space, we obtain two equivalent equations. By left-multiplying Eq.~(\ref{transf_eom}) with $L_d^{-T}$ and right-multiplying with $L_d^{-1}$ one obtains what we call the dual equation:
\begin{equation}
\Delta_X + D(X,Z)^{-1}\Delta_Z D(X,Z)^{-1} = \frac{\nu\eta}{n}Z^{-1} - X~.
\label{dual_eom}
\end{equation}
Multiplying to the left with $L_d$ and to the right with $L_d^T$ yields the primal equation:
\begin{equation}
\Delta_Z + D(X,Z)\Delta_X D(X,Z) = \frac{\nu\eta}{n}X^{-1} - Z~.
\label{primal_eom}
\end{equation}
In both equations the primal and dual step have to satisfy the equalities:
\begin{equation}
\mathrm{Tr}~\Delta_X u^i = 0\qquad\text{and}\qquad \Delta_Z = \sum_i \delta\gamma_iu^i~,
\label{pd_step_lineq}
\end{equation}
from which it follows that they are orthogonal to each other:
\begin{equation}
\mathrm{Tr}~\Delta_X \Delta_Z = 0~.
\label{pd_step_ortho}
\end{equation}
These equations can also been derived in a primal-dual potential-reduction approach, as shown by Nesterov and Todd in \cite{nest_todd}.
\subsubsection{\label{overlapmatrix}The overlap matrix}
The non-orthogonal matrices $\{u^\alpha\} = \{u^0,u^i\}$ as defined in Eq.~(\ref{u_def}), are linearly independent and they span a subspace of the total space the matrices $X$ and $Z$ inhabit. If we want to project a general matrix on the subspace spanned by the $u^\alpha$'s, we need an expression for the overlap matrix, which is defined as:
\begin{equation}
\mathcal{S}_{\alpha\beta} = \mathrm{Tr}~u^\alpha u^\beta~.
\label{def_overlap}
\end{equation}
Using the Hermitian adjoints of the linear maps $\mathcal{L}$ we can rewrite Eq.~(\ref{def_overlap}) as:
\begin{equation}
\mathcal{S}_{\alpha\beta} = \sum_k \mathrm{Tr}~\left[\mathcal{L}_k^\dagger\left(\mathcal{L}_k\left(f^\alpha\right)\right) f^\beta\right]~,
\label{expanded_overlap}
\end{equation}
in which $\{f^\alpha\}$ is the orthogonal basis of traceless 2DM space $\{f^i\}$, expanded with the normalized unity matrix $f^0$. From Eq.(\ref{expanded_overlap}) it follows that the overlap matrix can be seen as a linear map from two-particle space on itself, whose action on a two-particle matrix $\Gamma$ is:
\begin{equation}
\mathcal{S}\left(\Gamma\right) = \sum_k \mathcal{L}_k^\dagger\left(\mathcal{L}_k\left(\Gamma\right)\right)~.
\end{equation}
It turns out that this map can be written as a generalized $\mathcal{Q}$ map, which is defined as:
\begin{eqnarray}
\nonumber\mathcal{Q}(a,b,c)\left(\Gamma\right)_{\alpha\beta;\gamma\delta} &=& a\Gamma_{\alpha\beta;\gamma\delta} + b\left(\delta_{\alpha\gamma}\delta_{\beta\delta} - \delta_{\alpha\delta}\delta_{\beta\gamma}\right)\overline{\overline{\Gamma}}\\
&& - c\left(\delta_{\alpha\gamma}\overline{\Gamma}_{\beta\delta} - \delta_{\beta\gamma}\overline{\Gamma}_{\alpha\delta} - \delta_{\alpha\delta}\overline{\Gamma}_{\beta\gamma} + \delta_{\beta\delta}\overline{\Gamma}_{\alpha\gamma}\right)~.
\label{Q_like}
\end{eqnarray}
This is like the $\mathcal{Q}$ map introduced in (\ref{Q_2DM}) but with general coefficients $(a,b,c)$. The proof is somewhat tedious and proceeds by considering every $\mathcal{L}_k$ separately.
\paragraph{(1) $\mathcal{L}^\dagger_k\mathcal{L}_k = \mathcal{I}^2$:}
it is trivial to see that $\mathcal{I}^2(\Gamma) = \Gamma$ and that this is a generalized $\mathcal{Q}$ map with coefficients 
\begin{equation}
a = 1\qquad b = 0\qquad c = 0~.
\end{equation}
\paragraph{(2) $\mathcal{L}^\dagger_k\mathcal{L}_k = \mathcal{Q}^2$:}
to re-express $\mathcal{Q}^2$ we first calculate the various pieces,
\begin{eqnarray}
\overline{\mathcal{Q}}(\Gamma)_{\alpha\gamma} &=& \left[\frac{M-N-1}{N(N-1)}\right]\delta_{\alpha\gamma}\overline{\overline{\Gamma}} - \left[\frac{M-N-1}{N-1}\right]\overline{\Gamma}_{\alpha\gamma}~,\\
\overline{\overline{\mathcal{Q}}}(\Gamma) &=& \left[\frac{(M-N)(M-N-1)}{N(N-1)}\right]\overline{\overline{\Gamma}}~.
\end{eqnarray}
Substitution into Eq.~(\ref{Q_2DM}) leads once again to a generalized $\mathcal{Q}$ map with coefficients:
\begin{equation}
a = 1\qquad b = \frac{4N^2 + 2N - 4NM + M^2 -M}{N^2(N-1)^2}\qquad c = \frac{2N-M}{(N-1)^2}~.
\end{equation}
\paragraph{(3) $\mathcal{L}^\dagger_k\mathcal{L}_k = \mathcal{G}^\dagger\mathcal{G}$:}
using the same strategy one finds on the basis of Eq.~(\ref{G1_2DM}):
\begin{equation}
\overline{\mathcal{G}}(\Gamma)_{\alpha\gamma} = \frac{M-1}{N-1}\overline{\Gamma}_{\alpha\gamma}~.
\end{equation}
Substituting this into (\ref{G_down}) leads to another generalized $\mathcal{Q}$ map with coefficients:
\begin{equation}
a = 4\qquad b = 0 \qquad c = \frac{2N - M - 2}{(N-1)^2}~.
\end{equation}
\paragraph{(4) $\mathcal{L}^\dagger_k\mathcal{L}_k = \mathcal{T}_1^\dagger\mathcal{T}_1$:}
the terms needed by (\ref{T1_down}) are found by tracing (\ref{T1}):
\begin{eqnarray}
\overline{\mathcal{T}}_1\left(\Gamma\right)_{\alpha\beta;\gamma\delta} &=& (M-4)\Gamma_{\alpha\beta;\gamma\delta} + \left[\frac{M-N-2}{N(N-1)}\right]~(\delta_{\alpha\gamma}\delta_{\beta\delta} - \delta_{\alpha\delta}\delta_{\beta\gamma})\overline{\overline{\Gamma}}~\\
&&- \left[\frac{M-N-2}{N-1}\right]\hat{A}\left[\delta_{\alpha\gamma}\overline{\Gamma}_{\beta\delta} - \delta_{\beta\gamma}\overline{\Gamma}_{\alpha\delta}-\delta_{\alpha\delta}\overline{\Gamma}_{\beta\gamma} + \delta_{\beta\delta}\overline{\Gamma}_{\alpha\gamma}\right]~,\\
\overline{\overline{\mathcal{T}}}_1\left(\Gamma\right)_{\alpha\gamma} &=& \left[\frac{(M-N-2)(M-N-1)}{N(N-1)}\right]\delta_{\alpha\gamma}\overline{\overline{\Gamma}} - \left[\frac{(M-3)(M-2N)}{N-1}\right]~\overline{\Gamma}_{\alpha\gamma}~,\\
\overline{\overline{\overline{\mathcal{T}}}}_1\left(\Gamma\right) &=& \left[\frac{(M-2)(M(M-1) - 3N(M-N))}{N(N-1)}\right]\overline{\overline{\Gamma}}~,
\end{eqnarray}
and substitution into Eq.~(\ref{T1_down}) leads to the coefficients:
\begin{eqnarray*}
a &=& M-4~,\\
b &=& \frac{M^3-6M^2N-3M^2+12MN^2+12MN+2M-18N^2-6N^3}{3N^2(N-1)^2}~,\\
c &=& -\frac{M^2 + 2N^2 - 4MN - M + 8N - 4}{2(N-1)^2}~.
\end{eqnarray*}
\paragraph{(5) $\mathcal{L}^\dagger_k\mathcal{L}_k = \mathcal{T}_2^\dagger\mathcal{T}_2$:}
for the contribution of the $\mathcal{T}_2$ condition to the overlap matrix we need the terms derived from Eq.~(\ref{T2}):
\begin{eqnarray}
\overline{\mathcal{T}}_2\left(\Gamma\right)_{\alpha\beta;\gamma\delta} &=& \frac{\overline{\overline{\Gamma}}}{N - 1}(\delta_{\alpha\gamma}\delta_{\beta\delta} - \delta_{\alpha\delta}\delta_{\beta\gamma}) + M~ \Gamma~,\\
&&- \left[\delta_{\alpha\gamma}\overline{\Gamma}_{\beta\delta} - \delta_{\beta\gamma}\overline{\Gamma}_{\alpha\delta} - \delta_{\alpha\delta}\overline{\Gamma}_{\beta\gamma} + \delta_{\beta\delta}\overline{\Gamma}_{\alpha\gamma}\right]~,\\
\tilde{\mathcal{T}}_2\left(\Gamma\right)_{\alpha\beta;\gamma\delta} &=& \frac{M - N}{N - 1}\overline{\Gamma}_{\beta\delta}\delta_{\alpha\gamma} + \delta_{\beta\delta}\overline{\Gamma}_{\alpha\gamma} - (M - 2)\Gamma_{\alpha\delta;\gamma\beta}~,\\
\tilde{\tilde{\mathcal{T}}}_2\left(\Gamma\right)_{\alpha\gamma}&=& \left[\frac{M(M - N) - (N - 1)(M - 2)}{N - 1}\right]\overline{\Gamma}_{\alpha\gamma} + \delta_{\alpha\gamma}\overline{\overline{\Gamma}}~,
\end{eqnarray}
which, when substituted into Eq.~(\ref{T2_down}) give the following coefficients:
\begin{equation}
a = 5M - 8\qquad b = \frac{2}{N - 1}\qquad c = \frac{2N^2 + (M - 2)(4N-3) - M^2}{2(N - 1)^2}~.
\end{equation}
\paragraph{(6) $\mathcal{L}^\dagger_k\mathcal{L}_k = {\mathcal{T}_2}'^\dagger\mathcal{T}_2'$:}
finally, for the $\mathcal{T}_2'$ condition, the contribution to the overlap matrix is almost the same as for the regular $\mathcal{T}_2$. The extra terms in (\ref{T2P_down}) just add some simple terms to $a$ and $c$:
\begin{equation}
a = 5M - 4\qquad b = \frac{2}{N - 1}\qquad c = \frac{2N^2 + (M - 2)(4N-3) - M^2 - 2}{2(N - 1)^2}~.
\end{equation}

The overlap-matrix map is just the sum of the various terms obtained, and hence itself a generalized $\mathcal{Q}$ map, with rather complex coefficients.
\paragraph{Inverse map of a generalized $\mathcal{Q}$ map}
The inverse of a generalized $\mathcal{Q}$ map can be shown to be another generalized $\mathcal{Q}$ map. Consider for brevity the notation: 
\begin{equation}
\mathcal{Q}(a,b,c)(\Gamma) = Q~,
\end{equation}
then applying partial trace operations on Eq.~(\ref{Q_like}) leads to:
\begin{eqnarray}
\overline{\overline{\Gamma}} &=& \frac{\overline{\overline{Q}}}{a + M(M - 1)b - 2(M - 1)c}~,\\
\overline{\Gamma}_{\alpha\gamma} &=& \frac{1}{a - c(M - 2)}\left[\overline{Q}_{\alpha\gamma} - \frac{b(M - 1) - c}{a + M(M - 1)b - 2(M - 1)c}\delta_{\alpha\gamma}\overline{\overline{Q}}\right]~.
\end{eqnarray}
Upon substitution into Eq.~(\ref{Q_like}) and solving for $\Gamma$ one obtains,
\begin{equation}
\Gamma = \mathcal{Q}^{-1}(a,b,c)(Q) = \mathcal{Q}(a',b',c')(Q)~,
\end{equation}
where
\begin{eqnarray}
a' &=& \frac{1}{a}~,\\
b' &=& \frac{ba + bcM -2c^2}{a\left[c(M - 2) - a\right]\left[a + bM(M - 1) - 2c(M - 1)\right]}~,\\
c' &=& \frac{c}{a\left[c(M - 2) - a\right]}~.
\end{eqnarray}
These are important relations since they allow to evaluate the action of the inverse overlap matrix on a two-particle matrix as fast as a $\mathcal{Q}$ map. \emph{i.e.} at a computational cost which is negligible compared to the other matrix manipulations.
\subsubsection{Solution to the equations of motion}
In this Section we show how to solve Eqs.~(\ref{dual_eom}) and (\ref{primal_eom}). This is done by decoupling the primal from the dual variables, using the fact that the primal and dual step are orthogonal to each other (\ref{pd_step_ortho}). We start by deriving the dual step $\Delta_Z$ from the dual equation.
\paragraph{Solution to the dual equation}
If we project the dual equation (\ref{dual_eom}) on the space spanned by the non-orthogonal basis $\{u^i\}$ (which we will call $\mathcal{U}$-space) we obtain:
\begin{equation}
\label{dual_cg}
\sum_j \underbrace{\left(\mathrm{Tr}~D^{-1} u^j D^{-1} u^i\right)}_{\mathcal{H}^D_{ij}} \delta\gamma_j = \mathrm{Tr}~B u^i~,
\end{equation}
in which the right-hand side of Eq.~(\ref{dual_eom}) is denoted by $B$, and the linear equalities for the primal and dual step (\ref{pd_step_lineq}) have been used. The form of the equation can be seen to be identical to the Newton equations obtained in the dual-only program (\ref{newt_eq}), which means it can be solved in exactly the same way, using the linear conjugate gradient method, without explicit construction of the dual Hessian matrix $\mathcal{H}^D$ or any reference to the non-orthogonal basis set $\{u^i\}$. 
The right-hand side of Eq.~(\ref{dual_cg}) can be transformed to a two-particle matrix:
\begin{equation}
\sum_i \mathrm{Tr}~\left[Bu^i\right] f^i = \sum_k \sum_i\mathrm{Tr}~\left[B_k \mathcal{L}(f^i)\right]f^i = \hat{P}_{\mathrm{Tr}}\left[\sum_k \mathcal{L}_k^\dagger(B_k) \right]~,
\label{collapse}
\end{equation}
where $B_k$ are the different blocks of $B$ corresponding to the various matrix maps. In Eq.~(\ref{collapse}) a matrix $B$, block diagonal in the different constraint spaces, is mapped on a traceless two-particle matrix, this is called the collapse map from now on.
The left-hand side of Eq.~(\ref{dual_cg}) has exactly the same structure as the Hessian in the dual only program (\ref{hessian_structure}). As a result, one can express the action of the dual Hessian matrix $\mathcal{H}^D$ on an arbitrary traceless two-particle matrix $\Delta$ as:
\begin{equation}
\mathcal{H}^D\Delta = \hat{P}_{\text{Tr}}\left[\sum_k\mathcal{L}^\dagger_k\left(D_k^{-1}\mathcal{L}_k\left(\Delta\right) D_k^{-1}\right)\right]~,
\label{dual_hessian}
\end{equation}
in which the $D_k$ are again the blocks of the $D$ matrix corresponding to the different constraints $\mathcal{L}_k$. Once the solution, $\Delta_{\text{sol}}$, to this system of equations (\ref{dual_cg}) is found we can construct the dual step as:
\begin{equation}
\Delta_Z = \bigoplus_i \mathcal{L}_i\left(\Delta_{\text{sol}}\right)~.
\end{equation}
It is interesting to see that the right-hand side of the dual equation (\ref{dual_cg}) is exactly the same as the negative gradient in the dual-only potential reduction program:
\begin{equation}
\mathrm{Tr}~\left[u^i\left(\frac{\nu\eta}{n}Z^{-1} - X\right)\right] = \frac{\nu\eta}{n}\mathrm{Tr}~\left[Z^{-1}u^i\right] - h^i~,
\end{equation}
with $t = \frac{\nu\eta}{n}$. The Hessian however, is only identical when $D = Z$, which is the case when $X \approx Z^{-1}$, {\it i.e.} when $(X,Z)$ is on the central path. When we deviate from the central path, this Hessian will use information of the primal SDP to find a better direction toward the central path and the optimum.
\paragraph{Solution to the primal equation}
The solution to the primal equation (\ref{primal_eom}) is obtained in a similar manner, by projecting this equation on $\mathcal{C}$-space spanned by matrices $\{c^i\}$, which is the orthogonal complement of $\mathcal{U}$-space. With $B$ again denoting the right-hand side of Eq.~(\ref{primal_eom}) and making use of Eq.~(\ref{pd_step_lineq}) one gets:
\begin{equation}
\label{primal_cg}
\sum_j \underbrace{\left(\mathrm{Tr}~D~c^j~D~c^i\right)}_{\mathcal{H}^P_{ij}} \delta x_j = \mathrm{Tr}~B c^i~,
\end{equation}
where we have used
\begin{equation}
\Delta_X = \sum_i \delta x_i~c^i~.
\end{equation}
This is again a symmetrical positive-definite system of linear equations that can be solved iteratively using the linear conjugate gradient method, without explicit construction of the Hessian matrix $\mathcal{H}^P$, or any reference to the basisset $\{c^i\}$. Indeed, $\mathcal{H}^P$ can be seen as a map from $\mathcal{C}$-space on itself, since for an arbitrary matrix in $\mathcal{C}$-space:
\begin{equation}
\epsilon = \sum_i \epsilon_i c^i~,
\end{equation}
the image of $\epsilon$ under the primal Hessian map is
\begin{equation}
\mathcal{H}^P\epsilon = \hat{P}_{\mathcal{C}}\left[D\epsilon D\right]~,
\end{equation}
in which $\hat{P}_\mathcal{C}$ is the projection on $\mathcal{C}$-space. This projection can be executed quickly by using the inverse of the overlap matrix of the $\mathcal{U}$-space basis vectors. Suppose we have an arbitrary block matrix $A$ of the same dimension as $X$ and $Z$. First we project it on the space spanned by the basis $\{u^0,u^i\} = \{u^\alpha\}$. The projected matrix $A'$ reads as:
\begin{equation}
A' = \sum_{\alpha\beta} \mathrm{Tr}~\left[Au^\alpha\right] \left(\mathcal{S}^{-1}\right)_{\alpha\beta}u^\beta~,
\label{proj_U}
\end{equation}
where the overlap matrix $\mathcal{S}$ appears because of the non-orthogonality of the basis. As shown in Section \ref{overlapmatrix} the inverse overlap matrix can also be considered as a map from two-particle space on itself. The projected matrix $A'$ can now be written in block-matrix form as:
\begin{equation}
A' = \bigoplus_j \mathcal{L}_j\left[\mathcal{S}^{-1}\left(\sum_k \mathcal{L}^\dagger_k\left(A_k\right)\right)\right]~.
\end{equation}
To project $A$ on $\mathcal{U}$-space we still have to remove the component along the $u^0$-matrix:
\begin{equation}
\hat{P}_{\mathcal{U}}(A) = A' - \left(\frac{\mathrm{Tr}~u^0 A'}{\mathrm{Tr}~u^0u^0}\right)u^0~.
\label{proj_u_0}
\end{equation}
Since $\mathcal{C}$-space is the orthogonal complement of the $\mathcal{U}$-space, the desired projection of $A$ on the $\mathcal{C}$-space is simply given by
\begin{equation}
\hat{P}_{\mathcal{C}}(A) = A - \hat{P}_{\mathcal{U}}(A)~.
\label{proj_C}
\end{equation}
Note that the right-hand side of the primal equation is the negative gradient of the potential:
\begin{equation}
\phi_P(X) = \mathrm{Tr}~Xu^0 - t\ln{\det{X}}~,
\end{equation}
with $t=\frac{\nu\eta}{n}$, and the Hessian is the same when $(X,Z)$ is on the central path.
\subsubsection{Outline of the algorithm}
In this Section a short outline of the algorithm is presented. The first step is to initialize the primal-dual variables, after which they are directed towards the central path. Then the actual minimization of the primal-dual gap takes place, which is done in a predictor-corrector loop.
\paragraph{Initialization}
We need a feasible primal-dual starting point. An initial feasible dual point $Z^{(0)}$, \emph{i.e.} a matrix that satisfies the inequality (\ref{dual_constr}), is easily found by setting
\begin{equation}
Z^{(0)} = u^0~,
\end{equation}
which corresponds to setting al the $\gamma_i$'s equal to zero. To construct a feasible primal starting point we take a completely random matrix $X$ and project it on a matrix $X'$ which satisfies the primal equality constraint:
\begin{equation}
\label{projham}
\mathrm{Tr}~X'u^i = h^i~.
\end{equation}
This is achieved using the inverse overlap matrix of the $\{u^\alpha\}$ basis,
\begin{equation}
X' = X - \underbrace{\sum_{\alpha\beta}\left(\mathrm{Tr}~Xu^\alpha - h^\alpha\right)\mathcal{S}^{-1}_{\alpha\beta}u^\beta}_{X^\perp}~.
\end{equation}
The last term on the right-hand side can be computed as:
\begin{equation}
X^\perp = \bigoplus_j \mathcal{L}_j\left[\mathcal{S}^{-1}\left(\sum_k \mathcal{L}^\dagger_k\left(X_k\right) - H^{(2)}\right)\right]~.
\end{equation}
At this point, $X'$ satifies the equality constraint (\ref{projham}), and one just has to add $u^{0}$, with a positive scaling factor that is large enough to ensure positive semidefiniteness:
\begin{equation}
X^{(0)} = X' + \alpha u^{0}\succeq 0~.
\end{equation}
\paragraph{Centering run}
Before the actual program can be started, a couple of centering steps have to be taken, which is done by solving the equations (\ref{dual_eom}) and (\ref{primal_eom}) with $\nu = 1$. The purpose is to go sufficiently near the central path, without bothering about the primal-dual gap. In a first step Eq.~(\ref{dual_cg}), which has the smallest dimension, is solved using the conjugate gradient method, and the dual solution $\Delta_Z$ is obtained. The primal solution $\Delta_X$ then follows from the dual equation (\ref{dual_eom}) by substitution. For these initial centering steps, both linear systems are so well conditioned that hardly any iterations are needed for convergence. As a measure for the distance from the center we use the potential \cite{vandenberghe}:
\begin{equation}
\Phi(X,Z) = -\ln\det X - \ln \det Z~,
\end{equation}
which is obviously minimal (for points with the same primal-dual gap $\eta = \mathrm{Tr}~XZ$) on the central path. The potential then has the value:
\begin{equation}
\Phi(X^c,Z^c) = -n \ln{\frac{\eta}{n}}~.
\end{equation}
When the potential difference (which is always positive):
\begin{eqnarray}
\nonumber\Psi(X,Z) &=& \Phi(X,Z) - \Phi(X^c,Z^c)\\
\label{logpot}&=& {n}\ln \mathrm{Tr}~XZ -{n}\ln{n}-\ln\det X - \ln \det Z~,
\end{eqnarray}
is sufficiently small, the centering run is stopped.
\paragraph{Predictor-corrector run}
In this part of the program the primal-dual gap is minimized by alternating predictor and corrector steps. A predictor step tries to reduce the primal-dual gap by solving the equations (\ref{dual_eom}) and (\ref{primal_eom}) with $\nu = 0$. This is done in exactly the same way as for the centering run, by first solving (\ref{dual_cg}) for $\Delta_Z$, then substituting into (\ref{dual_eom}) to obtain an approximate primal step $\Delta_X$. The final primal step $\Delta_X$ is obtained by solving (\ref{primal_cg}) using the conjugate gradient method with the approximate $\Delta_X$ as a starting point. Note that when the primal-dual gap decreases, the condition number of the primal and dual Hessian matrices increases and more iterations are needed before convergence is reached. One can adjust the convergence criteria of the primal and dual conjugate gradient loops, in order to minimize the combined number of iterations.

At this point we have a predictor direction $(\Delta_X,\Delta_Z)$. The logarithmic potential
\begin{equation}
\theta(\alpha) = \Psi(X + \alpha \Delta_X,Z + \alpha \Delta_Z)~,
\label{pd_ls}
\end{equation}
in the predictor direction (see Eq.~(\ref{logpot})) can be simply evaluated (analogously to the line search function in the dual-only program) for any value of $\alpha$ by solving two eigenvalue equations:
\begin{equation}
\left(X^{-\frac{1}{2}}\Delta_X X^{-\frac{1}{2}}\right)v_X = \lambda^X~v_X \qquad\text{and}\qquad\left(Z^{-\frac{1}{2}}\Delta_Z Z^{-\frac{1}{2}}\right)v_Z = \lambda^Z~v_Z~.
\end{equation}
Using these eigenvalues one can write:
\begin{eqnarray}
\nonumber\ln{\det{(X+\alpha\Delta_X)}} &=& \ln{\det{\left[X^{\frac{1}{2}}\left(\mathbb{1}+\alpha X^{-\frac{1}{2}}\Delta_X X^{-\frac{1}{2}}\right)X^{\frac{1}{2}}\right]}}\\
&=&\ln{\det{X}} + \sum_i \ln{(1 + \alpha\lambda^X_i)}~,
\end{eqnarray}
and thus simply rewrite Eq.~(\ref{pd_ls}) as:
\begin{eqnarray}
\nonumber\theta(\alpha) &=& \Psi(X,Z) +  \ln\left[1 + \alpha (c_X + c_Z)\right]\\
&&\qquad- \sum_i \ln (1 + \alpha \lambda^X_i) - \sum_i \ln(1 + \alpha \lambda^Z_i)~,
\label{pd_ls_expanded}
\end{eqnarray}
where
\begin{equation}
c_Z =  \frac{1}{\eta}\mathrm{Tr}~X\Delta_Z \qquad\text{and}\qquad c_X =  \frac{1}{\eta}\mathrm{Tr}~Z\Delta_X~.
\end{equation}
With a standard bisection method one can now compute the stepsize $\alpha$ corresponding to the maximal deviation from the central path we want to allow.

After the predictor step, a corrector step is taken, which is equivalent to the centering step described previously. The alternation of predictor and corrector steps continues until the primal-dual gap is smaller than the desired value.
\subsubsection{Adding linear inequality constraints}
Adding linear inequality constraints of the form (\ref{linineq}) to the primal-dual algorithm is much more involved than for the dual-only algorithm. As described in Section \ref{linineq_do}, the only change in the primal and dual SDP is that an extra term is added to the matrices $\{u^\alpha\}$, as in Eq.~(\ref{u_li}). This means, however, that the overlap matrix of the set $\{u^\alpha\}$ is modified, and the formulas derived in Section \ref{overlapmatrix} cannot be used. The extra terms in the overlap matrix are simply:
\begin{equation}
\mathcal{S}^L_{\alpha\beta} = \sum_i L(f^\alpha)_{ii} L(f^\beta)_{ii}~,
\end{equation}
and as a result the action of the overlap matrix on a general two-particle matrix can be written as an extended generalized $\mathcal{Q}$-like map:
\begin{eqnarray}
\mathcal{Q}_L(a,b,c)\left(\Gamma\right)_{\alpha\beta;\gamma\delta} &=& a\Gamma_{\alpha\beta;\gamma\delta} + b\left(\delta_{\alpha\gamma}\delta_{\beta\delta} - \delta_{\alpha\delta}\delta_{\beta\gamma}\right)\overline{\overline{\Gamma}} \label{overlap_linineq}\\
\nonumber&& - c\left(\delta_{\alpha\gamma}\overline{\Gamma}_{\beta\delta} - \delta_{\beta\gamma}\overline{\Gamma}_{\alpha\delta} - \delta_{\alpha\delta}\overline{\Gamma}_{\beta\gamma} + \delta_{\beta\delta}\overline{\Gamma}_{\alpha\gamma}\right)~
+ \frac{1}{4}\sum_i\left(\overline{\overline{\Gamma C^0_i}}\right)C^0_i~.
\end{eqnarray}
We now show that the inverse map can still be constructed, at the cost of solving a system of linear equations with dimension $2m + 1$ (with $m$ the number of inequality constraints). As before we invert the equation:
\begin{equation}
\mathcal{Q}_L(a,b,c)(\Gamma) = Q~,
\end{equation}
by calculating the partial traces:
\begin{eqnarray}
\overline{Q}_{\alpha\gamma} &=& \left[a - c(M-2)\right]\overline{\Gamma}_{\alpha\gamma} + \delta_{\alpha\gamma}\overline{\overline{\Gamma}}\left[b(M-1) - c\right] + \frac{1}{4}\sum_i \left(\overline{\overline{\Gamma C^0_i}}\right)\overline{C^0_i}~,\\
\overline{\overline{Q}} &=& \left[bM(M-1)-2c(M-1)+a\right]\overline{\overline{\Gamma}}  + \frac{1}{4}\sum_i \left(\overline{\overline{\Gamma C^0_i}}\right)\overline{\overline{C^0_i}}~.
\end{eqnarray}
We cannot solve this system, because we have three equations but $m + 3$ unknowns, so we need more equations. We can derive $m$ extra equations by:
\begin{equation}
\overline{\overline{QC^0_i}} = \sum_j \left[\frac{1}{4}\left(\overline{\overline{C^0_i C^0_j}}\right) + a\delta_{ij}\right]\left(\overline{\overline{\Gamma C^0_j}}\right) - 4\overline{\overline{\Gamma}\overline{C^0_i}} + 2\overline{\overline{C^0_i}}\overline{\overline{\Gamma}}~,
\end{equation}
but we get $m$ new unknowns:
\begin{equation}
\overline{\overline{\Gamma}\overline{C^0_i}} = \sum_{\alpha\gamma}\overline{\Gamma}_{\alpha\gamma}\left(\overline{C^0_i}\right)_{\gamma\alpha}~.
\end{equation}
If we add $m$ more equations of the form:
\begin{equation}
\overline{\overline{Q}\overline{C^0_i}} = \left[a - c(M-2)\right]\overline{\overline{\Gamma}\overline{C^0_i}} + \overline{\overline{C^0_i}}\left[b(M-1) - c\right]\overline{\overline{\Gamma}} + \frac{1}{4}\sum_j \left(\overline{\overline{\Gamma C^0_j}}\right)\overline{\overline{C^0_i}~\overline{C^0_j}}~,
\end{equation}
we finally have a closed system of $2m + 1$ equations and $2m+1$ unknowns:
\begin{equation}
\left(\begin{matrix}\overline{\overline{Q}}\\\overline{\overline{QC^0_i}}\\\overline{\overline{Q}\overline{C^i_0}}\end{matrix}\right)
=
\left(
\begin{matrix}
\lambda &\frac{1}{4}\overline{\overline{C^0_j}}&0\\
2\overline{\overline{C^0_i}}&\frac{1}{4}\overline{\overline{C^0_i C^0_j}}+a\delta_{ij} & -4\\
\mu \overline{\overline{C^0_i}} & \frac{1}{4}\overline{\overline{C^0_i}~\overline{C^0_j}}&\delta_{ij}\kappa
\end{matrix}
\right)
\left(\begin{matrix}\overline{\overline{\Gamma}}\\\overline{\overline{\Gamma C^0_j}}\\\overline{\overline{\Gamma}\overline{C^j_0}}\end{matrix}\right)~,
\end{equation}
where
\begin{eqnarray}
\lambda &=& bM(M-1) - 2c(M-1) -a~,\\
\kappa&=& a - c(M-2)~,\\
\mu &=& b(M-1)-c~.
\end{eqnarray}
This system can be solved numerically to obtain the relations:
\begin{eqnarray}
\alpha(Q) &=& \alpha_1~\overline{\overline{Q}} + \sum_i \alpha_2^{i}~\overline{\overline{QC^0_i}} + \sum_i \alpha_3^{i}~\overline{\overline{Q}\overline{C^0_i}} = \overline{\overline{\Gamma}}~,\\
\beta^j(Q) &=& \beta^j_1~\overline{\overline{Q}} + \sum_i \beta_2^{ji}~\overline{\overline{QC^0_i}} + \sum_i \beta_3^{ji}~\overline{\overline{Q}\overline{C^0_i}} = \overline{\overline{\Gamma C^0_j}}~,\\
\gamma^j(Q) &=& \gamma^j_1~\overline{\overline{Q}} + \sum_i \gamma_2^{ji}~\overline{\overline{QC^0_i}} + \sum_i \gamma_3^{ji}~\overline{\overline{Q}\overline{C^0_i}} = \overline{\overline{\Gamma}\overline{C^0_j}}~.
\end{eqnarray}
The calculated numbers can be substituted:
\begin{equation}
\overline{\Gamma}_{\alpha\gamma} = \frac{1}{\kappa}\overline{Q}_{\alpha\gamma}-\frac{\mu}{\kappa}\delta_{\alpha\gamma}\alpha(Q) -\frac{1}{4\kappa}\sum_i \beta^i(Q)\overline{C^0_i}~,
\end{equation}
and finally yield the inverse of the extended generalized $\mathcal{Q}$ map:
\begin{eqnarray}
\nonumber\mathcal{Q}^{-1}_L(a,b,c)(Q)_{\alpha\beta;\gamma\delta} &=& \frac{1}{a}Q_{\alpha\beta;\gamma\delta} - \frac{1}{a}\left(b+\frac{2c\mu}{\kappa}\right)\left(\delta_{\alpha\gamma}\delta_{\beta\delta}-\delta_{\alpha\delta}\delta_{\beta\gamma}\right)\alpha(Q)\\
\nonumber&&+\frac{c}{\kappa a}\left(\delta_{\beta\delta}\overline{Q}_{\alpha\gamma} -\delta_{\alpha\delta}\overline{Q}_{\beta\gamma}-\delta_{\beta\gamma}\overline{Q}_{\alpha\delta}+\delta_{\alpha\gamma}\overline{Q}_{\beta\delta}\right)\\
\nonumber&&+\frac{c}{4\kappa a}\sum_i \beta^i(Q)\left(\delta_{\beta\delta}\overline{C^0_i}_{\alpha\gamma} -\delta_{\alpha\delta}\overline{C^0_i}_{\beta\gamma}-\delta_{\beta\gamma}\overline{C^0_i}_{\alpha\delta}+\delta_{\alpha\gamma}\overline{C^0_i}_{\beta\delta}\right)\\
\label{inverse_overlap_linineq}&& -\frac{1}{4a}\sum_i\beta^i(Q)C^0_i = \Gamma_{\alpha\beta;\gamma\delta}~.
\end{eqnarray}
This inverse overlap-matrix map is needed if we want to project on $\mathcal{U}$-space, defined by the $\{u^i\}$'s with added terms for linear inequalities. The form of the projection is exactly the same as before (\ref{proj_U}), but with the old overlap matrix replaced by the new one (\ref{inverse_overlap_linineq}). The solution to the primal system now changes, in that the projection operator on $\mathcal{C}$ space (\ref{proj_C}), appearing in the definition of the primal Hessian, uses the changed projection operator on $\mathcal{U}$-space.

For the solution to the dual equation, the changes are exactly the same as those mentioned in Section \ref{linineq_do}. There are extra terms added to the collapse map due to the linear inequalities:
\begin{equation}
\sum_i \mathrm{Tr}~\left[Bu^i\right] f^i = \hat{P}_{\mathrm{Tr}}\left[\sum_k \mathcal{L}_k^\dagger(B_k) + \sum_j B^L_j C^0_j\right]~,
\end{equation}
in which the $B^L_i$ are numbers on the diagonal corresponding to the $i$'th linear constraint. The modifications to the dual Hessian map are exactly the same as those in Eq.~(\ref{hessian_li}):
\begin{equation}
\mathcal{H}^D\Delta = \hat{P}_{\mathrm{Tr}}\left[\sum_k \mathcal{L}^\dagger_k\left(D_k^{-1}\mathcal{L}_k(\Delta)D_k^{-1}\right) + \sum_k \left(\frac{L(\Delta)_{kk}}{{D^L_{kk}}^2}\right)C^0_k\right]~.
\end{equation}
\subsubsection{Adding linear equality constraints}
Adding linear equality constraints to the primal-dual algorithm is quite straightforward. For the solution of the dual problem the changes are exactly the same as with the dual-only program, {\it i.e.} the $\hat{P}_{\mathrm{Tr}}$ appearing in Eqs.~(\ref{collapse}) and (\ref{dual_hessian}) is replaced by a $\hat{P}_f$ as defined in Eq.~(\ref{P_f}). For the solution to the primal algorithm the only modificiation is the projection on $\mathcal{U}$-space, more specifically the removal of the component along $u^0$ (\ref{proj_u_0}). For this we have to introduce several matrices:
\begin{equation}
u^0_n = \bigoplus_i \mathcal{L}_i(\tilde{E}_n)~,
\end{equation}
and orthogonalize them:
\begin{equation}
\tilde{u}^0_n = \sum_{m} \left(S^{-\frac{1}{2}}_u\right)_{nm} u^0_m~,\qquad\text{with}\qquad (S_u)_{nm} = \mathrm{Tr}~\left[u^0_m u^0_n\right]~.
\end{equation}
The new projection is then defined as:
\begin{equation}
A' = A - \sum_n \mathrm{Tr}~\left[\tilde{u}^0_n A\right]\tilde{u}^0_n~.
\end{equation}
\section{\label{bp_sdp}Boundary point method}
The interior point methods discussed in the previous Section~\ref{int_point} are very stable and always converge to the desired accuracy. A disadvantage, however, is that at every point a Newton-like system of equations has to be solved. In the algorithms presented earlier we have avoided solving this system explicitly by using iterative methods and a fast matrix-vector product. The number of iterations needed to solve these equations, however, explodes when we get close to the optimum, because the system becomes ill-conditioned. This occurs because, on approaching the edge of the feasible region, the matrices involved have near zero eigenvalues, causing the condition number of the Hessians to diverge. This limits the size of the systems that can be studied.
Another, completely orthogonal, approach is the boundary point method, developed in \cite{rendl,malick}, where one remains on the hypersurface defined by the complementary slackness condition (\ref{compl_slack}), and moves towards the feasible region. This method was developed for problems where the number of variables is so large that standard SDP algorithms don't work anymore. An implementation of this algorithm for variational density matrix optimization was presented in \cite{maz_bp}.
\subsection{The augmented Lagrangian}
The boundary point method is actually an instance of the more general class of augmented Lagrangian approaches for solving convex optimization problems. The standard Lagrangian for the SDP:
\begin{equation}
\min_\gamma \sum_i \gamma_ih^i \qquad\text{u.c.t.}\qquad Z = u^0 + \sum_i \gamma_i u^i~,
\label{dual_eq}
\end{equation}
introduces the Lagrange multiplier matrix $X$:
\begin{equation}
L(\gamma,Z;X) = \gamma^T h + \mathrm{Tr}~\left[X\left(Z - u^0 - \sum_i \gamma_i u^i\right)\right]~.
\end{equation}
The \emph{augmented} Lagrangian for Eq~(\ref{dual_eq}) adds a quadratic penalty for infeasibility:
\begin{equation}
L_\sigma(\gamma,Z;X) = \gamma^T h + \mathrm{Tr}~\left[X\left(Z - u^0 - \sum_i \gamma_i u^i\right)\right] + \frac{\sigma}{2}\|Z - u^0 - \sum_i \gamma_i u^i\|^2~,
\label{augmented_lagrangian}
\end{equation}
where a parameter $\sigma>0$ determines how strong the penalty is. If we introduce the new matrix variable:
\begin{equation}
W(\gamma) = u^0 + \sum_i \gamma_i u^i - \frac{1}{\sigma}X~,
\end{equation}
we can rewrite the augmented Lagrangian as:
\begin{equation}
L_\sigma(\gamma,Z;X) = f(\gamma,Z) - \frac{1}{2\sigma}\| X\|^2~,
\end{equation}
with
\begin{equation}
f(\gamma,Z) = \gamma^T h + \frac{\sigma}{2}\| Z-W(\gamma)\|^2~.
\label{f_aug_lag}
\end{equation}
The idea of augmented Lagrangian methods for solving SDP's is to perform the optimization in two stages: first minimize $f(\gamma,Z)$ under the constraint $Z\succeq 0$, keeping $X$ constant; then update $X$ in some way until convergence is reached.
\subsection{Solution to the inner problem}
The inner problem is a quadratic SDP:
\begin{equation}
\min_{\gamma,Z} f(\gamma,Z) \qquad\text{u.c.t.}\qquad Z\succeq 0~,
\end{equation}
for which the Lagrangian reads as:
\begin{equation}
L(\gamma,Z;V) = f(\gamma,Z) - \mathrm{Tr}~VZ~,
\end{equation}
with Lagrange multiplier $V$ for the constraint $Z\succeq 0$. The optimal point has to satisfy the conditions:
\begin{eqnarray}
\label{grad_of_aug}
\frac{\partial L}{\partial \gamma_i} &=& h^i -\sigma\mathrm{Tr}~\left[u^i\left(Z-u^0-\sum_j\gamma_ju^j + \frac{1}{\sigma}X\right)\right]=0~,\\
\nabla_ZL &=& \sigma\left[Z-W(\gamma)\right] -V = 0~,\\
Z&\succeq&0~,\qquad V\succeq0~,\qquad ZV =0~.
\end{eqnarray}
Keeping $Z$ constant, the optimality condition for $\gamma$ (\ref{grad_of_aug}) can be fulfilled by solving the linear system:
\begin{equation}
\sum_j \mathrm{Tr}~\left[u^iu^j\right]\gamma_j = -\frac{h^i}{\sigma} + \mathrm{Tr}~\left[u^i\left(Z - u^0 + \frac{1}{\sigma}X\right)\right]~.
\label{lin_sys_bp}
\end{equation}
It is clear that (\ref{grad_of_aug}) holds when the sum of the primal and the dual equality constraints are satisfied:
\begin{equation}
\mathrm{Tr}~\left[Xu^i\right] + Z = h^i + u^0+\sum_i\gamma_iu^i~,
\end{equation}
so the solution of the linear system (\ref{lin_sys_bp}) can be seen as a projection on the feasible plane.
If, on the other hand, one keeps $\gamma$ constant, the optimization of (\ref{f_aug_lag}) under the condition that $Z\succeq0$ is just a projection on the cone of semidefinite matrices. Decomposing $W(\gamma)$ into a positive and negative part\footnote{The positive part of a symmetric matrix with spectral decomposition $A = \sum_i \lambda_i X_i X_i^T$ is obtained by restraining the sum to positive eigenvalues $\lambda_i$ only, and vice-versa for the negative part.} as:
\begin{equation}
W(\gamma) = W(\gamma)_+ + W(\gamma)_-~,
\end{equation}
the optimal $Z$ with constant $\gamma$ is:
\begin{equation}
Z = W(\gamma)_+~.
\end{equation}
For $V$ to satisfy the optimality conditions, it must be equal to:
\begin{equation}
V = -\sigma W(\gamma)_-~,
\end{equation}
for which it can be seen that the conditions
\begin{equation}
V = \sigma\left[Z - W(\gamma)\right]~,\qquad VZ = 0\qquad\text{and}\qquad V\succeq 0~,
\label{optim_af_prim_loop}
\end{equation}
are automatically fulfilled.
Projecting on the feasible plane, followed by a projection on the cone of positive definite matrices, brings us closer to the optimum, as shown in Fig. \ref{fig_bp}. We can repeat this procedure until convergence is reached, {\it i.e.} until the linear equations (\ref{lin_sys_bp}) are satisfied \emph{after} projection on the positive cone, to within the prescribed accuracy $\epsilon$. This convergence criterion:
\begin{equation}
h^i - \sigma\mathrm{Tr}~\left[u^i\left(Z - W(\gamma)\right)\right] \leq \epsilon~,
\end{equation}
can be rewritten using Eq.~(\ref{optim_af_prim_loop}) as:
\begin{equation}
h^i - \sigma\mathrm{Tr}~\left[u^iV\right] \leq \epsilon~.
\end{equation}
As a result, the Lagrange multiplier $V$ is approximately primal feasible after the inner loop. The obvious update for $X$ in the outer loop is:
\begin{equation}
X \leftarrow V~,
\end{equation}
and one can consider the inner loop as a projection on primal feasibility. The program is finished when both the primal and the dual are feasible; it follows that the convergence criterion for the outer loop is dual feasibility within the preset accuracy.
\begin{figure}
\centering
\includegraphics[scale=0.5]{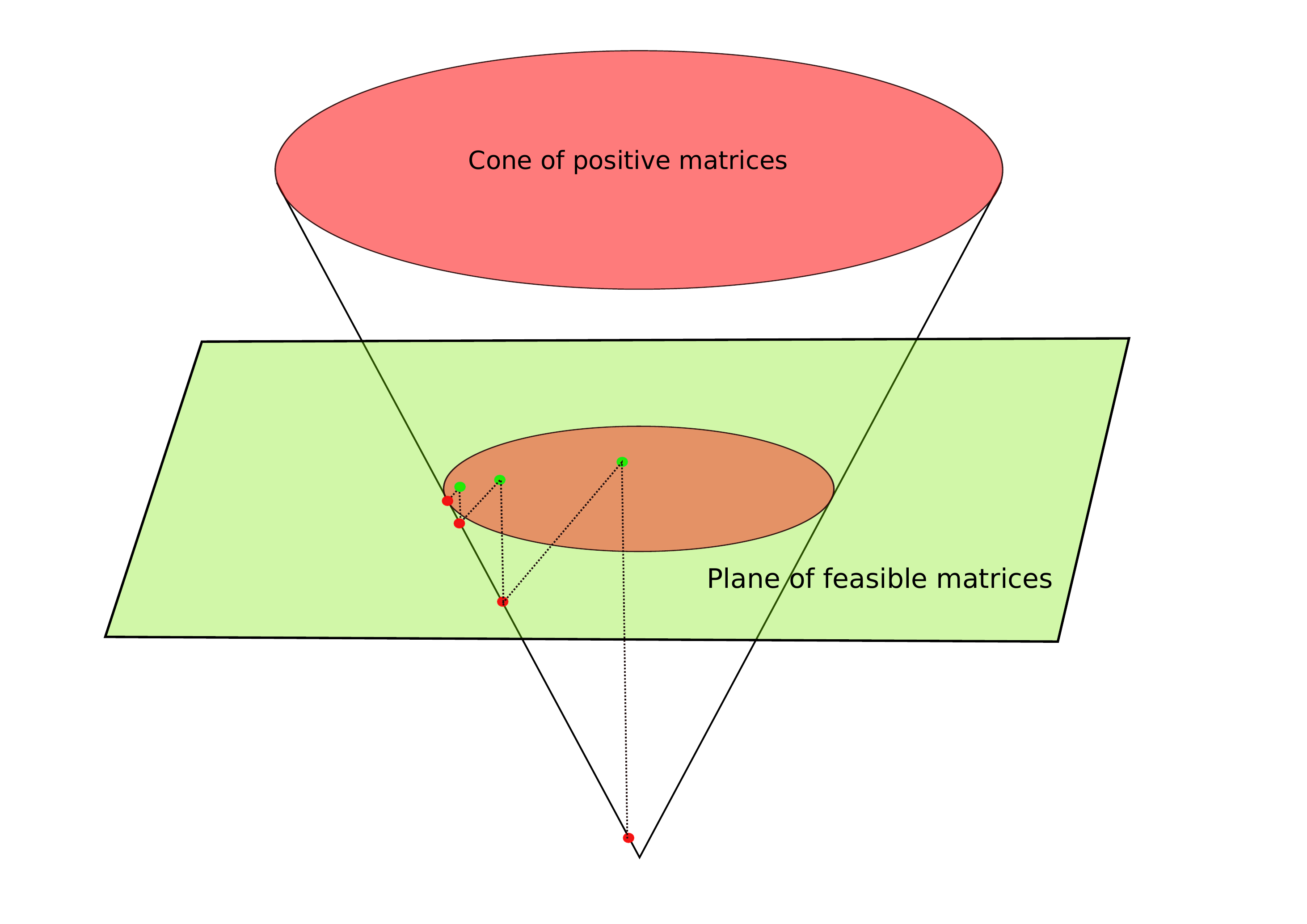}
\caption{\label{fig_bp}A schematic representation of how the optimal point is reached by subsequent projections on the feasible plane and the semidefinite cone.}
\end{figure}
\subsection{Outline of the algorithm}
It is important to realize that throughout the program the primal and dual variables are not feasible, {\it i.e.} that $\gamma$ and $Z$ are independent. The algorithm itself is short and simple, and is given in pseudocode in Algorithm~\ref{bp_algo}. At first sight the computationally most demanding step in the algorithm is the solution of the linear system Eq.~(\ref{lin_sys_bp}). However, it can be seen that when rewriting the system in traceless two-particle matrix space, with $\gamma = \sum_i \gamma_if^i$:
\begin{equation}
\mathcal{S}\gamma = \hat{P}_{\mathrm{Tr}}\left[-\frac{1}{\sigma}H^{(2)} + \sum_i \mathcal{L}_i^\dagger\left(Z_i - W(\gamma)_i\right)\right]~,
\label{lin_sys_bp_mat}
\end{equation}
where $\mathcal{S}$ is the overlap matrix defined in Section \ref{overlapmatrix} limited to traceless space. The generalized $\mathcal{Q}$ map transforms traceless matrices into traceless matrices. This implies that the inverse of the overlap matrix limited to traceless space is exactly the same as the inverse of the overlap matrix on full space. The system can thus be solved easily by applying the inverse overlap matrix, as defined in Section \ref{overlapmatrix} to the right-hand side of Eq.~(\ref{lin_sys_bp_mat}). The most expensive step in the algorithm is the separation of the matrix $W$ into a positive and negative part, for which an eigenvalue decomposition is necessary which scales as $O(n^3)$ in the dimension of the matrix. If only two-positivity conditions are applied this leads to a scaling of $M^6$ per iteration, for the three-index conditions the scaling per iteration is $M^9$.
\begin{algorithm}
\caption{\label{bp_algo} The boundary point algorithm}
\begin{algorithmic}
\State Choose $\epsilon_{\text{in}} > 0~$,$~\epsilon_{\text{out}} > 0$ and $\sigma>0$ 
\State $X^0 = 0$;~$Z^0 = 0$;~$k = 0$
\While{$\delta_{\text{out}} > \epsilon_{\text{out}}$}
\While{$\delta_{\text{in}} > \epsilon_{\text{in}}$}
\State Solve $\mathrm{Tr}~\left[u^iu^j\right]\gamma^{(k)}_j = -\frac{h^i}{\sigma} + \mathrm{Tr}~\left[u^i\left(Z - u^0 + \frac{1}{\sigma}X\right)\right]$ for $\gamma^{(k)}$
\State $W(\gamma) = u^0 + \sum_i \gamma_i u^i - \frac{1}{\sigma}X$
\State $W(\gamma) \rightarrow W(\gamma)_+ + W(\gamma)_-$
\State $Z^{(k)} = W(\gamma)_+$ ; $V^{(k)} = -\sigma W(\gamma)_-$
\State $\delta_{\text{in}} = \sum_i \sqrt{(\mathrm{Tr}~[u^iV^{(k)}] - h^i)^2}$\Comment{primal infeasibility}
\EndWhile
\State $X^{(k+1)} = V^{(k)}$
\State $k \gets k + 1$
\State $\delta_{\text{out}} = \|Z^{(k)} - u^0 - \sum_i \gamma_i u^i\|$\Comment{dual infeasibility}
\EndWhile
\end{algorithmic}
\end{algorithm}
\subsection{Adding linear equality and inequality constraints}
The addition of linear inequality constraints to the boundary point method is quite analogous to the primal-dual algorithm. The only modification is that the overlap matrix takes on the form (\ref{overlap_linineq}). For the solution of the linear system (\ref{lin_sys_bp_mat}) we now let the inverse (\ref{inverse_overlap_linineq}) act on the right-hand side of Eq.~(\ref{lin_sys_bp_mat}), in which evidently the constraint matrices ${C}^0_i$ are replaced by their projection on traceless matrix space.

For the equality constraints this trick no longer works, and we will have to solve the linear system (\ref{lin_sys_bp_mat}) using the conjugate gradient method. The overlap-matrix map which repeatedly acts on the right-hand side of Eq.~(\ref{lin_sys_bp_mat}) is defined as:
\begin{eqnarray}
\nonumber\mathcal{S}\left(\Gamma\right)_{\alpha\beta;\gamma\delta} &=& \hat{P}_f\left[a\Gamma_{\alpha\beta;\gamma\delta} + b\left(\delta_{\alpha\gamma}\delta_{\beta\delta} - \delta_{\alpha\delta}\delta_{\beta\gamma}\right)\overline{\overline{\Gamma}}\right.\\
&&\left. - c\left(\delta_{\alpha\gamma}\overline{\Gamma}_{\beta\delta} - \delta_{\beta\gamma}\overline{\Gamma}_{\alpha\delta} - \delta_{\alpha\delta}\overline{\Gamma}_{\beta\gamma} + \delta_{\beta\delta}\overline{\Gamma}_{\alpha\gamma}\right)\right]~,
\label{Q_like_lineq}
\end{eqnarray}
in which the factors $(a,b,c)$ are determined by the applied constraints, and $\hat{P}_f$ is given by Eq.~(\ref{P_f}).
\section{Computational performance of the methods}
A general comparison between different algorithms is difficult, as the performance of the algorithms can depend on the specific system being studied. There are also ways in which an algorithm can be tweaked to converge faster, by making adjustments which are sometimes logical, and sometimes don't make sense. A comparison of different algorithms is therefore inherently biased towards the algorithm one spent most time optimizing. This being said, this Section contains an honest comparison of the various algorithms developed and tested during the course of my PhD, and an attempt to explain in some detail what the bottlenecks are in the different methods.
\subsection{Interior point methods}
As mentioned earlier, for interior point methods the bottleneck is the solution to the Newton-like Hessian equation that has to be solved at every iteration:
\begin{equation}
\mathcal{H}^{ij}\delta_j = b^i~.
\label{hessian_in_comp_perf}
\end{equation}
The dimension of the system is the number of free variables in the optimization, which scales as the number of elements in a 2DM, being $M^4$. General methods to solve linear systems ({\it e.g.} using LU or Cholesky decomposition) scale as $O(n^3)$, which means an algorithm based on a direct solution of Eq.~(\ref{hessian_in_comp_perf}) scales as $M^{12}$. An explicit construction of the Hessian matrix would be even more demanding, as for every element some elementary $M^6$ matrix calculation has to be done, resulting in a scaling of $M^{14}$. This huge computational cost can be avoided by referencing the Hessian matrix only through its action on a 2DM and resorting to iterative methods, like the linear conjugate gradient method, to solve the system (\ref{hessian_in_comp_perf}). At every iteration of the conjugate gradient method, a matrix-vector product has to be performed, which scales as $M^6$ for two-index conditions and $M^9$ for three-index conditions. The drawback to this iterative approach is that the efficiency of the solution, {\it i.e.} the number of iterations needed to converge, depends on the condition number of the Hessian. For interior point methods, the condition number diverges near the edge of the feasible region, while a point near to the edge is reached quite rapidly, gaining those few extra digits in precision becomes very demanding.
In Figure \ref{iter_vs_pdg} we have plotted the number of iterations needed for the linear conjugate gradient method to converge to the desired accuracy as a function of the primal-dual gap, for a 10-site one-dimensional Hubbard model with 10 particles. One can see that, for both the primal-dual and the dual-only method, the number of iterations increases drastically when the primal-dual gap decreases. For the dual-only method, a free parameter that can be tweaked to optimize the total number of iterations, is the factor by which the potential parameter $t$ (see Eq.~(\ref{dual_only_pot})) is scaled down after every Newton loop. If we choose this factor to be large, many more Newton steps are needed to converge to the optimum of the next $t$ value, and the number of iterations in Figure \ref{iter_vs_pdg} is much higher. In the primal-dual method, how much the primal-dual gap is reduced per Newton step is determined by how far one can step in the predictor direction without exceeding the maximal deviation from the central path (See Eq.~(\ref{pd_ls_expanded})). One can see in Figure \ref{iter_vs_pdg} that there are more green points than red points, which means that there are more Newton steps needed in the dual-only method than in the primal-dual. In general the primal-dual method will require less Newton steps to converge than the dual-only approach, due to the fact that information of both the primal and the dual problem are used to determine the predictor direction. 
\begin{figure}
\centering
\includegraphics[scale=0.8]{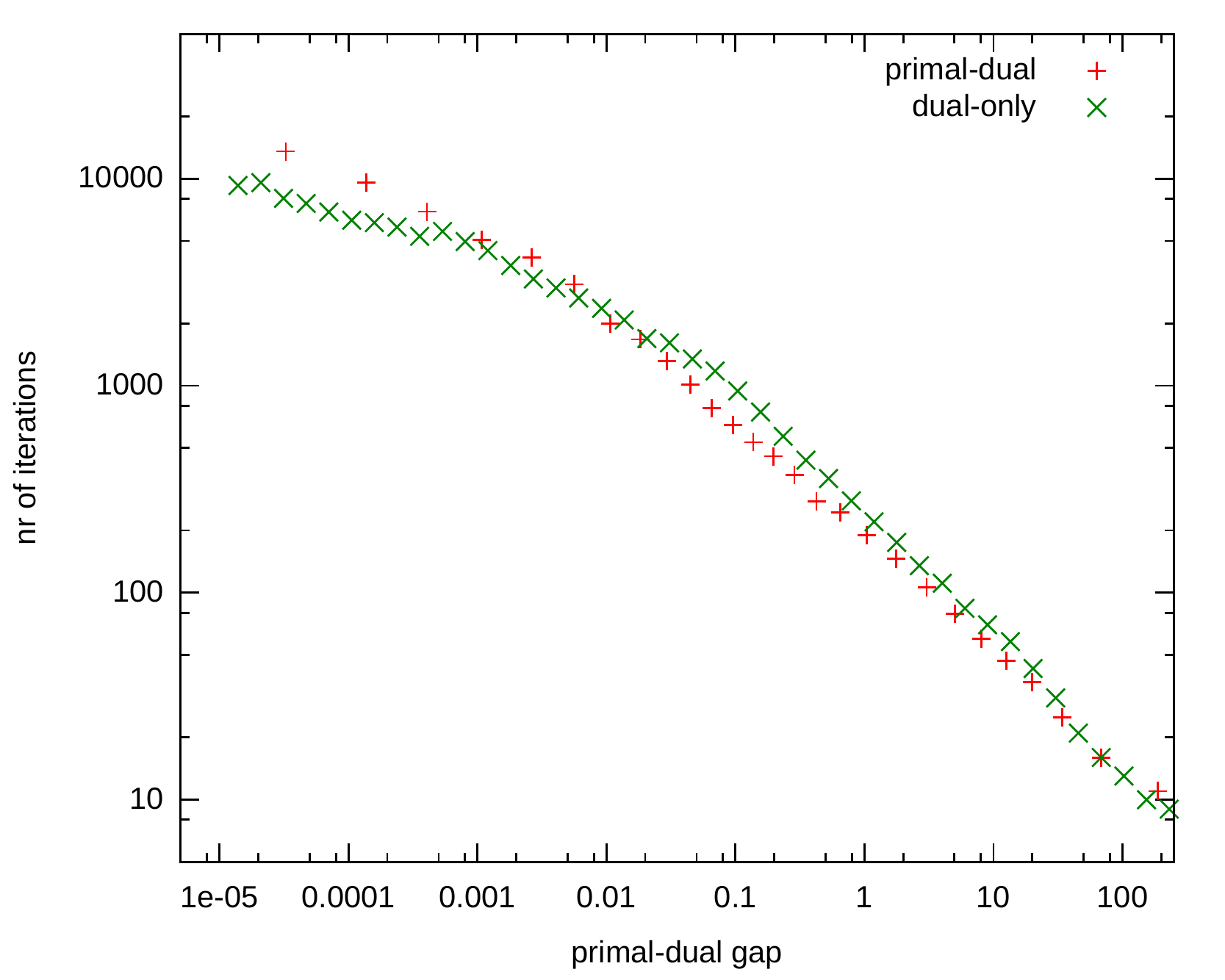}
\caption{\label{iter_vs_pdg}The number of iterations needed by the linear conjugate gradient method to solve the inner Newton problem as a function of the primal-dual gap, for both interior point methods, applied to a 10 site 1D-Hubbard model with 10 particles.}
\end{figure}
\subsection{Boundary point method}
The boundary point method is in its approach completely orthogonal to the interior point methods, and it therefore has very different numerical properties and difficulties. A key advantage is the absence of a divergence of the condition number of some linear system. A disadvantage is that the method is much less stable, and its convergence properties are more dependent on the systems studied than for interior point methods. An important free parameter in this algorithm is the prefactor $\sigma$ to the dual infeasibility penalty in the augmented Lagrangian (See Eq.~(\ref{augmented_lagrangian})). This parameter has to be chosen carefully and updated during the algorithm to ensure balance between primal and dual infeasibility. In theory one lets the inner loop converge to a certain precision $\epsilon_{\text{in}}$. In practice the algorithm turns out to work best when only one inner iteration is performed. After every outer iteration the primal and dual infeasibility are checked. If the dual infeasibility is larger than the primal infeasibility, $\sigma$ is multiplied by some factor $\tau > 1$. If the dual infeasibility is smaller then the primal, $\sigma$ is divided by $\tau$. In Figure \ref{bp_conv} the convergence behaviour for the boundary point method is shown. It is seen that there are two different regimes during convergence. First the primal and dual infeasibility are not balanced and they do not decrease monotonically, but rather oscillate in a chaotic manner. Once they have become balanced, the convergence towards the feasible area proceeds in a more stable, monotonic way.
\begin{figure}
\centering
\includegraphics[scale=0.8]{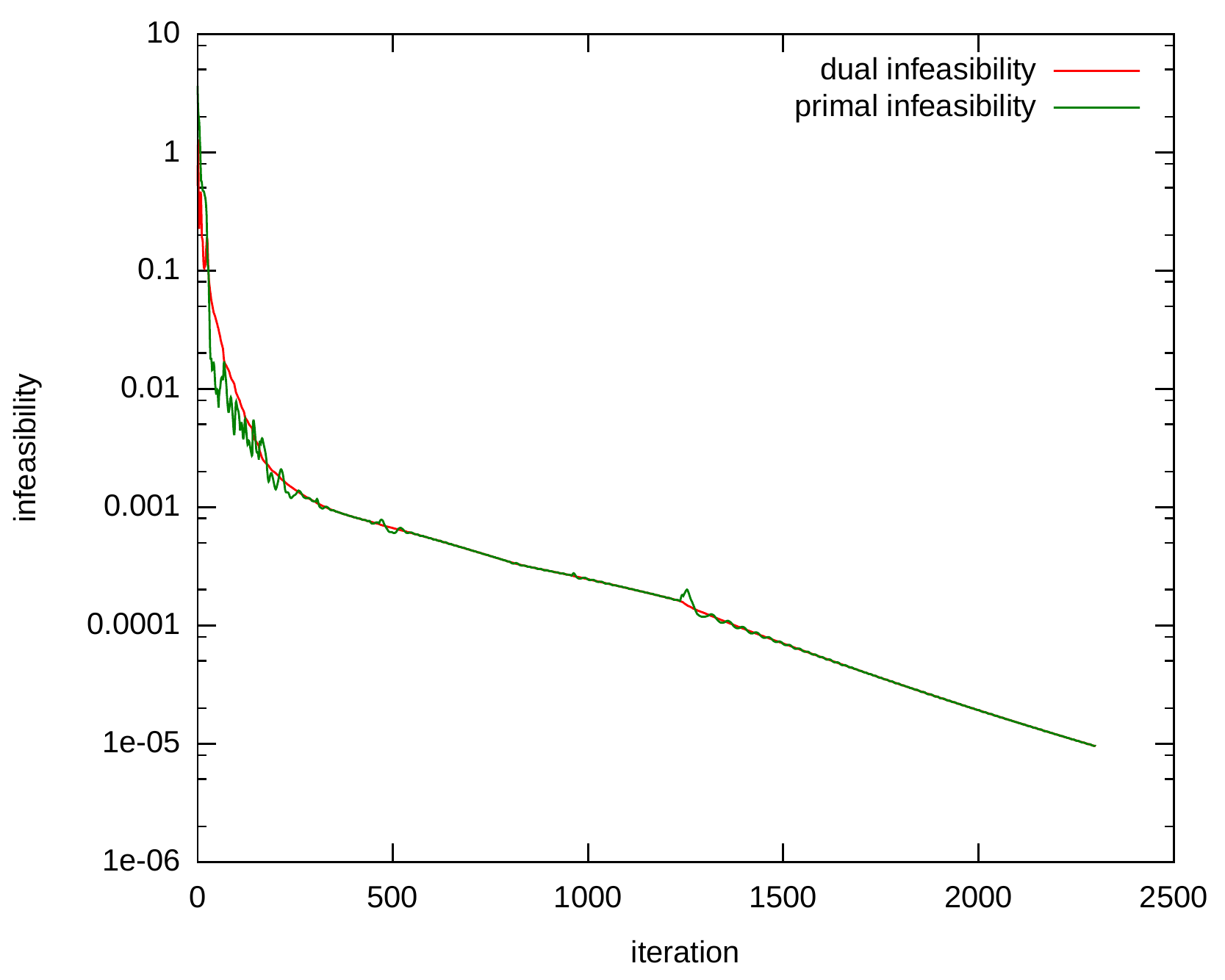}
\caption{\label{bp_conv} The evolution of primal and dual infeasibility during the convergence of the boundary point method.}
\end{figure}
\subsection{Comparison of scaling properties}
To conlude the Chapter,  some scaling properties of the different methods are compared, when applied to the one-dimensional Hubbard model. All the methods have the same basic computational scaling behaviour, being $O(M^6)$ required for multiplying or diagonalizing a matrix. In Figure~\ref{scaling} the number of these operations needed to converge to the optimum is plotted as a function of lattice size. The interior point methods both have to solve a linear system of size $M^4$, so it is not surprising that the scaling, on top of the basic matrix computations, of these methods is $M^4$. More surprising is that there seems to be no, or a very limited, scaling for the boundary point method. The number of iterations required remains around 3000, irrespective of the size of the system. It must be stressed that this is a result limited to the one-dimensional hubbard model, and cannot be extrapolated to other systems, like molecules, where the convergence properties of the boundary point method can be completely different. One reason for the succes of the boundary point method applied to the Hubbard model is the amount of symmetry present in the system. The boundary point method is designed for problems with a huge amount of dual variables or primal constraints. For most physical systems the dimensions of the matrices involved are already unfeasibly large before the boundary point method would becomes advantageous. The Hubbard model, however, contains many symmetries (see Chapter \ref{symmetry}), implying that the matrix dimensions are considerably reduced, and the number of dual variables can get very large before the matrix computations involved become unfeasible. In this case, the domain where the boundary point method is advantageous is actually reached.
\begin{figure}
\centering
\includegraphics[scale=0.8]{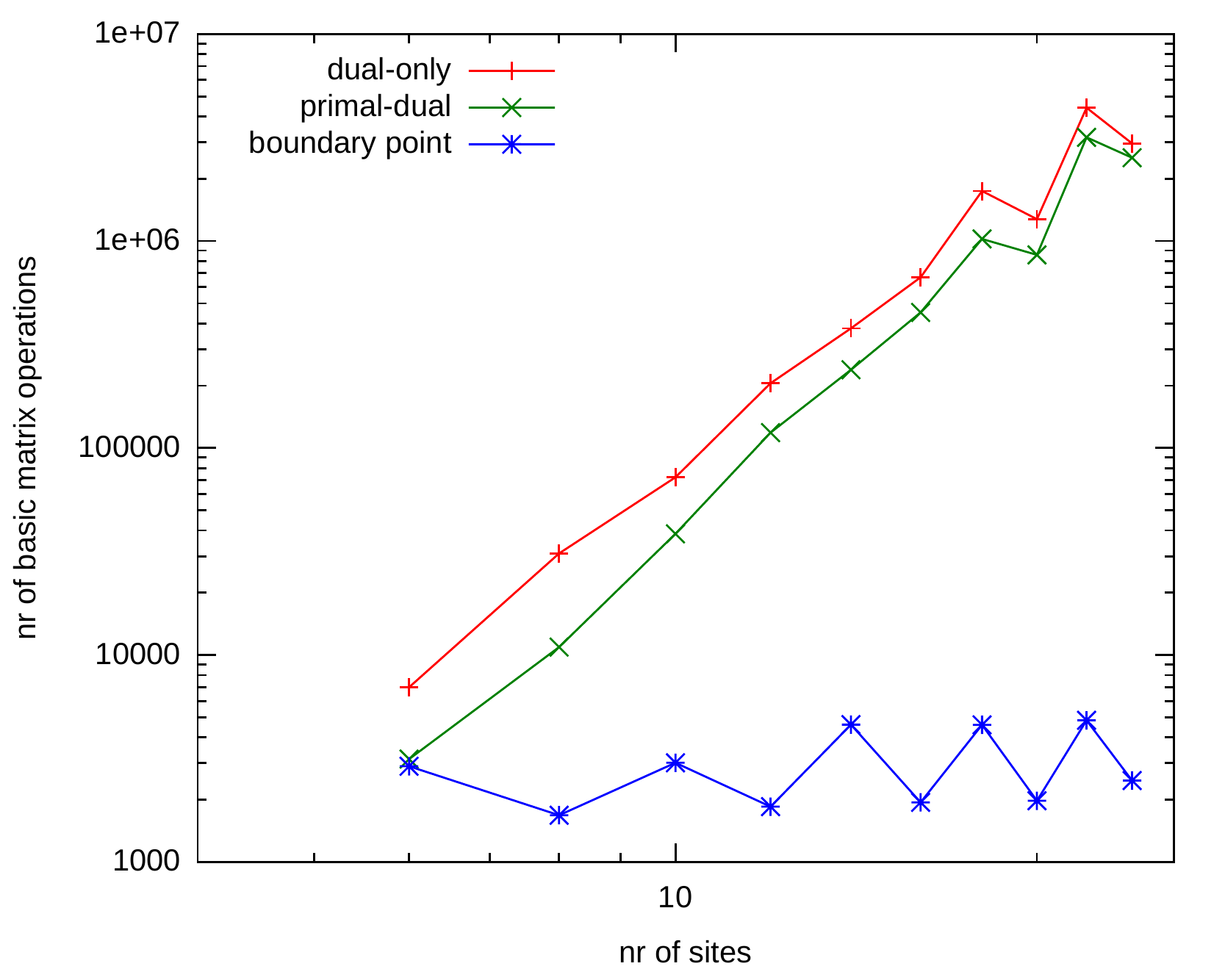}
\caption{\label{scaling} The number of basic matrix operations needed to converge for different lattice sizes of the one-dimensional Hubbard model.}
\end{figure}

\chapter{\label{symmetry}Symmetry adaptation of the 2DM}
In the previous Chapter we have shown how the variational determination of the density matrix can be formulated as a semidefinite program. The current implementations of semidefinite programming algorithms, however, are not yet competitive with other many-body methods of comparable accuracy (like the coupled-cluster framework with single and double excitations). In this Chapter it is shown that including symmetry in the v2DM algorithm is a straightforward, though sometimes tedious, task, and leads to a considerable speedup of all computations. A system is said to have a certain symmetry when it is invariant under the action of some operator $\hat{O}$. In quantum mechanics, this is expressed mathematically by the fact that the Hamiltonian operator describing the system commutes with the operator $\hat{O}$:
\begin{equation}
[\hat{H},\hat{O}] = 0~.
\end{equation}
As we will see, this implies a partial (block)-diagonalization of all matrix quantities involved.
In this Chapter an overview is given of the different symmetries occurring in the systems under study, and how these were exploited. The first Section deals with the inclusion of spin symmetry, which is present in all the atomic, molecular and lattice systems that have been considered. Then additional symmetries are discussed that are more specific to the different systems, for instance the symmetries present in atomic systems (rotational symmetry and parity). The last Section deals with the symmetries in the one-dimensional Hubbard model, {\it i.e.} translational invariance combined with spin and parity. For a more complete account of symmetries and group theory in physics we refer to \cite{cornwell,hamermesh}; for a basic overview of the angular momentum algebra used here, see Appendix~\ref{angular_momentum}.
\section{\label{spinsym}Spin symmetry}
Spin symmetry is one instance of the much larger class of symmetries arising from the invariance of a system under rotations, either in real space or in some abstract spin space.
As is known from basic group theory \cite{cornwell}, the generators of a rotational group are angular momentum operators $(\mathcal{J}_x,\mathcal{J}_y,\mathcal{J}_z)$ which satisfy the Lie algebra:
\begin{equation}
[\mathcal{J}_x,\mathcal{J}_y] = i \mathcal{J}_z~,\qquad [\mathcal{J}_y,\mathcal{J}_z] = i \mathcal{J}_x~,\qquad [\mathcal{J}_z,\mathcal{J}_x] = i \mathcal{J}_y~.
\end{equation}
The invariance of a Hamiltonian under a rotation then translates into the conservation of angular momentum:
\begin{equation}
[\hat{H},\mathcal{J}_z] = 0~,\qquad[\hat{H},\mathcal{J}^2] = 0~,
\label{cons_am}
\end{equation}
where
\begin{equation}
\mathcal{J}^2 = \mathcal{J}_x^2 + \mathcal{J}_y^2+ \mathcal{J}_z^2~.
\end{equation}
From Eq.~(\ref{cons_am}) and the fact that $[\mathcal{J}_z,\mathcal{J}^2] = 0$ it follows that the eigenstates of the Hamiltonian can be labeled by the eigenvalues of $\mathcal{J}^2$ and $\mathcal{J}_z$:
\begin{equation}
\hat{H}\ket{\Psi^N_{\mathcal{JM}}} = E^N_{\mathcal{JM}}\ket{\Psi^N_{\mathcal{JM}}}~,
\end{equation}
with
\begin{eqnarray}
\mathcal{J}^2\ket{\Psi^N_\mathcal{JM}} &=& \mathcal{J}(\mathcal{J}+1)\ket{\Psi^N_\mathcal{JM}}~,\\
\mathcal{J}_z\ket{\Psi^N_\mathcal{JM}} &=& \mathcal{M}\ket{\Psi^N_\mathcal{JM}}~.
\end{eqnarray}
We now expose how this symmetry can be used to speed up the algorithm by transforming the 2DM and its matrix maps into block-diagonal matrices.

The most frequently encountered symmetry of this kind is the symmetry under rotations in spin space. Physically this can be interpreted as the absence of a preferred direction for the axis of spin quantization. All atomic, molecular and lattice systems that were studied in this thesis have this symmetry. This symmetry would be broken if {\it e.g.} an external magnetic field is added to the Hamiltonian. We rewrite the single-particle basis to explicitly introduce the electron spin:
\begin{equation}
\ket{\alpha} \rightarrow \ket{a\sigma_a}~,
\end{equation}
in which $a$ is a spatial orbital index and $\sigma_a = \pm \frac{1}{2}$ is the spin projection.
To exploit this symmetry we transform the antisymmetric two-fermion basis, in which the 2DM is expressed, to a spin-coupled basis (See Appendix \ref{angular_momentum} or \cite{angmom}):
\begin{equation}
\ket{ab;SM_S} = \sum_{\sigma_a \sigma_b}\braket{\frac{1}{2}\sigma_a\frac{1}{2}\sigma_b}{SM_S}\ket{a\sigma_ab\sigma_b}~.
\label{tp_spin}
\end{equation}
Two spin-$\frac{1}{2}$ particles can couple to a spin $S = 0$ singlet and a $S=1$ triplet:
\begin{eqnarray}
\ket{ab;00} &=& \frac{1}{\sqrt{2}}\left[\ket{a\uparrow b\downarrow} - \ket{a\downarrow b\uparrow}\right]~,\\
\ket{ab;1-1} &=& \ket{a\uparrow b\uparrow}~,\\
\ket{ab;10} &=& \frac{1}{\sqrt{2}}\left[\ket{a\uparrow b\downarrow} + \ket{a\downarrow b\uparrow}\right]~,\\
\ket{ab;1+1} &=& \ket{a\downarrow b\downarrow}~,
\end{eqnarray}
which are respectively antisymmetric and symmetric in the spin-coordinates. Combining this with the permutation symmetry of the electrons the norm of Eq.~(\ref{tp_spin}) is not unity but:
\begin{equation}
\braket{ab;SM_S}{cd;S'M_S'} = \delta_{SS'}\delta_{M_SM_S'}\left(\delta_{ac}\delta_{bd}+(-1)^S\delta_{ad}\delta_{bc}\right)~,
\end{equation}
which means we have to define the spin-coupled (see Appendix~\ref{angular_momentum}) two-particle creation operators $B^\dagger$ as:
\begin{equation}
{B^\dagger}^{SM_S}_{ab} = \frac{1}{\sqrt{(ab)}}\left[a^\dagger_a \otimes a^\dagger_b\right]^S_M = \frac{1}{\sqrt{(ab)}}\sum_{\sigma_a\sigma_b}\braket{\frac{1}{2}\sigma_a\frac{1}{2}\sigma_b}{SM_S}a^\dagger_{a\sigma_a}a^\dagger_{b\sigma_b}~,
\label{tp_op_sc}
\end{equation}
where $(ab)$ is shorthand for $(1 + \delta_{ab})$ (see Appendix~\ref{math_notes}).
In this spin-coupled basis, the 2DM for an ensemble of states with spin $\mathcal{S}$ and spin projection $\mathcal{S}_z = \mathcal{M}$ is given by:
\begin{equation}
{}^{\mathcal{SM}}\Gamma^{SM_S;S'M_S'}_{ab;cd} = \sum_i w_i \bra{\Psi^N_{\mathcal{SM},i}}{B^\dagger}^{SM_S}_{ab}{B}^{S'M_S'}_{cd}\ket{\Psi^N_{\mathcal{SM},i}}~.
\label{2DM_sc_ensav}
\end{equation}
For the following it is convenient to rewrite Eq.~(\ref{2DM_sc_ensav}) in a form where the direct product of the operators  $B^\dagger$ and $B$ is coupled to a spherical tensor operator with total spin $S_T$ (see Appendix \ref{angular_momentum}):
\begin{eqnarray}
\nonumber{}^{\mathcal{SM}}\Gamma^{SM_S;S'M_S'}_{ab;cd} &=& (-1)^{S'-M_S'}\sum_i w_i\sum_{S_TM_T}\braket{SM_SS'-M_S'}{S_TM_T}\\
&&\qquad\qquad\qquad\qquad\bra{\Psi^N_{\mathcal{SM},i}}\left[{B^\dagger}^{S}_{ab}\otimes{\tilde{B}}^{S'}_{cd}\right]^{S_T}_{M_T}\ket{\Psi^N_{\mathcal{SM},i}}~.
\end{eqnarray}
As the bra and ket wave functions in the ensemble have the same $z$-component of total spin $\mathcal{M}$, the matrixelement in the above equations can only be non-zero if $M_T = 0$, which means that $M_S = M_S'$:
\begin{eqnarray}
\nonumber{}^{\mathcal{SM}}\Gamma^{SM_S;S'M_S'}_{ab;cd} &=& \delta_{M_SM_S'}(-1)^{S'-M_S}\sum_i w_i\sum_{S_T}\braket{SM_SS'-M_S}{S_T0}\\
&&\qquad\qquad\qquad\qquad\bra{\Psi^N_{\mathcal{SM},i}}\left[{B^\dagger}^{S}_{ab}\otimes{\tilde{B}}^{S'}_{cd}\right]^{S_T}_{0}\ket{\Psi^N_{\mathcal{SM},i}}~.
\label{2DM_sc_red}
\end{eqnarray}
For a further reduction of the 2DM we first consider the case where the total spin of the wave function is zero.
\subsection{Singlet ground state}
When the ground state is a spin singlet ($\mathcal{S} =\mathcal{M} = 0$), the operator in Eq.~(\ref{2DM_sc_red}) has to be a singlet too ($S_T = 0$), for the matrixelement to be non-zero. It follows that $S=S'$ and:
\begin{eqnarray}
\nonumber{}^{{00}}\Gamma^{SM_S;S'M_S'}_{ab;cd} &=& \delta_{M_SM_S'}\delta_{SS'}(-1)^{S-M_S}\sum_i w_i\braket{SM_SS-M_S}{00}\\
&&\qquad\qquad\qquad\qquad\qquad\qquad\qquad\bra{\Psi^N_{{00},i}}\left[{B^\dagger}^{S}_{ab}\otimes{\tilde{B}}^{S}_{cd}\right]^0_{0}\ket{\Psi^N_{{00},i}}~.
\label{2DM_sc_singlet_mdep}
\end{eqnarray}
The Clebsch-Gordan coefficient
\begin{equation}
\braket{SM_SS-M_S}{00} = \frac{(-1)^{S-M_S}}{[S]}~,
\end{equation}
implies that Eq.~(\ref{2DM_sc_singlet_mdep}) can be written as:
\begin{equation}
{}^{{00}}\Gamma^{SM_S;S'M_S'}_{ab;cd} = \frac{\delta_{M_SM_S'}\delta_{SS'}}{[S]}\sum_i w_i \bra{\Psi^N_{{00},i}}\left[{B^\dagger}^{S}_{ab}\otimes{\tilde{B}}^{S}_{cd}\right]^0_{0}\ket{\Psi^N_{{00},i}}~,
\end{equation}
which is independent of $M_S$. As a result there is a decomposition of the global 2DM into four diagonal blocks, one block with $S=0$, and three with $S = 1$. Those with $S=1$ are identical, so all matrix manipulations can be restriced to one copy. It follows that the basic object that has to be stored can be written as:
\begin{equation}
{}^{{00}}\Gamma^{SM_S;S'M_S'}_{ab;cd} \longrightarrow \Gamma^S_{ab;cd}~,
\end{equation}
as is illustrated in Figure~\ref{symmetry_fig}.
\subsection{\label{spin_averaged_ensemble}Higher-spin ground state}
It turns out that such an economical block decomposition is also possible for higher-spin ground states, provided we consider spin-averaged ensembles:
\begin{eqnarray}
\nonumber{}^{\mathcal{S}}\Gamma^{SM_S;S'M_S'}_{ab;cd} &=& \delta_{M_SM_S'}(-1)^{S'-M_S}\sum_i w_i\sum_{S_T}\braket{SM_SS'-M_S}{S_T0}\\
&&\qquad\qquad\qquad\frac{1}{[\mathcal{S}]^2}\sum_{\mathcal{M}}\bra{\Psi^N_{\mathcal{SM},i}}\left[{B^\dagger}^{S}_{ab}\otimes{\tilde{B}}^{S'}_{cd}\right]^{S_T}_{0}\ket{\Psi^N_{\mathcal{SM},i}}~,
\label{2DM_sc_hs}
\end{eqnarray}
in which an equal weight ensemble is taken over the different members of a multiplet $\ket{\Psi^N_{\mathcal{SM},i}}$. We are still able to reach the lowest energy by considering such ensembles, because the different members of a multiplet are degenerate when the Hamiltonian has spin symmetry. Using the Wigner-Eckart theorem (\ref{wigner_eckart}) we can rewrite the coupled matrixelement in Eq.~(\ref{2DM_sc_hs}) as:
\begin{eqnarray}
\nonumber\bra{\Psi^N_{\mathcal{SM},i}}\left[{B^\dagger}^{S}_{ab}\otimes{\tilde{B}}^{S'}_{cd}\right]^{S_T}_{0}\ket{\Psi^N_{\mathcal{SM},i}} &=& (-1)^{\mathcal{S}-\mathcal{M}}
\left(
\begin{matrix}
\mathcal{S}&\mathcal{S} & S_T\\
\mathcal{M}&-\mathcal{M}&0
\end{matrix}
\right)\\
&&\qquad\qquad\langle \Psi^N_{\mathcal{S},i}\|\left[{B^\dagger}^{S}_{ab}\otimes{\tilde{B}}^{S'}_{cd}\right]^{S_T}\|\Psi^N_{\mathcal{S},i}\rangle~.
\end{eqnarray}
Substituting
\begin{equation}
\frac{(-1)^{\mathcal{S}-\mathcal{M}}}{[\mathcal{S}]}
=
\left(
\begin{matrix}
\mathcal{S}&\mathcal{S} & 0\\
\mathcal{M}&-\mathcal{M}&0
\end{matrix}
\right)~,\\
\end{equation}
the sum over $\mathcal{M}$ can be used in Eq.~(\ref{2DM_sc_hs}), together with the orthogonality relation:
\begin{equation}
\sum_{\mathcal{M}}
\left(
\begin{matrix}
\mathcal{S}&\mathcal{S} & 0\\
\mathcal{M}&-\mathcal{M}&0
\end{matrix}
\right)
\left(
\begin{matrix}
\mathcal{S}&\mathcal{S} & S_T\\
\mathcal{M}&-\mathcal{M}&0
\end{matrix}
\right)
=\delta_{S_T0}~,
\end{equation}
to obtain:
\begin{equation}
{}^{\mathcal{S}}\Gamma^{SM_S;S'M_S'}_{ab;cd} = \delta_{M_SM_S'}\delta_{SS'}\frac{1}{[S][\mathcal{S}]}\sum_i w_i~\langle \Psi^N_{\mathcal{S},i}\|\left[{B^\dagger}^{S}_{ab}\otimes{\tilde{B}}^{S}_{cd}\right]^0\|\Psi^N_{\mathcal{S},i}\rangle~.
\label{2DM_ensemble_averaged}
\end{equation}
Eq.~(\ref{2DM_ensemble_averaged}) implies that by taking the spin-averaged ensemble, an expression for the 2DM is obtained which is diagonal in the two-particle spin $S$, and independent of the third component $M$. It follows that the 2DM has the same block reduction for higher-spin states as for the singlet, illustrated in Figure~\ref{symmetry_fig}.
\begin{figure}
\centering
\includegraphics[scale=0.5]{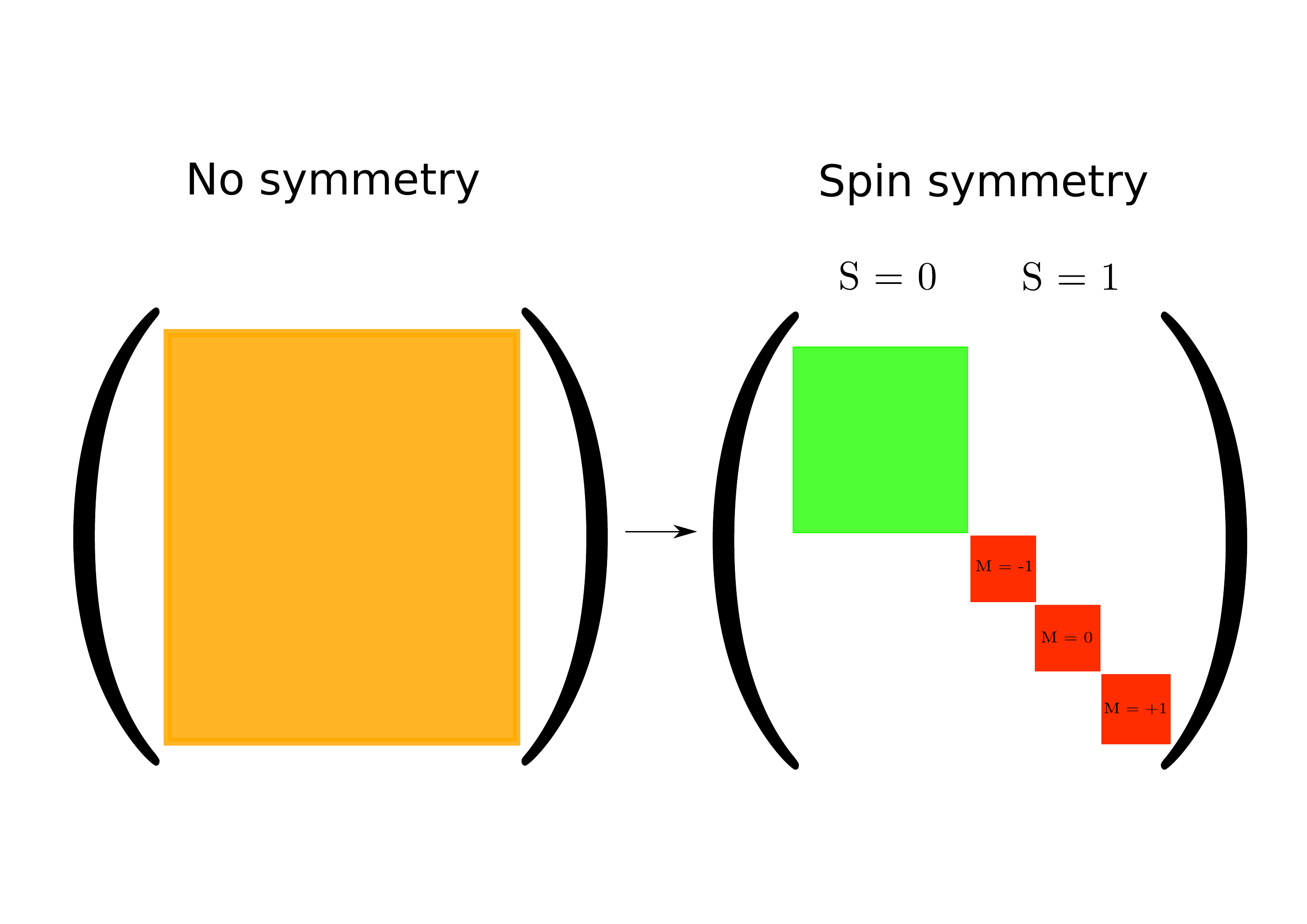}
\caption{\label{symmetry_fig} Illustration of how spin symmetry of the Hamiltonian induces a block decomposition of the 2DM.}
\end{figure}

To summarize, the spin-coupled 2DM, for all spin states $\mathcal{S}$, is related to the uncoupled 2DM as:
\begin{equation}
\Gamma^S_{ab;cd} = \frac{1}{\sqrt{(ab)(cd)}}\sum_{\sigma_a\sigma_b}\sum_{\sigma_c\sigma_d}\braket{\frac{1}{2}\sigma_a\frac{1}{2}\sigma_b}{SM}\braket{\frac{1}{2}\sigma_c\frac{1}{2}\sigma_d}{SM}\Gamma_{(a\sigma_a)(b\sigma_b);(c\sigma_c)(d\sigma_d)}~,
\label{2DM_sc}
\end{equation}
with the inverse transformation:
\begin{equation}
\Gamma_{(a\sigma_a)(b\sigma_b);(c\sigma_c)(d\sigma_d)} = \sqrt{(ab)(cd)}\sum_{SM}\braket{\frac{1}{2}\sigma_a\frac{1}{2}\sigma_b}{SM}\braket{\frac{1}{2}\sigma_c\frac{1}{2}\sigma_d}{SM}\Gamma^S_{ab;cd}~.
\label{2DM_inverse_sc}
\end{equation}
Note that the symmetry under the exchange of single-particle indices in the 2DM is now:
\begin{equation}
\Gamma^{S}_{ab;cd} = (-1)^S~\Gamma^S_{ba;cd}  = (-1)^S~\Gamma^S_{ab;dc} = \Gamma^S_{ba;dc}~,
\end{equation}
because of the extra phase $(-1)^{1 - S}$ that arises when the spins are exchanged in a Clebsch-Gordan coefficient. The $S=0$ block has dimension $\frac{M}{4}(\frac{M}{2}+1)$ as it is symmetrical in the spatial single-particle indices. The three identical $S=1$ blocks are antisymmetrical in spatial single-particle indices, and therefore have dimension $\frac{M}{4}(\frac{M}{2}-1)$. This sums up nicely to the dimension of the full 2DM:
\begin{equation}
\frac{M(M-1)}{2} = \frac{M}{4}\left(\frac{M}{2}+1\right)+3\left[\frac{M}{4}\left(\frac{M}{2}-1\right)\right]~.
\end{equation}
\subsection{Decomposition of the two-index constraints}
All linear matrix maps defined in Section \ref{standard_n_rep} have a similar decomposition in block-diagonal form. In this Section the spin-coupled form of the matrix maps is introduced. The 1DM for spin-symmetrical systems is defined as:
\begin{equation}
\rho_{(a\sigma_a)(b\sigma_b)} = \delta_{\sigma_a\sigma_b}\rho^{\sigma_a}_{ab} = \delta_{\sigma_a\sigma_b}\sum_i w_i \frac{1}{[\mathcal{S}]^2} \sum_\mathcal{M}\bra{\Psi^N_{\mathcal{SM},i}}a^\dagger_{a\sigma_a}a_{b\sigma_a} \ket{\Psi^N_{\mathcal{SM},i}}~,
\label{1DM_sc}
\end{equation}
from which it is clear that the 1DM is reduced to two blocks $\rho^\sigma_{ab}$: one for spin-up and one for spin-down particles. In the same way as for the 2DM it can be shown that by taking the spin-averaged ensemble the depence on $\sigma$ vanishes, which means the up and down spin blocks are identical.
The 1DM is derived from the 2DM as:
\begin{equation}
\rho^\sigma_{ac} = \frac{1}{N-1}\sum_\beta \Gamma_{(a\sigma)\beta;(c\sigma)\beta}~.
\end{equation}
By substituting the inverse transformation (\ref{2DM_inverse_sc}) and performing elementary angular momentum recoupling the 1DM can be derived from the spin-coupled 2DM as:
\begin{equation}
\rho_{ac} = \frac{1}{N-1}\sum_S \frac{[S]^2}{2}\sum_b\sqrt{(ab)(cb)}~\Gamma^S_{ab;cb}~.
\label{1DM_of_2DM_sc}
\end{equation}
\paragraph{The $\mathcal{Q}$ condition:}
the spin-coupled $\mathcal{Q}$ condition is defined as:
\begin{eqnarray}
\mathcal{Q}(\Gamma)^{S}_{ab;cd} &=& \sum_i w_i \frac{1}{[\mathcal{S}]^2}\sum_\mathcal{M} \bra{\Psi^N_{\mathcal{SM},i}}B^S_{ab}~{B^\dagger}^S_{cd}\ket{\Psi^N_{\mathcal{SM},i}}~,
\end{eqnarray}
in which the tensor operator $B^\dagger$ is again given by Eq.~(\ref{tp_op_sc}). In exactly the same way as for the 2DM the $\mathcal{Q}$ map decomposes into a singlet block and three identical triplet blocks. To derive the spin-coupled form of the $\mathcal{Q}$ map we have to substitute the uncoupled form Eq.~(\ref{Q_2DM}) into:
\begin{equation}
\mathcal{Q}(\Gamma)^{S}_{ab;cd} = \frac{1}{\sqrt{(ab)(cd)}}\sum_{\sigma_a\sigma_b}\sum_{\sigma_c\sigma_d}\braket{\frac{1}{2}\sigma_a\frac{1}{2}\sigma_b}{SM}\braket{\frac{1}{2}\sigma_c\frac{1}{2}\sigma_d}{SM}\mathcal{Q}(\Gamma)_{(a\sigma_a)(b\sigma_b);(c\sigma_c)(d\sigma_d)}~,
\end{equation}
and carry out straightforward angular momentum recoupling. Following this procedure leads us to the spin-coupled expression of the $\mathcal{Q}$ map:
\begin{eqnarray}
\nonumber\mathcal{Q}(\Gamma)^S_{ab;cd} &=& \frac{1}{\sqrt{(ab)(cd)}}\left(\delta_{ac}\delta_{bd} + (-1)^S \delta_{ad}\delta_{bc}\right)\frac{2\mathrm{Tr}~\Gamma}{N(N-1)} + \Gamma^S_{ab;cd}\\
&&-\frac{1}{\sqrt{(ab)(cd)}}\left(\delta_{ac}\rho_{bd} + (-1)^S\delta_{ad}\rho_{bc}+(-1)^S\delta_{bc}\rho_{ad} + \delta_{bd}\rho_{ac}\right)~.
\label{Q_sc}
\end{eqnarray}
\paragraph{The $\mathcal{G}$ condition:}
the spin recoupling of the $\mathcal{G}$ condition is a bit more involved. As explained in Appendix \ref{angular_momentum}, the correct spherical tensor form of a particle-hole operator is:
\begin{equation}
{A^\dagger}^S_{ab} = \left[a^\dagger_a \otimes \tilde{a}_b\right]^S \qquad\text{with}\qquad \tilde{a}_{b\sigma_b} = (-1)^{\frac{1}{2}+\sigma_b}a_{b-\sigma_b}~.
\label{ph_op_sc}
\end{equation}
With this operator, the spin-coupled $\mathcal{G}$ map can be defined through a spin-averaged ensemble:
\begin{equation}
\mathcal{G}(\Gamma)^S_{ab;cd} = \sum_i w_i \frac{1}{[\mathcal{S}]^2}\sum_\mathcal{M} \bra{\Psi^N_{\mathcal{SM},i}}{A^\dagger}^S_{ab}~{A}^S_{cd}\ket{\Psi^N_{\mathcal{SM},i}}~,
\label{spin_averaged_G_sc}
\end{equation}
which again decomposes into four blocks, one singlet and three identical triplet blocks. The expression of Eq.~(\ref{spin_averaged_G_sc}) in terms of the spin-coupled 2DM is found by substituting Eq.~(\ref{G1_2DM}) into:
\begin{eqnarray}
\nonumber\mathcal{G}(\Gamma)^{S}_{ab;cd} &=& \sum_{\sigma_a\sigma_b}\sum_{\sigma_c\sigma_d}(-1)^{1-\sigma_a-\sigma_b}\braket{\frac{1}{2}\sigma_a\frac{1}{2}-\sigma_b}{SM}\\
&&\qquad\qquad\qquad\braket{\frac{1}{2}\sigma_c\frac{1}{2}-\sigma_d}{SM}~\mathcal{G}(\Gamma)_{(a\sigma_a)(b\sigma_b);(c\sigma_c)(d\sigma_d)}~.
\end{eqnarray}
The first term of Eq.~(\ref{G1_2DM}), containing the 1DM, is easy to recouple. For the second term, however, the recoupling formula Eq.~(\ref{extremely_useful}) is needed, which introduces a Wigner-$6j$ symbol into the coupled form of $\mathcal{G}$ map:
\begin{equation}
\mathcal{G}(\Gamma)^S_{ab;cd} =\delta_{bd}\rho_{ac} -\sqrt{(ad)(cb)}\sum_{S'}[S']^2
\left\{
\begin{matrix}
\frac{1}{2}&\frac{1}{2}&S\\
\frac{1}{2}&\frac{1}{2}&S'
\end{matrix}
\right\}
\Gamma^{S'}_{ad;cb}~.
\label{G_sc}
\end{equation}
As there is no symmetry related to the spatial single-particle indices of the $\mathcal{G}$ matrix the four blocks all have equal dimension $\frac{M^2}{4}$, summing up to the total dimension of the uncoupled $\mathcal{G}$ matrix, {\it i.e.} $M^2$.
\subsection{Decomposition of the three-index constraints}
The spin coupling of the three-index conditions is much more complicated than for the two-index conditions. There are several ways to couple three spin-$\frac{1}{2}$ particles to a good total spin, as explained in Appendix \ref{angular_momentum}, depending on which two particles are coupled to good intermediate spin. We have chosen to first couple $a\sigma_a$ and $b\sigma_b$ to intermediate spin $S_{ab} = 0/1$, and then couple $c\sigma_c$ with $S_{ab}$ to total spin $S = \frac{1}{2}/\frac{3}{2}$:
\begin{equation}
\ket{abc;(S_{ab})SM} = \sum_{\sigma_a\sigma_b}\sum_{M_{ab}\sigma_c}\braket{\frac{1}{2} \sigma_a \frac{1}{2} \sigma_b}{S_{ab}M_{ab}}\braket{S_{ab}M_{ab}\frac{1}{2}\sigma_c}{SM}\ket{a\sigma_ab\sigma_bc\sigma_c}~.
\end{equation}
The choice for a complete orthogonal basisset of three-particle space is more subtle than before, and shown in Table~\ref{dp_dim}. That this is a correct choice is seen from the fact that the sum of the dimensions of the blocks, multiplied by their degeneracy, equals the total dimension of uncoupled three-particle space $\frac{M(M-1)(M-2)}{6}$.
\begin{table}
\caption{\label{dp_dim} A complete spin-coupled basis of three-particle space.}
\centering
\begin{tabular}{|ccccc|}
\hline
$S$&$S_{ab}$&orbitals&dimension&degeneracy\\
\hline
$\frac{1}{2}$ & 0 & $a=b\neq c$ & $\frac{M}{2}(\frac{M}{2}-1)$&2\\[1pt]
$\frac{1}{2}$ & 0 & $a < b < c$ & $\frac{M}{12}(\frac{M}{2}-1)(\frac{M}{2}-2)$&2\\[1pt]
$\frac{1}{2}$ & 1 & $a < b < c$ & $\frac{M}{12}(\frac{M}{2}-1)(\frac{M}{2}-2)$&2\\[1pt]
$\frac{3}{2}$ & 1 & $a < b < c$ & $\frac{M}{12}(\frac{M}{2}-1)(\frac{M}{2}-2)$&4\\[1pt]
\hline
\end{tabular}
\end{table}
Once again a norm $\sqrt{(ab)}$ has to be added because of the symmetry of the Clebsch-Gordan coefficients, which means a normalized three-particle operator is defined as:
\begin{equation}
{B^\dagger}^{S}_{ab(S_{ab})c} = \frac{1}{\sqrt{(ab)}} \sum_{\sigma_a\sigma_b}\sum_{M_{ab}\sigma_c}\braket{\frac{1}{2} \sigma_a \frac{1}{2} \sigma_b}{S_{ab}M_{ab}}\braket{S_{ab}M_{ab}\frac{1}{2}\sigma_c}{SM}a^\dagger_{a\sigma_a}a^\dagger_{b\sigma_b}a^\dagger_{c\sigma_c}~.
\label{dp_sc_op}
\end{equation}
There is another (technical) complication that arises from the specific choice of the basis as in Table \ref{dp_dim}. For a 2DM the elements are stored in an ordered basis $a \leq b$. If we want to access a matrix element where the order is reversed, we can use the simple relation:
\begin{equation}
\Gamma^S_{ab;cd} = (-1)^S\Gamma^S_{ba;cd}~.
\end{equation}
For a three-particle basis, however, this is much more complicated. An exchange of the first two indices is still simple:
\begin{equation}
\ket{abc;(S_{ab})SM} = (-1)^{S_{ab}}\ket{bac;(S_{ab})SM}~,
\end{equation}
but the symmetry between the first two and the third index is more involved. Suppose access is required to an element $abc$ with $c < b < a$, then the recoupling formula Eq.~(\ref{recoupling_6j}) has to be used:
\begin{equation}
\ket{abc;(S_{ab})SM}= \sqrt{\frac{(cb)}{(ab)}}[S_{ab}]\sum_{S_{cb}}[S_{cb}]\left\{\begin{matrix}S&\frac{1}{2}&S_{cb}\\\frac{1}{2}&\frac{1}{2}&S_{ab}\end{matrix}\right\}\ket{cba;(S_{cb})SM}~,
\label{recoupling_T1}
\end{equation}
to express it as a function of basis states that are stored. 
\paragraph{The $\mathcal{T}_1$ condition:} 
the spin-coupled $\mathcal{T}_1$ condition is defined as:
\begin{eqnarray}
\nonumber\mathcal{T}_1(\Gamma)^{S(S_{ab};S_{de})}_{abc;dez} &=& \sum_i w_i \frac{1}{[\mathcal{S}]^2}\sum_\mathcal{M}\left( \bra{\Psi^N_{\mathcal{SM},i}}{B^\dagger}^S_{ab(S_{ab})c}~{B}^S_{de(S_{de})z}\ket{\Psi^N_{\mathcal{SM},i}}\right.\\
&&\left.\qquad\qquad\qquad\qquad +\bra{\Psi^N_{\mathcal{SM},i}}{B}^S_{de(S_{de})z}~{B^\dagger}^S_{ab(S_{ab})c}\ket{\Psi^N_{\mathcal{SM},i}}\right)~,
\label{T1_sc_ensemble}
\end{eqnarray}
with $B^\dagger$ given by Eq.~(\ref{dp_sc_op}).
In an analogous way as for the two-index constraints, it can be shown that this expression is diagonal in $SM$ and independent of $M$. It follows that the $\mathcal{T}_1$ matrix is reduced to two identical $S=\frac{1}{2}$ blocks and four identical $S=\frac{3}{2}$ blocks. One can express Eq.~(\ref{T1_sc_ensemble}) as a function of the spin-coupled 2DM by substituting (\ref{T1}) in:
\begin{eqnarray}
\nonumber\mathcal{T}_1(\Gamma)^{S(S_{ab};S_{de})}_{abc;dez} &=& \frac{1}{\sqrt{(ab)(de)}}\sum_{\sigma_a\sigma_b}\sum_{\sigma_cM_{ab}} \braket{\frac{1}{2}\sigma_a\frac{1}{2}\sigma_b}{S_{ab}M_{ab}}\braket{S_{ab}M_{ab}\frac{1}{2}\sigma_c}{SM}\\
\nonumber&&\qquad\qquad\qquad\sum_{\sigma_d\sigma_e}\sum_{M_{de}\sigma_z}\braket{\frac{1}{2}\sigma_d\frac{1}{2}\sigma_e}{S_{de}M_{de}}\braket{S_{de}M_{de}\frac{1}{2}\sigma_z}{SM}\\
&&\qquad\qquad\qquad\qquad\mathcal{T}_1(\Gamma)_{(a\sigma_a)(b\sigma_b)(c\sigma_c);(d\sigma_d)(e\sigma_e)(z\sigma_z)}~.
\label{T1_formal_sc}
\end{eqnarray}
This leads to some tedious but straightforward angular momentum recoupling which results in the final spin-coupled expression:
\begin{align}
\nonumber\mathcal{T}_1(\Gamma)^{S(S_{ab};S_{de})}_{abc;dez} &=   \left[\frac{2\mathrm{Tr}~\Gamma}{N(N -1)}\right]\frac{\delta_{cz}\delta_{S_{ab}S_{de}}}{\sqrt{(ab)(de)}}\left(\delta_{ad}\delta_{be} + (-1)^{S_{ab}}\delta_{ae}\delta_{bd}\right) + \delta_{cz}\delta_{S_{ab}S_{de}}\Gamma^{S_{ab}}_{ab;de}\\
\nonumber& - \frac{\delta_{S_{ab}S_{de}}}{\sqrt{(ab)(de)}}\left[\delta_{cz}\left(\delta_{be}\rho_{ad}+(-1)^{S_{ab}}\delta_{ae}\rho_{bd} + (-1)^{S_{de}}\delta_{bd}\rho_{ae} + \delta_{ad}\rho_{be}\right)\right.\\
\nonumber&\left.\qquad\qquad\qquad\qquad\qquad\qquad\qquad\qquad\qquad + \left(\delta_{ad}\delta_{be} + (-1)^{S_{ab}}\delta_{bd}\delta_{ae}\right)\rho_{cz}\right]\\
\nonumber&-\frac{[S_{ab}][S_{de}]}{\sqrt{(ab)(de)}} \left\{\begin{matrix}S&\frac{1}{2}&S_{ab}\\\frac{1}{2}&\frac{1}{2}&S_{de}\end{matrix}\right\}\left[\delta_{az}\delta_{cd}\rho_{be} + (-1)^{S_{de}}\delta_{az}\delta_{ec}\rho_{bd} +(-1)^{S_{ab}}\delta_{bz}\delta_{cd}\rho_{ae}\right.\\
\nonumber\nonumber&\left.\qquad\qquad\qquad\qquad\qquad+(-1)^{S_{ab}+S_{de}}\delta_{bz}\delta_{ec}\rho_{ad} + \left(\delta_{be}\delta_{cd} +(-1)^{S_{de}}\delta_{bd}\delta_{ce}\right)\rho_{az}\right.\\
\nonumber&\left.\qquad\qquad + (-1)^{S_{ab}}\left(\delta_{ae}\delta_{cd}+(-1)^{S_{de}}\delta_{ce}\delta_{ad}\right)\rho_{bz} + \left(\delta_{az}\delta_{be}+(-1)^{S_{ab}}\delta_{bz}\delta_{ae}\right)\rho_{cd}\right.\\
\nonumber&\left.\qquad\qquad\qquad\qquad\qquad\qquad\qquad\qquad + (-1)^{S_{de}}(\delta_{az}\delta_{bd}+(-1)^{S_{ab}}\delta_{bz}\delta_{ad})\rho_{ce}\right]\\
\nonumber& + [S_{ab}][S_{de}]\left\{\begin{matrix}S&\frac{1}{2}&S_{ab}\\\frac{1}{2}&\frac{1}{2}&S_{de}\end{matrix}\right\}\left[\delta_{az}\sqrt{\frac{(bz)}{(ab)}}\Gamma^{S_{de}}_{cb;de} + \delta_{bz}(-1)^{S_{ab}+S_{de}}\sqrt{\frac{(ac)}{(ab)}}\Gamma^{S_{de}}_{ac;de}\right.\\
\nonumber&\left.\qquad\qquad\qquad\qquad\qquad+ \delta_{ce}(-1)^{S_{ab}+S_{de}}\sqrt{\frac{(dz)}{(de)}}\Gamma^{S_{ab}}_{ab;dz}+\delta_{cd}\sqrt{\frac{(ez)}{(de)}}\Gamma^{S_{ab}}_{ab;ze}\right]\\
\nonumber&+\frac{[S_{ab}][S_{de}]}{\sqrt{(ab)(de)}}\sum_{S'} [S']^2\left\{\begin{matrix}S&\frac{1}{2}&S'\\\frac{1}{2}&\frac{1}{2}&S_{ab}\end{matrix}\right\}\left\{\begin{matrix}S&\frac{1}{2}&S'\\\frac{1}{2}&\frac{1}{2}&S_{de}\end{matrix}\right\}\left[\delta_{ad}\sqrt{(bc)(ez)}\Gamma^{S'}_{bc;ez}\right.\\
\nonumber&\left.\qquad\qquad + \delta_{bd}(-1)^{S_{ab}}\sqrt{(ac)(ez)}\Gamma^{S'}_{ac;ez} + \delta_{ae}(-1)^{S_{de}}\sqrt{(bc)(dz)}\Gamma^{S'}_{bc;dz}\right.\\
&\left.\qquad\qquad\qquad\qquad\qquad +\delta_{be}(-1)^{S_{ab}+S_{de}}\sqrt{(ac)(dz)}\Gamma^{S'}_{ac;dz}\right]~.
\end{align}
\paragraph{The $\mathcal{T}_2$ condition:}
for the spin-coupled $\mathcal{T}_2$ condition a good spherical tensor operator is needed which creates two particles and one hole, 
\begin{equation}
{B^\dagger}^{S}_{ab(S_{ab})c} =\frac{1}{\sqrt{(ab)}} \sum_{\sigma_a\sigma_b}\sum_{M_{ab}\sigma_c}\braket{\frac{1}{2}\sigma_a\frac{1}{2}\sigma_b}{S_{ab}M_{ab}}\braket{S_{ab}M_{ab}\frac{1}{2}\sigma_c}{SM}a^\dagger_{a\sigma_a}a^\dagger_{b\sigma_b}\tilde{a}_{c\sigma_c}~.
\label{pph_sc_op}
\end{equation}
Note the presence of the $\tilde{a}$ operator, similar to Eq.~(\ref{ph_op_sc}).
A choice for the orbital ordering which generates a complete basisset for spin-coupled two-particle-one-hole space is shown in Table~\ref{pph_dim}. In contrast to the three-particle case, there is no recoupling needed to access elements of the form $c<b<a$ since there is no symmetry involving the third index. A simple addition of the dimensions multiplied with their respective degeneracies shows that this basis spans the same dimension as the uncoupled particle-hole basis, {\it i.e.} $\frac{M^2(M-1)}{2}$. 
\begin{table}
\caption{\label{pph_dim} A complete spin-coupled basis of two-particle-one hole space.}
\centering
\begin{tabular}{|ccccc|}
\hline
$S$&$S_{ab}$&sp-coords&dimension&degeneracy\\
\hline
$\frac{1}{2}$ & 0 & $a \leq b , c$ & $\frac{M^2}{8}(\frac{M}{2}+1)$ & 2\\[1pt]
$\frac{1}{2}$ & 1 & $a < b , c$ & $\frac{M^2}{8}(\frac{M}{2}-1)$ & 2\\[1pt]
$\frac{3}{2}$ & 1 & $a < b , c$ & $\frac{M^2}{8}(\frac{M}{2}-1)$ & 4\\[1pt]
\hline
\end{tabular}
\end{table}
Using the operator defined in Eq.~(\ref{pph_sc_op}) one can proceed to define the spin-coupled $\mathcal{T}_2$ condition as:
\begin{eqnarray}
\nonumber\mathcal{T}_2(\Gamma)^{S(S_{ab};S_{de})}_{abc;dez} &=& \sum_i w_i \frac{1}{[\mathcal{S}]^2}\sum_\mathcal{M}\left( \bra{\Psi^N_{\mathcal{SM},i}}{B^\dagger}^S_{ab(S_{ab})c}~{B}^S_{de(S_{de})z}\ket{\Psi^N_{\mathcal{SM},i}}\right.\\
&&\left.\qquad\qquad\qquad\qquad +\bra{\Psi^N_{\mathcal{SM},i}}{B}^S_{de(S_{de})z}~{B^\dagger}^S_{ab(S_{ab})c}\ket{\Psi^N_{\mathcal{SM},i}}\right)~.
\label{T2_sc_ensemble}
\end{eqnarray}
One obtains the spin-coupled form of the $\mathcal{T}_2$ map by substituting Eq.~(\ref{T2}) into:
\begin{align}
\nonumber\mathcal{T}_2(\Gamma)^{S(S_{ab};S_{de})}_{abc;dez} &= \frac{1}{\sqrt{(ab)(de)}}\sum_{\sigma_a\sigma_b}\sum_{M_{ab}\sigma_c}(-1)^{\frac{1}{2} -\sigma_c} \braket{\frac{1}{2}\sigma_a\frac{1}{2}\sigma_b}{S_{ab}M_{ab}}\braket{S_{ab}M_{ab}\frac{1}{2}-\sigma_c}{SM}\\
\nonumber&\qquad\qquad\qquad\sum_{\sigma_d\sigma_e}\sum_{M_{de}\sigma_z}(-1)^{\frac{1}{2} - \sigma_z}\braket{\frac{1}{2}\sigma_d\frac{1}{2}\sigma_e}{S_{de}M_{de}}\braket{S_{de}M_{de}\frac{1}{2}-\sigma_z}{SM}\\
& \qquad\qquad\qquad\qquad\qquad\mathcal{T}_2(\Gamma)_{(a\sigma_a)(b\sigma_b)(c\sigma_c);(d\sigma_d)(e\sigma_e)(z\sigma_z)}~.
\label{pph_coupling}
\end{align}
Notice the extra phases and minus signs arising from the use of $\tilde{a}$.
In the same way as for the $\mathcal{T}_1$ matrix, the $\mathcal{T}_2$ matrix falls apart in six diagonal blocks, two identical spin-$\frac{1}{2}$ and four identical spin-$\frac{3}{2}$ blocks.
The first two terms in Eq.~(\ref{T2}) are of a type that we encountered in previous calculations. The last four terms (containing a 2DM) are a bit more involved. First we have to use Eq.~(\ref{extremely_useful}) three times, generating three Wigner-6$j$ symbols which can be reduced to one Wigner-$9j$  symbol using Eq.~(\ref{6j_to_9j}). This leads to the following final spin-coupled expression for the $\mathcal{T}_2$ map:
\begin{align}
\nonumber\mathcal{T}_2(\Gamma)^{S(S_{ab};S_{de})}_{abc;dez} &= \frac{\delta_{S_{ab}S_{de}}}{\sqrt{(ab)(de)}}\left(\delta_{ad}\delta_{be}+(-1)^{S_{ab}}\delta_{ae}\delta_{bd}\right)\rho_{cz} + \delta_{cz}\delta_{S_{ab}S_{de}}\Gamma^{S_{ab}}_{ab;de}\\*
\nonumber&-\frac{[S_{ab}][S_{de}]}{\sqrt{(ab)(de)}}\sum_{S'} [S']^2
\left\{
\begin{matrix}
S&\frac{1}{2}&S_{de}\\
\frac{1}{2}&S'&\frac{1}{2}\\
S_{ab}&\frac{1}{2}&\frac{1}{2}
\end{matrix}
\right\}
\left(\delta_{ad}\sqrt{(ce)(bz)}\Gamma^{S'}_{ce;zb}\right.\\*
\nonumber&\left.+ (-1)^{S_{ab}}\delta_{bd}\sqrt{(ce)(za)}\Gamma^{S'}_{ce;za} + (-1)^{S_{de}}\delta_{ae}\sqrt{(cd)(zb)}\Gamma^{S'}_{cd;zb}\right.\\*
&\left. \qquad\qquad\qquad\qquad\qquad\qquad+ (-1)^{S_{ab}+S_{de}}\delta_{be}\sqrt{(cd)(za)}\Gamma^{S'}_{cd;za}\right)~.
\label{sc_T2}
\end{align}
\paragraph{The $\mathcal{T}_2'$ condition:} 
the spin-coupled $\mathcal{T}_2'$ condition is different from the other constraints, in that the sum of two spins appears in the operator that generates the constraint:
\begin{equation}
{B^\dagger}^S_{ab(S_{ab})c,m} = \sqrt{\frac{1}{(ab)}}\left[\left[a^\dagger_a\otimes a^\dagger_b\right]^{S_{ab}}\otimes\tilde{a}_c\right]^S_M + a^\dagger_{m\sigma_m}~.
\end{equation}
The extra one-particle term only contributes to the result when the two-particle-one-hole operator couples to $S=\frac{1}{2}$; otherwise the spin-averaged ensemble is zero because of spin conservation. This means that only the $S=\frac{1}{2}$ part of the $\mathcal{T}_2'$ matrix is different from the regular $\mathcal{T}_2$. Performing the angular momentum recoupling yields the spin-coupled form of the $\mathcal{T}_2'$ condition:
\begin{equation}
\mathcal{T}_2'(\Gamma)^{\frac{1}{2}(S_{ab};S_{de})}_{abc,m;dez,n} = 
\left(
\begin{matrix}
\mathcal{T}_2(\Gamma)^{\frac{1}{2}(S_{ab};S_{de})}_{abc;dez} & \omega^{(S_{ab})}_{abc;n}\\
{\omega^\dagger}^{(S_{de})}_{m;dez} & \rho_{mn}
\end{matrix}
\right)~,
\end{equation}
with $\omega$ defined as:
\begin{align}
\nonumber\omega^{(S_{ab})}_{abc;n} =& \frac{1}{\sqrt{(ab)}}\sum_{\sigma_a\sigma_b}\sum_{M_{ab}\sigma_c}(-1)^{\frac{1}{2}-\sigma_c}\braket{\frac{1}{2}\sigma_a\frac{1}{2}\sigma_b}{S_{ab}M_{ab}}\braket{S_{ab}M_{ab}\frac{1}{2}-\sigma_c}{\frac{1}{2}\sigma_n}\omega_{(a\sigma_a)(b\sigma_b)(c\sigma_c);(n\sigma_n)}\\
=&\frac{[S_{ab}]}{\sqrt{2}}(-1)^{1+S_{ab}}\sqrt{(nc)}~\Gamma^{S_{ab}}_{ab;nc}~.
\end{align}
The $S=\frac{3}{2}$ block of $\mathcal{T}_2'$ is identical to the that in $\mathcal{T}_2$:
\begin{equation}
\mathcal{T}_2'(\Gamma)^{\frac{3}{2}(1;1)}_{abc;dez} = \mathcal{T}_2(\Gamma)^{\frac{3}{2}(1;1)}_{abc;dez}~.
\end{equation}
\section{\label{sym_atom}Symmetry in atomic systems}
The atomic Hamiltonian in the Born-Oppenheimer approximation consists of a single-particle term, including the kinetic energy of the electrons and their attractive interaction with the positive central charge $Z$, and a two-particle term describing the Coulombic repulsion between the electrons:
\begin{equation}
\hat{H} = -\sum_{i = 1}^N\left(\frac{1}{2}\nabla_i^2 + \frac{Z}{r_i}\right) + \sum_{i<j}^N\frac{1}{|\mathbf{r_i} - \mathbf{r_j}|}~,
\label{ham_atom}
\end{equation}
in which atomic units have been used.
Atomic systems are highly symmetric: the spin symmetry explained in the previous Section is present, but also symmetry under spatial rotations and under a reflection through the origin. The combined symmetries can be exploited to obtain a very large reduction of the dimensions of the blocks in all the matrices involved in the SDP, which leads to a huge speedup of the program.
\subsection{Symmetry under spatial rotations}
Symmetry under spatial rotations is of the same type as symmetry under spin rotations, and leads to the introduction of (orbital) angular momentum. The single-particle basis is transformed to explicity introduce the extra angular momentum quantum number:
\begin{equation}
\ket{\alpha} \rightarrow \ket{am_a\sigma_a}\rightarrow\ket{(n_al_a)m_a\sigma_a}~.
\end{equation}
The spatial orbitals $a$ are labeled by a main quantum number $n$, and an angular momentum $l$. In the discussion of spin symmetry for spin-$\frac{1}{2}$ electrons, the total two-particle spin was limited to $S = 0$  and $S = 1$. For angular momentum, depending on the basis used, the orbital angular momentum, $l$, takes on different \emph{integer} values, corresponding to the $s,p,d,f,g,\ldots$ orbitals. This implies that the two-particle angular momentum $L$ can have many different values. A two-particle state in spin and angular momentum coupled form is defined through the relation:
\begin{equation}
\ket{ab;LM_LSM_S} = \sum_{\sigma_a\sigma_b}\sum_{m_am_b}\braket{\frac{1}{2}\sigma_a\frac{1}{2}\sigma_b}{SM_S}\braket{l_am_al_bm_b}{LM_L}\ket{am_a\sigma_a bm_b\sigma_b}~.
\end{equation}
The norm of such a state is not unity:
\begin{align}
\nonumber\braket{ab;LM_LS_MS}{cd;L'M_L'S'M_S'} = \delta_{SS'}\delta_{LL'}&\delta_{M_LM_L'}\delta_{M_SM_S'}\\
&\left(\delta_{ac}\delta_{bd}+(-1)^{L+S+l_a+l_b}\delta_{ad}\delta_{bc}\right)~,
\label{tp_state_sclc}
\end{align}
implying that the normalized two-particle creation operators, which form the basis for expressing the 2DM, are defined as:
\begin{equation}
{B^\dagger}^{LS}_{ab} = \frac{1}{\sqrt{(ab)}}\sum_{\sigma_a\sigma_b}\sum_{m_am_b}\braket{\frac{1}{2}\sigma_a\frac{1}{2}\sigma_b}{SM_S}\braket{l_am_al_bm_b}{LM_L} a^\dagger_{am_a\sigma_a}a^\dagger_{bm_b\sigma_b}~.
\label{tp_op_scac}
\end{equation}
Note that the symmetry of a two-particle state under the exchange of the two spatial indices $a$ and $b$, depends on $L$, $S$ \emph{and} the single-particle angular momenta $l_a$ and $l_b$. This is a problem because, in a block with two-particle quantum numbers $LS$, the single-particle indices can have a different permutation symmetry. This is solved by the introduction of the parity quantum number in the next Section.
\subsection{Reflection symmetry and parity}
The eigenstates of a Hamiltonian symmetric under spatial inversion ($\left[\hat{H},\hat{P}\right] = 0$ where $\hat{P}$ changes $\mathbf{r}$ in $-\mathbf{r}$) can be chosen as simultaneous eigenstates of the inversion operator. Since $\hat{P}^2 = \mathbb{1}$, $\hat{P}$ can have only two distinct eigenvalues:
\begin{equation}
\hat{P}\ket{\Psi^\pi} = \pi\ket{\Psi^\pi}~,
\end{equation}
where $\pi = \pm1$ is called the parity of the state $\ket{\Psi^\pi}$.

The single-particle states in a spherical basis can be decomposed in a radial part $f_{nl}(r)$, and an angular dependend part, described by a spherical harmonic $Y_{lm}(\theta,\phi)$. Using the parity property of spherical harmonics, we have that:
\begin{equation}
\hat{P}Y_{lm}(\theta,\phi) = (-1)^l Y_{lm}(\theta,\phi)~.
\end{equation}
It follows that the spin and angular momentum coupled two-particle states introduced in Eq.~(\ref{tp_state_sclc}) have good parity:
\begin{equation}
\hat{P}\ket{ab;LM_LSM_S} = (-1)^{l_a+l_b}\ket{ab;LM_LSM_S}~.
\end{equation}
This solves the problem with permutation symmetry mentioned in the previous Section, since the exchange of the spatial orbitals can now be written in terms of the two-particle block coordinates $L,S$ and $\pi$:
\begin{equation}
\ket{ab;LM_LSM_S\pi} = \pi(-1)^{L+S}\ket{ba;LM_LSM_S\pi}~.
\end{equation}
\subsection{The 2DM for atomic systems}
All symmetries present in an atomic system can be exploited by considering a spin and angular momentum averaged ensemble:
\begin{align}
\Gamma^{L^\pi S}_{ab;cd} =& \sum_iw_i~\frac{1}{[\mathcal{S}]^2[\mathcal{L}]^2}\sum_{\mathcal{M_SM_L}}
\bra{\Psi^{N\Pi}_{\mathcal{SM_SLM_L},i}}
{B^\dagger}^{L^\pi S}_{ab}~B^{L^\pi S}_{cd}
\ket{\Psi^{N\Pi}_{\mathcal{SM_SLM_L},i}}~,
\end{align}
in which the two-particle creation operator $B^\dagger$ is given by Eq.~(\ref{tp_op_scac}). To a block $L^\pi S$, only those pairs $ab$ contribute for which the angular momenta $l_a$ and $l_b$ satisfy the triangle relation:
\begin{equation}
|l_a - l_b| \leq L \leq l_a + l_b~,
\end{equation}
and have the correct parity, {\it i.e.}
\begin{equation}
(-1)^{l_a+l_b} = \pi~.
\end{equation}
The spin and angular momentum coupled 2DM is related to the uncoupled 2DM through a unitary transformation involving four Clebsch-Gordan coefficients:
\begin{align}
\nonumber\Gamma^{L^\pi S}_{ab;cd} = \frac{1}{\sqrt{(ab)(cd)}}&
\sum_{\sigma_a\sigma_b}\sum_{m_am_b}\braket{\frac{1}{2}\sigma_a\frac{1}{2}\sigma_b}{SM_S}\braket{l_am_al_bm_b}{LM_L}\\
\nonumber&\sum_{\sigma_c\sigma_d}\sum_{m_cm_d}\braket{\frac{1}{2}\sigma_c\frac{1}{2}\sigma_d}{SM_S}\braket{l_cm_cl_dm_d}{LM_L}\\
&\qquad\qquad\qquad\Gamma_{(am_a\sigma_a)(bm_b\sigma_b);(cm_c\sigma_c)(dm_d\sigma_d)}~.
\end{align}
To appreciate what an enormous reduction is obtained in the dimensions of the 2DM blocks, the dimensions and degeneracies of the different blocks are shown in Table \ref{table_2DM_reduction_LSp} for the cc-pVDZ, cc-pVTZ and cc-pVQZ atomic basis sets \cite{dunning} for first row atoms (from $\mathrm{Li}$ to $\mathrm{Ne}$). A cc-pVDZ basis consists of 28 spinorbitals,
\[(1s)^2 (2s)^2 (2p)^6(3s)^2(3p)^6(3d)^{10}~,\]
implying that the dimension of the two-particle space is 378. In the spin and angular momentum coupled version of the program we only have to store 14 blocks, the largest of which has a dimension of 10. The scaling of the heaviest matrix computations in the program goes as $O(n^3)$. If we compare the computational cost a such a computation on an uncoupled 2DM with the cost on 2DM in the fully coupled program, we see a increase in efficiency with a factor of 20821. The speedup becomes even more spectacular for the larger cc-pVTZ basis, which has 60 spinorbitals,
\[(1s)^2 (2s)^2 (2p)^6(3s)^2(3p)^6(3d)^{10} (4s)^2 (4p)^6 (4d)^{10} (4f)^{14}~,\]
and for which the dimension of two-particle space is 1770. This reduces to storing 22 blocks, the largest of which has a dimension of 21. The reduction factor in the computational cost of an elementary matrix computation on a 2DM is 125948. Finally, for the cc-pVQZ basis there are 110 spinorbitals
\[(1s)^2 (2s)^2 (2p)^6(3s)^2(3p)^6(3d)^{10} (4s)^2 (4p)^6 (4d)^{10} (4f)^{14} (5s)^2 (5p)^6(5d)^{10}(5f)^{14}(5g)^{18},\]
with a two-particle space dimension of 5995, which reduces to 30 blocks with a maximal dimension of 46. The reduction factor in the computational cost of an elementary matrix computation on a 2DM is 456052. This example shows that symmetry can reduce the complexity of the problem enormously and should be exploited to the full.
\begin{table}
\caption{\label{table_2DM_reduction_LSp} Dimensions and degeneracies of the 2DM blocks for first-row atoms (from $\mathrm{Li}$ to $\mathrm{Ne}$) in a cc-pVDZ, cc-pVTZ and cc-pVQZ basis set.}
\centering
\begin{tabular}{|c|c|c|c|c|c|c|c|c|c|}
\hline
&\multicolumn{3}{c}{dimension}\vline&deg&&\multicolumn{3}{c}{dimension}\vline&deg\\
\hline
$(L^\pi S)$ & DZ & TZ& QZ & &$(L^\pi S)$ & DZ & TZ& QZ &\\ 
\hline
$(0^+0)$&10&20&35&1&$(0^+1)$&4&10&20&3\\
$(1^+0)$&1&4&10&3&$(1^+1)$&4&10&20&9\\
$(1^-0)$&10&20&40&3&$(1^-1)$&8&20&40&9\\
$(2^+0)$&7&21&46&5&$(2^+1)$&4&15&36&15\\
$(2^-0)$&2&8&20&5&$(2^-1)$&2&8&20&15\\
$(3^+0)$&0&4&15&7&$(3^+1)$&1&7&21&21\\
$(3^-0)$&2&12&34&7&$(3^-1)$&2&12&34&21\\
$(4^+0)$&1&7&26&9&$(4^+1)$&0&4&20&27\\
$(4^-0)$&0&2&12&9&$(4^-1)$&0&2&12&27\\
$(5^+0)$&0&0&4&11&$(5^+1)$&0&1&7&33\\
$(5^-0)$&0&2&12&11&$(5^-1)$&0&2&12&33\\
$(6^+0)$&0&1&7&13&$(6^+1)$&0&0&4&39\\
$(6^-0)$&0&0&2&13&$(6^-1)$&0&0&2&39\\
$(7^+0)$&0&0&0&15&$(7^+1)$&0&0&1&45\\
$(7^-0)$&0&0&2&15&$(7^-1)$&0&0&2&45\\
$(8^-0)$&0&0&1&17&$(8^-1)$&0&0&0&51\\
\hline
\end{tabular}
\end{table}
\subsection{Two-index constraints for atomic systems}
The 1DM for atomic systems using the spin and angular momentum averaged ensemble reads:
\begin{align}
\nonumber\rho_{(am_a\sigma_a);(cm_c\sigma_c)} =& \sum_iw_i~\frac{1}{[\mathcal{S}]^2[\mathcal{L}]^2}\sum_{\mathcal{M_SM_L}}
\bra{\Psi^{N\Pi}_{\mathcal{SM_SLM_L},i}}
a^\dagger_{am_a\sigma_a}a_{cm_c\sigma_c}
\ket{\Psi^{N\Pi}_{\mathcal{SM_SLM_L},i}}\\
=& \delta_{l_al_c}\delta_{m_am_c}\delta_{\sigma_a\sigma_c}\rho^{(l_a)}_{n_an_c}~,
\end{align}
and is diagonal in $\sigma,m$ and $l$. A block labeled by single-particle angular momentum $l$ has a degeneracy of $2(2l+1)$ and a dimension equal to the number of spatial orbitals $(nl)$ that have angular momentum $l$, {\it e.g.} for the cc-pVDZ basis the dimension of the $s$ block equals three, since the basis has three types of $s$ orbitals, $1s$, $2s$ and $3s$. The 1DM can be derived from the 2DM, in a similar fashion as in Eq.~(\ref{1DM_of_2DM_sc}):
\begin{equation}
\rho^{(l)}_{n_an_c} = \frac{1}{2[l]^2}\frac{1}{N-1}\sum_\pi\sum_{LS}[L]^2[S]^2\sum_b\sqrt{(ab)(cb)}~\Gamma^{L^\pi S}_{(n_al)b;(n_cl)b}~.
\end{equation}
The matrix constraints in spin and angular momentum coupled form are derived through procedures that are completely analogous as those discussed for spin symmetry in the previous Section. Since two types of recoupling have to be carried through (spin and angular momentum), all expressions become a bit more elaborate. For the sake of completeness we nevertheless provide them in the remainder of this Section.
\paragraph{The $\mathcal{Q}$ condition:}
the spin and angular momentum coupled $\mathcal{Q}$ condition has the same block reduction as the 2DM (as shown in Table \ref{table_2DM_reduction_LSp}). It is defined through a spin and angular momentum averaged ensemble:
\begin{equation}
\mathcal{Q}(\Gamma)^{L^\pi S}_{ab;cd} = \sum_iw_i~\frac{1}{[\mathcal{S}]^2[\mathcal{L}]^2}\sum_{\mathcal{M_SM_L}}
\bra{\Psi^{N\Pi}_{\mathcal{SM_SLM_L},i}}
B^{L^\pi S}_{ab}{B^\dagger}^{L^\pi S}_{cd}
\ket{\Psi^{N\Pi}_{\mathcal{SM_SLM_L},i}}~,
\end{equation}
with $B^\dagger$ given by (\ref{tp_op_scac}). Expressed as a function of the 2DM this becomes:
\begin{align}
\nonumber\mathcal{Q}(\Gamma)^{L^\pi S}_{ab;cd} =& \frac{1}{\sqrt{(ab)(cd)}}\left(\delta_{ac}\delta_{bd}+\pi(-1)^{L+S}\delta_{ad}\delta_{bc}\right)\frac{2\mathrm{Tr}~
\Gamma}{N(N-1)} + \Gamma^{L^\pi S}_{ab;cd}\\
\nonumber&\qquad-\frac{1}{\sqrt{(ab)(cd)}}\left(\delta_{bd}\delta_{l_al_c}\rho^{(l_a)}_{n_an_c} + \delta_{ac}\delta_{l_bl_d}\rho^{(l_b)}_{n_bn_d}\right.\\
&\qquad\qquad\qquad\left.+ \pi(-1)^{L+S}\left[ \delta_{ad}\delta_{l_bl_c}\rho^{(l_b)}_{n_bn_c} + \delta_{bc}\delta_{l_al_d}\rho^{(l_a)}_{n_an_d}\right]\right)~.
\end{align}
\paragraph{The $\mathcal{G}$ condition:}
for the $\mathcal{G}$ condition it is necessary to define a particle-hole operator using a slightly modified annihilation operator to construct a good spherical tensor operator:
\begin{equation}
{A^\dagger}^{L^\pi S}_{ab} = \left[a^\dagger_a \otimes \tilde{a}_b \right]^{LS}~,
\end{equation}
with $\tilde{a}$ defined as:
\begin{equation}
\tilde{a}_{nlm\sigma} = (-1)^{l + m}(-1)^{\frac{1}{2}+\sigma}a_{nl;-m;-\sigma}~.
\label{LS_anni_op}
\end{equation}
With this operator the spin and angular momentum coupled $\mathcal{G}$ condition becomes:
\begin{equation}
\mathcal{G}(\Gamma)^{L^\pi S}_{ab;cd} = \sum_iw_i~\frac{1}{[\mathcal{S}]^2[\mathcal{L}]^2}\sum_{\mathcal{M_SM_L}}
\bra{\Psi^{N\Pi}_{\mathcal{SM_SLM_L},i}}
{A^\dagger}^{L^\pi S}_{ab}{A}^{L^\pi S}_{cd}
\ket{\Psi^{N\Pi}_{\mathcal{SM_SLM_L},i}}~.
\end{equation}
After some angular momentum recoupling the $\mathcal{G}$ condition can be expressed as a function of the 2DM as:
\begin{equation}
\mathcal{G}(\Gamma)^{L^\pi S}_{ab;cd} = \delta_{bd}\delta_{l_al_c}\rho^{(l_a)}_{n_an_c} -\sqrt{(ad)(cb)}\sum_{\pi'}\sum_{L'S'}[L']^2[S']^2 
\left\{
\begin{matrix}
\frac{1}{2}&\frac{1}{2}&S\\
\frac{1}{2}&\frac{1}{2}&S'
\end{matrix}
\right\}
\left\{
\begin{matrix}
l_a&l_b&L\\
l_c&l_d&L'
\end{matrix}
\right\}
\Gamma^{{L'}^{\pi'} S'}_{ad;cb}~.
\end{equation}
\begin{table}
\caption{\label{table_ph_reduction_LSp}Dimensions and degeneracies of the $\mathcal{G}$-matrix blocks for first-row atoms (from $\mathrm{Li}$ to $\mathrm{Ne}$) in a cc-pVDZ, cc-pVTZ and cc-pVQZ basis set.}
\centering
\begin{tabular}{|c|c|c|c|c|c|c|c|c|c|}
\hline
&\multicolumn{3}{c}{dimension}\vline&deg&&\multicolumn{3}{c}{dimension}\vline&deg\\
\hline
$(L^\pi S)$ & DZ & TZ& QZ & &$(L^\pi S)$ & DZ & TZ& QZ &\\ 
\hline
$(0^+0)$&14&30&55&1&$(0^+1)$&14&30&55&3\\
$(1^+0)$&5&14&30&3&$(1^+1)$&5&14&30&9\\
$(1^-0)$&16&40&80&3&$(1^-1)$&16&40&80&9\\
$(2^+0)$&11&36&82&5&$(2^+1)$&11&36&82&15\\
$(2^-0)$&4&16&40&5&$(2^-1)$&4&16&40&15\\
$(3^+0)$&1&11&36&7&$(3^+1)$&1&11&36&21\\
$(3^-0)$&4&24&68&7&$(3^-1)$&4&24&68&21\\
$(4^+0)$&1&11&46&9&$(4^+1)$&1&11&46&27\\
$(4^-0)$&0&4&24&9&$(4^-1)$&0&4&24&27\\
$(5^+0)$&0&1&11&11&$(5^+1)$&0&1&11&33\\
$(5^-0)$&0&4&24&11&$(5^-1)$&0&4&24&33\\
$(6^+0)$&0&1&11&13&$(6^+1)$&0&1&11&39\\
$(6^-0)$&0&0&4&13&$(6^-1)$&0&0&4&39\\
$(7^+0)$&0&0&1&15&$(7^+1)$&0&0&1&45\\
$(7^-0)$&0&0&4&15&$(7^-1)$&0&0&4&45\\
$(8^-0)$&0&0&1&17&$(8^-1)$&0&0&1&51\\
\hline
\end{tabular}
\end{table}
In Table \ref{table_ph_reduction_LSp} the dimensions and degeneracies of the $\mathcal{G}$-matrix blocks for first-row atoms in a cc-pVDZ, cc-pVTZ and cc-pVQZ basis set are shown. The reduction factor in the computational cost is even more spectacular for the $\mathcal{G}$ matrix, {\it i.e.} respectively 28595, 144794, and 489488.
\subsection{The three-index constraints for atomic systems}
We choose to couple the first two particles to good intermediate angular momentum $L_{ab}$ and intermediate spin $S_{ab}$, with intermediate parity $\pi_{ab} = (-1)^{l_a+l_b}$. The operator creating three particles coupled to good total spin and angular momentum in this way is:
\begin{align}
{B^\dagger}^{L^\pi S}_{ab(L_{ab}^{\pi_{ab}}S_{ab})c} =& \frac{1}{\sqrt{(ab)}}\sum_{\sigma_a\sigma_b}\sum_{M^S_{ab}\sigma_c}\braket{\frac{1}{2}\sigma_a \frac{1}{2}\sigma_b}{S_{ab}M^S_{ab}}\braket{S_{ab}M^S_{ab}\frac{1}{2}\sigma_c}{SM_S}\\
\nonumber&\sum_{m_am_b}\sum_{M^L_{ab}m_c}\braket{l_am_al_bm_b}{L_{ab}M^L_{ab}}\braket{L_{ab}M^L_{ab}l_cm_c}{LM_L}a^\dagger_{am_a\sigma_a}a^\dagger_{bm_b\sigma_b}a^\dagger_{cm_c\sigma_c}~.
\end{align}
Only those triplets $abc$ that can couple to total angular momentum $L$ and that have the right parity $\pi$ are allowed in a block $(L^\pi S)$, {\it i.e.} the triplet $l_al_bl_c$ has to satisfy the relations:
\begin{equation}
|L_{ab} - l_c| \leq L \leq L_{ab} + l_c~,\qquad |l_a - l_b| \leq L_{ab} \leq l_a + l_b~,\qquad\text{and}\qquad (-1)^{l_a+l_b+l_c} = \pi~.
\end{equation}
In each $L^\pi S$ block of three-particle space there are many ways to form a complete basis set. The choice of a complete basis is more difficult than for three-particle spin coupling because when $l\geq 1$ it is possible to put more than three particles in the same shell $nl$. Not all intermediate angular momenta $L_{ab}$ can be reached however, because of the antisymmetry of the total three-particle state. The symmetry of the state under the exchange of the first two spatial indices is determined by the intermediate angular momentum, spin and parity:
\begin{equation}
\ket{abc;(L_{ab}^{\pi_{ab}}S_{ab})LM_LSM_S} = \pi_{ab}(-1)^{L_{ab}+S_{ab}}\ket{bac;(L_{ab}^{\pi_{ab}}S_{ab})LM_LSM_S}~.
\end{equation}
The part of three-particle space which is antisymmetrical under exchange of the first two indices ({\it i.e.} $\pi_{ab}(-1)^{S_{ab}+L_{ab}} = -1$) is completely spanned by the the states $a < b < c$.
If we want to extract information about states with a different ordering, {\it e.g.} $c < b < a$, we have to recouple, using two times formula (\ref{recoupling_6j}):
\begin{align}
\nonumber\ket{abc;(L_{ab}^{\pi_{ab}}S_{ab})LM_LSM_S} = \sqrt{\frac{(cb)}{(ab)}}&[S_{ab}][L_{ab}]\sum_{S_{cb}L_{cb}}[S_{cb}][L_{cb}]
\left\{
\begin{matrix}
S&\frac{1}{2}&S_{cb}\\
\frac{1}{2}&\frac{1}{2}&S_{ab}
\end{matrix}
\right\}
\\
&\left\{
\begin{matrix}
L&l_a&L_{cb}\\
l_b&l_c&L_{ab}
\end{matrix}
\right\}
\ket{cba;(L_{cb}^{\pi_{cb}}S_{cb})LM_LSM_S}~.
\end{align}
The part of three-particle space which is symmetrical under the exchange of the first two indices is spanned in part by states of the type $a<b<c$, but other types of states can occur. A second type is where the first two spatial indices are equal, $a=b\neq c$. The two equal indices can couple to all intermediate angular momenta allowed by the triangle rule, provided
\begin{equation}
(-1)^{L_{ab}+S_{ab}} = 1~.
\end{equation}
A new type of state, which didn't occur when only spin was taken into account, consists of the three particles in the same shell, $a=b=c$.
There are many states of the form:
\begin{equation}
\ket{aaa;(L_{aa}^{+}S_{aa})LM_LSM_S}~,
\end{equation}
which have finite norm but are not linearly independent. A suitable set has to be chosen which spans the full space. This can be done by constructing the overlap matrix:
\begin{align}
\nonumber S_{L'_{aa}S'_{aa};L_{aa}S_{aa}}=& \langle aaa;({L'}_{aa}^+ {S'}_{aa})LM_L SM_S|aaa;(L_{aa}^+ S_{aa})LM_LSM_S\rangle\\
=&\delta_{{S'}_{aa}S_{aa}}\delta_{{L'}_{aa}L_{aa}}+ 2[L_{aa}][L_{aa}'][S_{aa}][S_{aa}']
\left\{
\begin{matrix}
\frac{1}{2}&\frac{1}{2}&S_{aa}\\
\frac{1}{2}&S&S_{aa}'
\end{matrix}
\right\}
\left\{
\begin{matrix}
l&l&L_{aa}\\
l&L&L_{aa}'
\end{matrix}
\right\}~,
\end{align}
and taking the eigenvectors with nonzero eigenvalues as basis states.

To simplify the equations, we introduce the notation $X_{ab}$ for the collection of intermediate quantum numbers $L_{ab}^{\pi_{ab}}S_{ab}$. The phase under the exchange of the first two indices is written as:
\begin{equation}
(-1)^{X_{ab}} = \pi_{ab}(-1)^{L_{ab}+S_{ab}}~,
\end{equation}
and by $[X_{ab}]$ the product $[S_{ab}][L_{ab}]$ is implied.
\paragraph{The $\mathcal{T}_1$ condition:}
the spin and angular momentum coupled $\mathcal{T}_1$ condition is defined through a spin and angular momentum averaged ensemble:
\begin{align}
\nonumber\mathcal{T}_1&(\Gamma)^{L^\pi S(X_{ab};X_{de})}_{abc;dez} = 
\sum_i w_i \frac{1}{[\mathcal{L}]^2[\mathcal{S}]^2}\sum_\mathcal{M_SM_L}\left( \bra{\Psi^{N\Pi}_{\mathcal{SM_SLM_L},i}}{B^\dagger}^{L^\pi S}_{ab(X_{ab})c}~{B}^{L^\pi S}_{de(X_{de})z}\ket{\Psi^N_{\mathcal{SM_SLM_L},i}}\right.\\
&\left.\qquad\qquad\qquad\qquad\qquad\qquad +\bra{\Psi^{N\Pi}_{\mathcal{LM_LSM_S},i}}{B}^{L^\pi S}_{de(X_{de})z}~{B^\dagger}^{L^\pi S}_{ab(X_{ab})c}\ket{\Psi^{N\Pi}_{\mathcal{LM_LSM_S},i}}\right)~,
\label{T1_scac_ensemble}
\end{align}
with
\begin{align}
{B^\dagger}^{L^\pi S}_{ab(X_{ab})c} &=\frac{1}{\sqrt{(ab)}}\left[\left[a^\dagger_{a}\otimes a^\dagger_b\right]^{L_{ab}S_{ab}}\otimes a^\dagger_c\right]^{LS}~.
\end{align}
This can be expressed as a function of the 2DM using anticommutation relations and angular momentum recoupling in much the same way as the spin-coupled $\mathcal{T}_1$ condition, leading to the expression:
\begin{align}
\nonumber\mathcal{T}_1&(\Gamma)^{L^\pi S(X_{ab};X_{de})}_{abc;dez} = \left[\frac{2\mathrm{Tr}~\Gamma}{N(N -1)}\right]\frac{\delta_{cz}\delta_{X_{ab}X_{de}}}{\sqrt{(ab)(de)}}\left(\delta_{ad}\delta_{be} + (-1)^{X_{ab}}\delta_{ae}\delta_{bd}\right) + \delta_{cz}\delta_{X_{ab}X_{de}}\Gamma^{X_{ab}}_{ab;de}\\
\nonumber& - \frac{\delta_{X_{ab}X_{de}}}{\sqrt{(ab)(de)}}\left[\delta_{cz}\left(\delta_{be}\delta_{l_al_d}\rho^{(l_a)}_{n_an_d}+(-1)^{X_{ab}}\delta_{ae}\delta_{l_bl_d}\rho^{(l_b)}_{n_bn_d} + (-1)^{X_{de}}\delta_{bd}\delta_{l_al_e}\rho^{(l_a)}_{n_an_e}\right.\right.\\
\nonumber& \left.\left. \qquad\qquad\qquad\qquad\qquad\qquad\qquad + \delta_{ad}\delta_{l_bl_e}\rho^{(l_b)}_{n_bn_e}\right) + \left(\delta_{ad}\delta_{be} + (-1)^{X_{ab}}\delta_{bd}\delta_{ae}\right)\delta_{l_cl_z}\rho^{(l_c)}_{n_cn_z}\right]\\
\nonumber&-\frac{[X_{ab}][X_{de}]}{\sqrt{(ab)(de)}} 
\left\{\begin{matrix}
S&\frac{1}{2}&S_{ab}\\
\frac{1}{2}&\frac{1}{2}&S_{de}
\end{matrix}\right\}
\left[
\left\{\begin{matrix}
L&l_c&L_{ab}\\
l_b&l_a&L_{de}
\end{matrix}\right\}
\left(
\delta_{az}\delta_{cd}\delta_{l_bl_e}\rho^{(l_b)}_{n_bn_e} + (-1)^{X_{de}}\delta_{az}\delta_{ec}\delta_{l_bl_d}\rho^{(l_b)}_{n_bn_d}\right.\right.\\
\nonumber&\left.\left.\qquad+\delta_{az}\delta_{be}\delta_{l_cl_d}\rho^{(l_c)}_{n_cn_d} + (-1)^{X_{de}}\delta_{az}\delta_{bd}\delta_{l_cl_e}\rho^{(l_c)}_{n_cn_e} +\left(\delta_{be}\delta_{cd} +(-1)^{X_{de}}\delta_{bd}\delta_{ce}\right)\delta_{l_al_z}\rho^{(l_a)}_{n_an_z}\right)\right.\\
\nonumber&\left.+ (-1)^{X_{ab}}
\left\{\begin{matrix}
L&l_c&L_{ab}\\
l_a&l_b&L_{de}
\end{matrix}\right\}
\left(
\delta_{bz}\delta_{cd}\delta_{l_al_e}\rho^{(l_a)}_{n_an_e} +(-1)^{X_{de}}\delta_{bz}\delta_{ec}\delta_{l_al_d}\rho^{(l_a)}_{n_an_d}+\delta_{bz}\delta_{ae}\delta_{l_cl_d}\rho^{(l_c)}_{n_cn_d}\right. \right.\\
\nonumber&\left.\left.\qquad\qquad\qquad\qquad + (-1)^{X_{de}}\delta_{bz}\delta_{ad}\delta_{l_cl_e}\rho^{(l_c)}_{n_cn_e} +\left(\delta_{ae}\delta_{cd}+(-1)^{X_{de}}\delta_{ce}\delta_{ad}\right)\delta_{l_bl_z}\rho^{(l_b)}_{n_bn_z}\right)\right]\\ 
\nonumber& + [X_{ab}][X_{de}]
\left\{\begin{matrix}
S&\frac{1}{2}&S_{ab}\\
\frac{1}{2}&\frac{1}{2}&S_{de}
\end{matrix}\right\}
\left[
\delta_{bz}(-1)^{X_{ab}+X_{de}}\sqrt{\frac{(ac)}{(ab)}}
\left\{\begin{matrix}
L&l_c&L_{ab}\\
l_a&l_b&L_{de}
\end{matrix}\right\}
\Gamma^{S_{de}}_{ac;de} \right.\\
&\nonumber\left.\qquad\qquad\qquad\qquad+ \delta_{az}\sqrt{\frac{(bz)}{(ab)}}
\left\{\begin{matrix}
L&l_c&L_{ab}\\
l_b&l_a&L_{de}
\end{matrix}\right\}
\Gamma^{X_{de}}_{cb;de}
+\delta_{cd}\sqrt{\frac{(ez)}{(de)}}
\left\{\begin{matrix}
L&l_z&L_{de}\\
l_e&l_d&L_{ab}
\end{matrix}\right\}
\Gamma^{X_{ab}}_{ab;ze}\right.\\
\nonumber&\left.\qquad\qquad\qquad\qquad\qquad\qquad+\delta_{ce}(-1)^{X_{ab}+X_{de}}\sqrt{\frac{(dz)}{(de)}}
\left\{\begin{matrix}
L&l_z&L_{de}\\
l_d&l_e&L_{ab}
\end{matrix}\right\}
\Gamma^{X_{ab}}_{ab;dz}\right]\\
\nonumber&+\frac{[X_{ab}][X_{de}]}{\sqrt{(ab)(de)}}\sum_{X'} [X']^2
\left\{\begin{matrix}
S&\frac{1}{2}&S'\\
\frac{1}{2}&\frac{1}{2}&S_{ab}
\end{matrix}\right\}
\left\{\begin{matrix}
S&\frac{1}{2}&S'\\
\frac{1}{2}&\frac{1}{2}&S_{de}
\end{matrix}\right\}\\
\nonumber&\qquad\qquad\left[\delta_{ad}\sqrt{(bc)(ez)}
\left\{\begin{matrix}
L&l_a&L'\\
l_b&l_c&L_{ab}
\end{matrix}\right\}
\left\{\begin{matrix}
L&l_d&L'\\
l_e&l_z&L_{de}
\end{matrix}\right\}
\Gamma^{X'}_{bc;ez}\right.\\
\nonumber&\left.\qquad\qquad\qquad + \delta_{bd}(-1)^{X_{ab}}\sqrt{(ac)(ez)}
\left\{\begin{matrix}
L&l_b&L'\\
l_a&l_c&L_{ab}
\end{matrix}\right\}
\left\{\begin{matrix}
L&l_d&L'\\
l_e&l_z&L_{de}
\end{matrix}\right\}
\Gamma^{X'}_{ac;ez}\right.\\
\nonumber&\left.\qquad\qquad\qquad+\delta_{ae}(-1)^{X_{de}}\sqrt{(bc)(dz)}
\left\{\begin{matrix}
L&l_a&L'\\
l_b&l_c&L_{ab}
\end{matrix}\right\}
\left\{\begin{matrix}
L&l_e&L'\\
l_d&l_z&L_{de}
\end{matrix}\right\}
\Gamma^{X'}_{bc;dz}\right.\\
&\left.\qquad\qquad\qquad+\delta_{be}(-1)^{X_{ab}+X_{de}}\sqrt{(ac)(dz)}
\left\{\begin{matrix}
L&l_b&L'\\
l_a&l_c&L_{ab}
\end{matrix}\right\}
\left\{\begin{matrix}
L&l_e&L'\\
l_d&l_z&L_{de}
\end{matrix}\right\}
\Gamma^{X'}_{ac;dz}\right]~.
\end{align}
In Table \ref{T1_table_red} the dimensions and degeneracies of the $\mathcal{T}_1$ matrix blocks are displayed. Without including symmetries the use of three-index conditions would be computationally infeasible. For the small cc-pVDZ basis the size of the uncoupled matrix is 3276, for the TZ already 34220 and for the QZ basis the dimension becomes 215820. The sizes of the blocks in the matrix with all symmetries exploited, however, remain manageable even for the largest basisset. The reduction factor in the computational cost of an elementary $\mathcal{T}_1$ matrix operation is 166466 for the DZ, 933490 for the TZ, and 3092934 for the QZ basis.
\begin{table}
\caption{\label{T1_table_red}Dimensions and degeneracies of the $\mathcal{T}_1$-matrix blocks for first-row atoms (from $\mathrm{Li}$ to $\mathrm{Ne}$) in a cc-pVDZ, cc-pVTZ and cc-pVQZ basis set.}
\centering
\begin{tabular}{|c|c|c|c|c|c|c|c|c|c|}
\hline
&\multicolumn{3}{c}{dimension}\vline&deg&&\multicolumn{3}{c}{dimension}\vline&deg\\
\hline
$(L^\pi S)$ & DZ & TZ& QZ & &$(L^\pi S)$ & DZ & TZ& QZ &\\ 
\hline
$(0^+\frac{1}{2})$	&	27	&	110	&	338	&2&	$(0^+\frac{3}{2})$	&		5	&	32	&	118&4	\\
$(0^-\frac{1}{2})$	&	4	&	27	&	110	&	2 &$(0^-\frac{3}{2})$	&		6	&	26	&	84	&4\\
$(1^+\frac{1}{2})$	&	24	&	130	&	477	&6&	$(1^+\frac{3}{2})$	&		17	&	80	&	274&12	\\
$(1^-\frac{1}{2})$	&	44	&	206	&	692	&6&	$(1^-\frac{3}{2})$	&		16	&	85	&	306&12	\\
$(2^+\frac{1}{2})$	&	38	&	226	&	857	&10&	$(2^+\frac{3}{2})$	&		11	&	90	&	374&20	\\
$(2^-\frac{1}{2})$	&	24	&	164	&	668	&10&	$(2^-\frac{3}{2})$	&		12	&	85	&	346&20	\\
$(3^+\frac{1}{2})$	&	12	&	136	&	672	&14&	$(3^+\frac{3}{2})$	&		8	&	76	&	358&28	\\
$(3^-\frac{1}{2})$	&	20	&	182	&	828	&14&	$(3^-\frac{3}{2})$	&		8	&	79	&	382&28	\\
$(4^+\frac{1}{2})$	&	8	&	116	&	671	&18&	$(4^+\frac{3}{2})$	&		1	&	46	&	298&36	\\
$(4^-\frac{1}{2})$	&	4	&	85	&	550	&18&	$(4^-\frac{3}{2})$	&		2	&	44	&	282&36	\\
$(5^+\frac{1}{2})$	&	1	&	42	&	376	&22&	$(5^+\frac{3}{2})$	&		0	&	22	&	195&44	\\
$(5^-\frac{1}{2})$	&	2	&	60	&	462	&22&	$(5^-\frac{3}{2})$	&		0	&	22	&	208&44	\\
$(6^+\frac{1}{2})$	&	0	&	24	&	275	&26&	$(6^+\frac{3}{2})$	&		0	&	8	&	118&52	\\
$(6^-\frac{1}{2})$	&	0	&	15	&	218	&26&	$(6^-\frac{3}{2})$	&		0	&	8	&	112&52	\\
$(7^+\frac{1}{2})$	&	0	&	4	&	109	&30&	$(7^+\frac{3}{2})$	&		0	&	2	&	57	&60\\
$(7^-\frac{1}{2})$	&	0	&	8	&	144	&30&	$(7^-\frac{3}{2})$	&		0	&	1	&	60	&60\\
$(8^+\frac{1}{2})$	&	0	&	2	&	65	&34&	$(8^+\frac{3}{2})$	&		0	&	0	&	22	&68\\
$(8^-\frac{1}{2})$	&	0	&	1	&	46	&34&	$(8^-\frac{3}{2})$	&		0	&	0	&	22	&68\\
$(9^+\frac{1}{2})$	&	0	&	0	&	15	&38&	$(9^+\frac{3}{2})$	&		0	&	0	&	8	&76\\
$(9^-\frac{1}{2})$	&	0	&	0	&	24	&38&	$(9^-\frac{3}{2})$	&		0	&	0	&	8	&76\\
$(10^+\frac{1}{2})$	&	0	&	0	&	8	&42&	$(10^+\frac{3}{2})$	&		0	&	0	&	1	&84\\
$(10^-\frac{1}{2})$	&	0	&	0	&	4	&42&	$(10^-\frac{3}{2})$	&		0	&	0	&	2	&84\\
$(11^+\frac{1}{2})$	&	0	&	0	&	1	&46&	$(11^+\frac{3}{2})$	&		0	&	0	&	0	&92\\
$(11^-\frac{1}{2})$	&	0	&	0	&	2	&46&	$(11^-\frac{3}{2})$	&		0	&	0	&	0	&92\\
\hline
\end{tabular}
\end{table}
\paragraph{The $\mathcal{T}_2$ condition:}
the spin and angular momentum coupled form for the $\mathcal{T}_2$ map is defined as:
\begin{align}
\nonumber\mathcal{T}_2&(\Gamma)^{L^\pi S(X_{ab};X_{de})}_{abc;dez} = 
\sum_i w_i \frac{1}{[\mathcal{L}]^2[\mathcal{S}]^2}\sum_\mathcal{M_SM_L}\left( \bra{\Psi^{N\Pi}_{\mathcal{SM_SLM_L},i}}{A^\dagger}^{L^\pi S}_{ab(X_{ab})c}~{A}^{L^\pi S}_{de(X_{de})z}\ket{\Psi^N_{\mathcal{SM_SLM_L},i}}\right.\\
&\left.\qquad\qquad\qquad\qquad\qquad\qquad +\bra{\Psi^{N\Pi}_{\mathcal{LM_LSM_S},i}}{A}^{L^\pi S}_{de(X_{de})z}~{A^\dagger}^{L^\pi S}_{ab(X_{ab})c}\ket{\Psi^{N\Pi}_{\mathcal{LM_LSM_S},i}}\right)~.
\label{T2_scac_ensemble}
\end{align}
The coupled two-particle-one-hole creation operator is given by:
\begin{align}
{A^\dagger}^{L^\pi S}_{ab(X_{ab})c} &=\frac{1}{\sqrt{(ab)}}\left[\left[a^\dagger_{a}\otimes a^\dagger_b\right]^{L_{ab}S_{ab}}\otimes \tilde{a}_c\right]^{LS}~,
\end{align}
where the spherical tensor operator $\tilde{a}_c$ (see Eq.~(\ref{LS_anni_op})) appears. As there is no symmetry involving the third index, the ordering of indices is easier than for $\mathcal{T}_1$, {\it i.e.} the ordering is $a\leq b, c$ when $(-1)^{X_{ab}} = 1$ and $a<b,c$ when $(-1)^{X_{ab}} = -1$. Through anticommutation and angular momentum recoupling one obtains an expression of $\mathcal{T}_2$ as a function of the 2DM:
\begin{align}
\nonumber\mathcal{T}_2&(\Gamma)^{L^\pi S(X_{ab};X_{de})}_{abc;dez} = \frac{\delta_{X_{ab}X_{de}}}{\sqrt{(ab)(de)}}\left(\delta_{ad}\delta_{be}+(-1)^{X_{ab}}\delta_{ae}\delta_{bd}\right)\delta_{l_cl_z}\rho^{(l_c)}_{n_cn_z} + \delta_{cz}\delta_{X_{ab}X_{de}}\Gamma^{X_{ab}}_{ab;de}\\
\nonumber&-\frac{[X_{ab}][X_{de}]}{\sqrt{(ab)(de)}}\sum_{X'} [X']^2
\left\{
\begin{matrix}
S&\frac{1}{2}&S_{de}\\
\frac{1}{2}&S'&\frac{1}{2}\\
S_{ab}&\frac{1}{2}&\frac{1}{2}
\end{matrix}
\right\}
\left(\delta_{ad}\sqrt{(ce)(bz)}
\left\{
\begin{matrix}
L&l_z&L_{de}\\
l_c&L'&l_e\\
L_{ab}&l_b&l_a
\end{matrix}
\right\}
\Gamma^{X'}_{ce;zb}\right.\\
\nonumber&\left.\qquad\qquad\qquad\qquad+ (-1)^{X_{ab}}\delta_{bd}\sqrt{(ce)(za)}
\left\{
\begin{matrix}
L&l_z&L_{de}\\
l_c&L'&l_e\\
L_{ab}&l_a&l_b
\end{matrix}
\right\}
\Gamma^{X'}_{ce;za}\right.\\
\nonumber&\left.\qquad\qquad\qquad\qquad+ (-1)^{X_{de}}\delta_{ae}\sqrt{(cd)(zb)}
\left\{
\begin{matrix}
L&l_z&L_{de}\\
l_c&L'&l_d\\
L_{ab}&l_b&l_a
\end{matrix}
\right\}
\Gamma^{X'}_{cd;zb}\right.\\*
&\left. \qquad\qquad\qquad\qquad\qquad+ (-1)^{X_{ab}+X_{de}}\delta_{be}\sqrt{(cd)(za)}
\left\{
\begin{matrix}
L&l_z&L_{de}\\
l_c&L'&l_d\\
L_{ab}&l_a&l_b
\end{matrix}
\right\}
\Gamma^{X'}_{cd;za}\right)~.
\end{align}
In Table~\ref{T2_table_red} the relevant matrix dimensions for $\mathcal{T}_2$ are shown. For comparison, the uncoupled $\mathcal{T}_2$ matrix has a dimension in the three basissets, DZ, TZ and QZ, of respectively 10548, 106200 and 659450. The reduction factor in the computational cost of an elementary $\mathcal{T}_2$ matrix operation is 201137 for the DZ, 1015334 for the TZ, and 3235624 for the QZ basis.
\begin{table}
\caption{\label{T2_table_red}Dimensions and degeneracies of the $\mathcal{T}_2$-matrix blocks for first-row atoms (from $\mathrm{Li}$ to $\mathrm{Ne}$) in a cc-pVDZ, cc-pVTZ and cc-pVQZ basis set.}
\centering
\begin{tabular}{|c|c|c|c|c|c|c|c|c|c|}
\hline
&\multicolumn{3}{c}{dimension}\vline&deg&&\multicolumn{3}{c}{dimension}\vline&deg\\
\hline
$(L^\pi S)$ & DZ & TZ& QZ & &$(L^\pi S)$ & DZ & TZ& QZ &\\ 
\hline
$(0^+\frac{1}{2})$	&		85	&		336	&		1023	&2&		$(0^+\frac{3}{2})$	&		32	&		142	&		456&4	\\	
$(0^-\frac{1}{2})$	&		14	&		85	&		336	& 2 &		$(0^-\frac{3}{2})$	&		10	&		53	&		194&4	\\	
$(1^+\frac{1}{2})$	&		72	&		390	&		1431	&6&		$(1^+\frac{3}{2})$	&		41	&		210	&		751&12	\\	
$(1^-\frac{1}{2})$	&		132	&		618	&		2076	&6&		$(1^-\frac{3}{2})$	&		60	&		291	&		998&12	\\	
$(2^+\frac{1}{2})$	&		113	&		676	&		2567	&10&		$(2^+\frac{3}{2})$	&		49	&		316	&		1231&20	\\	
$(2^-\frac{1}{2})$	&		70	&		488	&		1998	&10&		$(2^-\frac{3}{2})$	&		36	&		249	&		1014&20	\\	
$(3^+\frac{1}{2})$	&		37	&		410	&		2020	&14&		$(3^+\frac{3}{2})$	&		20	&		212	&		1030&28	\\	
$(3^-\frac{1}{2})$	&		62	&		550	&		2490	&14&		$(3^-\frac{3}{2})$	&		28	&		261	&		1210&28	\\	
$(4^+\frac{1}{2})$	&		24	&		348	&		2013	&18&		$(4^+\frac{3}{2})$	&		9	&		162	&		969&36	\\	
$(4^-\frac{1}{2})$	&		12	&		255	&		1650	&18&		$(4^-\frac{3}{2})$	&		6	&		129	&		832&36	\\	
$(5^+\frac{1}{2})$	&		2	&		124	&		1124	&22&		$(5^+\frac{3}{2})$	&		1	&		64	&		571&44	\\	
$(5^-\frac{1}{2})$	&		6	&		179	&		1384	&22&		$(5^-\frac{3}{2})$	&		2	&		82	&		670&44	\\	
$(6^+\frac{1}{2})$	&		1	&		74	&		829	&26&		$(6^+\frac{3}{2})$	&		0	&		32	&		393&52	\\	
$(6^-\frac{1}{2})$	&		0	&		46	&		656	&26&		$(6^-\frac{3}{2})$	&		0	&		23	&		330&52	\\	
$(7^+\frac{1}{2})$	&		0	&		12	&		327	&30&		$(7^+\frac{3}{2})$	&		0	&		6	&		166&60	\\	
$(7^-\frac{1}{2})$	&		0	&		24	&		432	&30&		$(7^-\frac{3}{2})$	&		0	&		9	&		204&60	\\	
$(8^+\frac{1}{2})$	&		0	&		6	&		194	&34&		$(8^+\frac{3}{2})$	&		0	&		2	&		87&68	\\	
$(8^-\frac{1}{2})$	&		0	&		2	&		136	&34&		$(8^-\frac{3}{2})$	&		0	&		1	&		68&68	\\	
$(9^+\frac{1}{2})$	&		0	&		0	&		46	&38&		$(9^+\frac{3}{2})$	&		0	&		0	&		23&76	\\	
$(9^-\frac{1}{2})$	&		0	&		1	&		74	&38&		$(9^-\frac{3}{2})$	&		0	&		0	&		32&76	\\	
$(10^+\frac{1}{2})$	&		0	&		0	&		24	&42&		$(10^+\frac{3}{2})$	&		0	&		0	&		9&84	\\	
$(10^-\frac{1}{2})$	&		0	&		0	&		12	&42&		$(10^-\frac{3}{2})$	&		0	&		0	&		6&84	\\	
$(11^+\frac{1}{2})$	&		0	&		0	&		2	&46&		$(11^+\frac{3}{2})$	&		0	&		0	&		1&92	\\	
$(11^-\frac{1}{2})$	&		0	&		0	&		6	&46&		$(11^-\frac{3}{2})$	&		0	&		0	&		2&92	\\	
$(12^+\frac{1}{2})$	&		0	&		0	&		1	&50&		$(12^+\frac{3}{2})$	&		0	&		0	&		0&100	\\	
\hline
\end{tabular}
\end{table}
\paragraph{The $\mathcal{T}_2'$ condition} The spin and angular momentum coupled $\mathcal{T}_2'$ map is identical to the regular $\mathcal{T}_2$ map when the total spin $S$ is $\frac{3}{2}$ or the total angular momentum $L$ is larger than the maximal single-particle angular momentum in the basis set $l_{\text{max}}$.  When $S=\frac{1}{2}$ and $L \leq l_\text{max}$ the $\mathcal{T}_2'$ matrix has the form:
\begin{equation}
\mathcal{T}_2'(\Gamma)^{l^\pi \frac{1}{2}(X_{ab},X_{de})}_{abc,n_m;dez,n_n} = 
\left(
\begin{matrix}
\mathcal{T}_2(\Gamma)^{l^\pi \frac{1}{2}(X_{ab};X_{de})}_{abc;dez} & \omega^{(X_{ab})}_{abc;(n_nl)}\\
{\omega^\dagger}^{(X_{de})}_{(n_ml);dez}&\rho^{(l)}_{n_mn_n}
\end{matrix}
\right)~,
\end{equation}
in which $\omega$ can be written as a function of the 2DM:
\begin{align}
\omega^{(X_{ab})}_{abc;(n_nl)} =& \frac{[L_{ab}]}{[l]}(-1)^{1+X_{ab}}\sqrt{(nc)}\Gamma^{X_{ab}}_{ab;(n_nl)c}~.
\end{align}
\section{\label{hub_sym}Symmetry in the one-dimensional Hubbard model}
\begin{figure}
\centering
\includegraphics[scale=0.5]{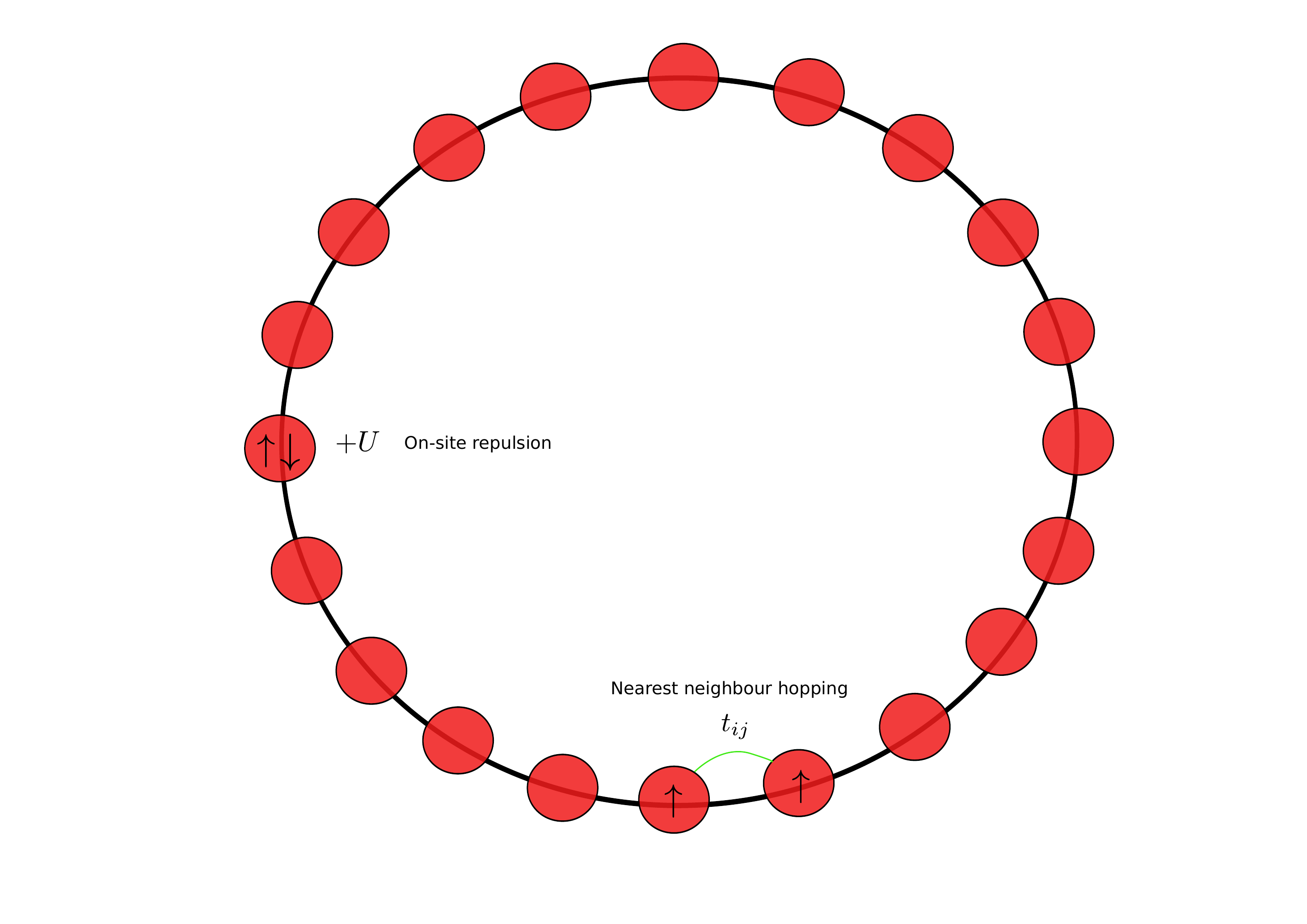}
\caption{\label{hubbard_fig} Illustration of the two terms present in the Hubbard Hamiltonian, electrons can jump to nearest-neighbour sites with amplitude $t_{ij}$. When two electrons are on the same site, there is an energy penalty of $U$.}
\end{figure}
The one-dimensional Hubbard model \cite{hubbard} is the simplest model possessing non-trivial correlations present in a solid, described by the Hamiltonian:
\begin{equation}
\hat{H} = -t\sum_{i\sigma}\left(a^\dagger_{i;\sigma}a_{i+1;\sigma} + a^\dagger_{i+1;\sigma}a_{i;\sigma}\right) + U\sum_{i}a^\dagger_{i\uparrow}a_{i\uparrow}a^\dagger_{i\downarrow}a_{i\downarrow}~.
\label{hubbard}
\end{equation}
It pertains to a one-dimensional lattice, with sites labeled by $i = 1,\ldots,L$. Periodic boundary conditions (PBC) are assumed throughout the thesis.

The complexity of this seemingly simple schematic Hamiltonian lies in the competition between the first term, called the hopping term, which delocalizes the electrons, and the second on-site repulsion term which is diagonal in the site basis. In Figure \ref{hubbard_fig} a graphic representation of the two terms in the Hubbard Hamiltonian is shown. 
As there is no preferred direction in spin space in Eq.~(\ref{hubbard}), spin symmetry can be exploited and all the results from Section~\ref{spinsym} are taken over. However, far more symmetries are present in the model, which allow for a huge reduction of the matrix dimensions involved.
\subsection{Translational invariance}
\begin{figure}
\centering
\includegraphics[scale=0.5]{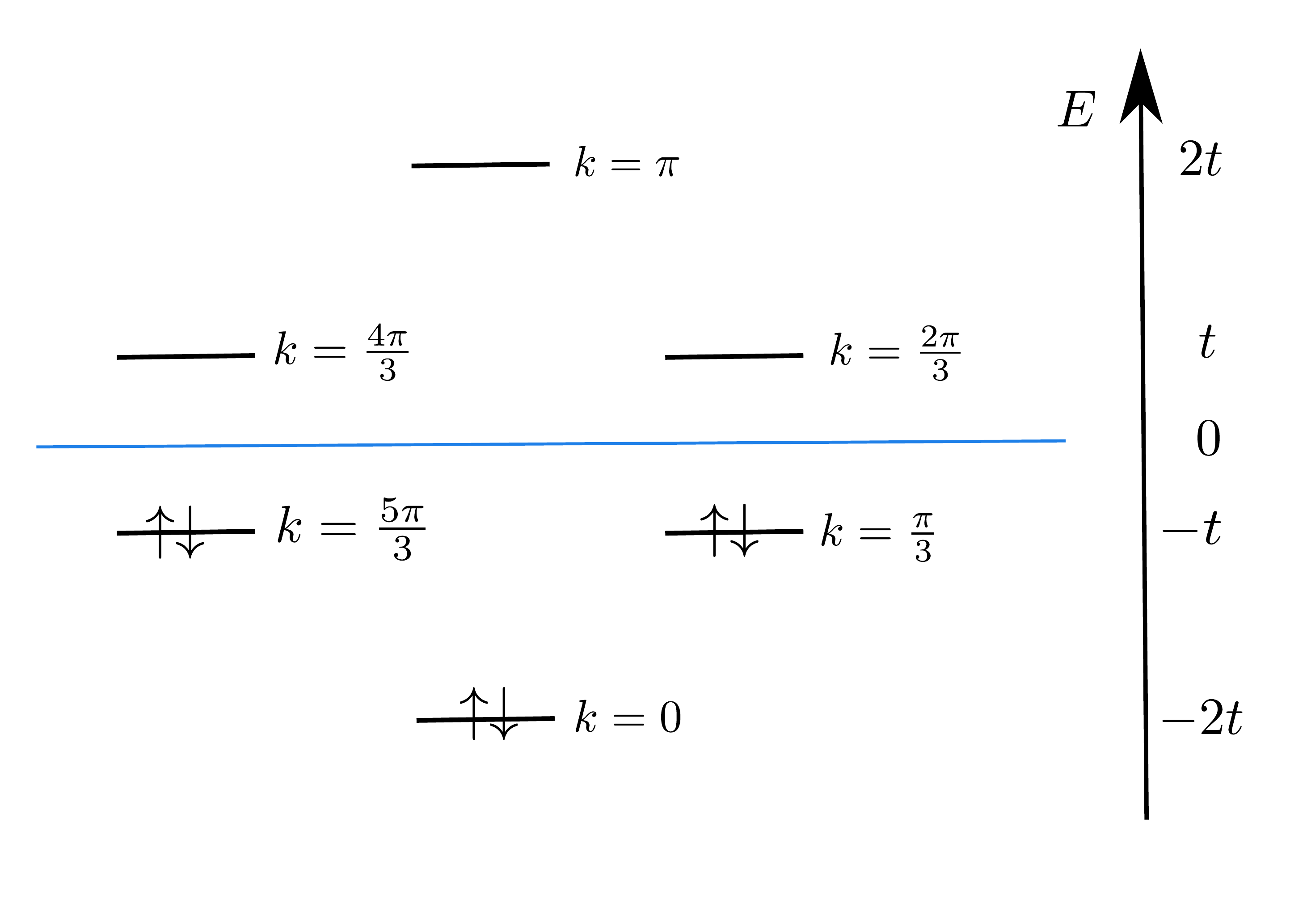}
\caption{\label{fig_hubbard_ni} A representation of the structure of the ground state when $U=0$ for a half-filled lattice of length $L=6$.}
\end{figure}
When periodic boundary conditions are assumed (as in Figure \ref{hubbard_fig}) the Hamiltonian is invariant under translations along the lattice. This is an abelian symmetry for which it is easy to block diagonalize the 2DM and its matrix maps, as the correct basis transformation in single-particle space automatically block diagonalizes all matrices on higher-order particle space. Translational invariance is exploited by Fourier transforming the site basis to quasi-momentum space:
\begin{equation}
\ket{k\sigma} = \sqrt{\frac{1}{L}}\sum_j e^{ikj}\ket{j\sigma}~,
\label{fourier}
\end{equation}
where $L$ is the lattice size, and $k$ takes on the values $\frac{2\pi n}{L}$ for $n = 0,\ldots,L-1$.
The kinetic or hopping part of the Hamiltonian becomes diagonal in this basis:
\begin{equation}
\hat{H}_{\text{hop}} = -2t \sum_{k\sigma} \cos{k} ~ a^\dagger_{k\sigma}a_{k\sigma}~,
\label{sp_hubbard}
\end{equation}
from which it follows that in the non-interacting ($U=0$) ground state, the electrons occupy the states with momenta lower than the Fermi level, as illustrated in Figure~\ref{fig_hubbard_ni} for a half-filled lattice of length $L=6$.
\subsubsection{The 2DM for translationally invariant systems}
The eigenstates of the Hubbard Hamiltonian have a good total quasi-momentum $\mathcal{K}$. The 2DM for these states, expressed in the quasi-momentum single-particle basis:
\begin{equation}
\Gamma^{S}_{k_ak_b;k_ck_d} = \sum_i w_i \frac{1}{[\mathcal{S}]^2} \sum_\mathcal{M}\bra{\Psi^{N\mathcal{K}}_{\mathcal{SM},i}}{B^\dagger}^{S}_{k_ak_b}~B^{S}_{k_ck_d}\ket{\Psi^{N\mathcal{K}}_{\mathcal{SM},i}}~,
\label{2DM_mom}
\end{equation}
where
\begin{equation}
{B^\dagger}^{S}_{k_ak_b} = \frac{1}{\sqrt{(k_ak_b)}}\left[a^\dagger_{k_a}\otimes a^\dagger_{k_b}\right]^{S}~,
\label{2DM_op_transin}
\end{equation}
is automatically block diagonal, because the only non-zero matrix elements in Eq.~(\ref{2DM_mom}) are those which conserve momentum: $(k_a + k_b)\%2\pi =(k_c + k_d)\%2\pi$. This means we have $L$ blocks $\Gamma^{SK}$ for every $S$, with two-particle states that satisfy $K = (k_a+k_b)\%2\pi$, and a block dimension that scales linearly with lattice size $L$.
\subsubsection{Two-index constraints for translationally invariant systems}
The spin-symmetric matrix constraints simplify considerably by including translational invariance, because the 1DM is automatically diagonal in the quasi-momentum basis:
\begin{equation}
\rho_{k} = \sum_i w_i \frac{1}{[\mathcal{S}]^2} \sum_\mathcal{M}\bra{\Psi^{N\mathcal{K}}_{\mathcal{SM},i}}a^\dagger_{k\sigma}a_{k\sigma}\ket{\Psi^{N\mathcal{K}}_{\mathcal{SM},i}}~.
\label{1DM_ti}
\end{equation}
The translationally invariant 1DM can be derived from the 2DM as:
\begin{equation}
\rho_k = \frac{1}{N-1}\sum_S\frac{[S]^2}{2}\sum_{k'}(kk')\Gamma^{SK}_{kk';kk'}~.
\label{1DM_sc_ti_written_up}
\end{equation}
\paragraph{The $\mathcal{Q}$ map:}
the translationally invariant $\mathcal{Q}$ map is defined as:
\begin{equation}
\mathcal{Q}(\Gamma)^{SK}_{k_ak_b;k_ck_d} = \sum_i w_i \frac{1}{[\mathcal{S}]^2} \sum_\mathcal{M}\bra{\Psi^{N\mathcal{K}}_{\mathcal{SM},i}}{B}^{SK}_{k_ak_b}~{B^\dagger}^{SK}_{k_ck_d}\ket{\Psi^{N\mathcal{K}}_{\mathcal{SM},i}}~,
\end{equation}
with $B^\dagger$ given by Eq.~(\ref{2DM_op_transin}). The expression of the $\mathcal{Q}$ map as a function of the 2DM becomes quite simple:
\begin{equation}
\mathcal{Q}(\Gamma)^{SK}_{k_ak_b;k_ck_d} = \Gamma^{SK}_{k_ak_b;k_ck_d} + \frac{\left(\delta_{k_ak_c}\delta_{k_bk_d}+(-1)^S\delta_{k_ak_d}\delta_{k_bk_c}\right)}{\sqrt{(k_ak_b)(k_ck_d)}}\left[\frac{2\mathrm{Tr}~\Gamma}{N(N-1)}-\rho_{k_a}-\rho_{k_b}\right]~.
\label{ti_Q_map}
\end{equation}
The $\mathcal{Q}$ map is also block diagonal in $K=(k_a+k_b)\%2\pi$, and has the same dimensional reduction as the 2DM.
\paragraph{The $\mathcal{G}$ condition:}
the $\mathcal{G}$ condition is slightly more complicated, because the correct annihilation or hole operator is given by:
\begin{equation}
\tilde{a}_{k\sigma} = (-1)^{\frac{1}{2}+\sigma} a_{\bar{k}-\sigma}~,\qquad\text{with}\qquad \bar{k} = -k\%2\pi~.
\label{good_sto_mom}
\end{equation}
Using this operator the translationally invariant $\mathcal{G}$ map becomes:
\begin{equation}
\mathcal{G}(\Gamma)^{SK}_{k_ak_b;k_ck_d} = \sum_i w_i \frac{1}{[\mathcal{S}]^2}\sum_\mathcal{M} \bra{\Psi^{\mathcal{K}N}_{\mathcal{SM},i}}{A^\dagger}^{SK}_{k_ak_b}~{A}^{SK}_{k_ck_d}\ket{\Psi^{N\mathcal{K}}_{\mathcal{SM},i}}~,
\end{equation}
where $K = (k_a+k_b)\%2\pi = (k_c+k_d)\%2\pi$ and with the particle-hole operator defined by:
\begin{equation}
{A^\dagger}^{SK}_{k_ak_b} = \left[a^\dagger_{k_a}\otimes \tilde{a}_{k_b}\right]^{SK}~.
\end{equation}
The $\mathcal{G}$ map can be expressed as a function of the 2DM:
\begin{equation}
\mathcal{G}(\Gamma)^{SK}_{k_ak_b;k_ck_d} = \delta_{k_bk_d}\delta_{k_ak_c}\rho_{k_a} - \sqrt{(k_a\bar{k}_d)(k_b\bar{k}_c)}\sum_{S'}
\left\{
\begin{matrix}
\frac{1}{2}&\frac{1}{2}&S\\
\frac{1}{2}&\frac{1}{2}&S'
\end{matrix}
\right\}
\Gamma^{S'K'}_{k_a\bar{k}_d;k_c\bar{k}_b}~,
\end{equation}
from which one can see that the blocks in the $\mathcal{G}$ matrix with $K = (k_a + k_b)\%2\pi= (k_c + k_d)\%2\pi$, correspond to the blocks with $K' = (k_a +\bar{k}_d)\%2\pi= (k_c +\bar{k}_b)\%2\pi$ in the 2DM, as can expected for a particle-hole transformed quantity.
\subsubsection{Three-index constraints for translationally invariant systems}
Three-particle space is built up of the same type of states as in the spin-coupled case (See Table~\ref{dp_dim}), but now divided into $L$ separate Sections with the same three-particle momentum $K = (k_a +k_b+k_c)\%2\pi$. From now on Latin indices $a$ are used to represent momenta $k_a$, in order to ease notation.
\paragraph{The $\mathcal{T}_1$ map:}
in translationally invariant form, the single-particle part of the $\mathcal{T}_1$ map simplifies considerably:
\begin{align}
\nonumber\mathcal{T}_1(\Gamma)^{SK(S_{ab};S_{de})}_{abc;dez} &=   \frac{\delta_{cz}\delta_{S_{ab}S_{de}}}{\sqrt{(ab)(de)}}\left(\delta_{ad}\delta_{be} + (-1)^{S_{ab}}\delta_{ae}\delta_{bd}\right) \left[\left(\frac{2\mathrm{Tr}~\Gamma}{N(N -1)}\right) -\rho_{a}-\rho_{b}-\rho_{c}\right]\\
\nonumber&-\frac{[S_{ab}][S_{de}]}{\sqrt{(ab)(de)}} 
\left\{
\begin{matrix}S&\frac{1}{2}&S_{ab}\\
\frac{1}{2}&\frac{1}{2}&S_{de}
\end{matrix}
\right\}\left(\delta_{az}\delta_{cd}\delta_{be}+(-1)^{S_{de}}\delta_{az}\delta_{ce}\delta_{bd} + (-1)^{S_{ab}}\delta_{bz}\delta_{ae}\delta_{cd}\right.\\
\nonumber&\left.\qquad\qquad\qquad+ (-1)^{S_{ab}+S_{de}}\delta_{bz}\delta_{ce}\delta_{ad}\right)\left[\rho_{a}+\rho_{b}+\rho_{c}\right]+ \delta_{cz}\delta_{S_{ab}S_{de}}\Gamma^{S_{ab}K_{ab}}_{ab;de}\\
\nonumber& + [S_{ab}][S_{de}]
\left\{
\begin{matrix}S&\frac{1}{2}&S_{ab}\\
\frac{1}{2}&\frac{1}{2}&S_{de}
\end{matrix}
\right\}
\left[\delta_{az}\sqrt{\frac{(bz)}{(ab)}}\Gamma^{S_{de}K_{cb}}_{cb;de} + \delta_{bz}(-1)^{S_{ab}+S_{de}}\sqrt{\frac{(ac)}{(ab)}}\Gamma^{S_{de}K_{ac}}_{ac;de}\right.\\
\nonumber&\left.\qquad\qquad\qquad\qquad\qquad+ \delta_{ce}(-1)^{S_{ab}+S_{de}}\sqrt{\frac{(dz)}{(de)}}\Gamma^{S_{ab}K_{ab}}_{ab;dz}+\delta_{cd}\sqrt{\frac{(ez)}{(de)}}\Gamma^{S_{ab}}_{ab;ze}\right]\\
\nonumber&+\frac{[S_{ab}][S_{de}]}{\sqrt{(ab)(de)}}\sum_{S'} [S']^2\left\{\begin{matrix}S&\frac{1}{2}&S'\\\frac{1}{2}&\frac{1}{2}&S_{ab}\end{matrix}\right\}\left\{\begin{matrix}S&\frac{1}{2}&S'\\\frac{1}{2}&\frac{1}{2}&S_{de}\end{matrix}\right\}\left[\delta_{ad}\sqrt{(bc)(ez)}\Gamma^{S'K_{bc}}_{bc;ez}\right.\\
\nonumber&\left.\qquad\qquad + \delta_{bd}(-1)^{S_{ab}}\sqrt{(ac)(ez)}\Gamma^{S'K_{ac}}_{ac;ez} + \delta_{ae}(-1)^{S_{de}}\sqrt{(bc)(dz)}\Gamma^{S'K_{bc}}_{bc;dz}\right.\\
&\left.\qquad\qquad\qquad\qquad\qquad +\delta_{be}(-1)^{S_{ab}+S_{de}}\sqrt{(ac)(dz)}\Gamma^{S'K_{ac}}_{ac;dz}\right]~.
\label{T1_sc_ti_2DM}
\end{align}
In the above equation the intermediate momenta appearing in the 2DM part of the equation are defined as:
\begin{equation}
K_{ab} = (a+b)\%2\pi~.
\end{equation}
It can be seen that if three-particle momentum conservation is fulfilled, {\it i.e.} $(a+b+c)\%2\pi = (d+e+z)\%2\pi$, then the terms containing the 2DM conserve two-particle momentum. The three-particle matrix reduces to a block-diagonal matrix with $L$ quasi-momentum blocks for every $S$, each of which scales quadratically with the lattice size $L$.
\paragraph{The $\mathcal{T}_2$ map}
For the $\mathcal{T}_2$ map we need to define a translationally invariant two-particle-one-hole operator:
\begin{equation}
{B^\dagger}^{SK}_{ab(S_{ab})c} = \frac{1}{\sqrt{(ab)}}\left[\left[a^\dagger_a \otimes a^\dagger_b \right]^{S_{ab}K_{ab}}\otimes \tilde{a}_c\right]^{SK}~,
\label{pph_sc_op_ti}
\end{equation}
where $\tilde{a}$ is used, as defined in Eq.~(\ref{good_sto_mom}), and $K = (a+b+c)\%2\pi$. The definition of $\mathcal{T}_2$ is then exactly the same as in Eq.~(\ref{T2_sc_ensemble}), when performing the usual angular momentum recoupling and anticommutation relations, and can be rewritten as a function of the 2DM:
\begin{align}
\nonumber\mathcal{T}_2(\Gamma)^{SK(S_{ab};S_{de})}_{abc;dez} &= \frac{\delta_{S_{ab}S_{de}}}{\sqrt{(ab)(de)}}\delta_{cz}\left(\delta_{ad}\delta_{be}+(-1)^{S_{ab}}\delta_{ae}\delta_{bd}\right)\rho_{c} + \delta_{cz}\delta_{S_{ab}S_{de}}\Gamma^{S_{ab}K_{ab}}_{ab;de}\\*
\nonumber&-\frac{[S_{ab}][S_{de}]}{\sqrt{(ab)(de)}}\sum_{S'} [S']^2
\left\{
\begin{matrix}
S&\frac{1}{2}&S_{de}\\
\frac{1}{2}&S'&\frac{1}{2}\\
S_{ab}&\frac{1}{2}&\frac{1}{2}
\end{matrix}
\right\}
\left(\delta_{ad}\sqrt{(\bar{c}e)(b\bar{z})}\Gamma^{S'K_{\bar{c}e}}_{\bar{c}e;\bar{z}b}\right.\\*
\nonumber&\left.+ (-1)^{S_{ab}}\delta_{bd}\sqrt{(\bar{c}e)(\bar{z}a)}\Gamma^{S'K_{\bar{c}e}}_{\bar{c}e;\bar{z}a} + (-1)^{S_{de}}\delta_{ae}\sqrt{(\bar{c}d)(\bar{z}b)}\Gamma^{S'K_{\bar{c}d}}_{\bar{c}d;\bar{z}b}\right.\\*
&\left. \qquad\qquad\qquad\qquad\qquad\qquad+ (-1)^{S_{ab}+S_{de}}\delta_{be}\sqrt{(\bar{c}d)(\bar{z}a)}\Gamma^{S'K_{\bar{c}d}}_{\bar{c}d;\bar{z}a}\right)~.
\label{T2_sc_ti}
\end{align}
Again one sees that if the two-particle-one-hole momentum is conserved, {\it i.e.} $(a+b+c)\%2\pi=(d+e+z)\%2\pi$, then the two-particle momenta appearing in the rows and columns of the 2DM are the same, {\it e.g.} if $a = d$ then $(\bar{c}+e)\%2\pi=(\bar{z}+b)\%2\pi$. The $\mathcal{T}_2$ map also becomes block diagonal, with each block scaling quadratically with lattice size.
\paragraph{The $\mathcal{T}_2'$ map}
The $\mathcal{T}_2'$ condition can be similarly treated. For $S=\frac{3}{2}$ it again reduces to the regular $\mathcal{T}_2$ map. For $S = \frac{1}{2}$ the dimension of each $K$ block increases by one:
\begin{equation}
\mathcal{T}_2'(\Gamma)^{\frac{1}{2}k(S_{ab};S_{de})}_{abc;dez} = 
\left(
\begin{matrix}
\mathcal{T}_2(\Gamma)^{\frac{1}{2}k(S_{ab};S_{de})}_{abc;dez} & \omega^{(S_{ab})}_{abc;k}\\
{\omega^\dagger}^{(S_{de})}_{k:dez}&\rho_k\\
\end{matrix}
\right)~,
\label{T2P_sc_ti}
\end{equation}
with 
\begin{equation}
\omega^{(S_{ab})}_{abc;k} = \frac{[S_{ab}]}{\sqrt{2}}(-1)^{1+S_{ab}}\sqrt{(k\bar{c})}\Gamma^{S_{ab}K_{ab}}_{ab;k\bar{c}}~.
\end{equation}
\subsection{Parity}
The Hubbard model with periodic boundary conditions (PBC) has more symmetries than spin and translational invariance, one of them being parity. As we saw in Section~\ref{sym_atom}, which deals with atomic systems, parity follows from the symmetry under the inversion of space, {\it i.e.} $\mathbf{r}\rightarrow -\mathbf{r}$. One can readily appreciate from Figure~\ref{hubbard_fig} that a similar discrete symmetry is present here, {\it i.e.} the Hamiltonian is invariant for the inversion of the site index $i\rightarrow-i\%L$. From the Fourier transform in Eq.~(\ref{fourier}) it can be seen that the effect of this operator on a momentum state is to transform $k$ into $\bar{k} = -k\%2\pi$. However, this operation does not commute with translation, which means that the eigenstates of the Hamiltonian cannot have good momentum \emph{and} parity at the same time. From now on we assume that the number of lattice sites $L$ is even.

As a consequence, if the ground state has momentum $\ket{\mathcal{K}}\neq 0$ or $\pi$ then it is doubly degenerate, forming a doublet with $\ket{\overline{\mathcal{K}}}$. In what follows we will use this degeneracy to exploit both translational invariance and parity to reduce the dimensions of the matrices involved in the program. The following considerations are valid for every $\mathcal{K}$. We start with the simplest case, the 1DM.
\subsubsection{Translationally invariant 1DM with parity}
Single-particle space is built out of $L$ different momentum states having up or down spin, $\ket{k\sigma}$, for $0 \leq k < 2\pi$ and $\sigma=\pm\frac{1}{2}$. If we transform to a basis with good parity, momentum is no longer a good quantum number:
\begin{equation}
\ket{\tilde{k}^\pi\sigma} = {}^\rho N_k \left(\ket{k\sigma} + \pi\ket{\bar{k}\sigma}\right)~,\qquad\text{with}\qquad 0\leq\tilde{k}\leq\pi~.
\end{equation}
Two states, $k=0$ and $k=\pi$ \footnote{Note that we use $\pi$ for both the parity quantum number, $\pi=\pm1$ as for the transcendent number; in what follows it is always clear from the context what interpretation should be given to $\pi$.}
are mapped on themselves, and only positive parity states can be formed with these momenta. They have norms ${}^\rho N_k = \frac{1}{2}$. For the other states, $0<k<\pi$, both positive and negative parity combinations can be constructed, with norm ${}^\rho N_k=\frac{1}{\sqrt{2}}$. To take advantage of both symmetries at the same time, we define the 1DM using an ensemble of the $\ket{\mathcal{K}}$ and $\ket{\mathcal{\bar{K}}}$ states:
\begin{align}
\rho_{\tilde{k}^\pi\tilde{k}'^{\pi'}} =& \sum_iw_i\frac{1}{2}\frac{1}{[\mathcal{S}]^2}\sum_{\mathcal{M}}\sum_{p\in\pm}\bra{\Psi^{N(p\mathcal{K})}_{\mathcal{SM},i}}a^\dagger_{\tilde{k}^\pi\sigma}a_{\tilde{k}'^{\pi'}\sigma}\ket{\Psi^{N(p\mathcal{K})}_{\mathcal{SM},i}}\\
=&\frac{{}^{\rho}N_k^2}{2}\sum_{p\in\pm}\left[\delta_{kk'}\left(\rho^{p\mathcal{K}}_{k}+\pi\pi'\rho^{p\mathcal{K}}_{\bar{k}}\right) + \delta_{k\bar{k}'}\left(\pi'\rho^{p\mathcal{K}}_{k}+\pi\rho^{p\mathcal{K}}_{\bar{k}}\right)\right]~,
\label{1DM_ti_par}
\end{align}
in which $\rho^{p\mathcal{K}}_k$ is a regular translationally invariant 1DM as defined in Eq.~(\ref{1DM_ti}). Because of parity symmetry we have,
\begin{equation}
\rho^{\mathcal{K}}_k = \rho^{\bar{\mathcal{K}}}_{\bar{k}}~,
\end{equation}
from which it follows that Eq.~(\ref{1DM_ti_par}) is diagonal in $\pi$. If we now define:
\begin{equation}
\rho_k = \frac{1}{2}\left(\rho_k^{\mathcal{K}}+\rho_k^{\bar{\mathcal{K}}}\right)~,
\end{equation}
the translationally invariant 1DM with good parity reduces to:
\begin{equation}
\rho_{\tilde{k}^\pi} = \frac{1}{2}\left(\rho_k + \rho_{\bar{k}}\right)~,
\end{equation}
which can be seen to be independent of parity. The 1DM is still diagonal in $\tilde{k}$, but less elements have to be stored, since for $0<\tilde{k}<\pi$ there is a degeneracy in parity.
\subsubsection{Translationally invariant 2DM with parity}
In contrast to parity in atomic systems, the parity of a two-particle state for translationally invariant systems is not the product of the parities of the single-particle states building up the two-particle state. Instead, the parity is inherently a two-particle property, and the single-particle states building up the two-particle states have no good parity:
\begin{equation}
\ket{ab;S\tilde{K}^\pi} = {}^{\Gamma}N_{ab}^K\left(\ket{ab;SK} + \pi\ket{\bar{a}\bar{b};S\bar{K}}\right)~,
\label{tp_ti_par}
\end{equation}
with
\begin{equation}
0\leq \tilde{K}\leq\pi\qquad\text{and}\qquad0\leq a,b < 2\pi~.
\end{equation}
In general the 2DM is defined using a $p\mathcal{K}$ ensemble:
\begin{align}
\Gamma^{S\tilde{K}^\pi}_{ab;cd} =& \sum_iw_i\frac{1}{2}\frac{1}{[\mathcal{S}]^2}\sum_{\mathcal{M}}\sum_{p\in\pm}\bra{\Psi^{N(p\mathcal{K})}_{\mathcal{SM},i}}{B^\dagger}^{S\tilde{K}^\pi}_{ab}B^{S\tilde{K}^\pi}_{cd}\ket{\Psi^{N(p\mathcal{K})}_{\mathcal{SM},i}}~,
\end{align}
where
\begin{equation}
{B^\dagger}^{S\tilde{K}^\pi}_{ab} = {}^{\Gamma}N^{K}_{ab}\left({B^\dagger}^{SK}_{ab}+\pi{B^\dagger}^{S\bar{K}}_{\bar{a}\bar{b}}\right)~,
\end{equation}
and with ${B^\dagger}^{SK}$ defined as in Eq.~(\ref{2DM_op_transin}). Because of this $p\mathcal{K}$-ensemble definition and the fact that parity symmetry implies that,
\begin{equation}
^{\mathcal{K}}\Gamma^{SK}_{ab;cd} = ~^{-\mathcal{K}}\Gamma^{S\bar{K}}_{\bar{a}\bar{b};\bar{c}\bar{d}}~,
\end{equation}
one sees that the 2DM becomes diagonal in two-particle parity.
As was the case for the 1DM, the $\tilde{K} = 0$ and $\pi$ are mapped on themselves, but because the single-particle momenta $a,b$ change, both positive and negative parity combinations can now be formed. Let us take a look at the different possibilities:
\paragraph{$\mathbf{{\tilde{K}} = 0}$:}
for $\tilde{K} = 0$ the single-particle indices in Eq.~(\ref{tp_ti_par}) have to satisfy:
\begin{equation}
(a+b)\%2\pi = 0\qquad\text{or}\qquad a = \bar{b}~.
\end{equation}
This means that $\tilde{K} = 0$ states can be written as:
\begin{equation}
\ket{a\bar{a};S0^\pi} = {}^{\Gamma}N_{a\bar{a}}^0\left(\ket{a\bar{a};S0} + \pi\ket{\bar{a}a;S0}\right)~,
\label{tp_ti_par_0}
\end{equation}
in which the second term is equal to the first but with exchanged single-particle indices. From previous discussions we know that the symmetry of the two-particle state under the exchange of the single-particle indices depends on the two-particle spin, {\it i.e.}
\begin{equation}
\ket{a\bar{a};S0} = (-1)^S\ket{\bar{a}a;S0}~.
\end{equation}
One can see from Eq.~(\ref{tp_ti_par_0}) that for $S= 0$ only the positive parity states, and for $S=1$ only negative parity states remain. The norm is given ${}^{\Gamma}N^0_{a\bar{a}} = \frac{1}{2}$. In the $\tilde{K}=0$ case, the definition of the parity-symmetric 2DM as a function of the regular translationally invariant 2DM then reduces to:
\begin{equation}
\Gamma^{S0^\pi}_{ab;cd} = \delta_{\pi(-1)^S}\Gamma^{S0}_{ab;cd}~.
\label{2DM_ti_par_0}
\end{equation}
\paragraph{$\mathbf{0<{\tilde{K}} <\boldsymbol\pi}$:}
for $0<\tilde{K} <\pi$ the first and second term in Eq.~(\ref{tp_ti_par}) consist of different single-particle indices $a\neq \bar{b}$, implying that both positive and negative parity combinations can be constructed, with norm ${}^{\Gamma}N_{ab}^K = \frac{1}{\sqrt{2}}$. As shown for the 1DM, the $p\mathcal{K}$ ensemble makes the 2DM diagonal in, and independent of, parity. Hence every block is twofold degenerate. Since $K\neq\bar{K}$ there are only two terms remaining in the definition of the parity-symmetric 2DM:
\begin{equation}
\Gamma^{S\tilde{K}^\pi}_{ab;cd} = \frac{1}{2}\left(\Gamma^{SK}_{ab;cd}+\Gamma^{S\bar{K}}_{\bar{a}\bar{b};\bar{c}\bar{d}}\right)~.
\label{2DM_ti_par_K}
\end{equation}
\paragraph{$\mathbf{{\tilde{K}} = \boldsymbol\pi}$:}
Finally, for this block $K$ again equals $\bar{K}$. In this case there is always one state that is mapped on itself, and for which only a positive parity combination can be formed, {\it i.e.} $\ket{0\pi;S\pi^+}$, with norm ${}^{\Gamma}N_{0\pi}^\pi = \frac{1}{2}$. For all the other states in this block both positive and negative parity combinations can be formed, with norm ${}^{\Gamma}N_{ab}^\pi=\frac{1}{\sqrt{2}}$. Because of the $p\mathcal{K}$ ensemble, the 2DM falls apart in a positive and negative parity block, and since $K=\bar{K}$, four terms remain in the definition of the 2DM:
\begin{equation}
\Gamma^{S\pi^\pi}_{ab;cd} = {}^{\Gamma}N_{ab}^\pi~{}^{\Gamma}N_{cd}^\pi\left(\Gamma^{S\pi}_{ab;cd} + \Gamma^{S\pi}_{\bar{a}\bar{b};\bar{c}\bar{d}} + \pi\left[\Gamma^{S\pi}_{ab;\bar{c}\bar{d}}+\Gamma^{S\pi}_{\bar{a}\bar{b};cd}\right]\right)~.
\label{2DM_ti_par_pi}
\end{equation}
We observe from Eq.~(\ref{2DM_ti_par_pi}) that the original $K=\pi$ block reduces to a positive and negative parity block, for both the $S=0$ and $S=1$ part. Also note that there is \emph{no} degeneracy between the positive and negative parity block!
\subsubsection{Parity symmetric form of the two-index constraints}
Before deriving the parity-symmetric form of the two-index constraints, the derivation of the parity-symmetric form of the 1DM out of the 2DM is considered. Expressed as a function of the regular translationally invariant 2DM, the parity-symmetric 1DM is given by:
\begin{align}
\nonumber\rho_{\tilde{a}} =& \frac{1}{2}\left(\rho_a + \rho_{\bar{a}}\right)\\
=& \frac{1}{N-1}\sum_S\frac{[S]^2}{2}\sum_{b}(ab)\frac{1}{2}\left[\Gamma^{SK}_{ab;ab}+\Gamma^{S\bar{K}}_{\bar{a}\bar{b};\bar{a}\bar{b}}\right]~.
\end{align}
From the previous Section it can be deduced that, for all values of $\tilde{K}$:
\begin{equation}
\Gamma^{SK}_{ab;cd} + \Gamma^{S\bar{K}}_{\bar{a}\bar{b};\bar{c}\bar{d}} = \frac{1}{2~{}^{\Gamma}N_{ab}^K~{}^{\Gamma}N_{cd}^K}\left[\Gamma^{S\tilde{K}^{(+)}}_{ab;cd} + \Gamma^{S\tilde{K}^{(-)}}_{ab;cd}\right]~,
\end{equation}
from which it follows that:
\begin{equation}
\rho_{\tilde{a}}= \frac{1}{N-1}\sum_S\frac{[S]^2}{2}\sum_{b}\frac{(ab)}{4~{}^{\Gamma}{N_{ab}^K}^2}\sum_\pi\Gamma^{S\tilde{K}^\pi}_{ab;ab}~.
\end{equation}
Bear in mind that the single-particle momentum $a$ in the above equations is in the range $0\leq a\leq\pi$, but the momentum over which is summed, $b$, runs over the entire range $0\leq b<2\pi$.
\paragraph{The $\mathcal{Q}$ condition:}
the $\mathcal{Q}$ matrix is analogous to the 2DM $\Gamma$, and the expression of the parity-symmetric $\mathcal{Q}$ as a function of the regular translationally invariant $\mathcal{Q}$ matrix is exactly the same as for the 2DM in Eqs.~(\ref{2DM_ti_par_0}), (\ref{2DM_ti_par_K}) and (\ref{2DM_ti_par_pi}). The expression of the parity-symmetric $\mathcal{Q}$ map as a function of the parity-symmetric 2DM is then easy to derive using Eq.~(\ref{ti_Q_map}), leading to the following expression:
\begin{align}
\mathcal{Q}(\Gamma)^{S\tilde{K}^\pi}_{ab;cd} =& \Gamma^{SK^\pi}_{ab;cd} + \frac{\left(\delta_{ac}\delta_{bd}+(-1)^S\delta_{ad}\delta_{bc}\right)}{\sqrt{(ab)(cd)}}\left[\frac{2\mathrm{Tr}~\Gamma}{N(N-1)}-\rho_{\tilde{a}}-\rho_{\tilde{b}}\right]~.
\label{Q_2DM_sc_ti}
\end{align}
This expression is valid for all values of $\tilde{K}$. The only difficulty is that the indices on the left-hand side can have the values $0\leq a<2\pi$, while for the 1DM only terms $\leq \pi$ are stored. This is solved by using the \emph{tilde} above the index, by which is implied that:
\begin{equation}
\tilde{a} = 
\left\{
\begin{matrix}
a\qquad\text{if}\qquad a \leq \pi\\
\bar{a}\qquad\text{if}\qquad a > \pi
\end{matrix}
\right.
~.
\end{equation}
\paragraph{The $\mathcal{G}$ condition:}
the parity-symmetric form of a particle-hole state is defined as:
\begin{equation}
\ket{ab;S\tilde{K}^\pi} =~^\mathcal{G}N^K_{ab} \left(\left[a^\dagger_a \otimes \tilde{a}_b\right]^{SK} + \left[a^\dagger_{\bar{a}} \otimes \tilde{a}_{\bar{b}}\right]^{S\bar{K}}\right)\ket{0}~,
\end{equation}
in which the hole operator $\tilde{a}_{k\sigma}$ is defined as in Eq.~(\ref{good_sto_mom}). Using this parity-symmetric particle-hole operator the $\mathcal{G}$ map is defined in a $p\mathcal{K}$ ensemble, which once again renders the matrix diagonal in particle-hole parity. The particle-hole states can be divided into two classes, on the one hand $\tilde{K}=0\text{ or }\pi$, and on the other hand those states which are mapped on a different momentum. For simplicity we first consider this last class.
\paragraph{$\mathbf{0 < \tilde{K} < \boldsymbol\pi}$:} in this case one can construct both positive and negative parity combinations, with norm $^\mathcal{G}N_{ab}^K = \frac{1}{\sqrt{2}}$. The resulting $\mathcal{G}$ matrix contains only two terms, because of momentum conservion, and is independent of particle-hole parity, so every block is twofold degenerate:
\begin{equation}
\mathcal{G}(\Gamma)^{S\tilde{K}^\pi}_{ab;cd} = \frac{1}{2}\left[\mathcal{G}(\Gamma)^{SK}_{ab;cd}+\mathcal{G}(\Gamma)^{S\bar{K}}_{\bar{a}\bar{b};\bar{c}\bar{d}}\right]~.
\end{equation}
This implies the following expression of the $\mathcal{G}$ map as a function of the parity-symmetric 2DM:
\begin{align}
\mathcal{G}(\Gamma)^{S\tilde{K}^\pi}_{ab;cd} =& \delta_{ac}\delta_{bd}\rho_{\tilde{a}}-\frac{\sqrt{(a\bar{d})(c\bar{b})}}{4~{}^{\Gamma}N^{K'}_{a\bar{d}}~{}^{\Gamma}N^{K'}_{c\bar{b}}}\sum_{S'}[S']^2
\left\{
\begin{matrix}
\frac{1}{2}&\frac{1}{2}&S\\
\frac{1}{2}&\frac{1}{2}&S'
\end{matrix}
\right\}
\sum_{\pi'}
\Gamma^{S'\tilde{K}'^{\pi'}}_{a\bar{d};c\bar{b}}~.
\end{align}
\paragraph{$\mathbf{\tilde{K} = 0}$ and $\mathbf{\tilde{K} = \boldsymbol\pi}$:} both the $\tilde{K} = 0$ and $\tilde{K} = \pi$ blocks are mapped on themselves. For $\tilde{K} = 0$ the action of the parity operator is again to exchange the single-particle momenta, but in contrast with the two-particle case, there is no symmetry between the particle and the hole index. As a consequence positive and negative parity combinations for both $\tilde{K} = 0$ and $\tilde{K} = \pi$ can be constructed, with norms $^\mathcal{G}N_{ab}^K=\frac{1}{\sqrt{2}}$. There are a few exceptions however: for $\tilde{K} = 0$, the states with $a = b = 0$ and $a = b = \pi$, and for $\tilde{K} = \pi$ the states with $a = 0$, $b = \pi$ and $a=\pi,b= 0$, are mapped on themselves and only occur in the positive parity block, with norm $^{\mathcal{G}}N_{ab}^{K} = \frac{1}{2}$. The general form of the parity-symmetric $\mathcal{G}$ map when ${K} ={\bar{K}}$, as a function of the regular translationally invariant $\mathcal{G}$ is:
\begin{align}
\mathcal{G}(\Gamma)^{S\tilde{K}^\pi}_{ab;cd} =& ~^{\mathcal{G}}N^K_{ab}~^{\mathcal{G}}N^K_{cd}\left[
\mathcal{G}(\Gamma)^{SK}_{ab;cd}
+\mathcal{G}(\Gamma)^{SK}_{\bar{a}\bar{b};\bar{c}\bar{d}}
+\pi\left(\mathcal{G}(\Gamma)^{SK}_{ab;\bar{c}\bar{d}}
+\mathcal{G}(\Gamma)^{SK}_{\bar{a}\bar{b};cd}
\right)
\right]~.
\end{align}
In this case the expression of $\mathcal{G}$ as a function of the 2DM is a bit more complicated:
\begin{align}
\nonumber\mathcal{G}(\Gamma)^{S\tilde{K}^\pi}_{ab;cd} =& \delta_{ac}\delta_{bd}\rho_{\tilde{a}}
-~^\mathcal{G}N^K_{ab} ~^\mathcal{G}N^K_{cd}
\sum_{S'}[S']^2
\left\{
\begin{matrix}
\frac{1}{2}&\frac{1}{2}&S\\
\frac{1}{2}&\frac{1}{2}&S'
\end{matrix}
\right\}
\left[
\frac{\sqrt{(a\bar{d})(c\bar{b})}}{4~{}^{\Gamma}N^{K'}_{a\bar{d}}{}^{\Gamma}N^{K'}_{c\bar{b}}}
\sum_{\pi'}
\Gamma^{S'\tilde{K}'^{\pi'}}_{a\bar{d};c\bar{b}}\right.\\
&\left. \qquad\qquad\qquad\qquad\qquad\qquad+ \pi\frac{\sqrt{(a{d})(c{b})}}{4~{}^{\Gamma}N^{K''}_{\bar{a}\bar{d}}~{}^{\Gamma}N^{K''}_{c{b}}}
\sum_{\pi'}
\Gamma^{S'\tilde{K}''^{\pi''}}_{\bar{a}\bar{d};c{b}}\right]~.
\end{align}
\subsubsection{Parity symmetric form of the three-index constraints}
The construction of a complete parity-symmetric three-particle basis involves a lot of bookkeeping. The main problem is that the parity transformation acting on the states as defined in Table~{\ref{dp_dim}} does not always transform them into states that are either identical or orthogonal to the original state, as was the case in the two-particle and particle-hole space. We define a parity-symmetric three-particle state as:
\begin{equation}
\ket{abc;(S_{ab})S\tilde{K}^\pi}= ~^{\mathcal{T}_1}N^{SK}_{ab(S_{ab})c}\left(\ket{abc;(S_{ab})SK}+\pi\ket{\bar{a}\bar{b}\bar{c};(S_{ab})S\bar{K}}\right)~.
\label{par_T1}
\end{equation}
A description of the reduction of the three-particle basis in a parity-symmetric form is given by considering three cases separately.
\paragraph{$\mathbf{\tilde{K} = 0}$:}
the spin-symmetric basis was built up of states as shown in Table~\ref{dp_dim}. The same ordering holds in every $K$ block when translational invariance is included, but only states for which $(a+b+c)\%2\pi=K$ are allowed in the block.

If $S=\frac{1}{2}$, both $S_{ab} = 0$ and $S_{ab} = 1$ are possible intermediary spins. For $S_{ab} = 0$ there are two possible configurations: $a=b\neq c$ and $a<b<c$. For $S_{ab} = 1$ only the latter configuration remains. For the first type of configuration, {\it i.e.} $\ket{aab;(0)\frac{1}{2}0}$, the parity operation leads to a one-to-one mapping of states with $a<\pi$ on states with $a>\pi$. These states are linearly independent, which means both positive and negative parity states can be formed, with norm $^{\mathcal{T}_1}N^{\frac{1}{2}0}_{aa(0)b} = \frac{1}{\sqrt{2}}$. The state with $a = \pi$, {\it i.e.} the state $\ket{\pi\pi0;(0)\frac{1}{2}0}$, is mapped on itself, and only a positive parity state can be formed, with norm $^{\mathcal{T}_1}N^{\frac{1}{2}0}_{\pi\pi(0)0}=\frac{1}{2}$. 

For $a<b<c$, both positive and negative parity states can be formed when $a>0$. However, when $a = 0$, the state is mapped by a parity transformation on a non-orthogonal state:
\begin{equation}
\ket{0\bar{a}\bar{b};(S_{ab})\frac{1}{2}0} = [S_{ab}](-1)^{S_{ab}}\sum_{S_{ac}}[S_{ac}](-1)^{S_{ac}}
\left\{
\begin{matrix}
\frac{1}{2}&\frac{1}{2}&S_{ac}\\
\frac{1}{2}&\frac{1}{2}&S_{ab}
\end{matrix}
\right\}
\ket{0{a}{b};(S_{ac})\frac{1}{2}0}~.
\end{equation}
Constructing good parity combinations out of these states, as in Eq.~(\ref{par_T1}), mixes up the intermediary spin quantum number $S_{ab}$, which can no longer be used to label our basis. This is best avoided, and it is more convenient to use the configuration:
\begin{equation}
\ket{ab0;(S_{ab})S0^\pi} = \frac{1}{2}\left(\ket{ab0;(S_{ab})S0} + \pi\ket{ba0;(S_{ab})\frac{1}{2}0}\right)~,
\end{equation}
when one of the momenta is zero. These states have good parity while simultaneously keeping $S_{ab}$ as a good quantum number. When $S_{ab} = 0$ only the positive parity combination is possible, when $S_{ab} = 1$ only the negative combination remains.

For $S = \frac{3}{2}$, the states are completely antisymmetrical in the momentum indices $abc$, and the construction of a parity-symmetric basis becomes much simpler. The configurations $a<b<c$ span a complete basis. When $a > 0$, parity maps these states on linearly independend states, implying that both positive and negative parity states can be formed. When $a=0$, the action of the parity operator is:
\begin{equation}
\hat{P}\ket{0ab;(1)\frac{3}{2}0} = \ket{0ba;(1)\frac{3}{2}0} = -\ket{0ab;(1)\frac{3}{2}0}~,
\end{equation}
so only negative parity combinations can be formed.
\paragraph{$\mathbf{0<\tilde{K}<\boldsymbol\pi}$:}
when the momentum of a three-particle state is between 0 and $\pi$, it is mapped by the parity operation on an orthogonal state, and both positive and negative parity states can be constucted using Eq.~(\ref{par_T1}).
\paragraph{$\mathbf{\tilde{K}=\boldsymbol\pi}$:}
three-particle states with $\tilde{K}=\pi$ are mapped on states which are not necessarily orthogonal, so analogous problems arise as for $\tilde{K}=0$.

For $S=\frac{1}{2}$ and $S_{ab} = 0$, we have the configuration $a=b\neq c$. There is one state of this type which is mapped on itself and for which only a positive parity combination can be formed, {\it i.e.} $\ket{00\pi;(0)\frac{1}{2}\pi}$. For the other states of this type both positive and negative combinations can be formed. When $a<b<c$ there is no problem if $b\neq \pi$. If $b=\pi$, however, the parity operator transforms the state into a non-orthogonal state:
\begin{equation}
\ket{\bar{a}\pi\bar{b};(S_{ab})\frac{1}{2}\pi} = [S_{ab}](-1)^{S_{ab}}\sum_{S_{ac}}[S_{ac}](-1)^{S_{ac}}
\left\{
\begin{matrix}
\frac{1}{2}&\frac{1}{2}&S_{ac}\\
\frac{1}{2}&\frac{1}{2}&S_{ab}
\end{matrix}
\right\}
\ket{{a}\pi{b};(S_{ac})\frac{1}{2}\pi}~,
\end{equation}
which means that a parity-symmetric state will again mix up states with different intermediate spin. To avoid this we use a different configuration in this block when one of the indices is equal to $\pi$:
\begin{equation}
\ket{ab\pi;(S_{ab})\frac{1}{2}\pi^\pi} = \frac{1}{2}\left(\ket{ab\pi;(S_{ab})\frac{1}{2}\pi} + \pi\ket{ba\pi;(S_{ab})\frac{1}{2}\pi}\right)~.
\end{equation}
This state has good parity and intermediate spin and only two states remain, {\it i.e.} the positive parity state for $S_{ab} = 0$ and the negative parity state for $S_{ab} = 1$.

When $S=\frac{3}{2}$ the complete antisymmetry in the momenta again makes the construction of a parity-symmetric basis easier. Almost all states are mapped on a orthogonal state, so positive and negative parity combinations can be formed. Only when one of the indices is equal to $\pi$, the positive parity state vanishes because of the antisymmetry.
\paragraph{The $\mathcal{T}_1$ map:}
the parity-symmetric translationally invariant $\mathcal{T}_1$ condition is defined using a $p\mathcal{K}$ ensemble:
\begin{align}
\nonumber\mathcal{T}_1(\Gamma)^{S(S_{ab};S_{de})\tilde{K}^\pi}_{abc;dez} =& \sum_iw_i\frac{1}{2}\frac{1}{[\mathcal{S}]^2}\sum_{\mathcal{M}}\sum_{p\in\pm}\left(\bra{\Psi^{N(p\mathcal{K})}_{\mathcal{SM},i}}{B^\dagger}^{S\tilde{K}^\pi}_{ab(S_{ab})c}B^{S\tilde{K}^\pi}_{de(S_{de})z}\ket{\Psi^{N(p\mathcal{K})}_{\mathcal{SM},i}}\right.\\
&\left.\qquad\qquad\qquad\qquad\qquad+\bra{\Psi^{N(p\mathcal{K})}_{\mathcal{SM},i}}{B}^{S\tilde{K}^\pi}_{de(S_{de})z} {B^\dagger}^{S\tilde{K}^\pi}_{ab(S_{ab})c}\ket{\Psi^{N(p\mathcal{K})}_{\mathcal{SM},i}}\right)~,
\end{align}
where
\begin{equation}
{B^\dagger}^{S\tilde{K}^\pi}_{ab(S_{ab})c} = ~^{\mathcal{T}_1}N^{SK}_{ab(S_{ab})c}\left({B^\dagger}^{SK}_{ab(S_{ab})c}+\pi{B^\dagger}^{S\bar{K}}_{\bar{a}\bar{b}(S_{ab})\bar{c}}\right)~,
\end{equation}
and with ${B^\dagger}$ defined as in Eq.~(\ref{dp_sc_op}). Because of the $p\mathcal{K}$ ensemble the $\mathcal{T}_1$ condition is diagonal in both parity and momentum $\tilde{K}$. For $0<\tilde{K}<\pi$ the parity-symmetric $\mathcal{T}_1$ is expressed as a function of the regular translationally invariant $\mathcal{T}_1$ as:
\begin{equation}
\mathcal{T}_1(\Gamma)^{S(S_{ab};S_{de})\tilde{K}}_{abc;dez} = ~^{\mathcal{T}_1}N^{SK}_{ab(S_{ab})c}~^{\mathcal{T}_1}N^{SK}_{de(S_{de})z}\left(\mathcal{T}_1(\Gamma)^{S(S_{ab};S_{de})K}_{abc;dez}+\mathcal{T}_1(\Gamma)^{S(S_{ab};S_{de})\bar{K}}_{\bar{a}\bar{b}\bar{c};\bar{d}\bar{e}\bar{z}}\right)~,
\end{equation}
which is independent of parity, and as such doubly degenerate. The blocks with $\tilde{K} = 0$ or $\pi$ are split up in a positive and negative parity block. Since $K=\bar{K}$ four terms remain in the definition of the parity-symmetric $\mathcal{T}_1$:
\begin{align}
\nonumber\mathcal{T}_1(\Gamma)^{S(S_{ab};S_{de})\tilde{K}}_{abc;dez} =& ~^{\mathcal{T}_1}N^{SK}_{ab(S_{ab})c}~^{\mathcal{T}_1}N^{SK}_{de(S_{de})z}\left(\mathcal{T}_1(\Gamma)^{S(S_{ab};S_{de})K}_{abc;dez}+\mathcal{T}_1(\Gamma)^{S(S_{ab};S_{de}){K}}_{\bar{a}\bar{b}\bar{c};\bar{d}\bar{e}\bar{z}}\right.\\
&\left.\qquad\qquad\qquad\qquad+\pi\left[\mathcal{T}_1(\Gamma)^{S(S_{ab};S_{de})K}_{abc;\bar{d}\bar{e}\bar{z}}+\mathcal{T}_1(\Gamma)^{S(S_{ab};S_{de}){K}}_{\bar{a}\bar{b}\bar{c};{d}{e}{z}}\right]
\right)~.
\end{align}
When expressed as a function of the 2DM, the part of the $\mathcal{T}_1$ condition that is a function of the 1DM (which we call $\mathcal{T}_1^{\text{sp}}$) is the same for all $\tilde{K}$ values:
\begin{align}
&\nonumber\frac{\mathcal{T}^{\text{sp}}_1(\Gamma)^{S\tilde{K}(S_{ab};S_{de})}_{abc;dez}}{~^{\mathcal{T}_1}N^{SK}_{ab(S_{ab})c}~^{\mathcal{T}_1}N^{SK}_{de(S_{de})z}} =   \frac{\delta_{cz}\delta_{S_{ab}S_{de}}}{\sqrt{(ab)(de)}}\left(\delta_{ad}\delta_{be} + (-1)^{S_{ab}}\delta_{ae}\delta_{bd}\right) \left[\left(\frac{2\mathrm{Tr}~\Gamma}{N(N -1)}\right) -\rho_{\tilde{a}}-\rho_{\tilde{b}}-\rho_{\tilde{c}}\right]\\
\nonumber&\qquad\qquad\qquad-\frac{[S_{ab}][S_{de}]}{\sqrt{(ab)(de)}} 
\left\{
\begin{matrix}S&\frac{1}{2}&S_{ab}\\
\frac{1}{2}&\frac{1}{2}&S_{de}
\end{matrix}
\right\}\left(\delta_{az}\delta_{cd}\delta_{be}+(-1)^{S_{de}}\delta_{az}\delta_{ce}\delta_{bd} + (-1)^{S_{ab}}\delta_{bz}\delta_{ae}\delta_{cd}\right.\\
&\left.\qquad\qquad\qquad\qquad\qquad+ (-1)^{S_{ab}+S_{de}}\delta_{bz}\delta_{ce}\delta_{ad}\right)\left[\rho_{\tilde{a}}+\rho_{\tilde{b}}+\rho_{\tilde{c}}\right]~.
\end{align}
The remaining part, which is a function of the 2DM, is different when $0<\tilde{K}<\pi$ or $\tilde{K} = 0,\pi$. The parity-symmetric expression of this part of the $\mathcal{T}_1$ map can be obtained by replacing all 2DM terms appearing in Eq.~(\ref{T1_sc_ti_2DM}) by their parity symmetric counterparts. For $0<\tilde{K}<\pi$ the correct replacement is:
\begin{equation}
\delta_{cz}\Gamma^{S_{ab}K_{ab}}_{ab;de} \rightarrow \delta_{cz}\frac{{~^{\mathcal{T}_1}N^{SK}_{ab(S_{ab})c}~^{\mathcal{T}_1}N^{SK}_{de(S_{de})z}}}{2~^\Gamma N^{K_{ab}}_{ab}~^\Gamma N^{K_{ab}}_{de}}\sum_{\pi'}\Gamma^{S_{ab}\tilde{K}_{ab}^{\pi'}}_{ab;de}~.
\end{equation}
For $\tilde{K} = 0$ or $\pi$, the replacement becomes:
\begin{align}
\delta_{cz}\Gamma^{S_{ab}K_{ab}}_{ab;de} \rightarrow& {~^{\mathcal{T}_1}N^{SK}_{ab(S_{ab})c}~^{\mathcal{T}_1}N^{SK}_{de(S_{de})z}}\left(\frac{\delta_{cz}\sum_{\pi'}\Gamma^{S_{ab}\tilde{K}_{ab}^{\pi'}}_{ab;de}}{2~^\Gamma N^{K_{ab}}_{ab}~^\Gamma N^{K_{ab}}_{de}} + \pi\frac{\delta_{c\bar{z}}\sum_{\pi'}\Gamma^{S_{ab}\tilde{K}_{ab}^{\pi'}}_{ab;\bar{d}\bar{e}}}{2~^\Gamma N^{K_{ab}}_{ab}~^\Gamma N^{K_{ab}}_{\bar{d}\bar{e}}} \right)~.
\end{align}
\paragraph{The $\mathcal{T}_2$ map:}
the construction of a parity-symmetric two-particle-one-hole matrix is easier, as there is no symmetry between the particle and hole indices. A two-particle-one-hole state with good parity can be constructed through:
\begin{equation}
\ket{abc;(S_{ab})S\tilde{K}^\pi} = ~^{\mathcal{T}_2}N^{SK}_{ab(S_{ab})c}\left(\ket{abc;(S_{ab})S{K}} +\pi\ket{\bar{a}\bar{b}\bar{c};(S_{ab})S{K}}\right)~.
\end{equation}
When $0<\tilde{K}<\pi$ all states are mapped on orthogonal ones by the parity operator, so positive and negative parity states can be formed. For $\tilde{K}=0$ or $\pi$, some states are mapped on themselves, {\it i.e.} for $\tilde{K} = 0$:
\[\ket{0\pi\pi;(S_{ab})S0}~,\qquad\ket{000;(S_{ab})S0}~,\qquad\text{and}\qquad\ket{\pi\pi0;(S_{ab})S0}~,\]
and for $\tilde{K}=\pi$:
\[\ket{00\pi;(S_{ab})S\pi}~,\qquad\ket{0\pi0;(S_{ab})S\pi}~,\qquad\text{and}\qquad\ket{\pi\pi\pi;(S_{ab})S\pi}~.\]
For these states only positive parity combinations can be formed, with norm $^{\mathcal{T}_2}N^{SK}_{ab(S_{ab})c} =\frac{1}{2}$.
There is another type of state for which not all parity combinations can be formed, because of the symmetry between the first two indices. For $\tilde{K} = 0$ this is:
\begin{equation}
\ket{\bar{a}\bar{b}0;(S_{ab})S0} = \ket{ba0;(S_{ab})S0}= (-1)^{S_{ab}}\ket{ab0;(S_{ab})S0}~,
\end{equation}
and for $\tilde{K}=\pi$:
\begin{equation}
\ket{\bar{a}\bar{b}\pi;(S_{ab})S\pi} = \ket{ba\pi;(S_{ab})S\pi}= (-1)^{S_{ab}}\ket{ab\pi;(S_{ab})S\pi}~.
\end{equation}
For these states, if $S_{ab} = 0$ only positive parity combinations can be formed, and if $S_{ab} = 1$ only negative combinations are allowed. The $\mathcal{T}_2$ condition is defined using the $p\mathcal{K}$ ensemble:
\begin{align}
\nonumber\mathcal{T}_2(\Gamma)^{S(S_{ab};S_{de})\tilde{K}^\pi}_{abc;dez} =& \sum_iw_i\frac{1}{2}\frac{1}{[\mathcal{S}]^2}\sum_{\mathcal{M}}\sum_{p\in\pm}\left(\bra{\Psi^{N(p\mathcal{K})}_{\mathcal{SM},i}}{B^\dagger}^{S\tilde{K}^\pi}_{ab(S_{ab})c}B^{S\tilde{K}^\pi}_{de(S_{de})z}\ket{\Psi^{N(p\mathcal{K})}_{\mathcal{SM},i}}\right.\\
&\left.\qquad\qquad\qquad\qquad\qquad+\bra{\Psi^{N(p\mathcal{K})}_{\mathcal{SM},i}}{B}^{S\tilde{K}^\pi}_{de(S_{de})z} {B^\dagger}^{S\tilde{K}^\pi}_{ab(S_{ab})c}\ket{\Psi^{N(p\mathcal{K})}_{\mathcal{SM},i}}\right)~,
\end{align}
where
\begin{equation}
{B^\dagger}^{S\tilde{K}^\pi}_{ab(S_{ab})c} = ~^{\mathcal{T}_2}N^{SK}_{ab(S_{ab})c}\left({B^\dagger}^{SK}_{ab(S_{ab})c}+\pi{B^\dagger}^{S\bar{K}}_{\bar{a}\bar{b}(S_{ab})\bar{c}}\right)~,
\end{equation}
and with ${B^\dagger}$ defined as in Eq.~(\ref{pph_sc_op_ti}). There are two different expressions of the $\mathcal{T}_2$ map, depending on the momentum of the block $\tilde{K}$. As before, when $0<\tilde{K}<\pi$ the blocks become degenerate in parity, and the $\mathcal{T}_2$ map is defined as:
\begin{equation}
\mathcal{T}_2(\Gamma)^{S(S_{ab};S_{de})\tilde{K}^\pi}_{abc;dez} =~^{\mathcal{T}_2}N^K_{ab(S_{ab})c}~^{\mathcal{T}_2}N^K_{de(S_{de})c}\left[\mathcal{T}_2(\Gamma)^{S(S_{ab};S_{de})K}_{abc;dez}+\mathcal{T}_2(\Gamma)^{S(S_{ab};S_{de})\bar{K}}_{\bar{a}\bar{b}\bar{c};\bar{d}\bar{e}\bar{z}}\right]~.
\end{equation}
The $\tilde{K} = 0$ and $\tilde{K} =\pi$ blocks, however, split up in a positive and negative parity part, which are not degenerate:
\begin{align}
\nonumber\mathcal{T}_2(\Gamma)^{S(S_{ab};S_{de})\tilde{K}^\pi}_{abc;dez} =&~^{\mathcal{T}_2}N^K_{ab(S_{ab})c}~^{\mathcal{T}_2}N^K_{de(S_{de})c}\left[\mathcal{T}_2(\Gamma)^{S(S_{ab};S_{de})K}_{abc;dez}+\mathcal{T}_2(\Gamma)^{S(S_{ab};S_{de}){K}}_{\bar{a}\bar{b}\bar{c};\bar{d}\bar{e}\bar{z}}\right.\\
&\left.\qquad\qquad\qquad\qquad+\pi\left(
\mathcal{T}_2(\Gamma)^{S(S_{ab};S_{de})K}_{\bar{a}\bar{b}\bar{c};dez}+\mathcal{T}_2(\Gamma)^{S(S_{ab};S_{de}){K}}_{{a}{b}{c};\bar{d}\bar{e}\bar{z}}
\right)\right]~.
\end{align}
Expressed as a function of the 2DM, the parity-symmetric $\mathcal{T}_2$ map stays largely unchanged from Eq.~(\ref{T2_sc_ti}). The only thing one has to do, is replace the translationally invariant 1DM by its parity-symmetric counterpart:
\begin{equation}
\rho_c\rightarrow\rho_{\tilde{c}}~,
\end{equation}
replace the first 2DM term as:
\begin{equation}
\delta_{cz}\Gamma^{S_{ab}K_{ab}}_{ab;de} \rightarrow \delta_{cz}\frac{{~^{\mathcal{T}_2}N^{SK}_{ab(S_{ab})c}~^{\mathcal{T}_2}N^{SK}_{de(S_{de})z}}}{2~^\Gamma N^{K_{ab}}_{ab}~^\Gamma N^{K_{ab}}_{de}}\sum_{\pi'}\Gamma^{S_{ab}\tilde{K}_{ab}^{\pi'}}_{ab;de}~,
\end{equation}
if $0<\tilde{K}<\pi$ or:
\begin{align}
\delta_{cz}\Gamma^{S_{ab}K_{ab}}_{ab;de} \rightarrow& {~^{\mathcal{T}_2}N^{SK}_{ab(S_{ab})c}~^{\mathcal{T}_2}N^{SK}_{de(S_{de})z}}\left(\frac{\delta_{cz}\sum_{\pi'}\Gamma^{S_{ab}\tilde{K}_{ab}^{\pi'}}_{ab;de}}{2~^\Gamma N^{K_{ab}}_{ab}~^\Gamma N^{K_{ab}}_{de}} + \pi\frac{\delta_{c\bar{z}}\sum_{\pi'}\Gamma^{S_{ab}\tilde{K}_{ab}^{\pi'}}_{ab;\bar{d}\bar{e}}}{2~^\Gamma N^{K_{ab}}_{ab}~^\Gamma N^{K_{ab}}_{\bar{d}\bar{e}}} \right)~,
\end{align}
if $\tilde{K} = 0$ or $\pi$, and replace the four remaining 2DM terms by:
\begin{equation}
\delta_{ad}\sqrt{\frac{(\bar{c}e)(\bar{z}b)}{(ab)(de)}}\Gamma^{SK_{\bar{c}e}}_{\bar{c}e;\bar{z}b} \rightarrow \delta_{ad}\sqrt{\frac{(\bar{c}e)(\bar{z}b)}{(ab)(de)}}\frac{~^{\mathcal{T}_2}N^{SK}_{ab(S_{ab})c}~^{\mathcal{T}_2}N^{SK}_{de(S_{de})z}}{2~^\Gamma N^{K_{\bar{c}e}}_{\bar{c}e}~^\Gamma N^{K_{\bar{c}e}}_{\bar{z}b}}\sum_{\pi'}\Gamma^{S'K_{\bar{c}e}^{\pi'}}_{\bar{c}e;\bar{z}b}~,
\end{equation}
if $0<\tilde{K}<\pi$ and by:
\begin{align}
\nonumber\delta_{ad}\sqrt{\frac{(\bar{c}e)(\bar{z}b)}{(ab)(de)}}\Gamma^{SK_{\bar{c}e}}_{\bar{c}e;\bar{z}b} \rightarrow& \delta_{ad}\sqrt{\frac{(\bar{c}e)(\bar{z}b)}{(ab)(de)}}\frac{~^{\mathcal{T}_2}N^{SK}_{ab(S_{ab})c}~^{\mathcal{T}_2}N^{SK}_{de(S_{de})z}}{2~^\Gamma N^{K_{\bar{c}e}}_{\bar{c}e}~^\Gamma N^{K_{\bar{c}e}}_{\bar{z}b}}\sum_{\pi'}\Gamma^{S'K_{\bar{c}e}^{\pi'}}_{\bar{c}e;\bar{z}b}\\
&+\pi \delta_{\bar{a}d}\sqrt{\frac{({c}e)(\bar{z}\bar{b})}{(ab)(de)}} \frac{~^{\mathcal{T}_2}N^{SK}_{ab(S_{ab})c}~^{\mathcal{T}_2}N^{SK}_{de(S_{de})z}}{2~^\Gamma N^{K_{{c}e}}_{{c}e}~^\Gamma N^{K_{{c}e}}_{\bar{z}\bar{b}}}\sum_{\pi'}\Gamma^{S'K_{{c}e}^{\pi'}}_{{c}e;\bar{z}\bar{b}}~,
\end{align}
if $\tilde{K} = 0$ or $\pi$.
\paragraph{The $\mathcal{T}'_2$ map}
For the parity-symmetric $\mathcal{T}_2'$ map there is only a small change compared to Eq.~(\ref{T2P_sc_ti}). As derived in the previous paragraph for the $\mathcal{T}_2$ constraint, the terms with $0<\tilde{K}<\pi$ become twofold degenerate in parity and only half of them needs to be stored. The terms with $\tilde{K} = 0$ or $\pi$ fall apart in a positive and negative parity block. The change in the parity-symmetric $\mathcal{T}_2'$ compared to the regular translationally invariant $\mathcal{T}_2'$ is that for $\tilde{K} = 0$ and $\pi$, only the positive parity part is different from the $\mathcal{T}_2$ map. The blocks that are different than the standard $\mathcal{T}_2$ have the form:
\begin{equation}
\mathcal{T}_2'(\Gamma)^{\frac{1}{2}(S_{ab};S_{de})\tilde{k}^\pi}_{abc;dez} = 
\left(
\begin{matrix}
\mathcal{T}_2(\Gamma)^{\frac{1}{2}(S_{ab};S_{de})\tilde{k}^\pi}_{abc;dez} & \omega^{(S_{ab})^\pi}_{abc;\tilde{k}}\\
{\omega^\dagger}^{(S_{de})^\pi}_{\tilde{k};dez}&\rho_{\tilde{k}}\\
\end{matrix}
\right)~.
\end{equation}
The expression of the parity-symmetric $\omega$ as a function of the 2DM is:
\begin{equation}
\omega^{(S_{ab})^\pi}_{abc;\tilde{k}} = \frac{[S_{ab}]}{\sqrt{2}}(-1)^{1+S_{ab}}\sqrt{(\tilde{k}\bar{c})}~^{\mathcal{T}_2}N^{\frac{1}{2}\tilde{k}}_{ab(S_{ab})c}\frac{\sum_{\pi'}\Gamma^{S_{ab}K_{ab}^{\pi'}}_{ab;\tilde{k}\bar{c}}}{{2~^\Gamma N^{K_{ab}}_{ab}~^\Gamma N^{K_{ab}}_{\tilde{k}\bar{c}}}}~,
\end{equation}
for $0<\tilde{k}<\pi$ and:
\begin{align}
\nonumber\omega^{(S_{ab})^\pi}_{abc;\tilde{k}} =& \frac{[S_{ab}]}{\sqrt{2}}(-1)^{1+S_{ab}}~^{\mathcal{T}_2}N^{\frac{1}{2}\tilde{k}}_{ab(S_{ab})c}
\left(
\frac{\sqrt{\tilde{k}\bar{c}}\sum_{\pi'}\Gamma^{S_{ab}K_{ab}^{\pi'}}_{ab;\tilde{k}\bar{c}}}{{2~^\Gamma N^{K_{ab}}_{ab}~^\Gamma N^{K_{ab}}_{\tilde{k}\bar{c}}}}\right.\\
&\left.\qquad\qquad\qquad\qquad\qquad\qquad\qquad\qquad +\pi\frac{\sqrt{(\tilde{k}{c})}\sum_{\pi'}\Gamma^{S_{ab}K_{\bar{a}\bar{b}}^{\pi'}}_{\bar{a}\bar{b};\tilde{k}c}}{{2~^\Gamma N^{K_{\bar{a}\bar{b}}}_{\bar{a}\bar{b}}~^\Gamma N^{K_{\bar{a}\bar{b}}}_{\tilde{k}{c}}}}
\right)~,
\end{align}
for $\tilde{k}=\pi$ or 0.
   
The listing of all the above expressions is quite tedious but must be done for completeness and future work in this area. The extra bookkeeping connected with including additional symmetries pays off, however, and leads to considerable speedup factors.

\chapter{\label{applications}Applications}
In this Chapter the $N$-representability constraints and semidefinite programming algorithms developed in the previous Chapters are applied to some physical systems. In the first Section, the spin and angular momentum symmetry decomposition discussed in Section \ref{sym_atom} is used to study the isoelectronic series of Beryllium, Neon and Silicon, see also \cite{atomic}. The next Section deals with the dissociation of diatomic molecules, and the failure of the standard $N$-representability conditions to describe the dissociation limit \cite{helen_1}. An analysis is made as to why these constraints fail, and new conditions are derived to fix this pathological behaviour \cite{qsep,helen_2}. In the last Section the one-dimensional Hubbard model is studied, exploiting all the symmetries discussed in Section \ref{hub_sym}. The performance of the different constraints discussed in Chapter \ref{n_rep} is studied for various fillings of the lattice, and it is found that higher-order constraints are needed to correctly describe the strong-correlation limit.

\section{\label{isoelectronic_series}The isoelectronic series of Beryllium, Neon and Silicon}
The isoelectronic series of atomic systems consitute a well-known benchmark problem in electronic structure theory, and a good test for the performance of any many-body method. The atomic Hamiltonian is given by Eq.~(\ref{ham_atom}). For a fixed number of electrons $N$, one can consider the central charge $Z$ as a variable, and this range of systems is called the $N$-electron atomic isoelectronic series, though usually it is named after the neutral atom species, {\it e.g.} the Beryllium series for $N=4$. If the interaction between the electrons is neglected, we recover the hygdrogenic Hamiltonian. An interesting thing about the Hydrogenic Hamiltonian is that there is an accidental symmetry, which leads to an additional degeneracy in the single-particle spectrum, corresponding to the main quantum number $n$. When the interelectronic interaction is switched on, however, this symmetry is broken and the degeneracy disappears. Using the electronic series, one can look at the transition between these two regimes. Starting from the neutral atom, $Z=N$, the central charge is increased, which is equivalent to scaling down the electron interactions. In the limit $Z\rightarrow\infty$ the hydrogenic Hamiltonian is recovered. For Beryllium, the incipient degeneracy of the $2s$ and the $2p$ levels causes the ground state to be increasingly multireference as $Z$ increases (as is shown in Fig.~\ref{beryl}), which is hard to describe for many methods, including Hartree-Fock and DFT.
\begin{figure}
\centering
\includegraphics[scale=0.5]{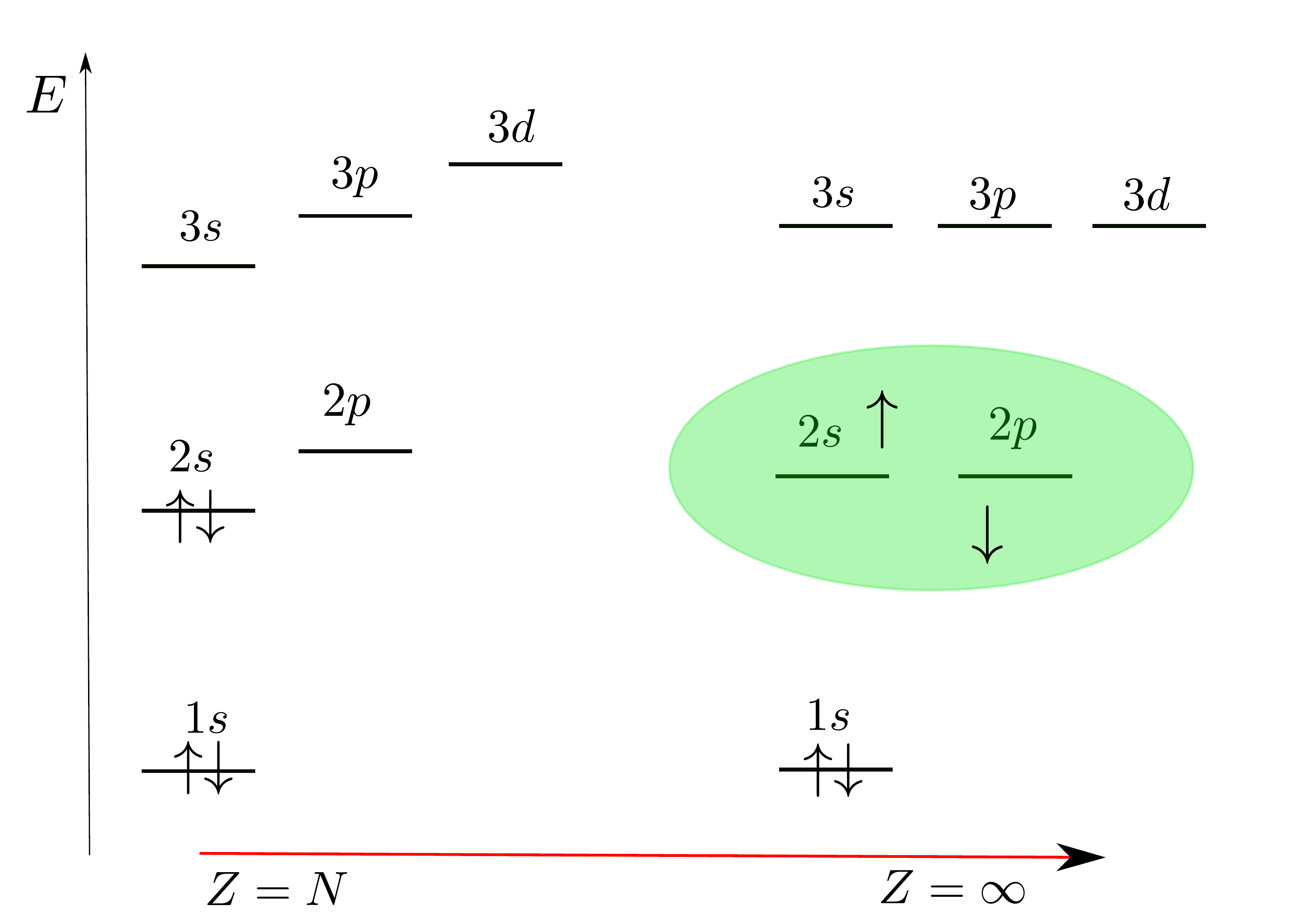}
\caption{\label{beryl} Schematic representation of the influence of the increase of $Z$ for the ground state of Beryllium. When $Z=N$, the ground state is dominated by a single reference $(1s)^2(2s)^2$ singlet. The admixture of the $(1s)^2(2p)^2$ configuration is small, but increases and becomes competitive as $Z\rightarrow\infty$.}
\end{figure}
Neon has ten electrons, and the ground state therefore has a closed shell $(1s)^2(2s)^2(2p)^6$ configuration, which means we don't expect any near-degeneracy problems. For Silicon which has 14 electrons in a $(1s)^2(2s)^2(2p)^6(3s)^2(3p)^2$ configuration, there is not only a near-degeneracy problem, but the additional complication that the ground state is not a spin and angular momentum singlet.

In this study the focus lies on three issues: the performance of the standard $\mathcal{IQG}$ $N$-representability conditions in multireference situations, the quality of the variationally obtained 2DM, which we inspect by looking at different physical properties obtainable from the 2DM. The third issue is the basis set dependence of the results, which we are able to check because we are not limited to small basis sets through the use of spin and angular momentum symmetry. The specific SDP algorithm used in this study is the simple dual-only potential reduction program explained in Section \ref{pr_sdp}.
\subsection{Spin and angular momentum constraints}
Because we know the ground-state spin and angular momentum of the systems studied, we can improve results by restricting our search to these symmetry sectors, as discussed in Chapter~\ref{symmetry}. This is done by imposing linear equality constraints, as explained in Section \ref{pr_sdp_le}, the explicit form of which is derived in this Section.
\subsubsection{\label{spinconstraints_singlet}Imposing the spin constraints for $\mathcal{S} = 0$}
The spin-coupled form of the $\hat{\mathcal{S}}_z$ operator can be written as:
\begin{equation}
\label{S_Z}
\hat{\mathcal{S}}_z = \frac{1}{\sqrt{2}}\sum_a\left[a^\dagger_{a}\otimes\tilde{a}_{a}\right]^1_0~,
\end{equation}
which is a particle-hole operator as defined in Eq.~(\ref{ph_op_sc}). This operator lives in particle-hole space, and we can force the vector
\begin{equation}
\{\hat{\mathcal{S}}_z\}^S_{ab} = \frac{1}{\sqrt{2}}\delta_{S1}\delta_{ab}~,
\end{equation}
to be an eigenvector of $\mathcal{G}(\Gamma)$ with eigenvalue zero. This can be seen from the fact that
\begin{equation}
\sum_c\mathcal{G}(\Gamma)^1_{ab;cc} = \sum_i w_i\bra{\Psi_i^N}\left[a^\dagger_a \otimes a^\dagger_b\right]^S \hat{\mathcal{S}}_z\ket{\Psi_i^N} = 0~,
\end{equation}
when the $\ket{\Psi_i^N}$ are spin singlets.
In doing this we automatically impose the same constraints on $\mathcal{G}(\Gamma)$ for $\mathcal{S}_x$ and $\mathcal{S}_y$ due to the threefold degeneracy of the $S=1$ block of the 2DM. It can be seen that in this case the expectation value of the total spin is zero,
\begin{equation}
\bra{\Psi^N}\hat{\mathcal{S}}^2\ket{\Psi^N} = \bra{\Psi^N}\hat{\mathcal{S}}_x^2+\hat{\mathcal{S}}_y^2+\hat{\mathcal{S}}_z^2\ket{\Psi^N} =0~.
\end{equation}
So the condition to be imposed on the density matrix becomes:
\begin{equation}
\label{50}
\sum_S [S]^2\left[\frac{1}{2}\frac{1}{N-1} - (-1)^S\left\{\begin{matrix}\frac{1}{2} & \frac{1}{2} & 1 \\ \frac{1}{2} & \frac{1}{2} & S\end{matrix}\right\}\right]\sum_b\Gamma^S_{ab;cb} = 0~.
\end{equation}
We will call this necessary condition for $\Gamma$ to be derivable from a singlet wave-function ensemble the projection of a two-particle density matrix on a spin-singlet state. It can be seen that there are as many constraint matrices as the dimension of 1DM space. Because of the zero eigenvalues in the $\mathcal{G}$ matrix the projected density matrix is on the edge of the feasible region during the whole of the minimization process, and as a result, the cost function is infinity. This is easily circumvented by taking the pseudo-inverse of the $\mathcal{G}$ matrix, which excludes the $\mathcal{S}_z$-state from the inversion process.

The linear constraints for imposing the spin-singlet condition are given by:
\begin{equation}
\forall ~~ k \leq l \qquad:\qquad\mathrm{Tr}~\Gamma\left(^{[kl]}E\right)= 0~, 
\end{equation}
in which the constraint matrices $ ^{[kl]}E$ have the following form:
\begin{eqnarray}
\nonumber^{[kl]}E^{S}_{ab;cd} &=& \left({}^{[kl]}f^{S}_{ab;cd}\right) + (-1)^S \left({}^{[kl]}f^{S}_{ba;cd}\right) + (-1)^S \left({}^{[kl]}f^{S}_{ab;dc}\right)+ \left({}^{[kl]}f^{S}_{ba;dc}\right) \\
&&+ \left({}^{[kl]}f^{S}_{cd;ab}\right) + (-1)^S \left({}^{[kl]}f^{S}_{dc;ab}\right)  + (-1)^S \left({}^{[kl]}f^{S}_{cd;ba}\right) + \left({}^{[kl]}f^{S}_{dc;ba}\right)~,
\end{eqnarray}
with
\begin{equation}
^{[kl]}f^{S}_{ab;cd} = \left[\frac{1}{2}\frac{1}{N-1} - (-1)^S\left\{\begin{matrix}\frac{1}{2} & \frac{1}{2} & 1 \\ \frac{1}{2} & \frac{1}{2} & S\end{matrix}\right\}\right]\delta_{ak}\delta_{cl}\delta_{bd}~.
\end{equation}
\subsubsection{Imposing the spin constraints for $\mathcal{S}\neq 0$}
For higher-spin multiplets we use the spin-averaged ensemble (see Section \ref{spin_averaged_ensemble}), in which the 2DM has the same simple structure as for the singlet case.
The expectation value of the $\hat{\mathcal{S}}^2$ spin operator is forced to be exact, using the linear constraint
\begin{equation}
\mathrm{Tr}~\Gamma \{\hat{\mathcal{S}}^2\} = \mathcal{S}(\mathcal{S}+1)~,
\end{equation}
where  $\{\hat{\mathcal{S}}^2\}$ is the two-particle matrix representation of the $\hat{\mathcal{S}}^2$ operator,
\begin{equation}
\{\hat{\mathcal{S}}^2\}^{S}_{ab;cd} = \left[\frac{3}{2}\left(\frac{2-N}{N-1}\right) + S(S+1)\right]
\left(\delta_{ac}\delta_{bd} + (-1)^S\delta_{ad}\delta_{bc}\right)~.
\end{equation}
There is only one linear constraint for nonzero spin, in contrast to the numerous constraints for the projection on a singlet state.
It can therefore be expected that the spin constraints ({\it i.e.} the constraints on the 2DM ensuring that it is derivable from a wave function with good total spin) are less accurate than those for the singlet case. It is, in fact, known how to cure this situation \cite{maz_spin} by considering not the
spin-averaged ensemble but rather the 2DM derived from the highest-weight member ($\mathcal{M}=\mathcal{S}$) of the multiplet.
Similar to the spin-singlet projection, one can then impose the condition that, since the spin-raising ladder operator $\hat{\mathcal{S}}_+$ destroys the wave function, the $\mathcal{G}(\Gamma)$ matrix must have a zero eigenvalue (with an eigenvector in particle-hole space corresponding to the $\hat{\mathcal{S}}_+$ operator).
In such a highest-weight scheme, the spin restrictions for the $\mathcal{S}\neq 0$ case are put on the same footing as for the singlet case; in fact, the highest-weight and the spin-averaged ensemble scheme are equivalent for the singlet case.
However, the highest-weight scheme for $\mathcal{S}\neq 0$ requires one to keep track of more matrices and is computationally more demanding by about a factor of 10. We therefore used the ensemble scheme even for the nonsinglets ({\it i.e.} the Si atom), though we checked some cases with the highest-weight method for the spin.
\subsubsection{Spin and angular momentum projection}
Orbital angular momentum is an additional conserved quantum number in atomic systems. The same principles as in the previous Section can be used, but the implementation is a bit more complicated. It can be shown that in a spin and angular momentum coupled basis the $z$-projections of $\mathcal{S}$ and $\mathcal{L}$ become
\begin{eqnarray}
\mathcal{S}_z &=& \frac{1}{\sqrt{2}}\sum_{nl}[l]\left[a_{nl}^\dagger\otimes\tilde{a}_{nl}\right]^{\left(0^+1\right)}~,\\
\mathcal{L}_z &=& \sqrt{\frac{2}{3}}\sum_{nl}[l]\hat{l}\left[a_{nl}^\dagger\otimes\tilde{a}_{nl}\right]^{\left(1^+0\right)}~.
\end{eqnarray}
Following the same argument as before, it can be imposed that the density matrix is derivable from an eigenstate with zero eigenvalue of respectively the $\mathcal{S}$ and $\mathcal{L}$ operators when 
\begin{eqnarray}
\sum_{c}[l_c]\mathcal{G}\left(\Gamma\right)^{\left(0^+1\right)}_{ab;cc} &=& 0~,\\
\sum_{c}[l_c]\hat{l}_c\mathcal{G}\left(\Gamma\right)^{\left(1^+0\right)}_{ab;cc} &=& 0~.
\end{eqnarray}
This can be translated into linear constraints on the 2DM, given by:
\begin{equation}
\forall ~~ k \leq l \qquad:\qquad\mathrm{Tr}~\Gamma \left({}^{[kl]}_{\mathcal{J}}E\right)= 0~,
\end{equation}
where the constraint matrices $ ^{[kl]}E $ have the following form:
\begin{align}
\nonumber^{[kl]}_{~~\mathcal{J}}E^{(L^\pi S)}_{ab;cd} =& \left({}^{[kl]}_{~~\mathcal{J}}f^{(L^\pi S)}_{ab;cd}\right) + \pi(-1)^{L+S} \left({}^{[kl]}_{~~\mathcal{J}}f^{L^\pi S}_{ba;cd}\right) + \left({}^{[kl]}_{~~\mathcal{J}}f^{(L^\pi S)}_{ba;dc}\right) + \pi(-1)^{S + L} \left({}^{[kl]}_{~~\mathcal{J}}f^{(L^\pi S)}_{ab;dc}\right)\\
& + \left({}^{[kl]}_{~~\mathcal{J}}f^{L^\pi S}_{cd;ab}\right) + \pi(-1)^{S+L} \left({}^{[kl]}_{~~\mathcal{J}}f^{L^\pi S}_{dc;ab}\right) + \left({}^{[kl]}_{~~\mathcal{J}}f^{L^\pi S}_{dc;ba}\right) + \pi(-1)^{L+S} \left({}^{[kl]}_{~~\mathcal{J}}f^{L^\pi S}_{cd;ba}\right)~,
\end{align}
where $\mathcal{J}$ can mean either $\mathcal{S}$ or $\mathcal{L}$. The parity $\pi$ has been introduced in Section~\ref{sym_atom}. The precise expression for the $f$-coefficients reads:
\begin{align}
 ^{[kl]}_{~~\mathcal{S}}f^{(L^\pi S)}_{ab;cd} =& \delta_{l_kl_l}\left[\frac{1}{2}\frac{1}{N-1} - \frac{(-1)^S}{[l_k]}\left\{\begin{matrix}\frac{1}{2} & \frac{1}{2} & 1 \\ \frac{1}{2} & \frac{1}{2} & S\end{matrix}\right\}\right] \delta_{ak}\delta_{cl}\delta_{bd}~,\\
 ^{[kl]}_{~~\mathcal{L}}f^{(L^\pi S)}_{ab;cd} =& \delta_{l_kl_l}\left[\frac{\hat{l}^2_a}{N-1} - \frac{1}{2}\left(\hat{l}_a^2 + \hat{l}_b^2 - \hat{L}^2\right)\right] \delta_{ak}\delta_{cl}\delta_{bd}~.
\end{align}
The projection on nonzero spin and angular momentum is a less strict condition, using the appropriate ensemble averaged over the third-component of $\mathcal{L}$ (see Section~\ref{sym_atom}). The expectation values of $\mathcal{L}$ and $\mathcal{S}$ are projected on the desired values:
\begin{eqnarray}
\mathrm{Tr}~\Gamma \{\hat{\mathcal{S}}^2\} &=& \mathcal{S}\left(\mathcal{S} + 1\right)~,\\
\mathrm{Tr}~\Gamma \{\hat{\mathcal{L}}^2\} &=&  \mathcal{L}\left(\mathcal{L} + 1\right)~,
\end{eqnarray}
where the $\{\hat{\mathcal{S}}^2\}$ and $\{\hat{\mathcal{L}}^2\}$ are the two-particle matrix representations of the $\hat{\mathcal{S}^2}$ and $\hat{\mathcal{L}^2}$ operators respectively:
\begin{align}
\{\hat{\mathcal{S}}^2\}^{(L^\pi S)}_{ab;cd} =& \left(\delta_{ac}\delta_{bd} + \pi(-1)^{L+S}\delta_{ad}\delta_{bc}\right)\left[\frac{3}{2}\left(\frac{2-N}{N-1}\right) + \hat{S}^2\right] ~,\\
\{\hat{\mathcal{L}}^2\}^{(L^\pi S)}_{ab;cd} =& \left(\delta_{ac}\delta_{bd} + \pi(-1)^{L+S}\delta_{ad}\delta_{bc}\right)\left[\frac{2-N}{N-1}\left(\hat{l}_a^2 + \hat{l}_b^2\right) + \hat{L}^2\right] ~.
\end{align}
\subsection{Results and discussion}
Using the dual-only potential reduction algorithm, the isoelectronic series of Be, Ne and Si were calculated from the neutral atom up to a central charge $Z = 28$. Beryllium and Neon are both elements with a singlet ground state. In the Silicon ground state the total spin and angular momentum are both one, which allows us to assess the quality of the spin and angular momentum constraints for ${\mathcal{S},\mathcal{L}} \neq 0$. In order to study the basis set dependence, the properties of the ground state of the Be and Ne series were calculated in a cc-pVDZ, a cc-pVTZ and a cc-pVQZ basis set \cite{dunning}. The Si series was  only calculated in a cc-pVDZ and a cc-pVTZ basis set \cite{woon}. Spherical harmonic (and not Cartesian) basis functions are used throughout this Section. With the density matrices obtained from the v2DM, several properties were studied. These are compared to estimates for non-relativistic energies based on experimental data \cite{davidson,chakravorty}, and to the results of coupled-cluster (CCSD) calculations, and in some cases, to full-configuration-interaction (full-CI) calculations. 

The basis functions used were those of the neutral atom, but with a rescaling $r\rightarrow rZ/N$ for the positive ions with $Z > N$.
The CCSD and full-CI results were obtained using the MOLPRO program \cite{molpro}.

\subsubsection{Ground-state energy}
The ground-state energies, calculated with various methods and in the different basis sets, are shown in Tables \ref{gse_Be_DZ}, \ref{gse_Be_TZ} and \ref{gse_Be_QZ} for Beryllium, Tables \ref{gse_Ne_DZ}, \ref{gse_Ne_TZ} and \ref{gse_Ne_QZ} for Neon and Tables \ref{gse_Si_DZ} and \ref{gse_Si_TZ} for the Silicon isoelectronic series. Even in the best case (Be in cc-pVQZ), the calculated energies are at least 20 mHartree removed from the experimental estimate in \cite{davidson,chakravorty}. This is due to the difficulty of describing the interelectronic cusp in the exact wave function using finite single-particle basis sets.
\begin{table}
\centering
\caption{\label{gse_Be_DZ}Ground-state energies of the Be series in the cc-pVDZ basis set using different methods.}
\begin{tabular}{|c|ccccc|}
\hline
Z&v2DM&HF&CCSD&full-CI&expt.\\
\hline
4&-14.617473&-14.572338&-14.617369&-14.61741&-14.66736\\
5&-24.275712&-24.216056&-24.27566&-24.275684&-24.34892\\
6&-36.387458&-36.316267&-36.387421&-36.387439&-36.53493\\
7&-50.940925&-50.860695&-50.940896&-50.940909&-51.22284\\
8&-67.931909&-67.844323&-67.931884&-67.931896&-68.41171\\
9&-87.358767&-87.265015&-87.358746&-87.358755&-88.10113\\
10&-109.22078&-109.12175&-109.22076&-109.22077&-110.29089\\
11&-133.51761&-133.414&-133.51759&-133.5176&-134.98088\\
12&-160.24908&-160.14145&-160.24906&-160.24907&-162.17102\\
13&-189.41511&-189.30392&-189.41509&-189.4151&-191.86127\\
14&-221.01564&-220.90129&-221.01563&-221.01564&-224.0516\\
15&-255.05067&-254.93347&-255.05066&-255.05066&-258.742\\
16&-291.52018&-291.4004&-291.52017&-291.52017&-295.93244\\
17&-330.42417&-330.30206&-330.42416&-330.42416&-335.62293\\
18&-371.76264&-371.63841&-371.76263&-371.76263&-377.81344\\
19&-415.5356&-415.40942&-415.53559&-415.53559&-422.50398\\
20&-461.74304&-461.61508&-461.74303&-461.74304&-469.69455\\
21&-510.38498&-510.25537&-510.38498&-510.38498&-519.38513\\
22&-561.46143&-561.3303&-561.46142&-561.46142&-571.57572\\
23&-614.97237&-614.83983&-614.97237&-614.97237&-626.26633\\
24&-670.91783&-670.78398&-670.91783&-670.91783&-683.45695\\
25&-729.29781&-729.16274&-729.2978&-729.2978&-743.14758\\
26&-790.1123&-789.97609&-790.11229&-790.1123&-805.33822\\
27&-853.36132&-853.22404&-853.36131&-853.36131&-870.02886\\
28&-919.04486&-918.90659&-919.04486&-919.04486&-937.21951\\
\hline
\end{tabular}
\end{table}
\begin{table}
\centering
\caption{\label{gse_Be_TZ}The ground-state energies of the Be series in the cc-pVTZ basis set using different methods.}
\begin{tabular}{|c|ccccc|}
\hline
Z&v2DM&HF&CCSD&full-CI&expt.\\
\hline
4&-14.625431&-14.572873&-14.623559&-14.62381&-14.66736\\
5&-24.300695&-24.234557&-24.299207&-24.29943&-24.34892\\
6&-36.473162&-36.394215&-36.471944&-36.47214&-36.53493\\
7&-51.137349&-51.045734&-51.136311&-51.136486&-51.22284\\
8&-68.290965&-68.186797&-68.290046&-68.290206&-68.41171\\
9&-87.932793&-87.816285&-87.931958&-87.932107&-88.10113\\
10&-110.06209&-109.93353&-110.06132&-110.06146&-110.29089\\
11&-134.67837&-134.53811&-134.67764&-134.67778&-134.98088\\
12&-161.7813&-161.62969&-161.78061&-161.78074&-162.17102\\
13&-191.37066&-191.20808&-191.37&-191.37011&-191.86127\\
14&-223.44627&-223.27309&-223.44563&-223.44574&-224.0516\\
15&-258.00801&-257.82461&-258.0074&-258.00751&-258.742\\
16&-295.05582&-294.86255&-295.05522&-295.05533&-295.93244\\
17&-334.58961&-334.38682&-334.58904&-334.58913&-335.62293\\
18&-376.60935&-376.39737&-376.60879&-376.60888&-377.81344\\
19&-421.115&-420.89414&-421.11445&-421.11454&-422.50398\\
20&-468.10654&-467.87712&-468.106&-468.10609&-469.69455\\
21&-517.58394&-517.34625&-517.58342&-517.5835&-519.38513\\
22&-569.5472&-569.30152&-569.54669&-569.54677&-571.57572\\
23&-623.99631&-623.7429&-623.99581&-623.99589&-626.26633\\
24&-680.93126&-680.67038&-680.93077&-680.93084&-683.45695\\
25&-740.35204&-740.08394&-740.35156&-740.35163&-743.14758\\
26&-802.25866&-801.98356&-802.25818&-802.25825&-805.33822\\
27&-866.65111&-866.36925&-866.65064&-866.65071&-870.02886\\
28&-933.5294&-933.24098&-933.52894&-933.529&-937.21951\\
\hline
\end{tabular}
\end{table}
\begin{table}
\centering
\caption{\label{gse_Be_QZ}Ground-state energies of the Be series in the cc-pVQZ basis set using different methods.}
\begin{tabular}{|c|ccccc|}
\hline
Z&v2DM&HF&CCSD&full-CI&expt.\\
\hline
4&-14.642807&-14.572968&-14.639589&-14.640124&-14.66736\\
5&-24.321254&-24.236385&-24.317643&-24.31822&-24.34892\\
6&-36.500934&-36.40257&-36.497178&-36.497761&-36.53493\\
7&-51.177145&-51.065945&-51.173335&-51.173918&-51.22284\\
8&-68.347448&-68.22364&-68.343621&-68.344203&-68.41171\\
9&-88.010503&-87.87405&-88.00667&-88.007254&-88.10113\\
10&-110.16555&-110.01625&-110.16171&-110.1623&-110.29089\\
11&-134.81213&-134.64967&-134.8083&-134.80889&-134.98088\\
12&-161.95&-161.77398&-161.94616&-161.94677&-162.17102\\
13&-191.57899&-191.38895&-191.57514&-191.57575&-191.86127\\
14&-223.699&-223.49443&-223.69514&-223.69577&-224.0516\\
15&-258.30998&-258.09031&-258.3061&-258.30675&-258.742\\
16&-295.4119&-295.17653&-295.40801&-295.40867&-295.93244\\
17&-335.00476&-334.75303&-335.00085&-335.00153&-335.62293\\
18&-377.08857&-376.81977&-377.08463&-377.08533&-377.81344\\
19&-421.66334&-421.37673&-421.65938&-421.66011&-422.50398\\
20&-468.72912&-468.42388&-468.72513&-468.72588&-469.69455\\
21&-518.28593&-517.96121&-518.28191&-518.28269&-519.38513\\
22&-570.33383&-569.9887&-570.32977&-570.33058&-571.57572\\
23&-624.87286&-624.50634&-624.86877&-624.86961&-626.26633\\
24&-681.90309&-681.51414&-681.89895&-681.89983&-683.45695\\
25&-741.42459&-741.01206&-741.4204&-741.42132&-743.14758\\
26&-803.43742&-803.00013&-803.43318&-803.43414&-805.33822\\
27&-867.94167&-867.47832&-867.93738&-867.93838&-870.02886\\
28&-934.93744&-934.44663&-934.93309&-934.93413&-937.21951\\
\hline
\end{tabular}
\end{table}
\begin{table}
\centering
\caption{\label{gse_Ne_DZ}The ground-state energies of the Ne series in the cc-pVDZ basis set using different methods.}
\begin{tabular}{|c|ccccc|}
\hline
Z&v2DM&HF&CCSD&full-CI&expt.\\
\hline
10&-128.70843&-128.48878&-128.67964&-128.68088&-128.9376\\
11&-161.80049&-161.59591&-161.77283&-161.77411&-162.0659\\
12&-198.88784&-198.70208&-198.86199&-198.86309&-199.2204\\
13&-239.97194&-239.80393&-239.94802&-239.94883&-240.3914\\
14&-285.04223&-284.88894&-285.02004&-285.02061&-285.5738\\
15&-334.08381&-333.94195&-334.06299&-334.06338&-334.7642\\
16&-387.08194&-386.94882&-387.06219&-387.06246&-387.9608\\
17&-444.02427&-443.89781&-444.00531&-444.00551&-445.1622\\
18&-504.90101&-504.77972&-504.88268&-504.88282&-506.3673\\
19&-569.70474&-569.58754&-569.68689&-569.68701&-571.5754\\
20&-638.42981&-638.31595&-638.41239&-638.41248&-640.7891\\
21&-711.07205&-710.96091&-711.05497&-711.05505&-713.9988\\
22&-787.6282&-787.51937&-787.61143&-787.6115&-791.2132\\
23&-868.09589&-867.98895&-868.07932&-868.07938&-872.4291\\
24&-952.47304&-952.36781&-952.45674&-952.45679&-957.6463\\
25&-1040.7584&-1040.6545&-1040.7422&-1040.7422&-1046.8646\\
26&-1132.9505&-1132.8479&-1132.9345&-1132.9345&-1140.0838\\
27&-1229.0485&-1228.9471&-1229.0327&-1229.0328&-1237.3039\\
28&-1329.0518&-1328.9514&-1329.0361&-1329.0361&-1338.5247\\
\hline
\end{tabular}
\end{table}
\begin{table}
\centering
\caption{\label{gse_Ne_TZ}Ground-state energies of the Ne series in the cc-pVTZ basis set using different methods.}
\begin{tabular}{|c|cccc|}
\hline
Z&v2DM&HF&CCSD&expt.\\
\hline
10&-128.86088&-128.53186&-128.81081&-128.9376\\
11&-161.97703&-161.65496&-161.92829&-162.0659\\
12&-199.11372&-198.79861&-199.06598&-199.2204\\
13&-240.26728&-239.9582&-240.22028&-240.3914\\
14&-285.43166&-285.12786&-285.38525&-285.5738\\
15&-334.6021&-334.30313&-334.55624&-334.7642\\
16&-387.77553&-387.48107&-387.73017&-387.9608\\
17&-444.95002&-444.6598&-444.9051&-445.1622\\
18&-506.12426&-505.83808&-506.07977&-506.3673\\
19&-571.29734&-571.01502&-571.25329&-571.5754\\
20&-640.46864&-640.18995&-640.42496&-640.7891\\
21&-713.63756&-713.36231&-713.59424&-713.9988\\
22&-790.80365&-790.53161&-790.76063&-791.2132\\
23&-871.96637&-871.69743&-871.9237&-872.4291\\
24&-957.12544&-956.85936&-957.08305&-957.6463\\
25&-1046.2804&-1046.0171&-1046.2383&-1046.8646\\
26&-1139.431&-1139.1702&-1139.3892&-1140.0838\\
27&-1236.5769&-1236.3184&-1236.5353&-1237.3039\\
28&-1337.7178&-1337.4616&-1337.6764&-1338.5247\\
\hline
\end{tabular}
\end{table}
\begin{table}
\centering
\caption{\label{gse_Ne_QZ}The ground-state energies of the Ne series in the cc-pVQZ basis set using different methods.}
\begin{tabular}{|c|cccc|}
\hline
Z&v2DM&HF&CCSD&expt.\\
\hline
10&-128.92686&-128.54347&-128.87106&-128.9376\\
11&-162.05038&-161.67155&-161.99595&-162.0659\\
12&-199.19913&-198.82303&-199.14502&-199.2204\\
13&-240.36392&-239.98965&-240.30977&-240.3914\\
14&-285.53886&-285.16605&-285.48453&-285.5738\\
15&-334.72067&-334.3492&-334.66615&-334.7642\\
16&-387.90757&-387.53731&-387.85281&-387.9608\\
17&-445.09826&-444.72921&-445.04331&-445.1622\\
18&-506.29194&-505.92402&-506.23681&-506.3673\\
19&-571.48788&-571.12105&-571.43261&-571.5754\\
20&-640.68561&-640.31973&-640.63013&-640.7891\\
21&-713.88444&-713.51958&-713.8289&-713.9988\\
22&-791.08414&-790.72018&-791.02848&-791.2132\\
23&-872.28418&-871.92117&-872.22853&-872.4291\\
24&-957.48456&-957.12224&-957.42872&-957.6463\\
25&-1046.6846&-1046.3231&-1046.6288&-1046.8646\\
26&-1139.8845&-1139.5236&-1139.8285&-1140.0838\\
27&-1237.0837&-1236.7235&-1237.0277&-1237.3039\\
28&-1338.2821&-1337.9226&-1338.2261&-1338.5247\\
\hline
\end{tabular}
\end{table}
\begin{table}
\centering
\caption{\label{gse_Si_DZ}Ground-state energies of the Si series in the cc-pVDZ basis sets using different methods. The results under v2DM were calculated using the ensemble avaraged spin projection, those under v2DM$^*$ were calculated using the maximal weight method.}
\begin{tabular}{|c|ccccc|}
\hline
&v2DM&v2DM$^*$&HF&CCSD&expt.\\
\hline
14&-288.93962&-288.92921&-288.84644&-288.91895&-289.359\\
15&-340.36765&&-340.27338&-340.34709&-340.872\\
16&-396.10801&&-396.01679&-396.08749&-396.869\\
17&-456.09635&&-456.00926&-456.0759&-457.337\\
18&-520.29362&-520.27860&-520.21067&-520.27348&-522.269\\
19&-588.68067&&-588.60149&-588.66097&-591.66\\
20&-661.24791&&-661.17202&-661.22871&-665.507\\
21&-737.99017&&-737.91714&-737.97148&-743.808\\
22&-818.90446&-818.88808&-818.83388&-818.88621&-826.559\\
23&-903.98889&&-903.92042&-903.97105&-913.762\\
24&-993.24222&&-993.1756&-993.22475&--1005.413\\
25&-1086.6636&&-1086.5986&-1086.6465&-1101.513\\
26&-1184.2525&-1184.2381&-1184.1889&-1184.2357&-1202.061\\
27&-1286.0084&&-1285.9461&-1285.9919&-1307.057\\
28&-1391.9311&&-1391.8699&-1391.9147&-1416.5\\
\hline
\end{tabular}
\end{table}

\begin{table}
\centering
\caption{\label{gse_Si_TZ} The ground-state energies of the Si series in the cc-pVTZ basis sets using different methods. The results under v2DM were calculated using the ensemble avaraged spin projection.}
\begin{tabular}{|c|cccc|}
\hline
&v2DM&HF&CCSD&expt.\\
\hline
14&-289.02515&-288.85215&-288.9835&-289.359\\
15&-340.50472&-340.33467&-340.46205&-340.872\\
16&-396.43974&-396.27384&-396.39711&-396.869\\
17&-456.80372&-456.64236&-456.7617&-457.337\\
18&-521.58&-521.42294&-521.53858&-522.269\\
19&-590.75683&-590.60373&-590.7159&-591.66\\
20&-664.32396&-664.17433&-664.28328&-665.507\\
21&-742.2714&-742.12467&-742.23072&-743.808\\
22&-824.58949&-824.44532&-824.54882&-826.559\\
23&-911.27007&-911.12778&-911.22912&-913.762\\
24&-1002.3054&-1002.1648&-1002.2643&-1005.413\\
25&-1097.6895&-1097.5501&-1097.6482&-1101.513\\
26&-1197.4173&-1197.2789&-1197.3759&-1202.061\\
27&-1301.4847&-1301.347&-1301.4431&-1307.057\\
28&-1409.8882&-1409.7512&-1409.8466&-1416.5\\
\hline
\end{tabular}
\end{table}
\begin{figure}
\centering
\includegraphics[scale=0.5]{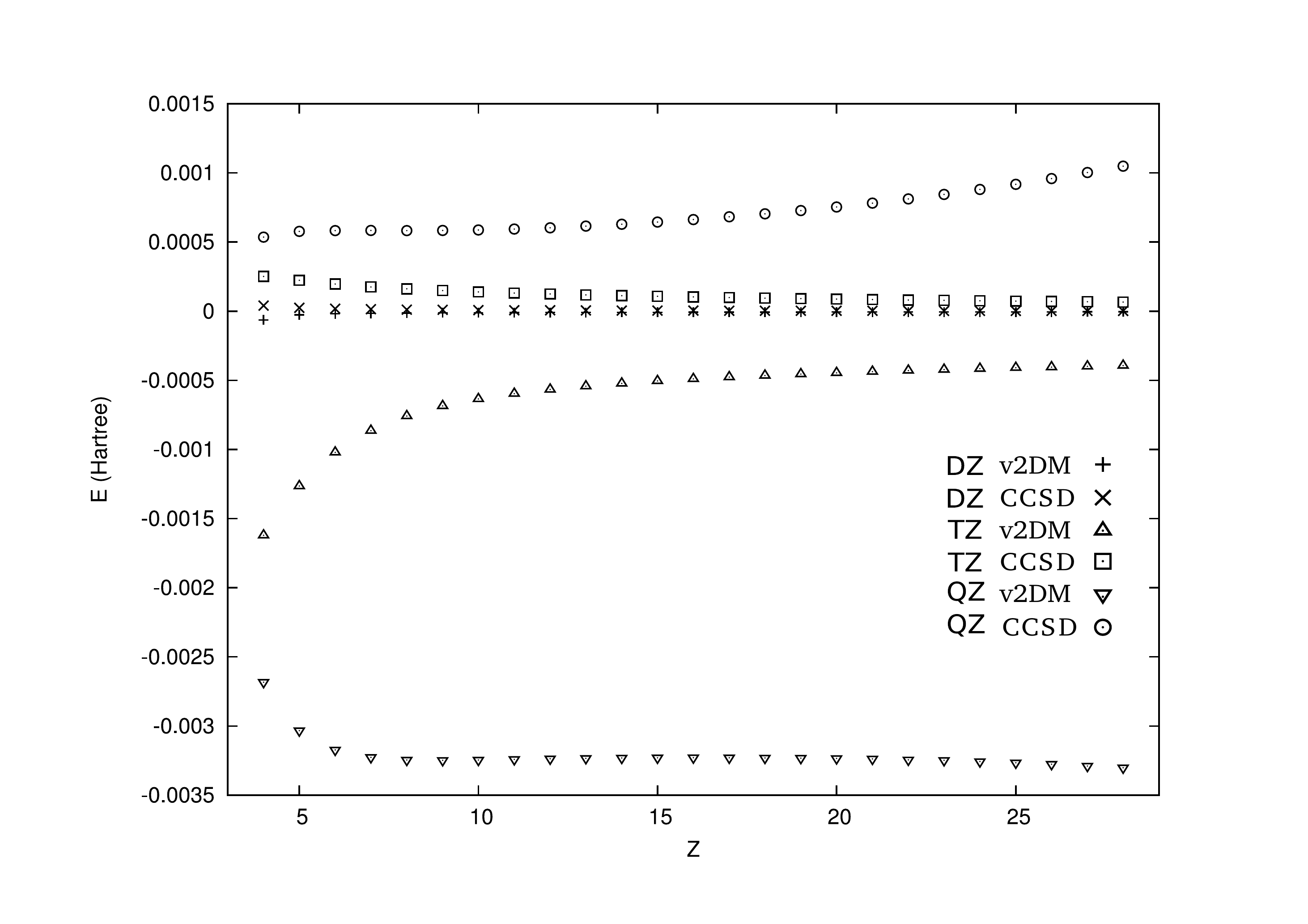}
\caption{\label{beryl_diff_fci} Difference between approximate (CCSD or v2DM) and full-CI energies for the Be series in all three basis sets.}
\end{figure}
More relevant is the difference between the v2DM (and CCSD) energies as compared to full-CI in the same basis set. This is shown in Figure~\ref{beryl_diff_fci} for the case of the Be series. Note that the CCSD energy is always above, the v2DM energy below, the full-CI energy. For v2DM, this simply reflects the nature of the variational problem. For the smallest cc-pVDZ basis set, v2DM and CCSD have about the same level of accuracy. The difference with full-CI grows as the basis set size increases for both CCSD and v2DM, but this effect is worse for the v2DM. 

As far as the $Z$-dependence is concerned, the trend differs markedly for the cc-pV(D,T)Z and for the cc-pVQZ basis set. As $Z$ increases there is a growing accuracy for the smaller basis sets in both CCSD and v2DM, whereas for cc-pVQZ the accuracy decreases for CCSD and becomes constant for v2DM. The reason for this difference is not clear, though it is probably connected to the incipient degeneracy of the $2s$ and $2p$ states and the quality of its description in the various basis sets, as is more fully explained in the next Section. It should be noted that the v2DM results are overall very accurate, even in the worst case ($Z=28$, cc-pVQZ) differing less than 3 mHartree from full-CI.     

For the Ne series, full-CI calculations were only possible in the cc-pVDZ basis. From the results collected in Tables~\ref{gse_Ne_DZ}, \ref{gse_Ne_TZ} and \ref{gse_Ne_QZ} it is seen that the v2DM accuracy is significantly less than for Be, the largest deviation to full-CI (28 mHartree) appearing for the neutral atom. This is actually comparable to v2DM results for molecules under $\mathcal{IQG}$ conditions, so it is likely that because of the small number of electrons the Be results are an exceptionally favorable case. This is also borne out by the Si results in Tables~\ref{gse_Si_DZ} and \ref{gse_Si_TZ}, showing a maximal deviation between CCSD and v2DM energies of 21 mHartree for the neutral atom.

\subsubsection{Correlation energy}
Correlation energies were calculated by taking the difference of v2DM or CCSD energies with the Hartree-Fock results in the same basis set. 
The results labeled ``experimental'' are the estimates in \cite{chakravorty}.
\paragraph{Beryllium series:}
in Fig.~\ref{beryl_corr_ener} the v2DM correlation energy is shown as a function of central charge $Z$ for the 
different basis sets. Note that on the plot the difference between the CCSD and full-CI correlation energies would not be visible. 
The experimental curve is linear in $Z$, as a direct consequence of the near-degeneracy of the ground state \cite{chakravorty}.
One can calculate a perturbative series expansion of the exact and Hartree-Fock energy in powers of $\frac{1}{Z}$; the corresponding series for the correlation 
energy starts with a constant if the hydrogenic ground state is nondegenerate, or with a linear term in $Z$ in case of degeneracy.   
The v2DM correlation energy does not follow this trend: it goes linear in the beginning, but becomes concave in the cc-pVDZ and cc-pVTZ basis, or convex in the cc-pVQZ basis. This failure, however, is not related to the v2DM method as the trend is the same in full-CI. It simply reflects the fact that the incipient degeneracy is not well described in these basis sets. This can also be seen by calculating the $Z=1$ hydrogen spectrum (corresponding to the 
$Z\rightarrow \infty$ situation, when the electron-electron interaction can be neglected) in the basis sets: the $2s$ and $2p$ energies are not degenerate, but differ by 5.8 mHartree (cc-pVDZ), 2.0 mHartree (cc-pVTZ) and -2.3 mHartree (cc-pVQZ). Note that for cc-pVQZ the $2p$ energy actually drops below the $2s$ energy, explaining the different (convex/concave) behavior of the 
curves. To make sure we also performed calculations in the cc-pVDZ basis after rescaling  ($r\rightarrow\alpha r$) it in such a way that the hydrogenic 
$2s$-$2p$ degeneracy is exact. In this basis the v2DM correlation energy (also shown in Fig.~\ref{beryl_corr_ener}) indeed has the correct linear behavior. It is clear from the above discussion that v2DM is indeed capable of providing accurate correlation energies in the presence of near-degeneracies, when other many-body techniques (like density functional theory or MP2) can fail.
\begin{figure}
\centering
\includegraphics[scale=0.6]{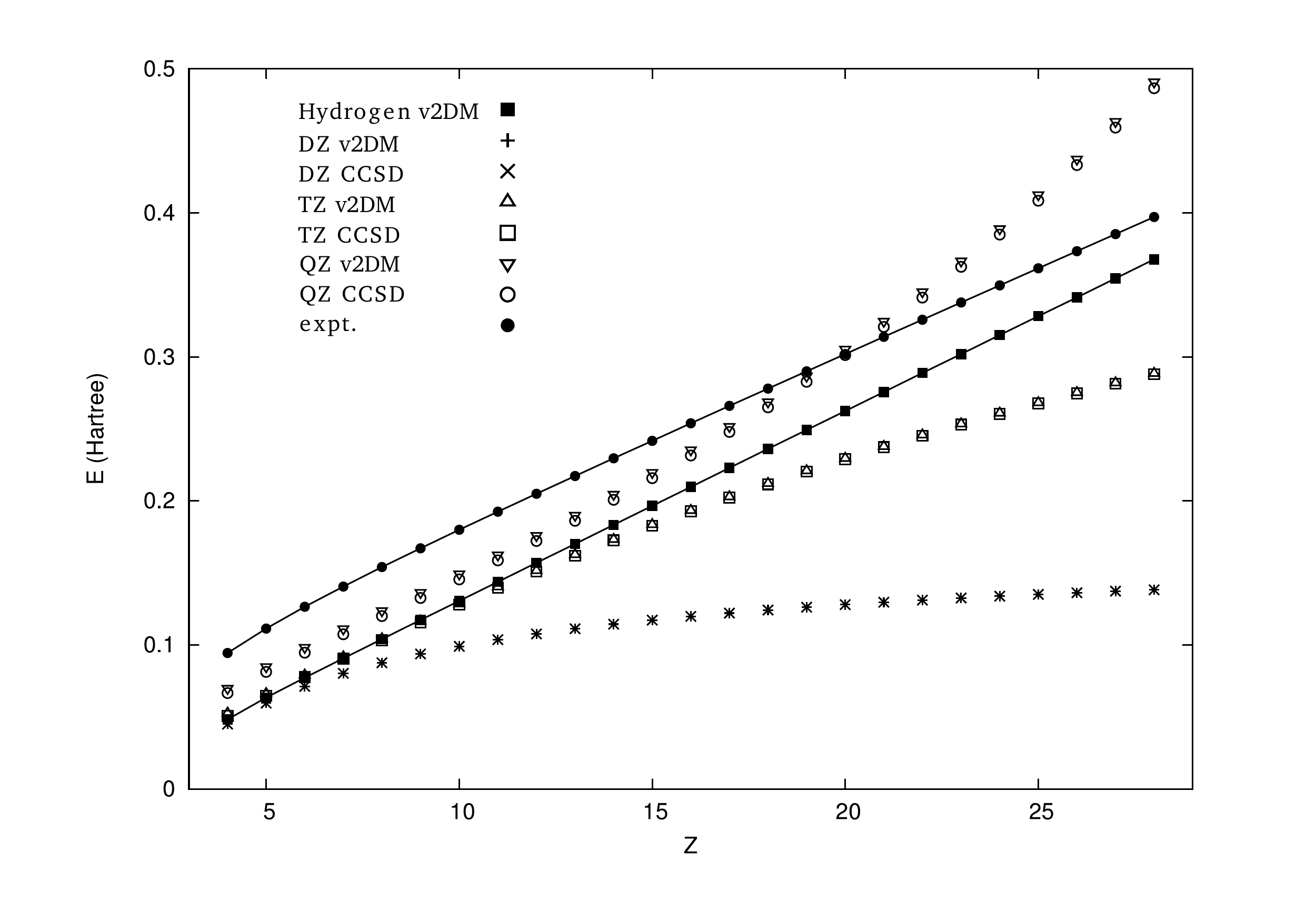}
\caption{\label{beryl_corr_ener}v2DM correlation energy for the Be series in all three basis sets, and in a rescaled basis set that exhibits hydrogen-like behaviour (degeneracy between the $2s$ and $2p$ level). For comparison, the CCSD and experimental values are also shown. Note that for the cc-pVDZ basis, CCSD and v2DM results coincide.}
\end{figure}
\paragraph{Neon series:}
\begin{figure}
\centering
\includegraphics[scale=0.6]{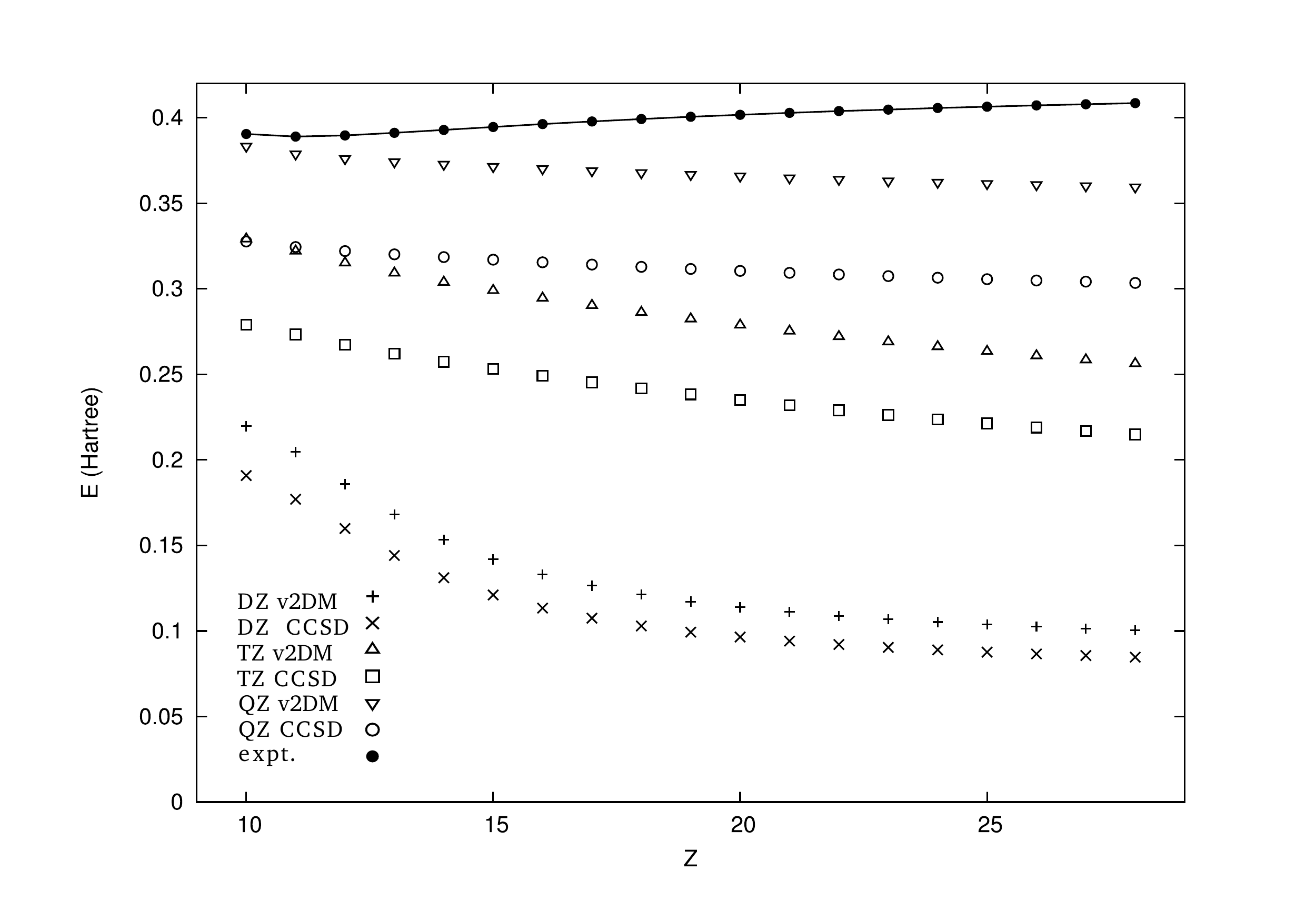}
\caption{\label{neon_corr_ener}v2DM correlation energy for the Ne series in all three basis sets. For comparison, the CCSD and experimental values are also shown.}
\end{figure}
in Fig. \ref{neon_corr_ener} the correlation energy is shown for all three basis sets as a  function of $Z$. Because Ne is a closed shell atom, there is no 
near-degeneracy for large $Z$ values and the exact correlation should be asymptotically constant in $Z$, as is indeed visible in the experimental curve. 
Due to basis set effects, this constant behavior is imperfectly realized, but the v2DM follows the same trends as CCSD for all basis sets.  
Note that the approximation to a constant behavior at large $Z$ is best for the largest basis set. The decrease in correlation energy for increasing $Z$, 
in contrast to the slight rise in the experimental correlation energy, can be attributed   
to the fact that the basis sets were optimized for the neutral atom. While the rescaling procedure fixes the nuclear cusp, the resulting basis set is 
obviously far from optimal for highly charged ions.
\paragraph{Silicon series:}
\begin{figure}
\centering
\includegraphics[scale=0.6]{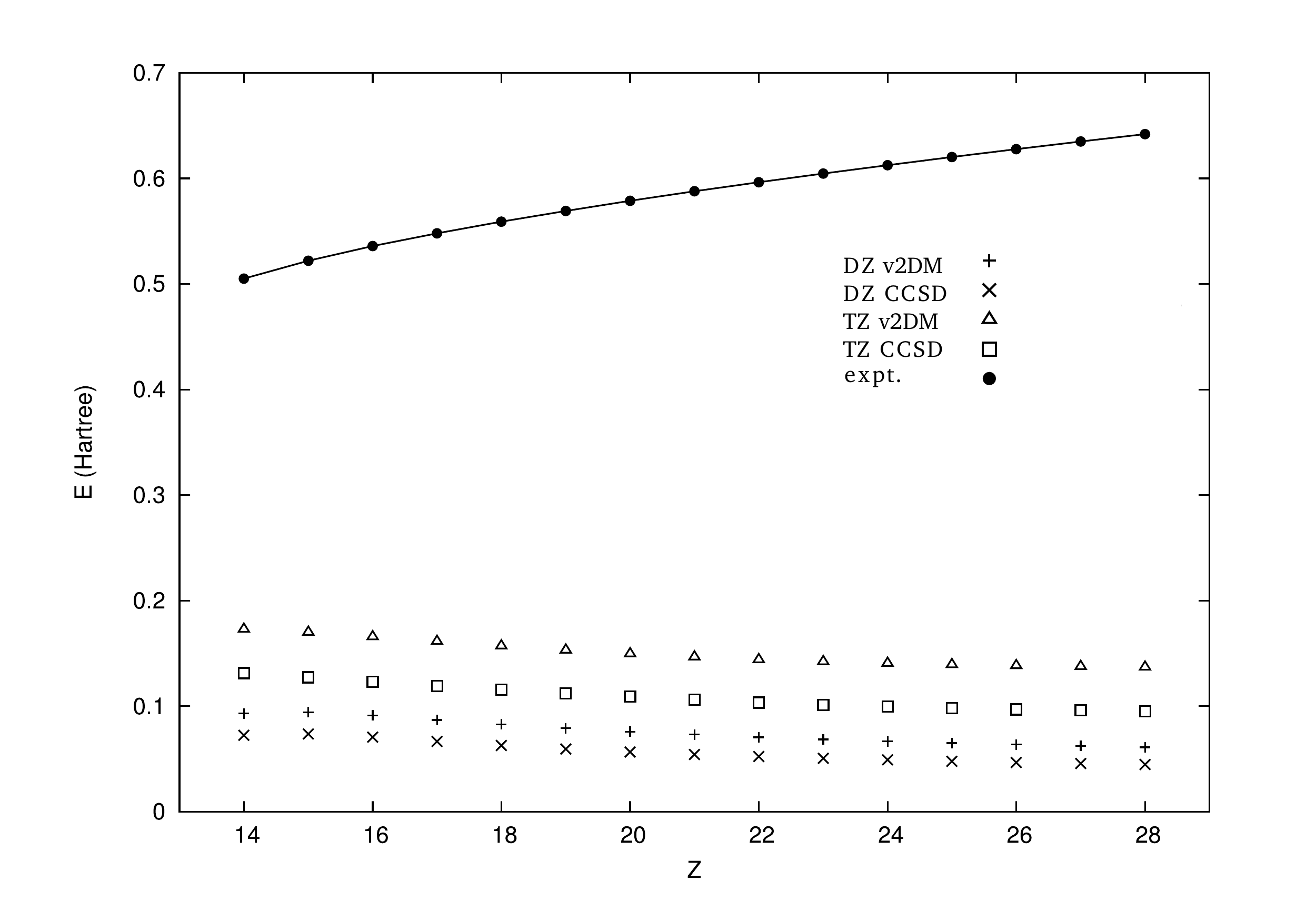}
\caption{\label{Silicon_corr_ener}v2DM correlation energy for the Si series in the cc-pVDZ and cc-pVTZ basis set. For comparison, the CCSD and experimental values are also shown.}
\end{figure}
for silicon, only the cc-pVDZ and cc-pVTZ basis have been used [Fig. \ref{Silicon_corr_ener}]. As was the case for Be, the theoretical linear rise with $Z$ is thwarted by 
imperfections in the basis sets. However the v2DM correlation energy closely tracks the CCSD one. The Si ground state is a spin triplet. The results in Tables~\ref{gse_Si_DZ} and \ref{gse_Si_TZ} have been obtained using the spin-averaged ensemble, as explained in Section~\ref{spin_averaged_ensemble}. In order to assess the quality of the spin constraints, we have also performed calculations using the highest-weight method, for $Z$=14, 18, 22, and 26 with the cc-pVDZ basis set, the resulting energies are also reported in Table~\ref{gse_Si_DZ}. The energy differences between the approaches are sizeable, with differences as large as 20 mHartree, reflecting the weaker nature of the spin constraints imposed in the spin-averaged scheme. However, the discrepancy between the two approaches is stable for increasing $Z$. 
\subsubsection{Ionization energies}
\begin{figure}
\centering
\includegraphics[scale=0.5]{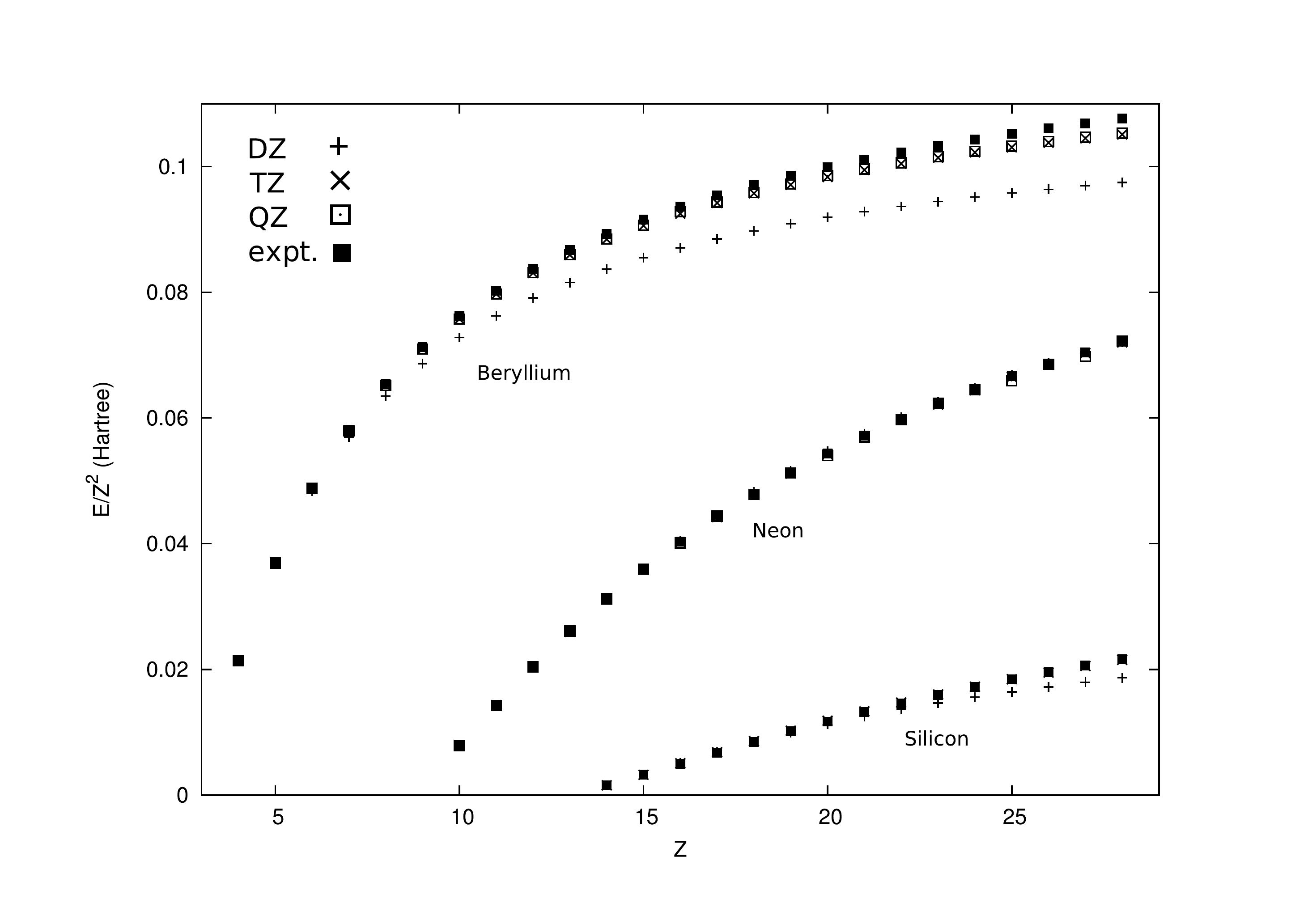}
\caption{\label{ion_E}Ionization energy scaled with $\frac{1}{Z^2}$, for the Be, Ne and Si series in the different basis sets compared with experimental results.}
\end{figure}
It is important to gauge the quality of the 2DM by analyzing other properties than just the energy, {\it e.g.} the  ionization energies of the different atomic ions, which can be easily calculated using the extended Koopmans' theorem (EKT) \cite{smith,day,morrell}. The EKT provides a single-particle picture of the ground state, with single-particle energies and spectroscopic factors. The ionization energies are shown in Figure~\ref{ion_E}; the agreement between calculated and  
experimental values is very good, pointing to the realistic nature of the variationally obtained 2DM. 
The good agreement with experiment reflects the fact that the error in the description of the interelectronic cusp largely cancels since the ionization energy is 
an energy difference. For Be and Ne it is clear that the basis set limit is nearly reached at the cc-pVTZ -- cc-pVQZ level. Even for Si the experimental ionization energy 
is closely reproduced.
\subsubsection{Correlated Hartree-Fock-like single-particle energies}
\begin{figure}
\centering
\includegraphics[scale=0.5]{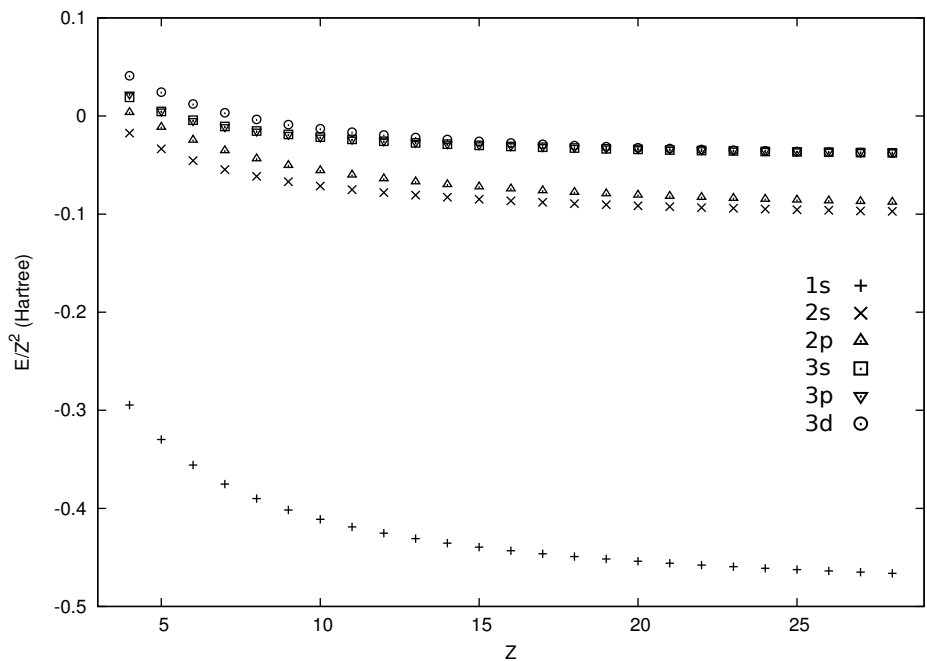}
\caption{\label{sp_levels}Single-particle levels obtained in a correlated Hartree-Fock-like scheme (see text) for the Be series in a cc-pVDZ basis set.}
\end{figure}
Another single-particle picture is given by the correlated Hartree-Fock-like single-particle orbitals and energies. These are constructed by diagonalizing the single-particle Hamiltonian:
\begin{equation}
h_{\alpha\gamma} = \left(T + U\right)_{\alpha\gamma} + \sum_{\beta\delta}V_{\alpha\beta;\gamma\delta}\rho_{\beta\delta}~,
\label{hartree_fock_like}
\end{equation}
where the 1DM $\rho$ appearing in Eq.~(\ref{hartree_fock_like}) is constructed from the variationally determined 2DM. As an example of this method, the single-particle energies for the isoelectronic series of Be in a cc-pVDZ basis are shown in Figure~\ref{sp_levels}. Notice that when $Z$ increases, the energy levels approach those of the hydrogen atom. Similar behavior is present for the other basis sets and for the Ne and Si isoelectronic series. 
\subsubsection{Natural occupations}
\begin{figure}
\centering
\includegraphics[scale=0.5]{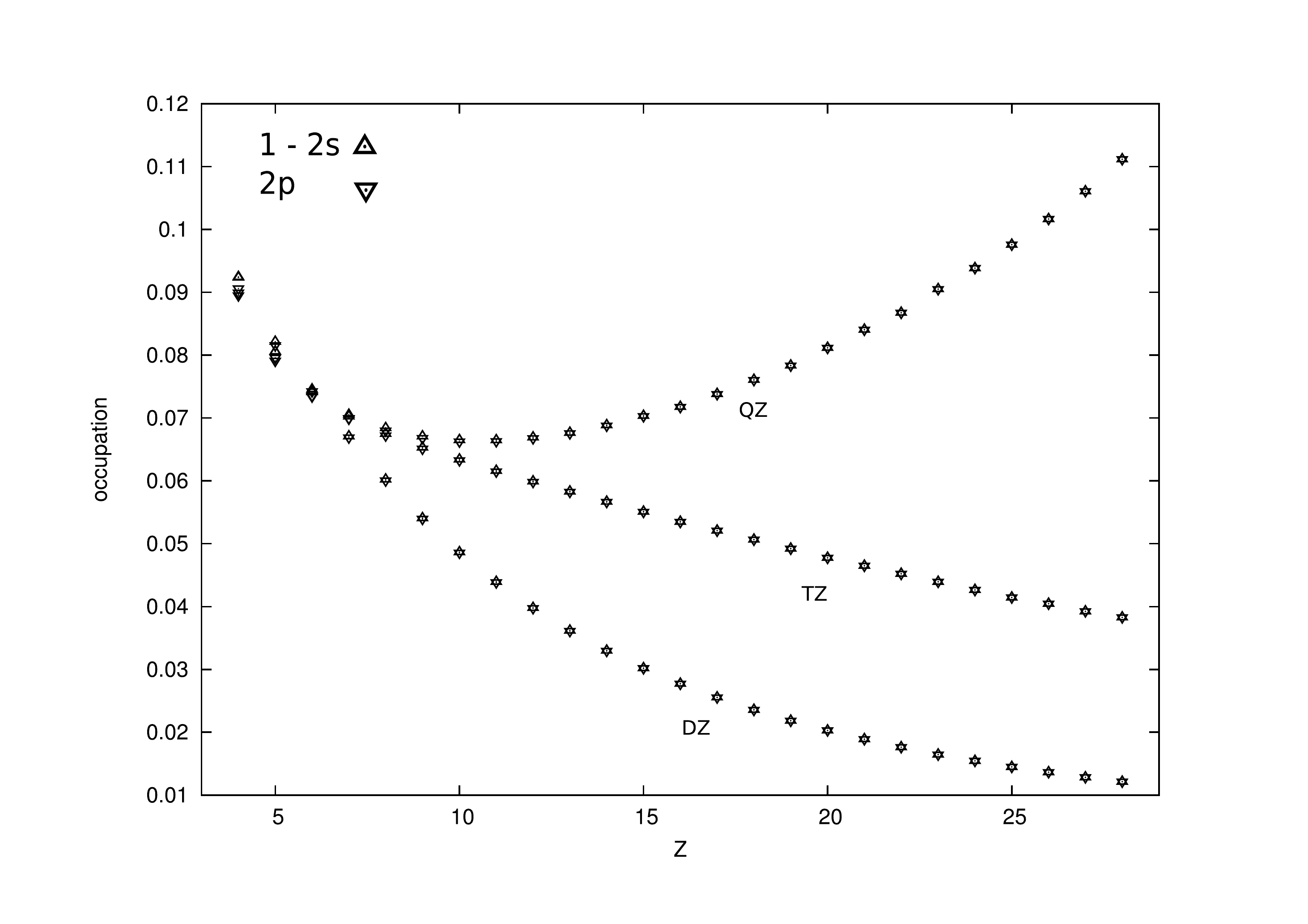}
\caption{\label{nocc_2s2p}Natural occupation of the $2p$-orbital and one minus the occupation of $2s$-orbital for the Be series, in all three basis sets.}
\end{figure}
The eigenvalues of the 1DM ({\it i.e.} the natural orbital occupation numbers) provide additional insight into the quality of the electron correlation effects included in v2DM. The occupation numbers from v2DM are always very close to those from full-CI, differing by at most 0.005. Of particular interest are the occupations of the quasi-degenerate $2s$ and $2p$ orbitals in Be. These are shown in Fig.~\ref{nocc_2s2p}. The sum of the $2s$ and $2p$ occupations is nearly 1 and increasingly so for large $Z$. This implies that only the $2s$ and $2p$ are partially occupied in the large-$Z$ limit. The shapes of the curves reflect the aforementioned imperfections in the basis sets, with the $2s$ below the $2p$ for cc-pVDZ and cc-pVTZ, and above the $2p$ for cc-pVQZ.

\section{\label{diatomic}Dissociation of diatomic molecules}
A diatomic molecule, consisting of $N$ electrons spread out over two atoms, $A$ with charge $Z_A$ on $\mathbf{R}_A$ and $B$ with charge $Z_B$ on $\mathbf{R}_B$, is described by the Hamiltonian:
\begin{equation}
\hat{H} = -\sum_{i = 1}^N\left(\frac{1}{2}\nabla_i^2 + \frac{Z_A}{|\mathbf{R}_A - \mathbf{r}_i|} + \frac{Z_B}{|\mathbf{R}_B - \mathbf{r}_i|} \right) + \sum_{i<j}^N\frac{1}{|\mathbf{r_i} - \mathbf{r_j}|}~.
\end{equation}
One speaks of dissociation, or bond stretching, when the distance between the two bonded atoms $|\mathbf{R}_A-\mathbf{R}_B|$ is increased from its equilibrium value, and the dissociation limit corresponds to the situation where there is no interaction left between the atoms. Dissociation of molecules is very important in chemistry, because a good understanding of it is essential to understand why chemical reactions (which involve the breaking of bonds and the formation of new bonds) take place. Many electronic structure methods have great difficulty describing the dissociation limit, however, because when a molecule is pulled out of equilibrium, multireference effects occur.
\subsection{v2DM dissociates molecules into fractionally charged atoms}
\begin{figure}
\centering
\includegraphics[scale=0.7]{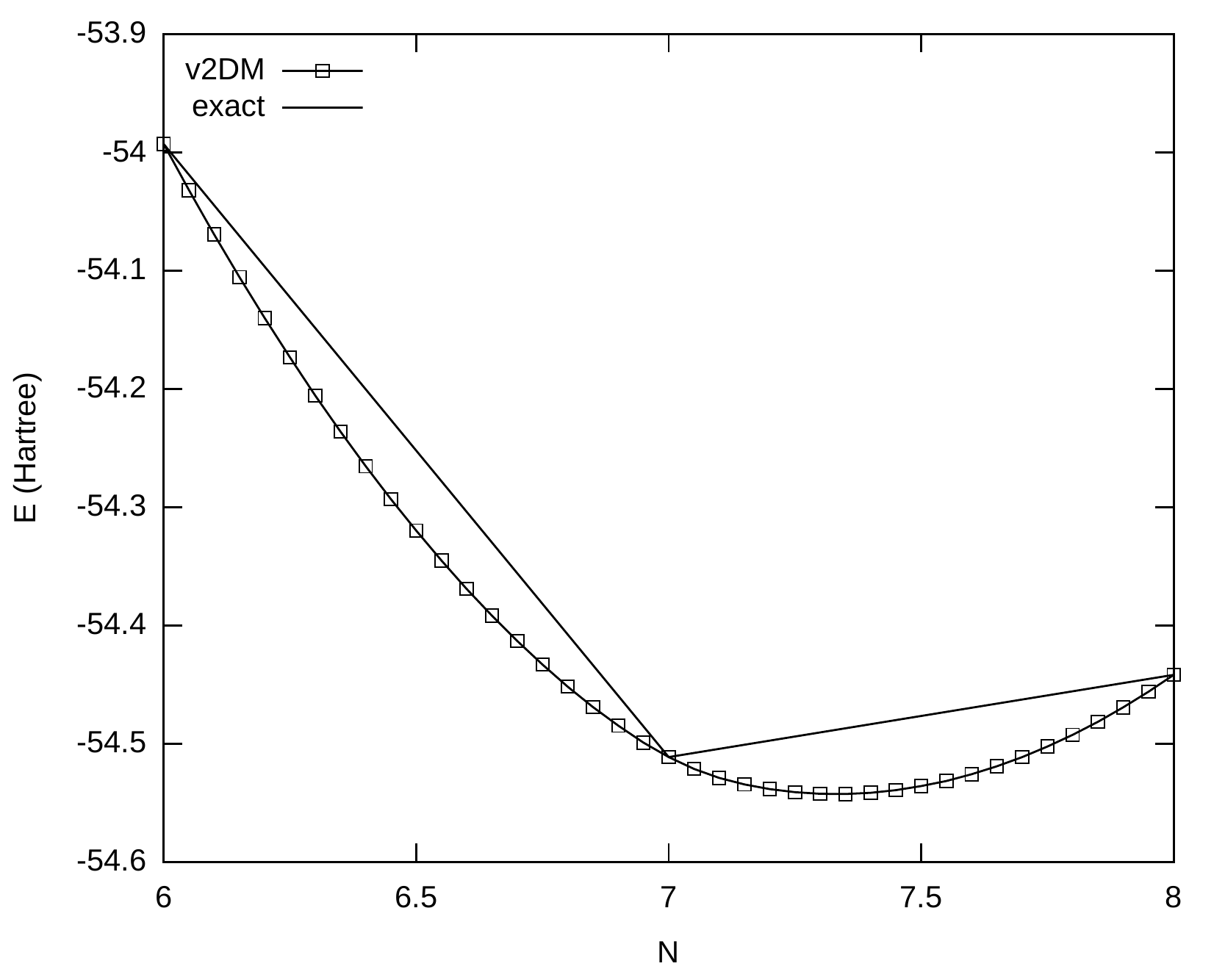}
\caption{\label{N_frac}Energy versus electron number for the Nitrogen atom. One can see that the v2DM method results in a smooth convex curve, while the exact result is piecewise linear.}
\end{figure}
To test how well the $\mathcal{IQG}$ conditions describe molecular dissociation the 14-electron $\mathrm{NO^+}$ molecule was studied. From the ionization energies one can predict that the dissociation products should be $\mathrm{N}$ and $\mathrm{O^+}$. Rather surprisingly the v2DM result was to divide the charge between the $\mathrm{N}$ and the $\mathrm{O}$, assigning 6.53 electrons to $\mathrm{N}$ and 7.47 electrons to $\mathrm{O}$. We also found that the energy was far too low. The same failure occurs in DFT, where the problem results from the wrong behaviour of the energy as a function of the number of electrons \cite{cohen,frac_N_1,frac_N_2,mori,perdew,ruzs}. It turns out that in v2DM the same error lies at the basis of the faulty dissociation limit. As can be seen in Fig.~\ref{N_frac} for $\mathrm{N}$ and Fig.~\ref{O_frac} for $\mathrm{O}$, the $E$ vs. $N$ curve is convex, while the exact curve is piecewise linear between the integer occupations. This can be understood from the discussion in Section~\ref{subsystem}, where it was shown that the correct 2DM description of a system with fractional electron number is through an ensemble of integer-$N$ 2DM's. The results in Figs.~\ref{N_frac} and \ref{O_frac}, however, were produced in a naive approach by considering $N$ as a continuous parameter in the integer $N$ v2DM procedure as discussed in Chapter~\ref{SDP}. This naive approach is relevant, however, because it is exactly what happens when considering the individual atoms as subsystems of an integer $N$ molecule, using approximate $N$-representability conditions, as explained in Section~\ref{applicability_of_ssc}.
\begin{figure}
\centering
\includegraphics[scale=0.7]{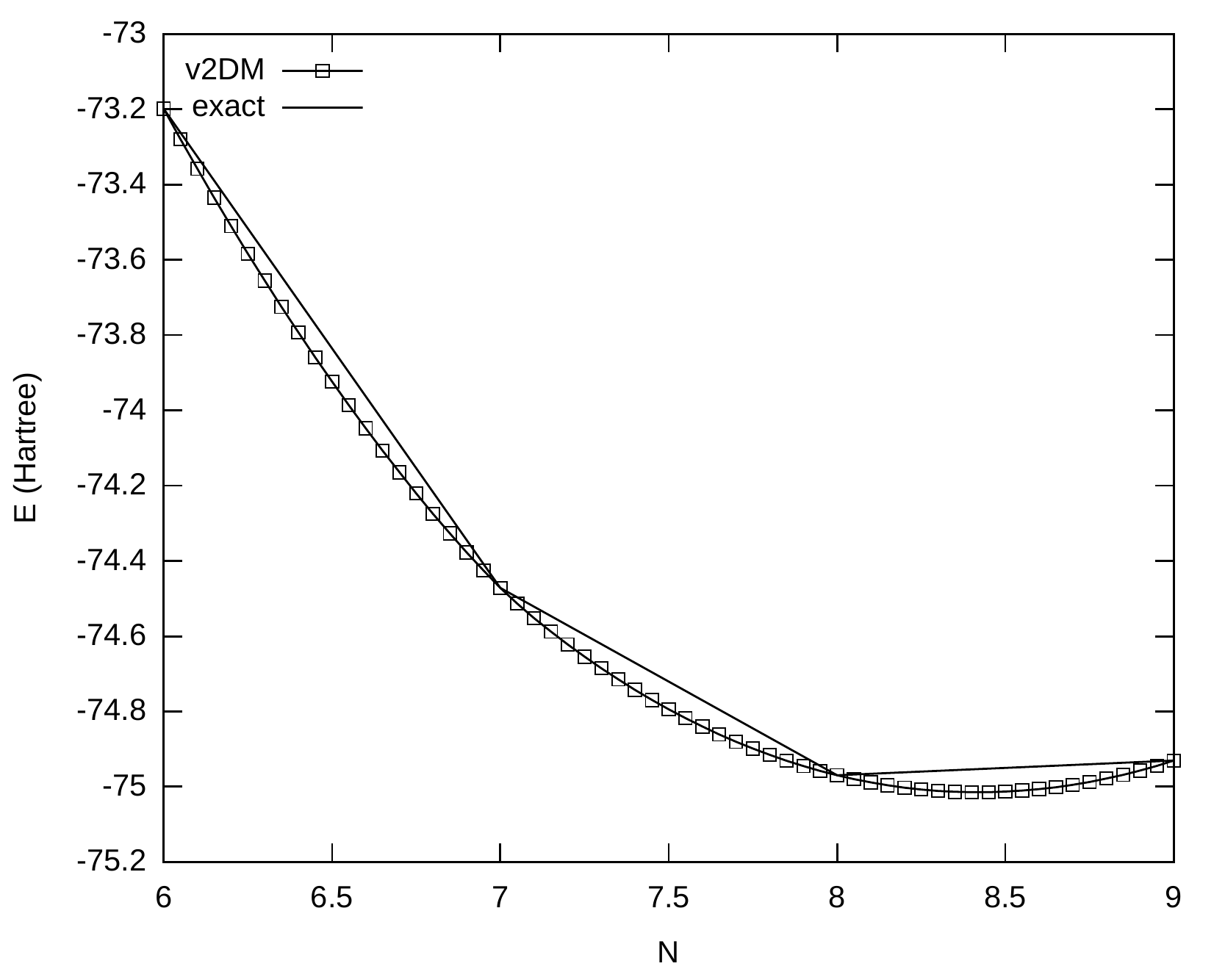}
\caption{\label{O_frac}Energy versus electron number for the Oxygen atom (See caption Figure~\ref{N_frac}).}
\end{figure}
\begin{figure}
\centering
\includegraphics[scale=0.7]{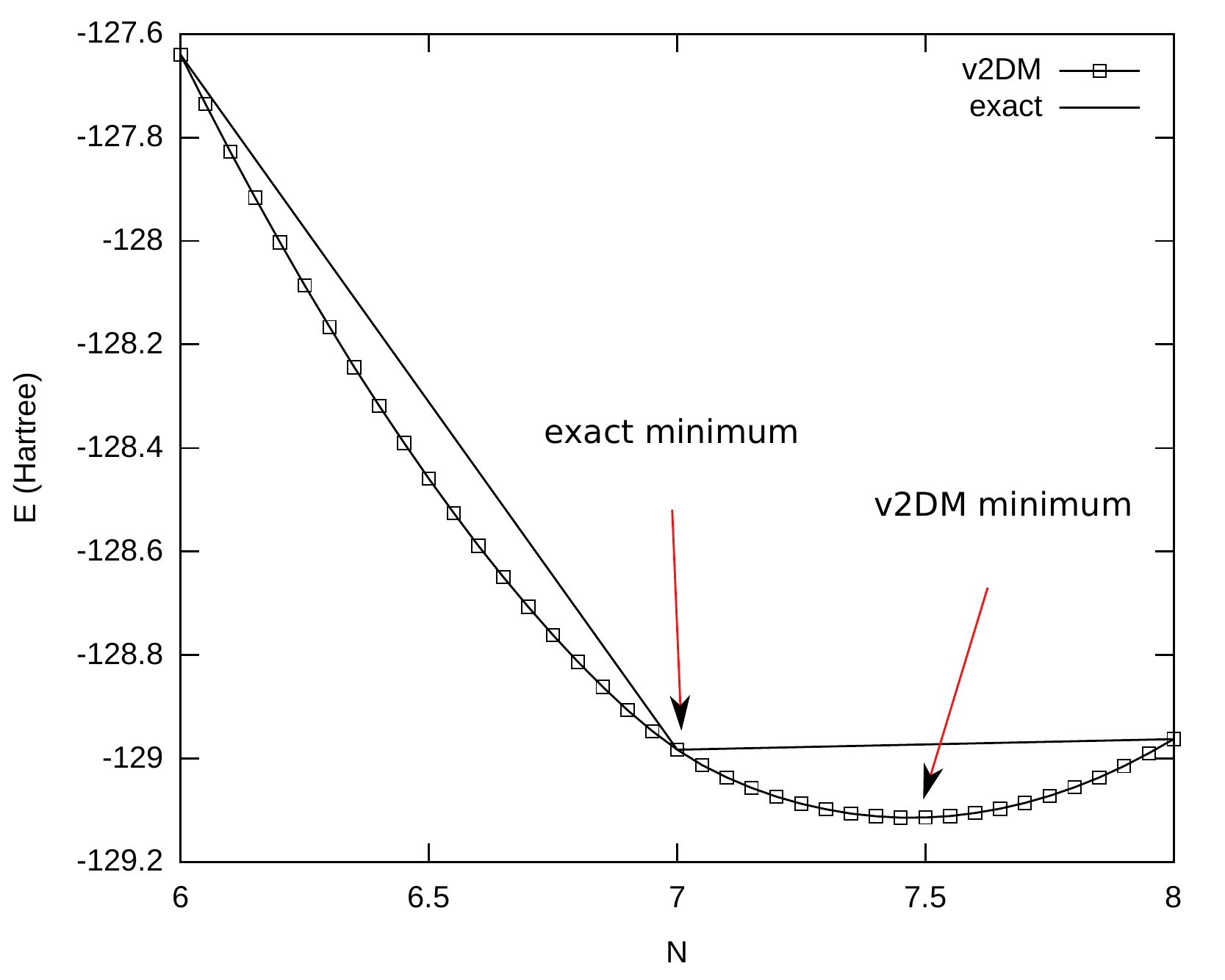}
\caption{\label{sum_frac}Energy versus electron number for the sum of the Nitrogen and Oxygen energies. $N$ is the number of electrons on the Oxygen atom. The v2DM method clearly has a minimum at fractional $N$.}
\end{figure}
From Fig.~\ref{sum_frac} one can see how this convex curve results in the fractionally occupied dissociation products. In this figure, the sum of the energies of $\mathrm{N}$ and $\mathrm{O}$ are plotted, for fractional occupations which sum to 14 electrons. One can see that the lowest energy is obtained from the fractional occupation of Oxygen of 7.47, which is exactly the result we get from the molecular calculation in the dissociation limit! It is worth noting that the same behaviour is present when adding the $\mathcal{T}$ conditions, and as such adding them does not fix the errors in the dissociation limit.
\subsection{Subsystem constraints cure the dissociation limit}
In this Section we show that the result of Section~\ref{subsystem} can be used to fix the dissociation problem in v2DM. From Theorem~\ref{theo_sub_constr} we know that if a 2DM describing a molecule is $N$-representable, then the 2DM's of the the atomic parts of the molecules are fractional $N$-representable, and must obey the inequalities:
\begin{equation}
\mathrm{Tr}~\rho^\mathcal{V} t^\mathcal{V} + \mathrm{Tr}~\Gamma^\mathcal{V} V^\mathcal{V} \geq E^{\bar{N}}_0\left(\hat{H}^\mathcal{V}\right)~,
\label{subsystem_eq_at}
\end{equation}
for every two-particle Hamiltonian $\hat{H}^\mathcal{V}$ defined on the atomic subspace $\mathcal{V}$. From Figs.~\ref{N_frac} and \ref{O_frac} we can see that these constraints are grossly violated when we choose the atomic Hamiltonian for the subsystem Hamiltonian $\hat{H}^{\mathcal{V}}$, where the exact fractional-$N$ energy $E_0^{\bar{N}}$ is the piecewise linear curve. If the inequalities (\ref{subsystem_eq_at}) for the atomic subsystems in the molecule are enforced, the $E$ vs. $N$ curve automatically becomes piecewise linear, and the dissociation of the molecule results in integer occupied fragments.
\subsubsection{Implementation}
In this Section we discuss how to implement the subsystem constraints in the v2DM approach. The procedure for applying the atom-$A$ subsystem constraint in a diatomic $AB$ can be summarized as follows: 
\begin{enumerate}
\item Solve the atomic v2DM problem for a central charge $Z_A$ at various electron numbers, using the single-particle orbitals centered on $A$.
In practice, only electron numbers near atomic neutrality are important. This generates the atomic energy $E^{\bar{N}}_A$ as a function of fractional 
electron number $\bar{N}$ (i.e. a piecewise linear curve as in Figs. \ref{N_frac} and \ref{O_frac}). 
\item For each internuclear distance $R_{AB}$, calculate the transformation matrix between the (nonorthogonal) atomic basis of the $A$-centered orbitals $\ket{i_A}$, and the orthonormal molecular basis $\ket{\alpha}$ that is used in the SDP program,
\begin{equation}
X^A_{\alpha ,i} = \langle i_A|\alpha\rangle~.
\end{equation}  
The $X^A$ matrix is easily constructed with standard quantities in molecular modelling packages, 
\begin{equation}
X^A_{i,\alpha} = \sum_{j_D} C_{\alpha ,j_D} S_{i_A;j_D}~,
\end{equation}
with $C$ the expansion coefficients of the $|\alpha\rangle$ molecular basis in terms of all the nonorthogonal orbitals centered on the various atoms,
\begin{equation}
|\alpha\rangle = \sum_{j_D} C_{\alpha ; j_D}|j_D\rangle~,
\end{equation}
and with $S_{i_A ;j_D} = \langle i_A|j_D \rangle$ the overlap matrix for the atom-centered basis functions.
The orthogonal projector on the subspace spanned by the $A$-centered orbitals, when expressed in terms of the nonorthogonal basis set, reads \cite{bultinck}
\begin{equation}
\tilde{P}^A = \sum_{ij}(S^{-1}_A)_{ij}|i_A\rangle\langle j_A|~,
\end{equation}
where $S_A$ is the block of the overlap matrix corresponding to the $A$-centered orbitals, and $S_A^{-1}$ is the inverse of this block.
\item Perform the molecular v2DM calculation with the extra linear inequality: 
\begin{equation}
\label{26}
\mbox{Tr}~\left(t^A\rho^N\right)+\mbox{Tr}~\left(V^A\Gamma^N\right) \geq E_A^{\bar{N}=\mbox{Tr}\left(1^A\rho^N\right)}~.
\end{equation}   
Here: 
\begin{eqnarray}
(t^A)_{\alpha\gamma} &=& \sum_{ij}\langle i_A |\tilde{t}^A| j_A\rangle W^A_{i\alpha}W^A_{j\gamma}~,\\
(1^A)_{\alpha\gamma} &=& \sum_{ij} (S_A)_{ij}  W^A_{i\alpha}W^A_{j\gamma}~,\\
(V^A)_{\alpha\beta;\gamma\delta} &=& \sum_{ijkl} \langle i_A j_A|\tilde{V}^A|k_Al_A\rangle~,
W^A_{i\alpha}W^A_{j\beta}W^A_{k\gamma}W^A_{l\delta}~,
\end{eqnarray}
and $t_A$ is the kinetic energy plus attraction to nucleus $A$. The coefficients $W$ are given by:
\begin{equation}
\label{W}
W^A_{i\alpha} = \sum_j X^A_{\alpha j}\left(S_A^{-1}\right)_{ji}~.
\end{equation}
The inequality (\ref{26}) is nothing but the application of Eq.~(\ref{subsystem_eq_at}) in the subspace defined by the single-particle orbitals centered on $A$ and using the Hamiltonian of atom $A$. 
Obviously, atom $B$ generates a similar inequality.
\end{enumerate}
\subsubsection{Numerical verification}
In order to show the value of the subsystem constraints, we present the potential energy surface of $\mathrm{BeB}^+$, computed for a separation ranging from 1 to 9 \AA. The main interest here is a proof-of-principle of the fact that the new constraints indeed severely restrict the variational freedom in the v2DM. We therefore opted for the fairly small Dunning-Hay basis \cite{dh}, making a comparison to full-CI calculations still feasible. $\mathrm{BeB}^+$ is a good example since this 8 electron system dissociates into $\mathrm{Be}$ and $\mathrm{B}^+$. Application of the $\mathcal{I}$, $\mathcal{Q}$ and $\mathcal{G}$ conditions for an 8 electron system is expected to yield energies that are significantly too low compared to full-CI. However, at the dissociation limit, the energy of the molecule should be equivalent to that of isolated $\mathrm{Be}$ and $\mathrm{B}^+$. As shown in \cite{atomic}, for both $\mathrm{Be}$ and $\mathrm{B}^+$ the $\mathcal{IQG}$ energy is very nearly equal to the full-CI energy, so application of the subspace constraints should result in much higher $\mathcal{IQG}$ energies.

Table~\ref{BeB_t} contains the energy of the molecule at different internuclear distances, computed at the full-CI level of theory as well as the variationally optimized 2DM energy with and without subspace constraints (see also Fig. \ref{BeB_p}).
\begin{table}
\centering
\caption{\label{BeB_t}Difference (in mHartree) between full-CI energy and variationally optimized 2DM energy without (2DM) and with (2DM+) subspace constraints.}
\begin{tabular}{|ccccccc|}
\hline
$R$&2DM&2DM+&\vline&R&2DM&2DM+\\
\hline
1.25  &  16.55 &  16.55 &\vline&3.50  &  13.30 &  13.30 \\
1.50  &  10.86 &  10.86 &\vline&3.75  &  17.17 &  17.17 \\
1.75  &   9.56 &   9.56 &\vline&4.00  &  19.12 &  19.12 \\
2.00  &   9.94 &   9.94 &\vline&4.50  &  19.90 &  18.73 \\
2.10  &  10.07 &  10.07 &\vline&5.00  &  21.28 &  14.07 \\
2.20  &  10.07 &  10.07 &\vline&5.50  &  23.07 &   9.35 \\
2.30  &  10.30 &  10.30 &\vline&6.00  &  24.48 &   5.27 \\
2.40  &  10.59 &  10.59 &\vline&6.50  &  25.67 &   2.44 \\
2.50  &  10.77 &  10.77 &\vline&7.00  &  26.95 &   1.26 \\
2.60  &  10.91 &  10.91 &\vline&7.50  &  27.90 &   0.66 \\
2.75  &  11.30 &  11.30 &\vline&8.00  &  28.88 &   0.53 \\
3.00  &  12.00 &  12.00 &\vline&8.50  &  29.63 &   0.38 \\
3.25  &  12.56 &  12.56 &\vline&9.00  &  30.39 &   0.37\\
\hline
\end{tabular}
\end{table}

Table \ref{BeB_t} very clearly shows that, at larger separation, the difference between the full-CI and 2DM energies is substantial when using only the $\mathcal{I}$, $\mathcal{Q}$ and $\mathcal{G}$ constraints, growing as large as 30~mHartree. The subspace constraints succeed in reducing this error by approximately two orders of magnitude. As expected, the remaining error is very small because $\mathcal{IQG}$ yield energies for the atomic 4-electron isoelectronic series that are very near to full-CI energies. The present new constraints are clearly very succesful. As Figure \ref{BeB_p} shows, the constraints are active most for separations above 4.5~\AA. The nearer to complete dissociation, the more of the error is recovered by the subspace constraints. As shown by van Aggelen {\it et al.} \cite{helen_2}, not only are the energies improved, also chemical observables and chemical concepts are substantially better for the 2DM obtained when including the subspace constraints. As an example, the Mulliken population \cite{mulliken} on the Be atom at $9~\text{\AA}$ is $+0.38$ when not using the subspace constraints, whereas inclusion of the subspace constraints yields a charge of $0.00$, consistent with reality.
The added constraints result in a much better description of molecular dissociation. Neither atom still suffers from fractional occupancy at the dissociation limit. Addition of each subsystem constraint does not slow down the v2DM, as it adds a fairly simple linear inequality constraint.
\begin{figure}
\centering
\includegraphics[scale=0.5]{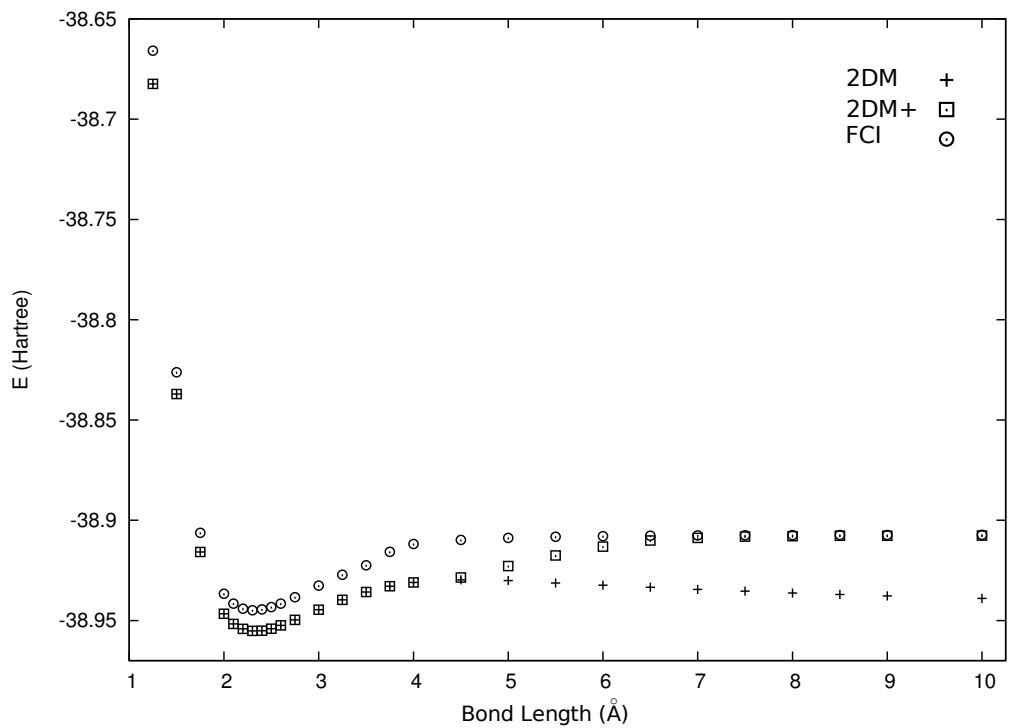}
\caption{\label{BeB_p}Singlet dissociation energy curve of $\text{BeB}^+$ calculated in full-CI, and determined variationally without (2DM) and with (2DM+) subspace constraints.}
\end{figure}
\subsubsection{Extension to polyatomic systems}
For polyatomic systems, the addition of correct subsystem constraints is much less straightforward than in the diatomic case. The subsystems are no longer only atomic systems, but can also be polyatomic systems, for which the constraints become dependent upon the geometry of the subsystems, {\it i.e.} the interatomic distance. The number of possible subspaces, grows rapidly with the number of atoms present in the molecule. It is unlikely that all of these constraints are needed and only a few will probably be violated when not included. A detailed study \cite{helen_3} was made by van Aggelen {\it et al.} of the performance of subsystem constraints in the triatomic molecular anion $\mathrm{F}_3^-$. It was found that for geometries with either clearly dissociated or short bonds, the correct dissociation can be obtained using only constraints on the spatially separated units in the system. {\it E.g.} when the three $\mathrm{F}$ atoms are so far apart that they don't interact anymore, subsystem constraints for the atomic parts are the only ones useful, or for the case where a diatomic molecule is clearly separated from a single atom, the correct constraints are the geometry-dependent diatomic one, and an atomic one. However, in situations which are not so clear-cut, additional constraints may become active.
\subsubsection{Generalization of the atomic subspace constraints}
The current formulation of the subsystem contraints on diatomic systems has the drawback that they only become active at large distances, and do not improve the energy closer to equilibrium. As a consequence there is a non-parallellity of the v2DM potential energy surface compared to the exact one, which is clearly visible in Fig.~\ref{BeB_p}. Ideas to improve on this exist but haven't been tried yet. One approach would be to modify the atomic Hamiltonian by including effects from its environment. This would result in an iterative optimization of the 2DM, where the 2DM from the previous molecular calculation is used to create the subspace Hamiltonian for the constraint in the next iteration. The environment would appear only in the single-particle part of the subspace Hamiltonian corresponding to atom A, by adding to the Hamiltonian of atom A the nuclear attraction to atom B as well as the Hartree-Fock potential generate by the current iterative approximation for $\Gamma$.
\section{\label{1dhub_sec}The one-dimensional Hubbard model}
The one-dimensional Hubbard model, as introduced in Section~\ref{hub_sym}, is described by the Hamiltonian:
\begin{equation}
\hat{H} = -t\sum_{i\sigma}\left(a^\dagger_{i;\sigma}a_{i+1;\sigma} + a^\dagger_{i+1;\sigma}a_{i;\sigma}\right) + U\sum_{i}a^\dagger_{i\uparrow}a_{i\uparrow}a^\dagger_{i\downarrow}a_{i\downarrow}~.
\label{hubbard_ham}
\end{equation}
This Hamiltonian is the simplest model describing the non-trivial correlations in a solid state lattice as a competition between the delocalizing hopping term and the local on-site interaction. In this Section we present and discuss the results of v2DM calculations, taking advantage of all the symmetries as described in Section~\ref{hub_sym}, on a 50-site lattice with the $\mathcal{IQG}$ conditions, and on a 20-site lattice with the $\mathcal{IQG}\mathcal{T}_1\mathcal{T}_2$ conditions. The Hubbard model has been studied before using the v2DM method, see {\it e.g.} \cite{maz_hub,shenvi,nakata_last}, but up to now only the half-filled lattice was considered. In this study we consider different filling factors, and extract various properties like the ground-state energy, two-particle correlation functions and the momentum distribution in order to assess the quality of the variationally obtained 2DM. 
\subsection{Results}
The v2DM results discussed in this Section were all obtained using the primal-dual predictor corrector semidefinite programming algorithm described in Section~\ref{primal_dual_algo}. Although the one-dimensional Hubbard model can be solved exactly using the Bethe ansatz \cite{bethe,liebwu,essler,1D_hub}, it is hard to extract information about the solution for finite systems. For the calculations on a 20-site lattice, we compare the data with the quasi-exact results obtained through a variational Matrix Product State (MPS) algorithm \cite{schollwock,verstraete,chan}, written by co-worker Sebastian Wouters \cite{sebastian}. For the 50-site lattice, however, this is no longer computationally feasible. At half filling a simplification in the Bethe-ansatz equations occurs, which allows to calculate the ground-state energy of finite systems by solving a set of non-linear equations (Lieb-Wu)\cite{liebwu_2}. At other fillings no data is available for comparison.
\subsubsection{Ground-state energy}
\begin{figure}
\centering
$
\begin{array}{c}
\includegraphics[scale=0.7]{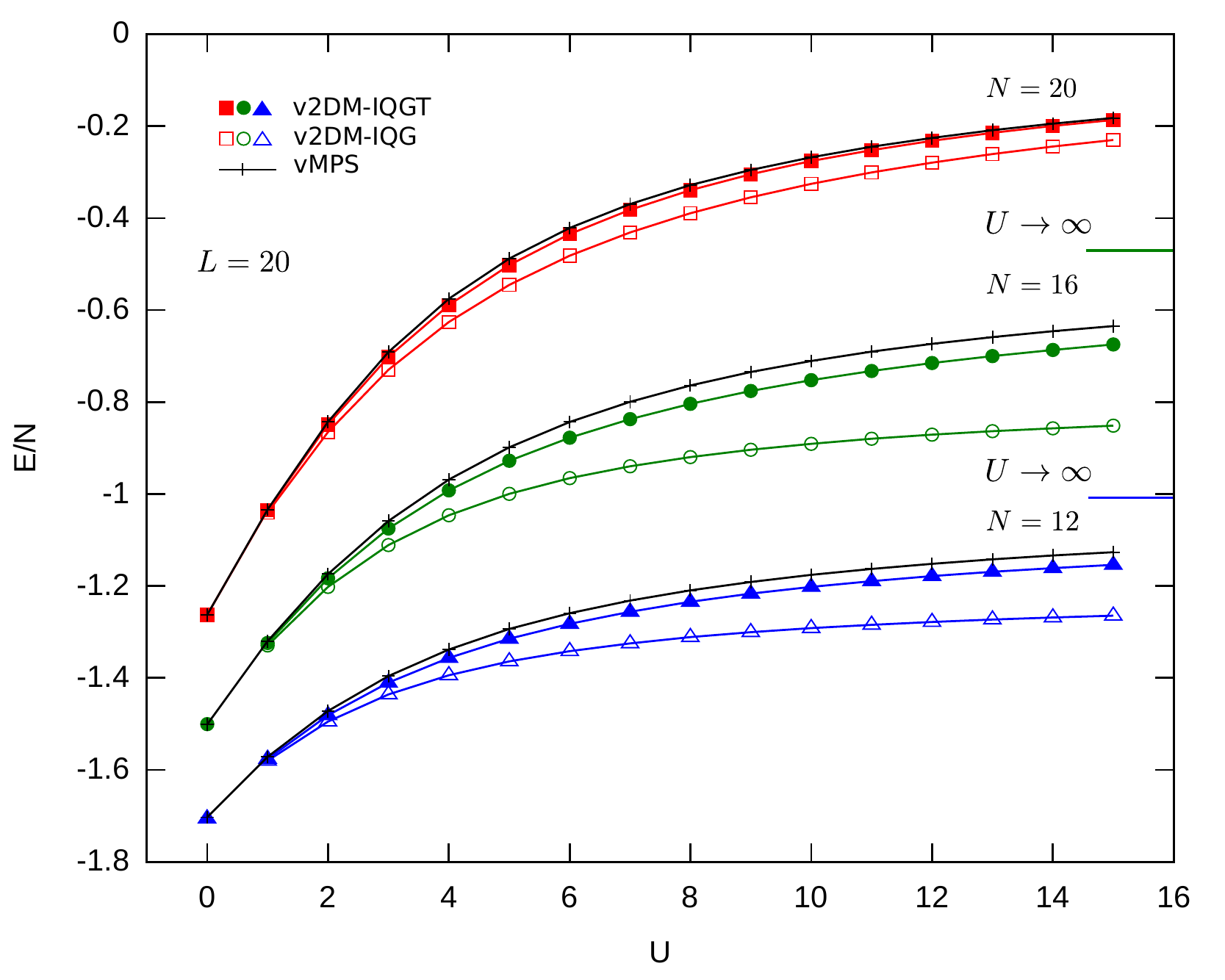}\\
\includegraphics[scale=0.7]{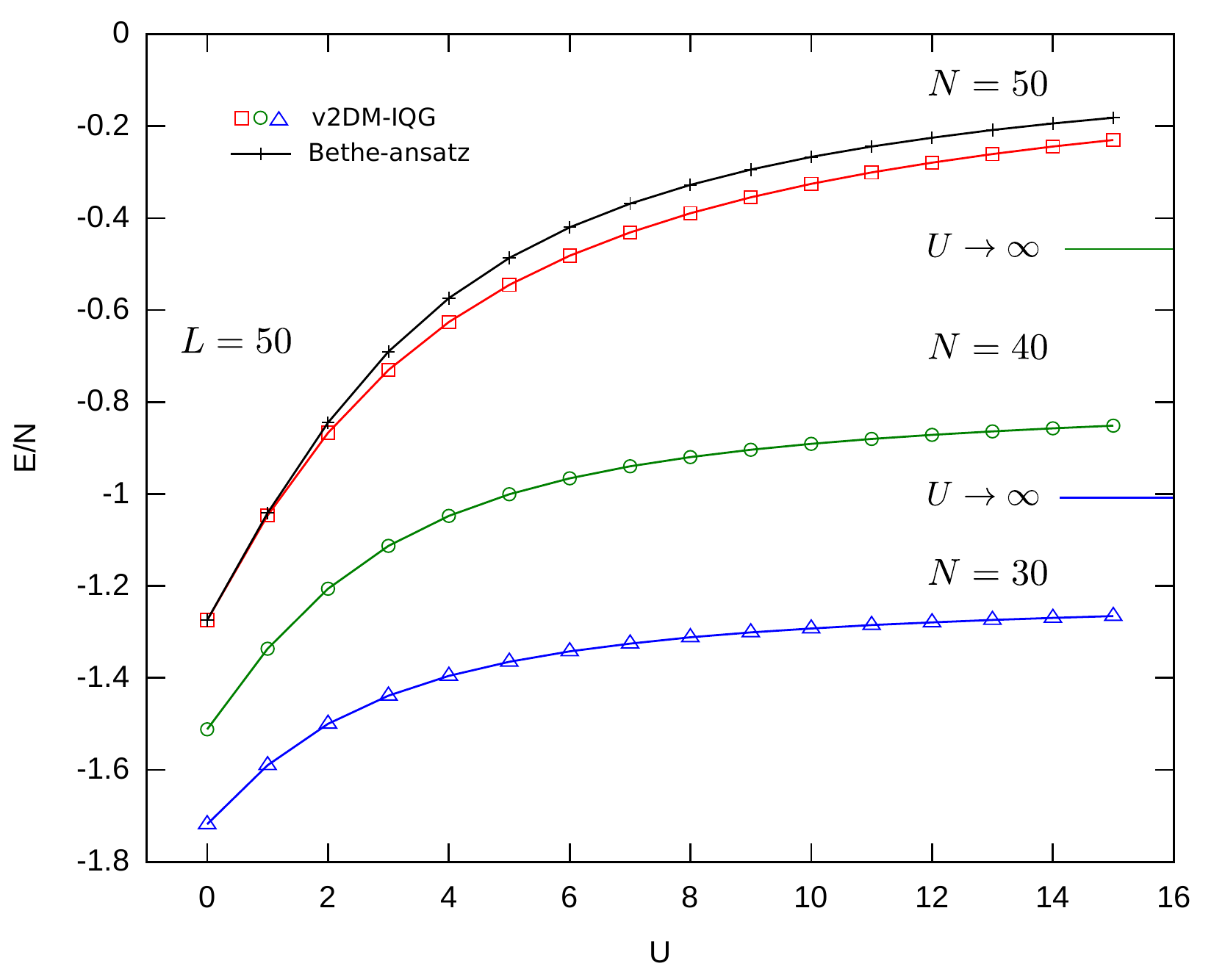}
\end{array}
$
\caption{\label{energy_1dhub}Ground-state energy per particle as a function of on-site repulsion $U$ of the Hubbard model for a 20-site (top) and 50-site (bottom) lattice at half-, $\frac{4}{10}$  and $\frac{3}{10}$ filling. For the 20-site lattice a comparison is made between v2DM using the $\mathcal{IQG}$ and $\mathcal{IQGT}$ conditions, and a quasi-exact result using MPS. For the 50-site lattice only $\mathcal{IQG}$ conditions are feasible, and these have been compared to the exact (Bethe-ansatz) result for half filling.}
\end{figure}
In Fig.~\ref{energy_1dhub} the ground-state energy per particle of the one-dimensional Hubbard model is plotted as a function of the on-site repulsion $U$ (the hopping parameter $t$ will always be taken equal to unity). In the top figure the v2DM results for the 20-site lattice are shown for three different fillings, 12 particles ($\frac{3}{10}$), 16 particles ($\frac{4}{10}$) and half filling. These were calculated using both the $\mathcal{IQG}$ and the $\mathcal{IQGT}$ conditions, and are compared to the quasi-exact variational MPS results. In the bottom figure the v2DM results for the 50-site lattice are shown for the same fillings ({\it i.e.} 30 particles ($\frac{3}{10}$), 40 particles ($\frac{4}{10}$) and half filling). For the 50-site lattice it was only possible to perform the calculations using the $\mathcal{IQG}$ conditions, and compare to the exact solution obtained by solving the Lieb-Wu equations for the half-filled lattice \cite{liebwu_2}. 

One interesting thing to notice is that the $\mathcal{IQG}$ energy per particle for the 20-site lattice and the 50-site lattice, at the same filling, are very similar. This is due to the periodic boundary conditions which make the results converge quite rapidly for increasing lattice size $L$, implying that one can already extract relevant results for the thermodynamic limit by studying relatively small lattices. This fast convergence can be clearly seen in Fig.~\ref{enconv}, where we plotted the energy per particle of a Hubbard model with $U=1$ at half filling, as a function of the lattice size $L$. 
\begin{figure}
\centering
\includegraphics[scale=0.7]{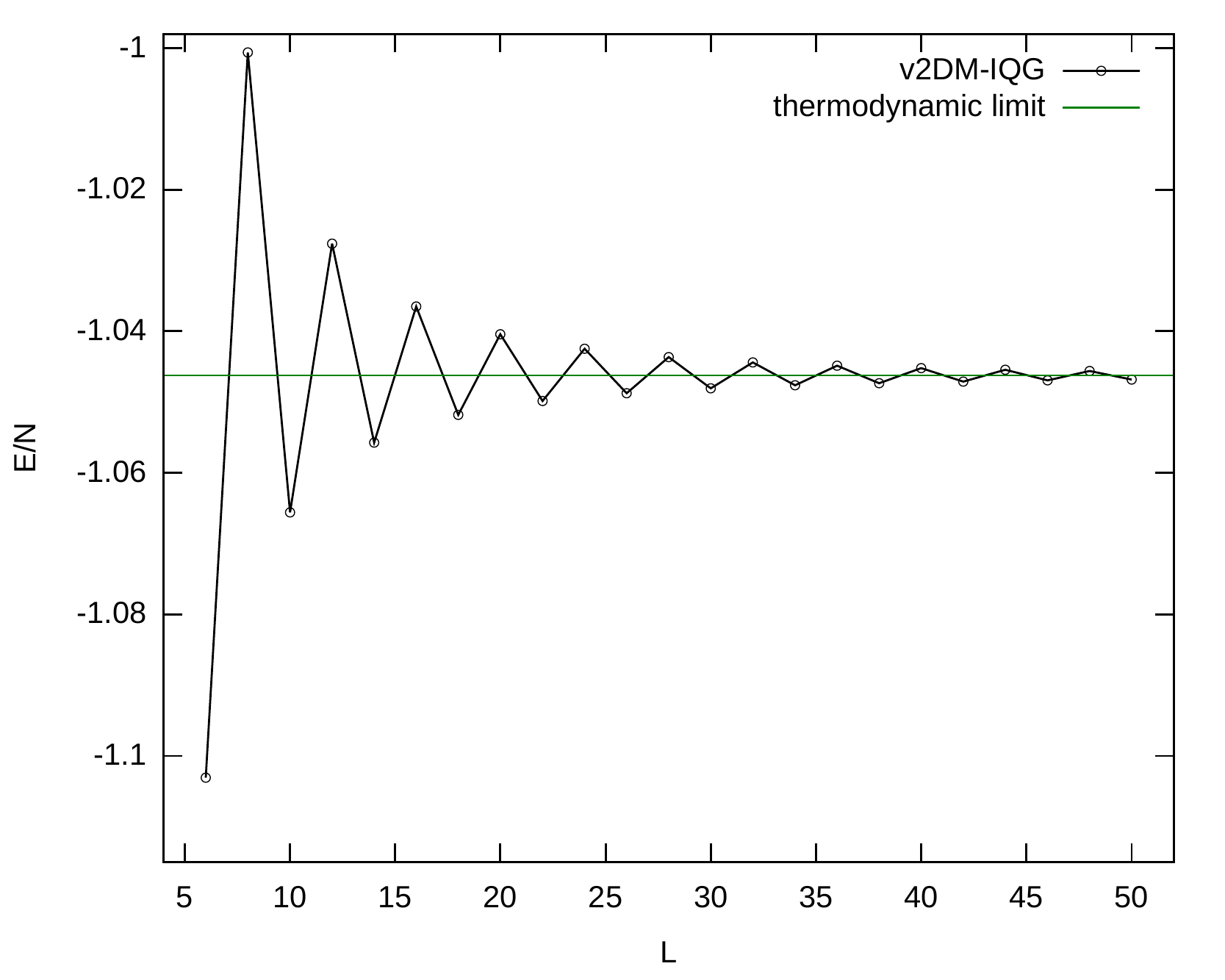}
\caption{\label{enconv}Ground-state energy per particle as a function of the lattice size $L$, indicating how fast the finite size results converge to the thermodynamic limit for half filling.}
\end{figure}

Another thing to remark in Fig.~\ref{energy_1dhub} is that for the 20-site lattice, the difference between $\mathcal{IQG}$ and $\mathcal{IQGT}$ is rather small for the half-filled lattice, but larger for the other fillings, and that the difference gets larger when $U$ increases. For the 50-site lattice we see that the $\mathcal{IQG}$ result agrees nicely with the solution of the Lieb-Wu equations. For the other fillings no reference data are available. There are, however, two limits of the model that are exactly solvable. The first limit is the case of no interaction, {\it i.e.} $U=0$, for which the solution has already been given in Section~\ref{hub_sym} (see Fig.~\ref{hubbard_fig}). The Hamiltonian reduces to a single-particle operator, which means this limit is already described correctly by including the $\mathcal{I}$ and $\mathcal{Q}$ conditions alone. The other exactly solvable limit is when $U\rightarrow+\infty$. In this limit the physics of the model decouples into two independent parts, one describing the spin of the system, and the other the movement of the particles (this is called spin-charge separation \cite{ogata}). This decoupling shows up in the Bethe-ansatz wave function: the charge degrees of freedom are described by a Slater determinant of spinless fermions, whereas the spin degrees of freedom become equivalent to a spin-$\frac{1}{2}$ Heisenberg model. The single-particle energy spectrum changes slightly compared to Eq.~(\ref{sp_hubbard}) because the boundary conditions for spinless fermions are periodic/antiperiodic if $N$ is even/odd \cite{ogata,krivnov}:
\begin{equation}
\epsilon_k = -2t\cos{k}\qquad\text{where}\qquad
\left\{
\begin{matrix}
k = \frac{2\pi n}{L}&\qquad\text{if}\qquad N\%2=0\\
k = \frac{(2n+1)\pi}{L}&\qquad\text{if}\qquad N\%2=1
\end{matrix}
\right.~.
\label{1dhub_sclim}
\end{equation}
When the lattice is half-filled all the single-particle states are occupied, and the total energy sums up to zero, which is correctly described by the $\mathcal{IQG}$ results in Fig.~\ref{energy_1dhub}. Away from half filling, however, the energy has a finite limit which can be calculated using Eq.~(\ref{1dhub_sclim}). From the figure we can see that the $\mathcal{IQG}$ conditions do not suffice to correctly describe the large-$U$ limit. Only when the $\mathcal{T}$ conditions are added, the results converge to the right limit. Calculations at very large values of $U$ have been performed that confirm this statement, and these results are shown in Table~\ref{hub_largeU}.
\begin{table}
\centering
\caption{\label{hub_largeU}Energy per site of the v2DM calculations away from half filling at large values of $U$, compared to the MPS and the Bethe-ansatz results where available.}
\begin{tabular}{|c|c|ccccc|}
\hline
$L$&$N$&$U$&$\mathcal{IQG}$&$\mathcal{IQGT}$&vMPS&exact\\
\hline
\multirow{6}{*}{20}&
\multirow{3}{*}{12}&50&-1.2259&-1.0804&-1.0488&*\\
&&100&-1.2177&-1.0646&-1.03116&*\\
&&$\infty$&*&* &* &-1.0008\\
\cline{2-7}
&\multirow{3}{*}{16}&50&-0.7972&-0.5458&-0.5205&*\\
&&100&-0.7860&-0.5179&-0.49513&*\\
&&$\infty$&*&* &* &-0.4639\\
\hline
\end{tabular}
\begin{tabular}{|c|c|ccc|}
\hline
$L$&$N$&$U$&$\mathcal{IQG}$&exact\\
\hline
\multirow{6}{*}{50}&
\multirow{3}{*}{30}&50&-1.2272&*\\
&&100&-1.2191&*\\
&&$\infty$&*&-1.0008\\
\cline{2-5}
&\multirow{3}{*}{40}&50&-0.7974&*\\
&&100&-0.7862&*\\
&&$\infty$&*&-0.4671\\
\hline
\end{tabular}
\end{table}
\subsubsection{Momentum distribution}
\begin{figure}
\centering
$
\begin{array}{c}
\includegraphics[scale=0.7]{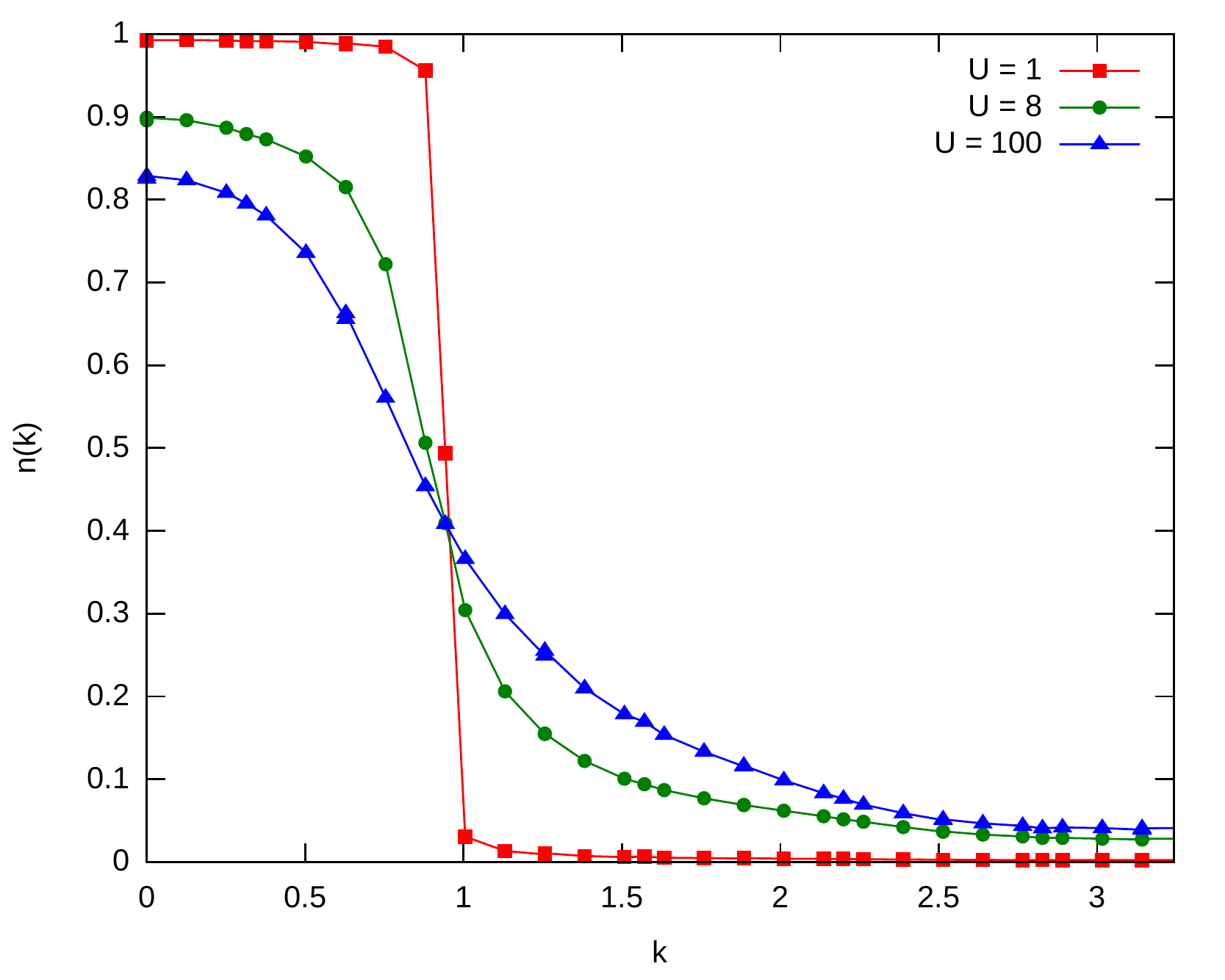}\\
\includegraphics[scale=0.7]{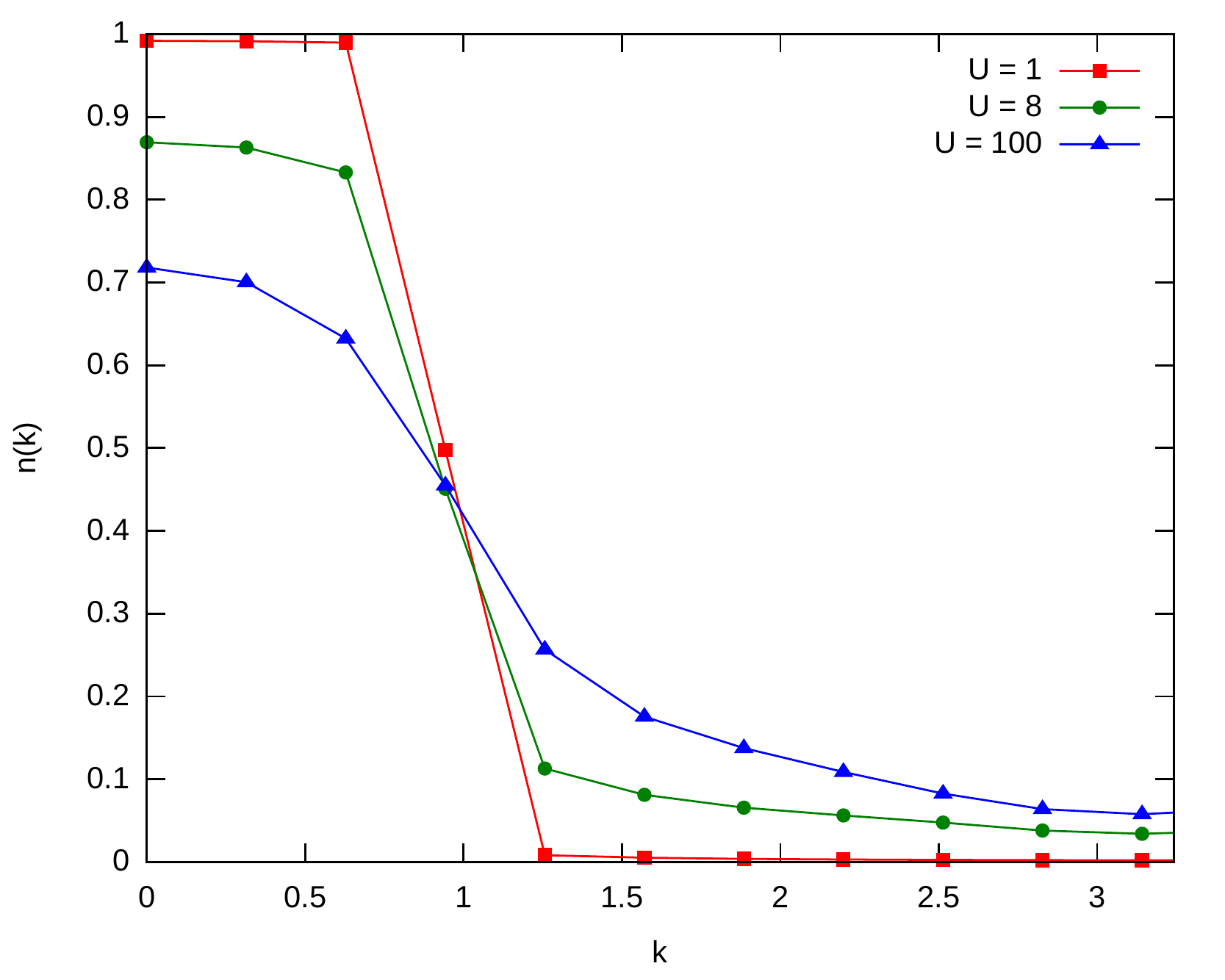}
\end{array}
$
\caption{\label{momentum_0.3}Momentum distribution in the ground state for a lattice filling of $\frac{3}{10}$ and on-site repulsion $U= 1,8$ and 100 calculated with the $\mathcal{IQG}$ conditions (top) and the $\mathcal{IQGT}$ conditions (bottom).}
\end{figure}
\begin{figure}
\centering
$
\begin{array}{c}
\includegraphics[scale=0.7]{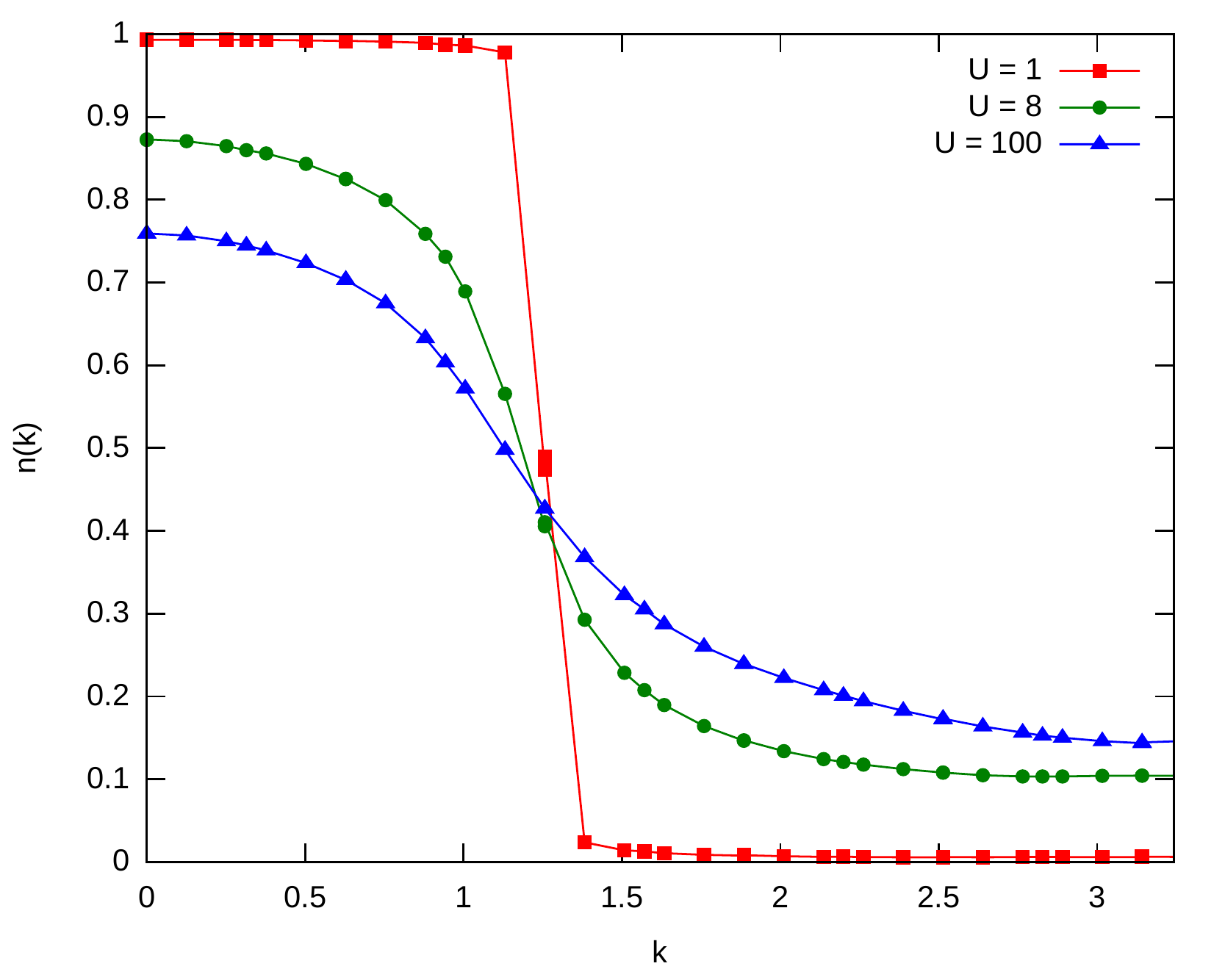}\\
\includegraphics[scale=0.7]{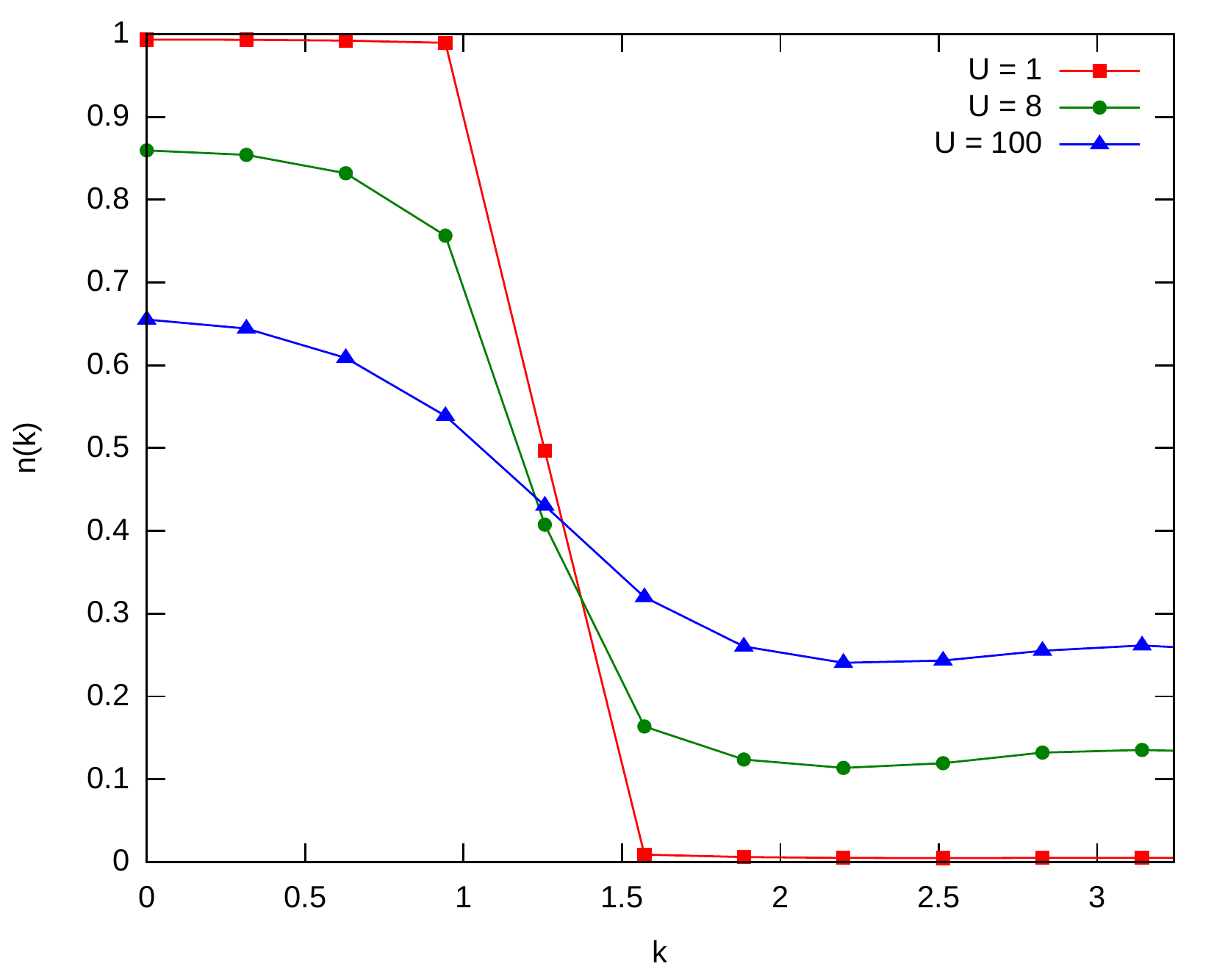}
\end{array}
$
\caption{\label{momentum_0.4}Momentum distribution in the ground state for a lattice filling of $\frac{4}{10}$ and on-site repulsion $U= 1, 8$ and 100 calculated with the $\mathcal{IQG}$ conditions (top) and the $\mathcal{IQGT}$ conditions (bottom).}
\end{figure}
The momentum distribution function is easy to extract from the 2DM, as it is just the 1DM expressed in momentum space (See Eq.~(\ref{1DM_sc_ti_written_up})).
The results of the v2DM calculations for $\frac{3}{10}$ and $\frac{4}{10}$ fillings using both $\mathcal{IQG}$ and $\mathcal{IQGT}$ calculations are presented in Figs.~\ref{momentum_0.3} and \ref{momentum_0.4}, for three different values of the on-site repulsion $U$. In case of $\mathcal{IQG}$, the values of the momentum distribution for the same filling factor but different lattice sizes $L=20$ and $L=50$ lie very nearly on the same curve, and are therefore plotted together. Before the results are discussed we note that the one-dimensional Hubbard model is \emph{not} a Fermi liquid, but a Luttinger liquid \cite{tomonaga,luttinger,mattis}. The momentum distribution therefore has no discontinuity at the Fermi level $k_F$ in the thermodynamic limit. For finite systems, however, one has only discrete momentum values so one has to perform finite size scaling to see whether Luttinger or Fermi liquid behaviour appears. Since we only performed calculations at two systems sizes, finite size scaling is not possible. Nevertheless, conclusions about the quality of the variationally obtained 2DM can be extracted by comparing our results to numerical calculations in the weak-correlation limit using Quantum Monte Carlo (QMC) \cite{sorella}, and in the strong-correlation limit by solving the simplified Bethe-ansatz \cite{ogata}. The agreement between the $\mathcal{IQG}$ and $\mathcal{IQGT}$ results is quite good for the $U=1$ curve, both for the $\frac{3}{10}$ and $\frac{4}{10}$ filled lattice. These momentum distributions have the form one would expect for small values of $U$, {\it i.e.} a Fermi-sea type distribution with highly occupied low momentum states and with a steep drop around the Fermi momentum $k_F$. As $U$ increases the distribution becomes more spread out, and we note that the agreement between $\mathcal{IQG}$ and $\mathcal{IQGT}$ deteriorates. The $\mathcal{IQGT}$ results tend to move away more from the Fermi-sea behaviour than the $\mathcal{IQG}$ results, which is in agreement with the earlier observations for the energy.

For large values of $U$ one would expect a discontinuity around $2k_F$, because of the spinless fermion description of the charge part of system. This jump is absent, in agreement with exact results \cite{ogata}, which shows that the behaviour in the strong-correlation limit is more subtle. For $\frac{4}{10}$ filling we notice that the momentum distribution has a non-monotonous behaviour: the occupation first drops and then rises for higher momenta in the $\mathcal{IQGT}$ description of the large-$U$ systems. This behaviour is absent in the $\mathcal{IQG}$ description. This rise is physical and theoretically understood from the Bethe-ansatz solution at infinite $U$ \cite{ogata}; its absence in the 2DM determined with $\mathcal{IQG}$ is another indication of the fact that the two-index $\mathcal{IQG}$ conditions are not sufficient to capture the physics in the strong-correlation limit.
\subsubsection{Correlation functions}
Two-particle correlation functions are important quantities in the analysis of lattice systems, because they usually display the physics present in the system (for instance the appearance of magnetism). In this Section we show that in our approach, these correlation functions are easily extracted from the 2DM, and compare our results to those in \cite{sorella,ogata}.
\paragraph{Charge correlation}
\begin{figure}
\centering
$
\begin{array}{c}
\includegraphics[scale=0.7]{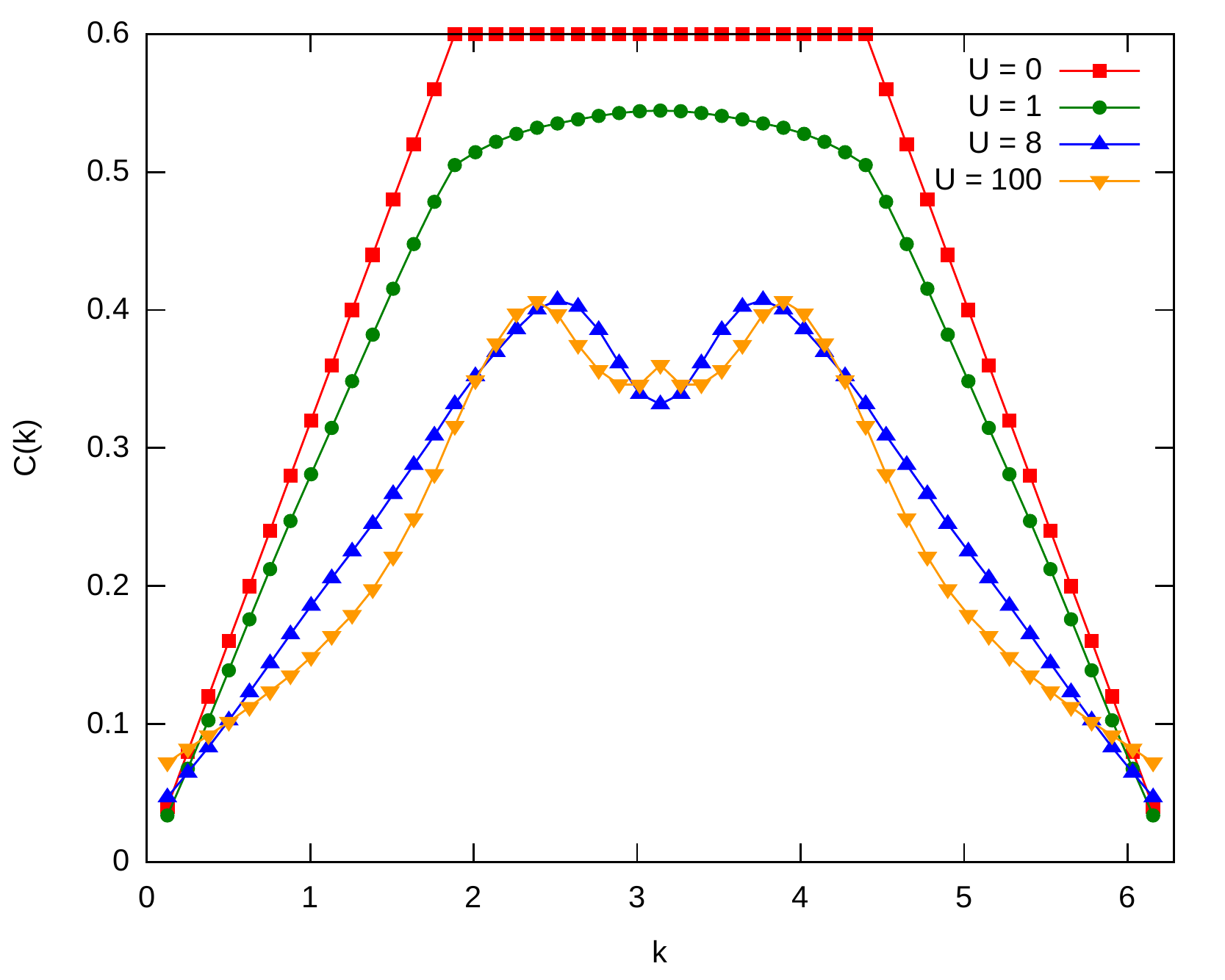}\\
\includegraphics[scale=0.7]{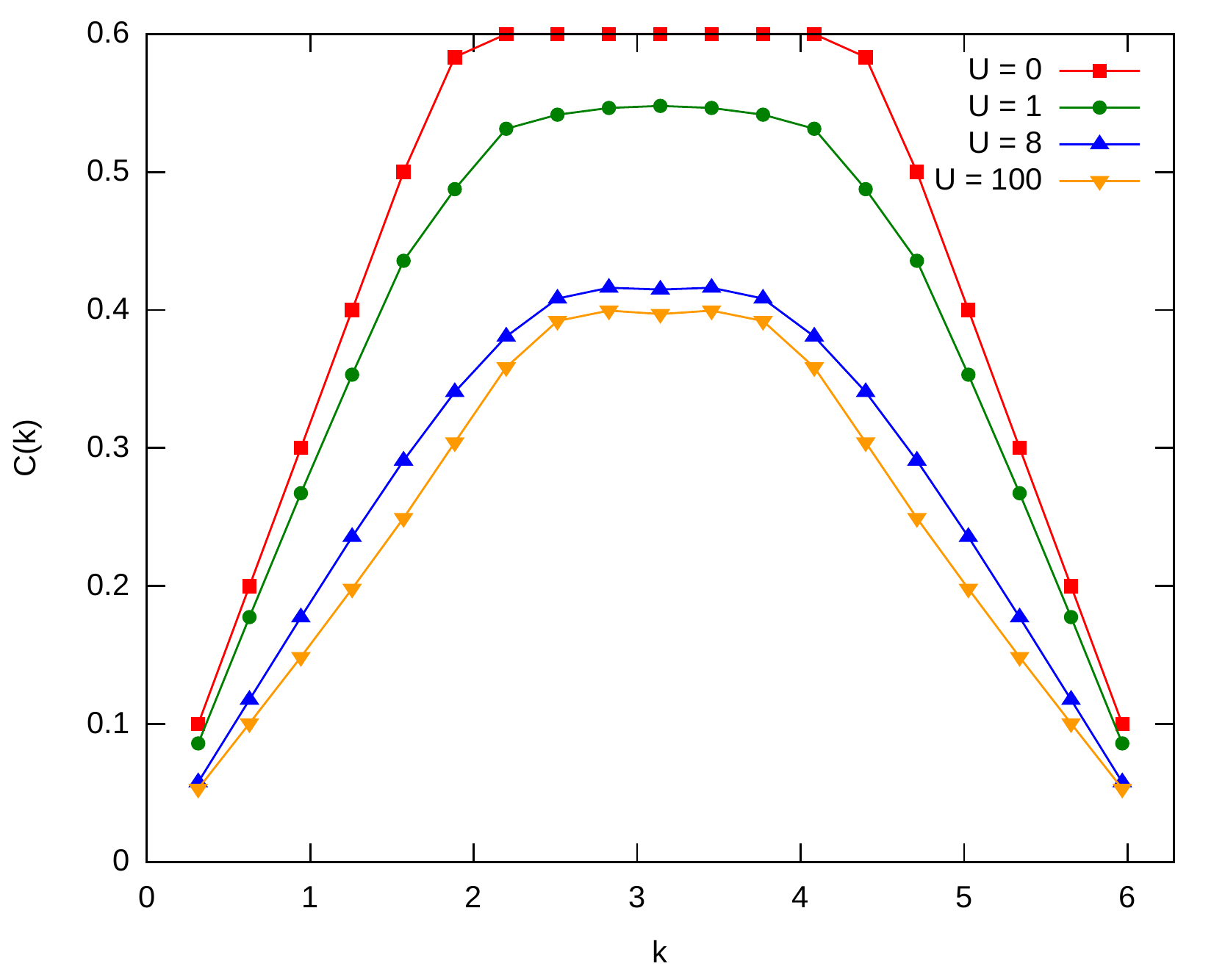}
\end{array}
$
\caption{\label{chargecorr_0.3}Two-particle charge correlation function $C(k)$, as a function of momentum, for a $\frac{3}{10}$ filled lattice and various values of on-site repulsion $U$, using $\mathcal{IQG}$ (top) and $\mathcal{IQGT}$ (bottom) conditions.}
\end{figure}
\begin{figure}
\centering
$
\begin{array}{c}
\includegraphics[scale=0.7]{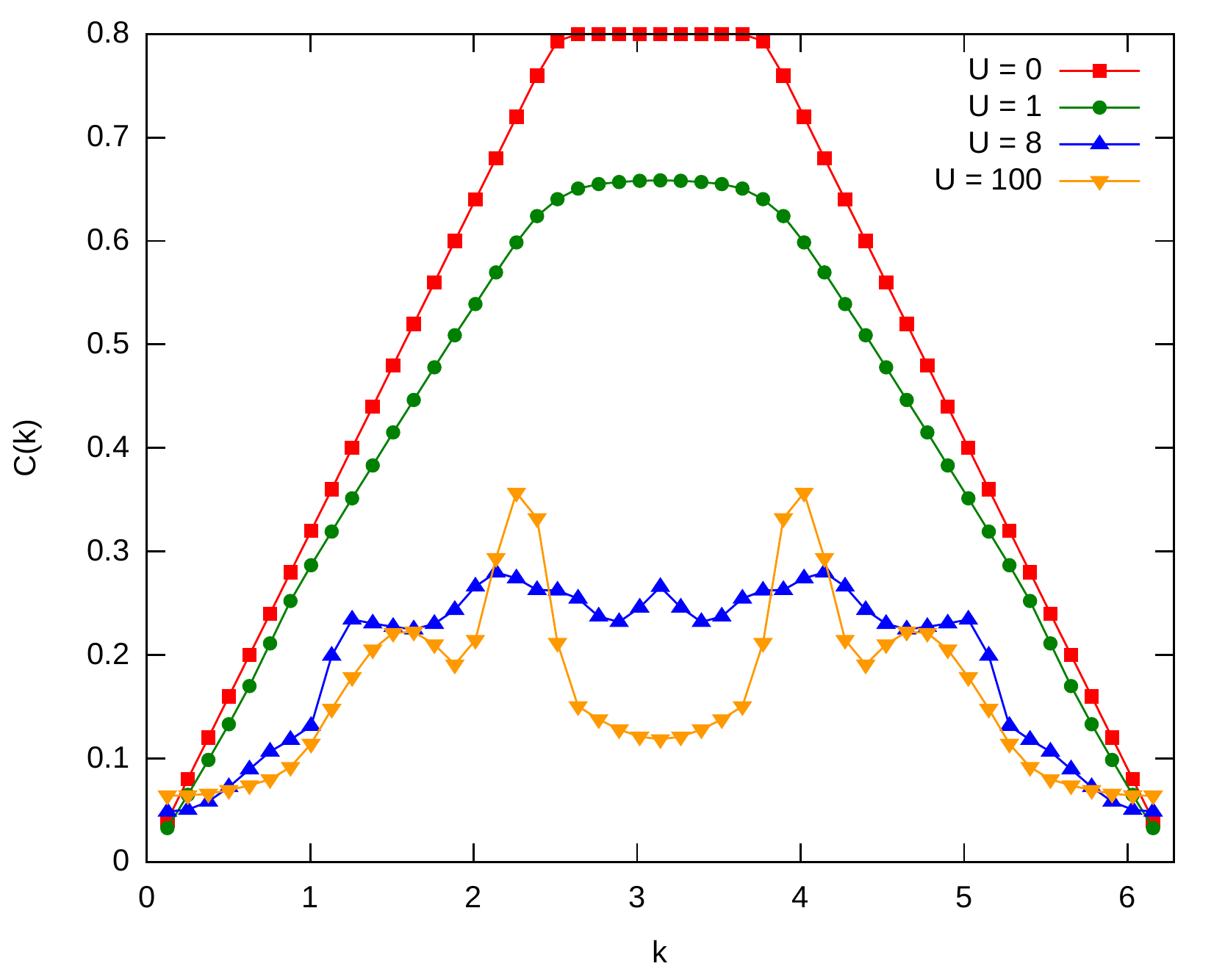}\\
\includegraphics[scale=0.7]{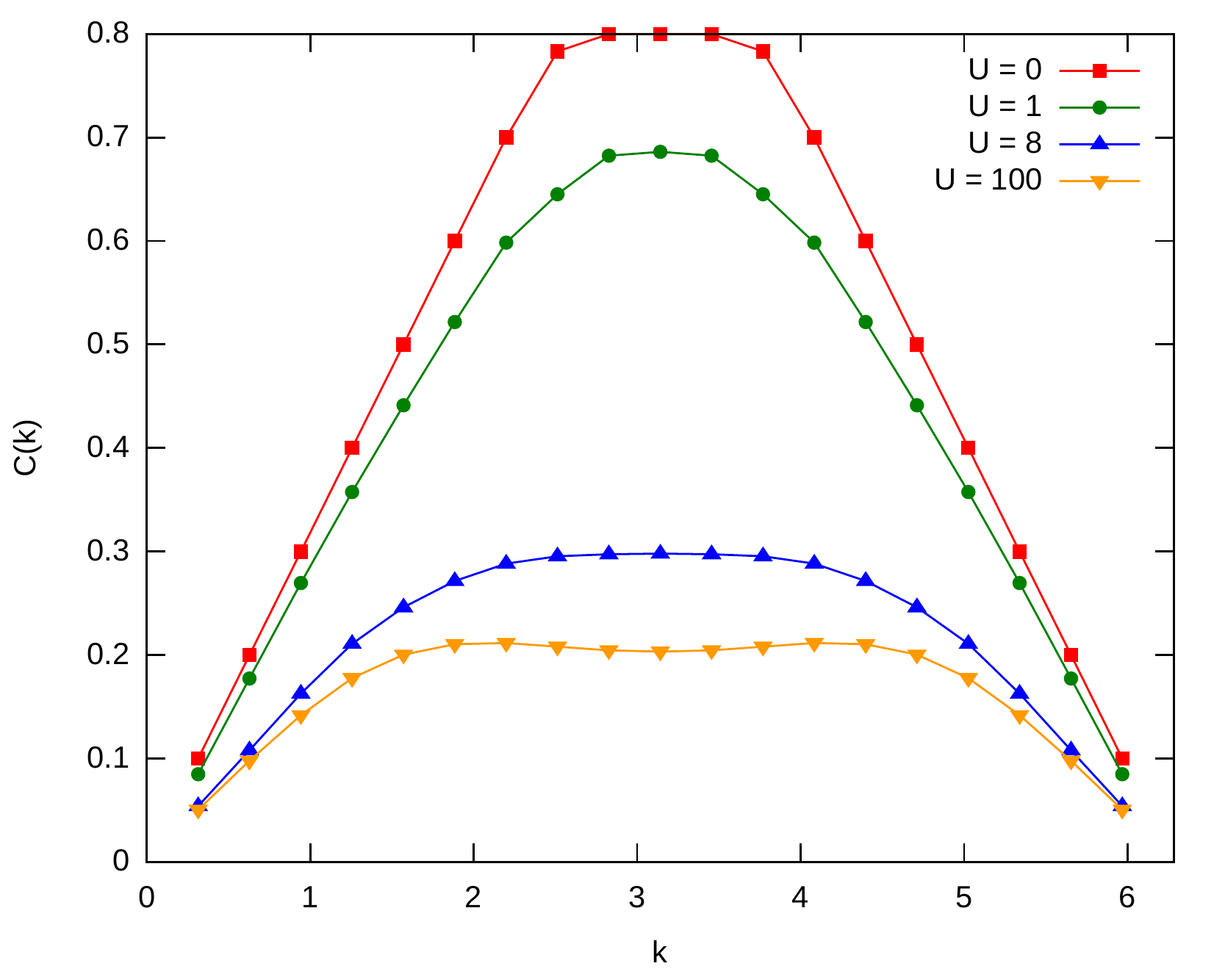}
\end{array}
$
\caption{\label{chargecorr_0.4}Two-particle charge correlation function $C(k)$, as a function of momentum, for a $\frac{4}{10}$ filled lattice and various values of on-site repulsion $U$, using $\mathcal{IQG}$ (top) and $\mathcal{IQGT}$ (bottom) conditions.}
\end{figure}
\begin{figure}
\centering
$
\begin{array}{c}
\includegraphics[scale=0.7]{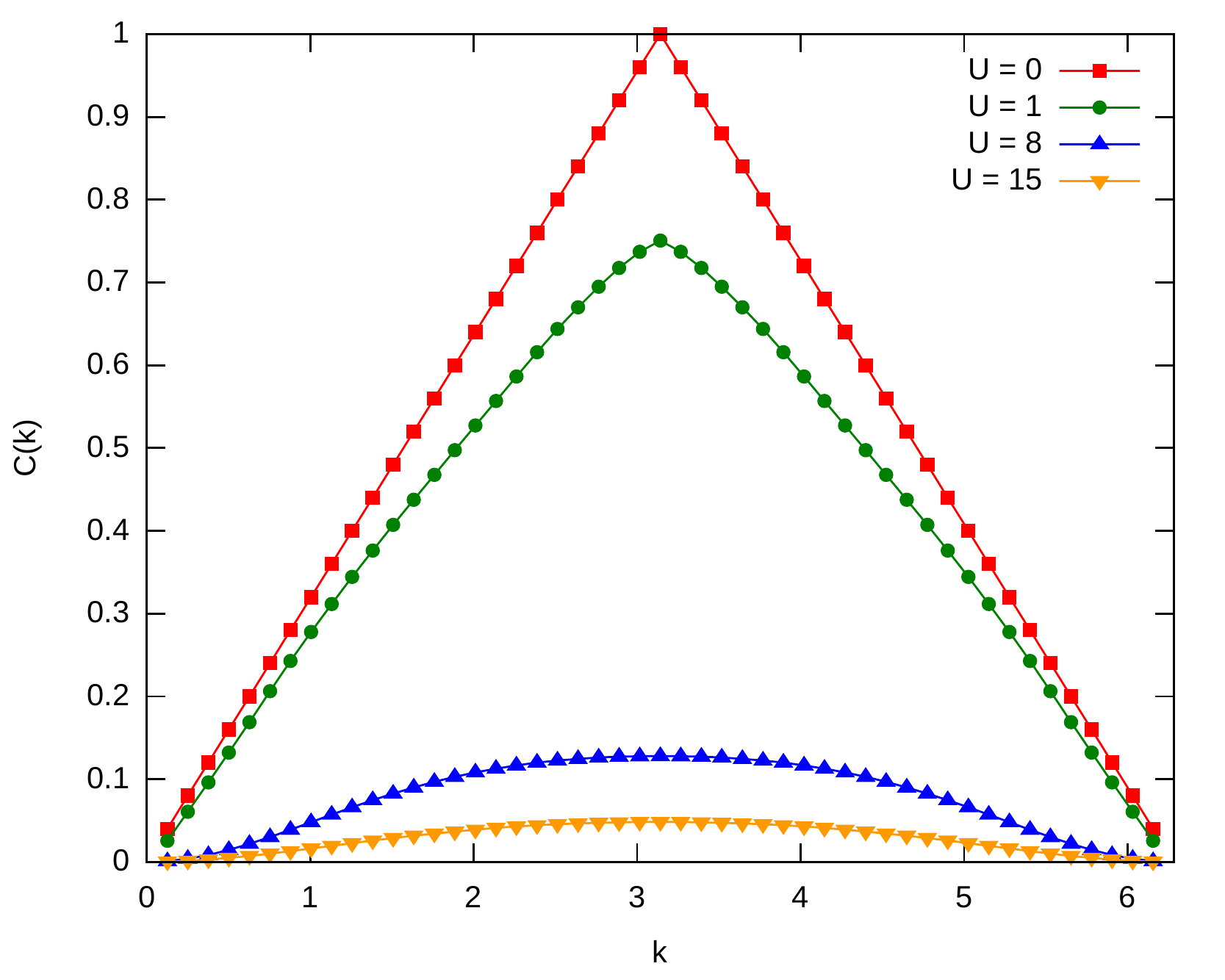}\\
\includegraphics[scale=0.7]{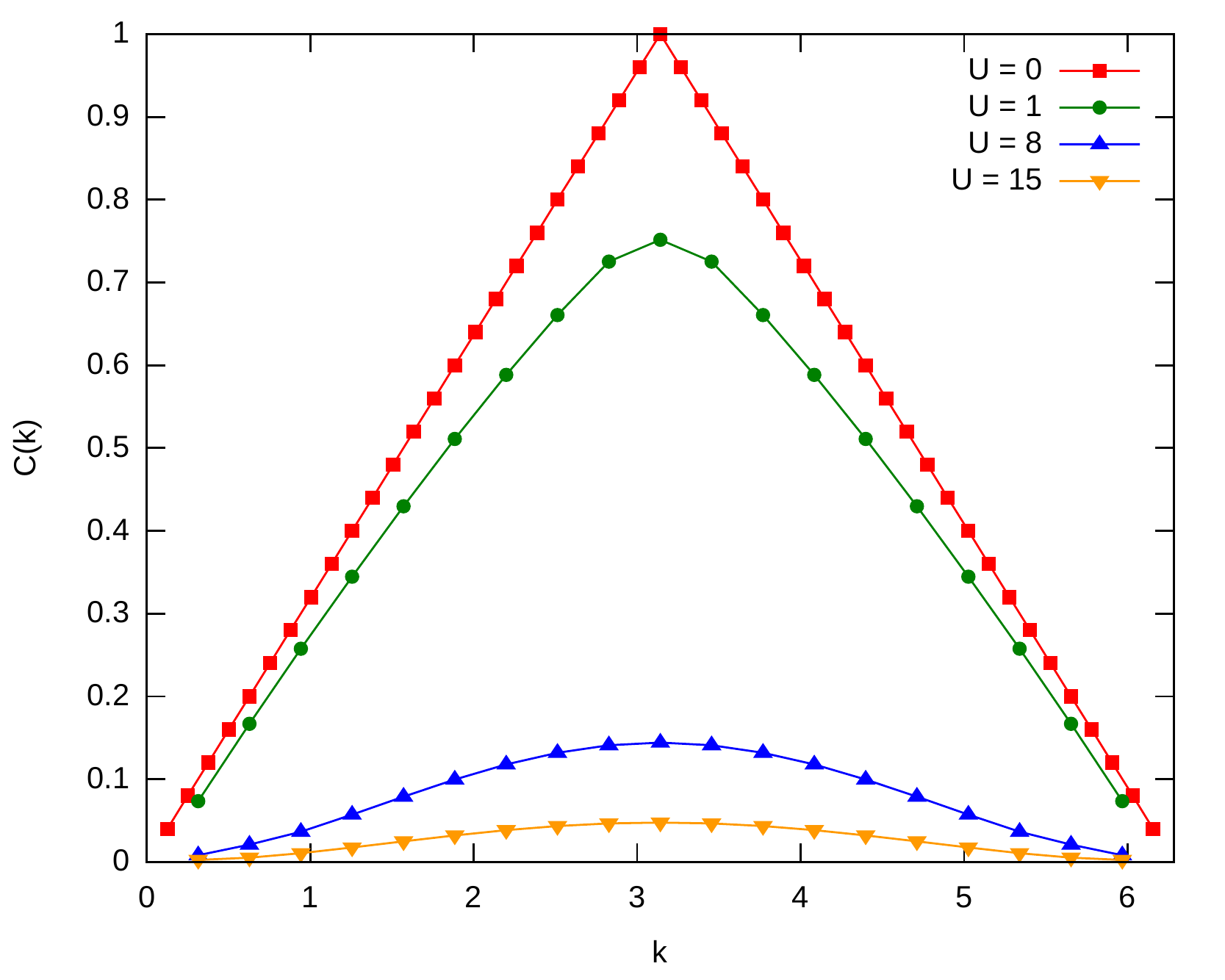}
\end{array}
$
\caption{\label{chargecorr_0.5}Two-particle charge correlation function $C(k)$, as a function of momentum, for a half-filled lattice and various values of on-site repulsion $U$, using $\mathcal{IQG}$ (top) and $\mathcal{IQGT}$ (bottom) conditions.}
\end{figure}
The two-particle charge correlation function is defined as:
\begin{equation}
C(r) = \langle \hat{n}_j\hat{n}_{j+r}\rangle = \sum_{\sigma\sigma'}\langle a^\dagger_{j\sigma}a_{j\sigma}a^\dagger_{j+r;\sigma'}a_{j+r;\sigma'} \rangle~,
\label{tpcorr}
\end{equation}
in which the notation $\langle . \rangle$ denotes the expectation value. The function is independent of the specific choice of the index $j$ because of the periodic boundary conditions. The expression in Eq.~(\ref{tpcorr}) can be written in terms of the $\mathcal{G}(\Gamma)$ matrix:
\begin{equation}
C(r) = \sum_{\sigma\sigma'} \mathcal{G}(\Gamma)_{j\sigma j\sigma;(j+r)\sigma'(j+r)\sigma'}~,
\end{equation}
and in fact only the singlet part of the $\mathcal{G}$ matrix appears:
\begin{equation}
C(r) = 2~\mathcal{G}(\Gamma)^{0}_{j j;(j+r)(j+r)}~.
\end{equation}
In translationally invariant systems one usually takes the Fourier transform of the correlation function,
\begin{equation}
C(k) = \sum_r e^{ikr} C(r) = 2\sum_{k_ak_b}\sum_{k_ck_d}\mathcal{G}(\Gamma)^{0k}_{k_ak_b;k_ck_d}~.
\end{equation}
In Figs.~\ref{chargecorr_0.3}, \ref{chargecorr_0.4} and \ref{chargecorr_0.5}, $C(k)$ has been plotted for $\frac{3}{10}$, $\frac{4}{10}$  and half filling respectively, using both $\mathcal{IQG}$ and $\mathcal{IQGT}$ conditions. Comparing the $\mathcal{IQG}$ with the $\mathcal{IQGT}$ results the same trends can be noticed as for the energy and the momentum distributions. For half filling (Fig.~\ref{chargecorr_0.5}) the $\mathcal{IQG}$ and $\mathcal{IQGT}$ results are in nice agreement. Moving away from half-filling (Figs.~\ref{chargecorr_0.3} and \ref{chargecorr_0.4}) there is only agreement for small values of $U$. For larger values of $U$ strange oscillations appear in the $\mathcal{IQG}$ results. So in this limit not only the energy, but the entire physical content of the $\mathcal{IQG}$-2DM cannot be trusted. This is once again an indication that the $\mathcal{IQG}$ conditions fail to describe the strong-correlation limit away from half-filling. The $\mathcal{IQGT}$ results compare well, both in shape and magnitude, with the results from Quantum Monte Carlo \cite{sorella}, and the Bethe-ansatz results in the strong-correlation limit \cite{ogata}.
\paragraph{Spin correlation}
\begin{figure}
\centering
$
\begin{array}{c}
\includegraphics[scale=0.7]{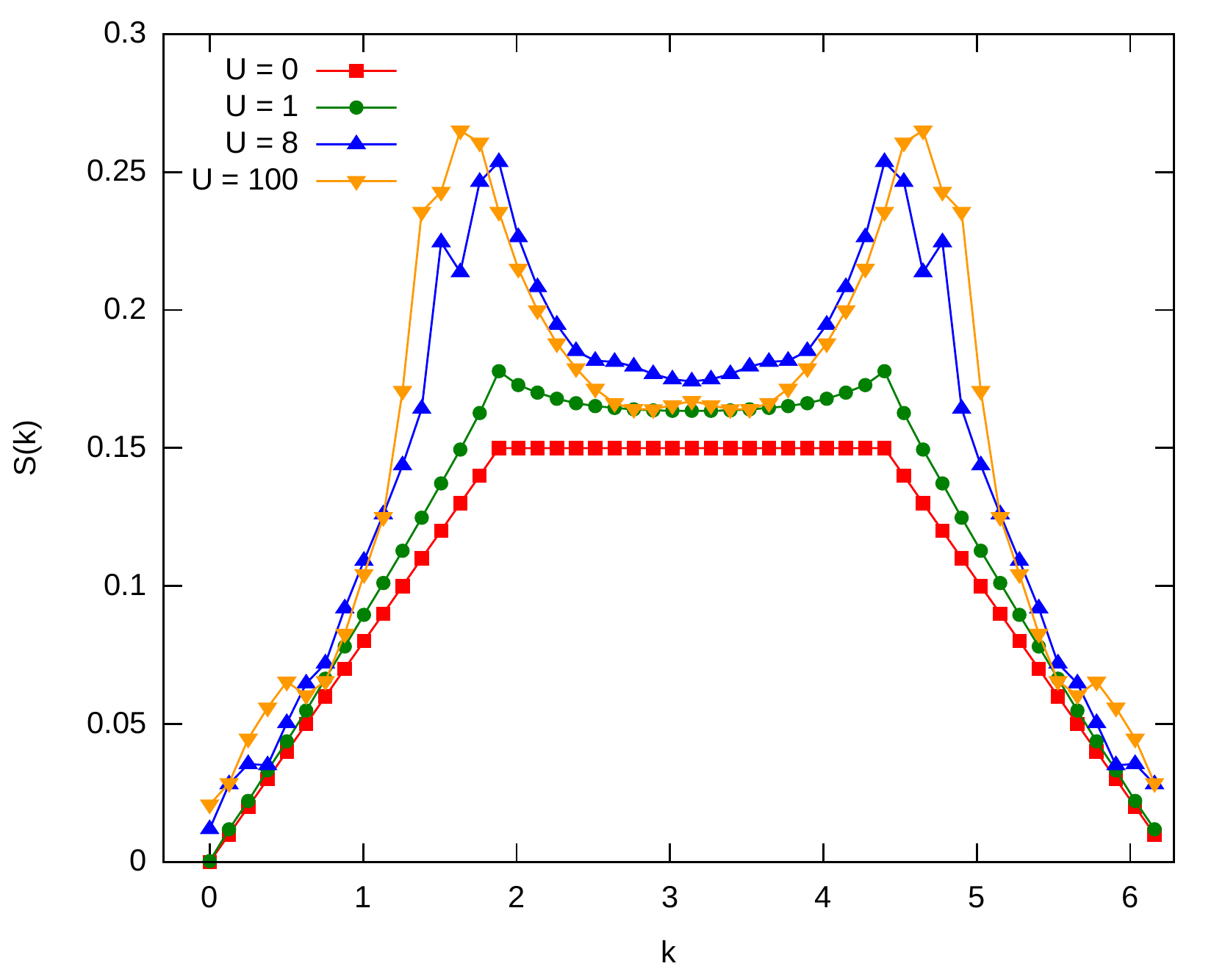}\\
\includegraphics[scale=0.7]{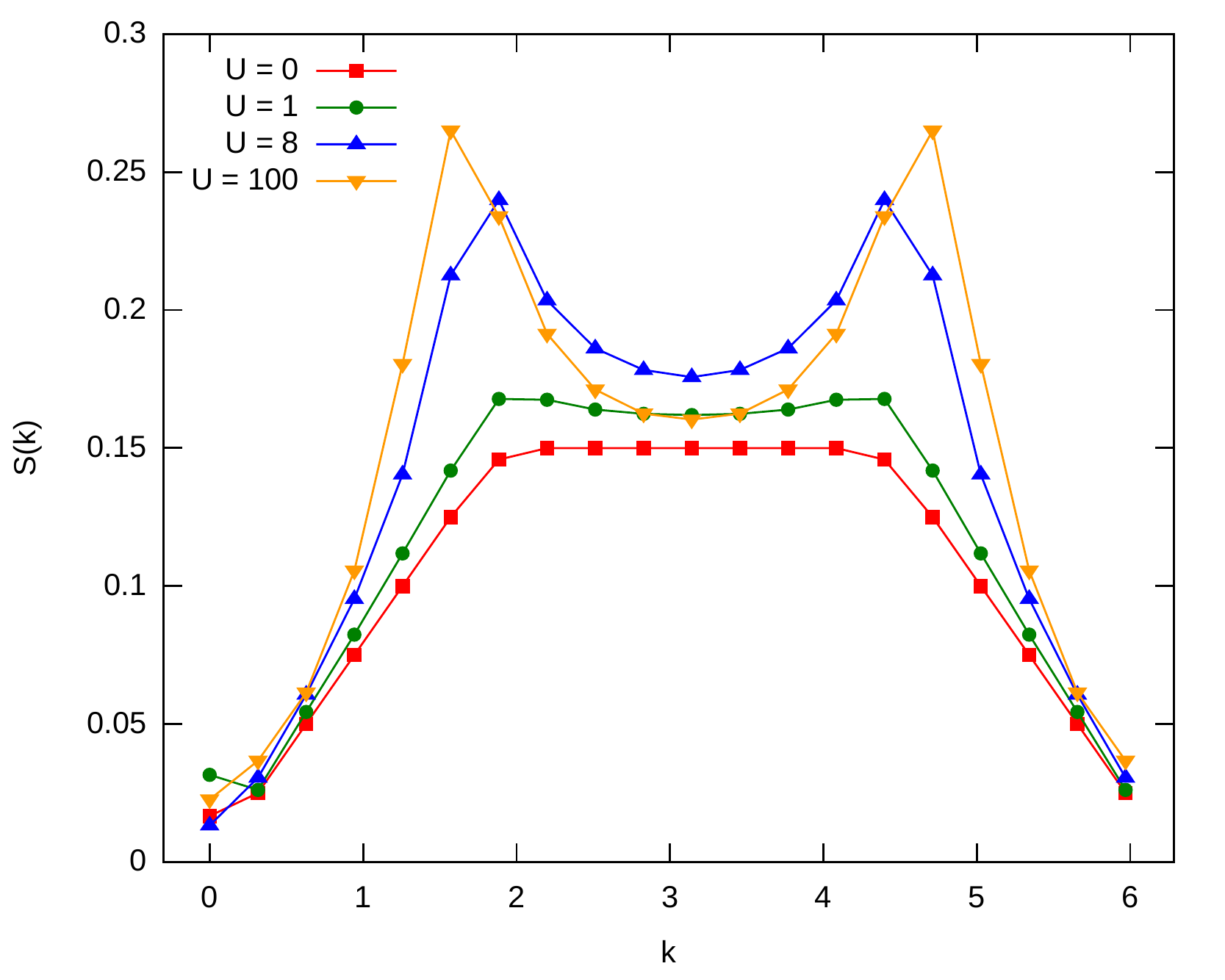}
\end{array}
$
\caption{\label{spincorr_0.3}Two-particle spin correlation function $S(k)$, as a function of momentum, for a $\frac{3}{10}$ filled lattice and various values of on-site repulsion $U$, using $\mathcal{IQG}$ (top) and $\mathcal{IQGT}$ (bottom) conditions.}
\end{figure}
\begin{figure}
\centering
$
\begin{array}{c}
\includegraphics[scale=0.7]{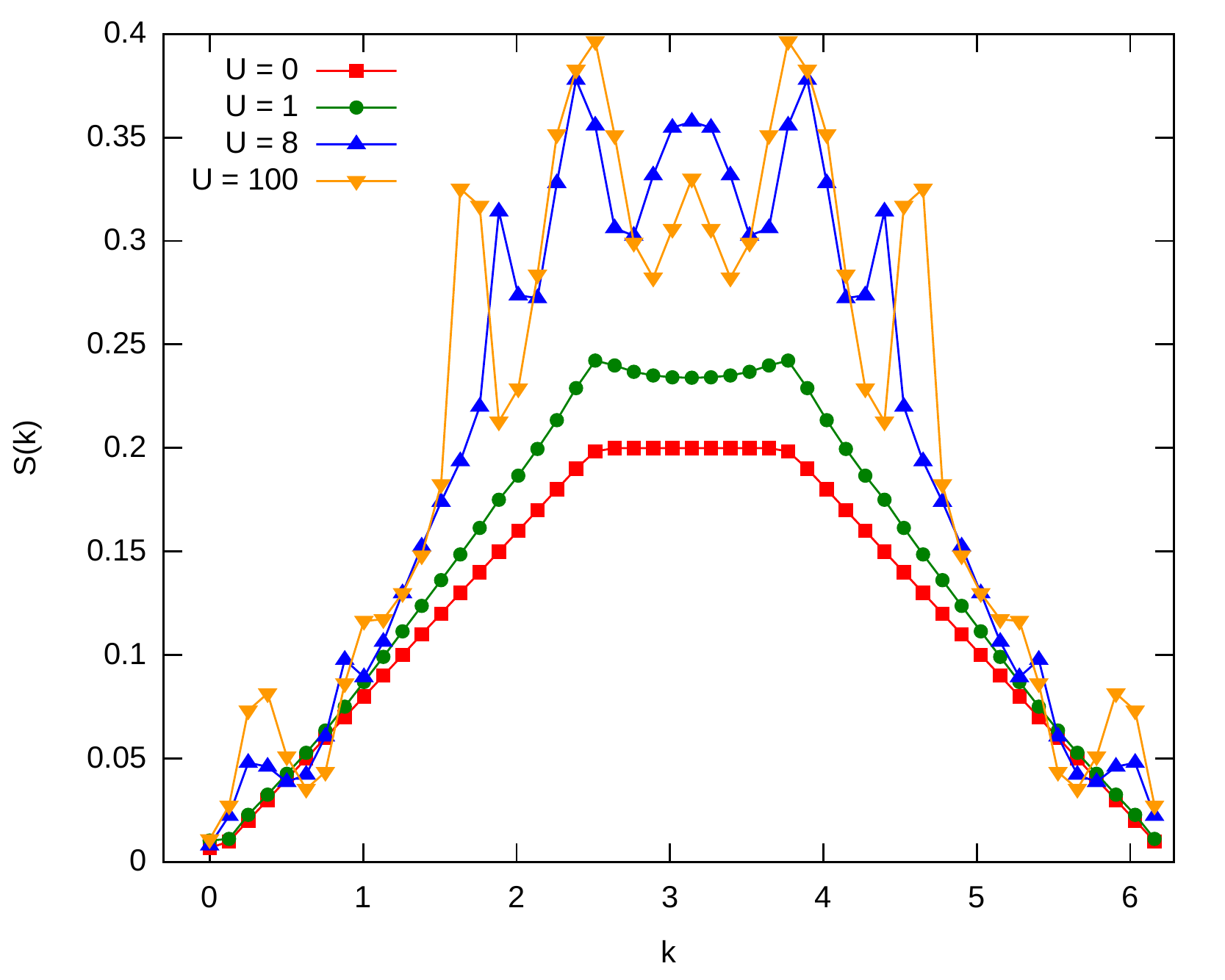}\\
\includegraphics[scale=0.7]{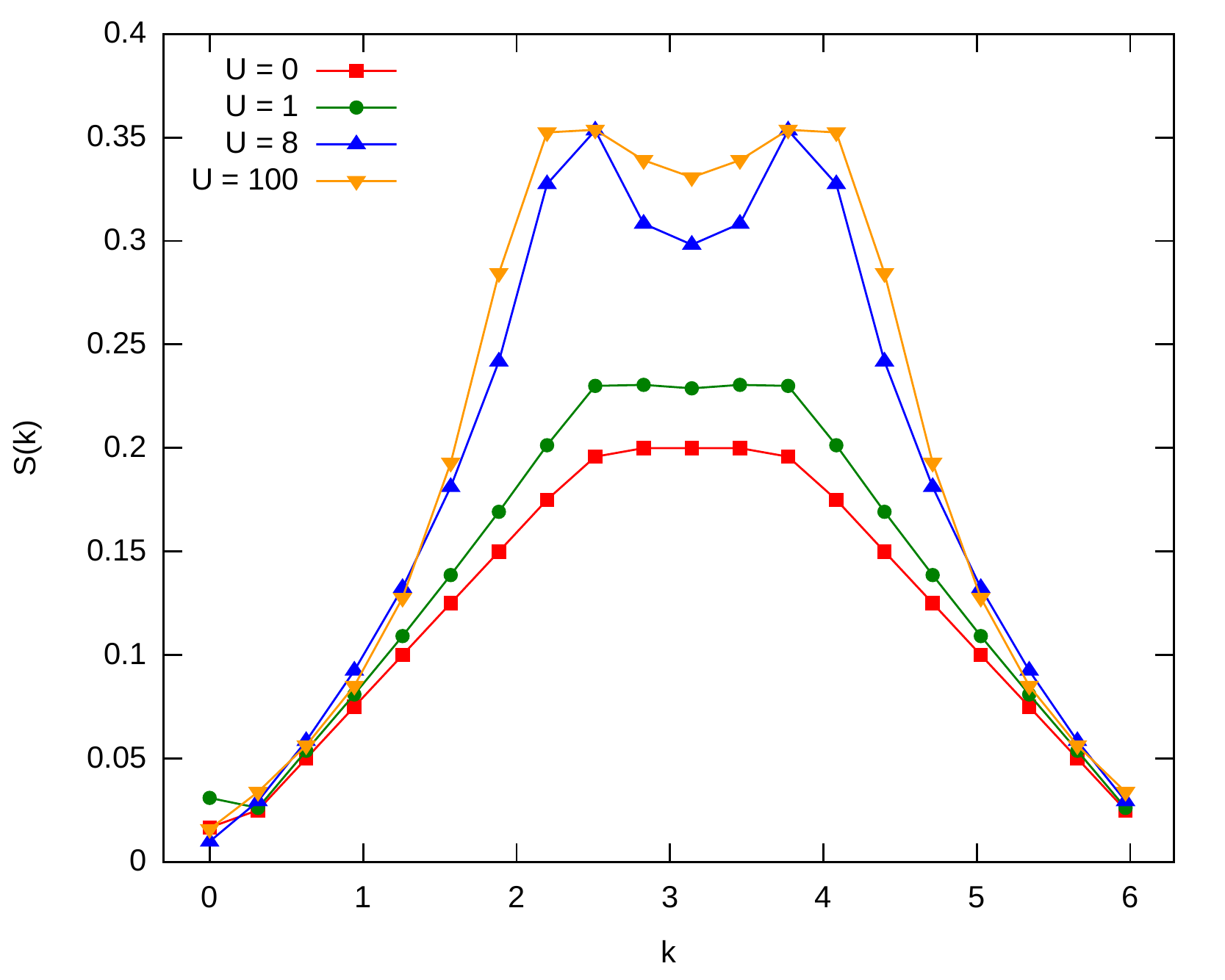}
\end{array}
$
\caption{\label{spincorr_0.4}Two-particle spin correlation function $S(k)$, as a function of momentum, for a $\frac{4}{10}$ filled lattice and various values of on-site repulsion $U$, using $\mathcal{IQG}$ (top) and $\mathcal{IQGT}$ (bottom) conditions.}
\end{figure}
\begin{figure}
\centering
$
\begin{array}{c}
\includegraphics[scale=0.7]{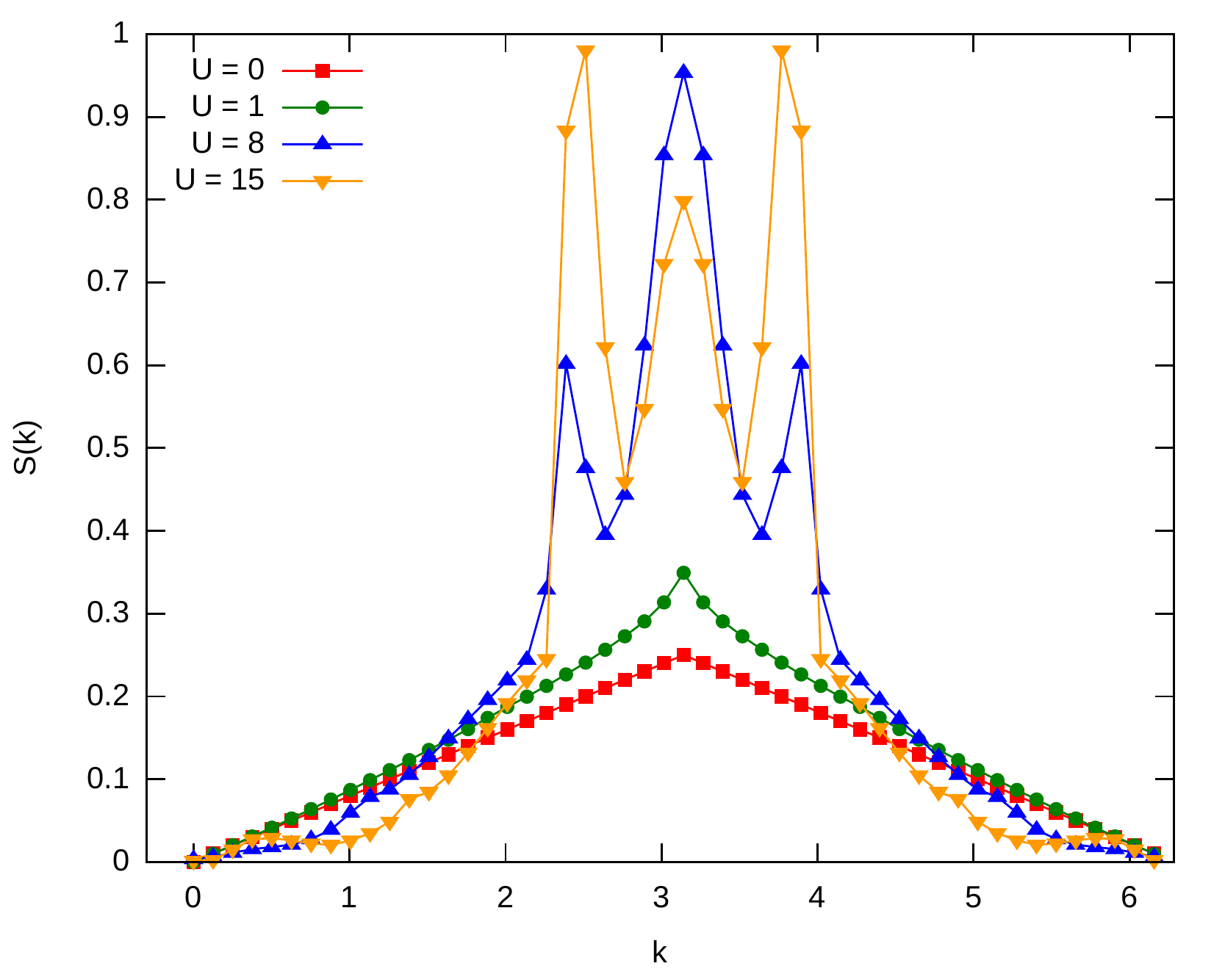}\\
\includegraphics[scale=0.7]{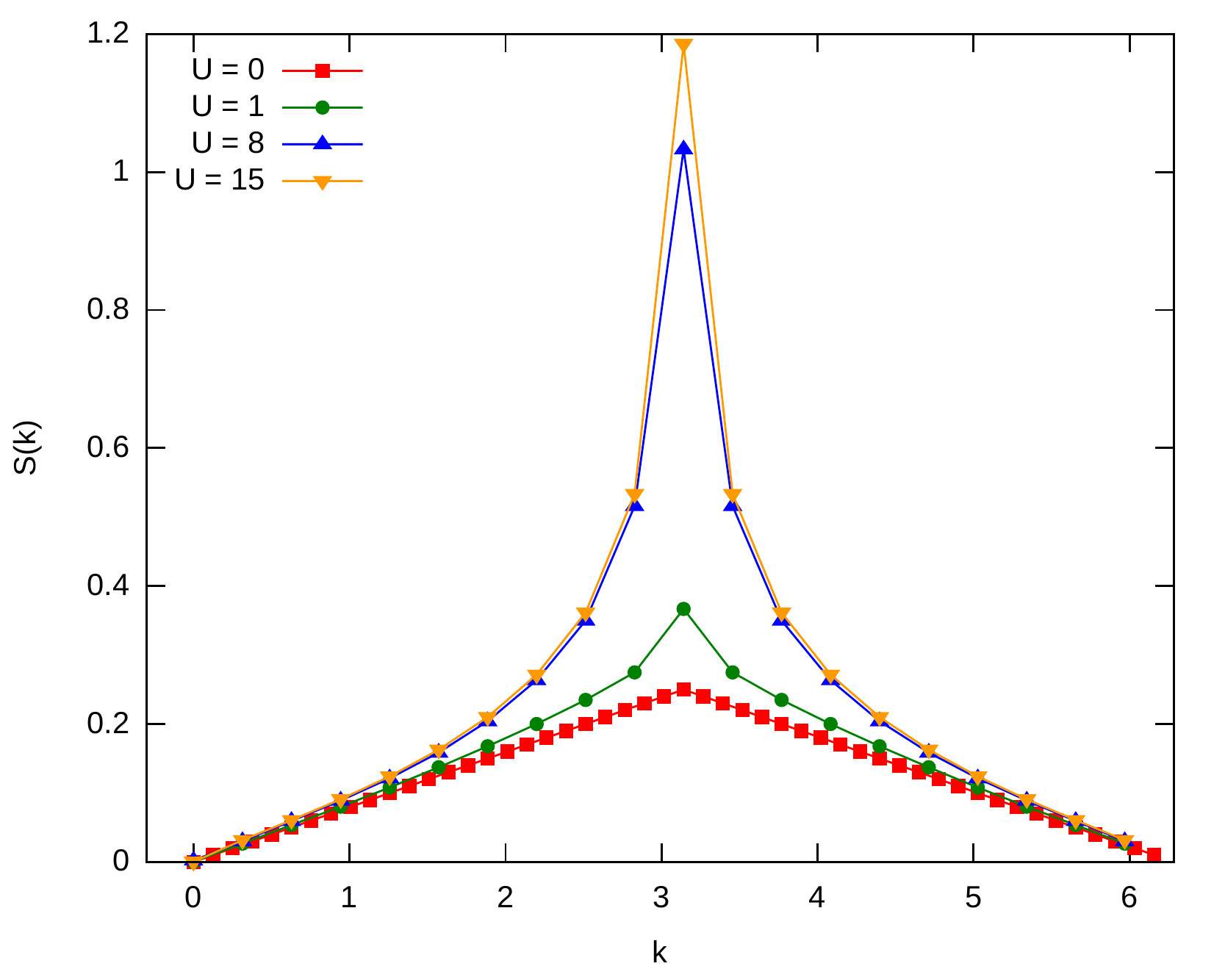}
\end{array}
$
\caption{\label{spincorr_0.5}Two-particle spin correlation function $S(k)$, as a function of momentum, for a half-filled lattice and various values of on-site repulsion $U$, using $\mathcal{IQG}$ (top) and $\mathcal{IQGT}$ (bottom) conditions.}
\end{figure}
The two-particle spin-correlation function defined as:
\begin{equation}
S(r) = \langle S_z^j S_z^{j+r}\rangle = \sum_{\sigma\sigma'} \sigma\sigma' \langle a^\dagger_{j\sigma}a_{j\sigma}a^\dagger_{j+r;\sigma'}a_{j+r;\sigma'} \rangle~,
\end{equation}
can be expressed as a function of the $\mathcal{G}$ matrix:
\begin{equation}
S(r) = \sum_{\sigma\sigma'}\sigma\sigma'\mathcal{G}(\Gamma)_{j\sigma j\sigma;(j+r)\sigma'(j+r)\sigma'}~.
\end{equation}
Written in terms of the spin-coupled $\mathcal{G}$ matrix, only the triplet $S=1$ part contributes:
\begin{equation}
S(r) = \frac{1}{2}\mathcal{G}(\Gamma)^1_{jj;(j+r)(j+r)}~.
\end{equation}
Fourier transforming $S(r)$ yields the momentum dependent spin-correlation function:
\begin{equation}
S(k) = \sum_r e^{ikr}S(r) = \frac{1}{2}\sum_{k_ak_b}\sum_{k_ck_d}\mathcal{G}^{1k}_{k_ak_b;k_ck_d}~.
\end{equation}
We have plotted this object in Figs.~\ref{spincorr_0.3}, \ref{spincorr_0.4} and \ref{spincorr_0.5} for $\frac{3}{10}$, $\frac{4}{10}$ and half filling respectively, using both $\mathcal{IQG}$ as $\mathcal{IQGT}$ conditions. Unsurprisingly, a good agreement is observed between the $\mathcal{IQG}$ and $\mathcal{IQGT}$ results for small values of $U$. At larger values of $U$, away from half filling, results are very poor with the $\mathcal{IQG}$ conditions, especially in the $\frac{4}{10}$-filled case, where the correlation function is wildly oscillating. More surprising, is that the spin-correlation function for the half-filled lattice in the large-$U$ limit is also incorrect in the $\mathcal{IQG}$ approximation. In the strong-correlation limit, for half-filling, the spin part of the Hubbard model is identical to the Heisenberg model \cite{ogata}, for which the spin-correlation function has a singularity at two times the Fermi momentum $2k_F=\pi$, which is exactly what we see in the $\mathcal{IQGT}$ results. Below half-filling the singularity in the large-$U$ limit splits and shifts to smaller values of $k$, as obserbed in the $\mathcal{IQGT}$ figures \ref{spincorr_0.3} and \ref{spincorr_0.4}. This is in agreement with the results in \cite{ogata} and \cite{sorella}. In conclusion we can say that the 2DM obtained with the $\mathcal{IQGT}$ conditions, correctly describes the physics that governs the spin-correlation function, whereas the $\mathcal{IQG}$ conditions do not. It is also important to note, that even though the $\mathcal{IQG}$ results for the energy are good for the half-filled lattice, the 2DM is flawed, because the spin-correlation function is not correctly described.
\subsection{Failure in the strong-correlation limit}
The standard two-index conditions, $\mathcal{IQG}$, have a problem describing the strong-correlation limit, $U\rightarrow\infty$, of the Hubbard model. The three-index conditions, however, describe the physics of the model correctly. In this Section we analyse why this is the case, and derive some non-standard constraints that try to fix this behaviour at the two-index level.

The strong-correlation limit of the Hubbard model has a very clear physical structure \cite{ogata,krivnov}. The spin part of the system, described by the Heisenberg model, is decoupled from the charge part of the system, which is described by spinless particles hopping around in a lattice. The statistics of these particles is, however, much more complex than the original spin-$\frac{1}{2}$ fermions, because the blocking of double occupation only occurs in site space. There is no reason why the momentum states couldn't have an occupation larger than one. The statistics of these particles, called hard-core fermions, has been studied in some detail and is called orthostatistics \cite{palev,mishra,kishore}.

When analysing the structure of the $\mathcal{IQG}$-2DM, it is found that there is no double occupation of lattice sites, {\it i.e.} on has correctly
\begin{equation}
\Gamma^0_{ii;jk}\rightarrow 0~.
\end{equation}
Therefore the error in the energy results from a huge overestimation of the hopping energy of the spinless particles, due to the faulty $\mathcal{IQG}$ description of the hopping of particles in a singly-occupied lattice. Based on this analysis we next derive two constraints that amend the pathological behaviour of $\mathcal{IQG}$ in the strong-correlation limit.
\subsubsection{The non-linear hopping constraint}
The first constraint imposes a bound on the hopping energy through knowledge of the exact result $T^\infty_{\text{min}}$ in the $U\rightarrow\infty$ limit  and the doubly-occupied fraction of particles on the lattice.

We can divide the full $N$-particle Hilbert space into orthogonal subspaces characterized by a fixed number $n$ of \emph{doubly occupied} sites, {\it i.e.} the eigenspaces of:
\begin{equation}
\hat{n} = \sum_ia^\dagger_{i\uparrow}a^\dagger_{i\downarrow}a_{i\downarrow}a_{i\uparrow}~.
\end{equation}
An arbitrary wave function can now be decomposed as:
\begin{equation}
\ket{\Psi}= \sum_n\ket{\Psi^{(n)}}~,
\end{equation}
in which the $\ket{\Psi^{(n)}}$ are not necessarily normalized. The expectation value of the number of doubly occupied sites is given by:
\begin{align}
\langle n\rangle = \sum_i\Gamma_{i\uparrow i\downarrow;i\uparrow i\downarrow} =& \sum_n n\braket{\Psi^{(n)}}{\Psi^{(n)}}\\
\label{beta_gp}\geq& \sum_{n\neq0}\braket{\Psi^{(n)}}{\Psi^{(n)}}\\
=& 1 - \braket{\Psi^{(0)}}{\Psi^{(0)}}~.
\end{align}
From this we can derive a lower bound for the occupation of singly-occupied space, {\it i.e.}
\begin{equation}
\braket{\Psi^{(0)}}{\Psi^{(0)}} \geq 1 - \overline{\overline{\Gamma}}_P~,
\end{equation}
in which we have introduced the \emph{pair trace} symbol:
\begin{equation}
\overline{\overline{\Gamma}}_P = \sum_i\Gamma_{i\uparrow i\downarrow;i\uparrow i\downarrow}~.
\end{equation}
Now take an arbitrary wave function and decompose it into its singly-occupied component and the remainder:
\begin{equation}
\ket{\Psi} = \alpha\ket{\Psi^{\text{sos}}} + \beta\ket{\Psi^\perp}~,
\end{equation}
in which
\begin{equation}
\alpha\ket{\Psi^{\text{sos}}} = \ket{\Psi^{(0)}}~.
\end{equation}
The expectation value of the hopping term $\hat{T}$ in an arbitrary wave function can then be expressed as:
\begin{equation}
\mathrm{Tr}~\Gamma T^{(2)} = |\alpha|^2\bra{\Psi^{\text{sos}}}\hat{T}\ket{\Psi^{\text{sos}}} + |\beta|^2\bra{\Psi^\perp}\hat{T}\ket{\Psi^\perp} + (\alpha^*\beta + \beta^*\alpha)\bra{\Psi^{\text{sos}}}\hat{T}\ket{\Psi^\perp}~.
\label{hopp_exp}
\end{equation}
Both the lowest eigenvalue $T^0_\text{min}$ of the hopping operator on the full space (see Section~\ref{hub_sym} and in Fig.~\ref{hubbard_fig}), as well as the lowest eigenvalue $T^\infty_{\text{min}}$ on singly-occupied space (see Eq.~(\ref{1dhub_sclim})) are known. Being the lowest eigenvalues the following inequalities hold:
\begin{equation}
T^0_{\text{min}} \leq \bra{\Psi^{\perp}}\hat{T}\ket{\Psi^{\perp}}~, \qquad\text{and}\qquad T^\infty_{\text{min}} \leq\bra{\Psi^{\text{sos}}}\hat{T}\ket{\Psi^{\text{sos}}}~. 
\end{equation}
Eq.~(\ref{hopp_exp}) can now be rewritten as an inequality:
\begin{align}
\nonumber\mathrm{Tr}~\Gamma T \geq& |\alpha|^2 T^\infty_{\text{min}} + |\beta|^2 T^0_{\text{min}} - 2|\alpha||\beta||T^0_{\text{min}}|\\
=& |\alpha|^2 T^\infty_{\text{min}} + |\beta|^2 T^0_{\text{min}} + 2|\alpha||\beta|T^0_{\text{min}} = f(\beta)~.
\label{ineq_beta}
\end{align}
The second line in Eq.~(\ref{ineq_beta}) follows from the fact that the lowest eigenvalue $T^0_\text{min}$ is always negative. Eq.~(\ref{beta_gp}) implies that $|\beta|^2 \leq \overline{\overline{\Gamma}}_P$, so the inequality (\ref{ineq_beta}) still holds upon replacing $|\beta|^2$ by $\overline{\overline{\Gamma}}_P$, as long as $f$ is a decreasing function. It follows that (\ref{beta_gp}) can be formulated as a non-linear inequality constraint which can be expressed solely in terms of the 2DM:
\begin{equation}
\mathrm{Tr}~\Gamma T \geq (1 - \overline{\overline{\Gamma}}_P)T^\infty_\text{min} + T^0_\text{min}\left(\overline{\overline{\Gamma}}_P + 2\sqrt{\overline{\overline{\Gamma}}_P \left(1 - \overline{\overline{\Gamma}}_P\right)}\right) = f\left(\overline{\overline{\Gamma}}_P\right)~.
\end{equation}
\begin{figure}
\centering
\includegraphics[scale=0.7]{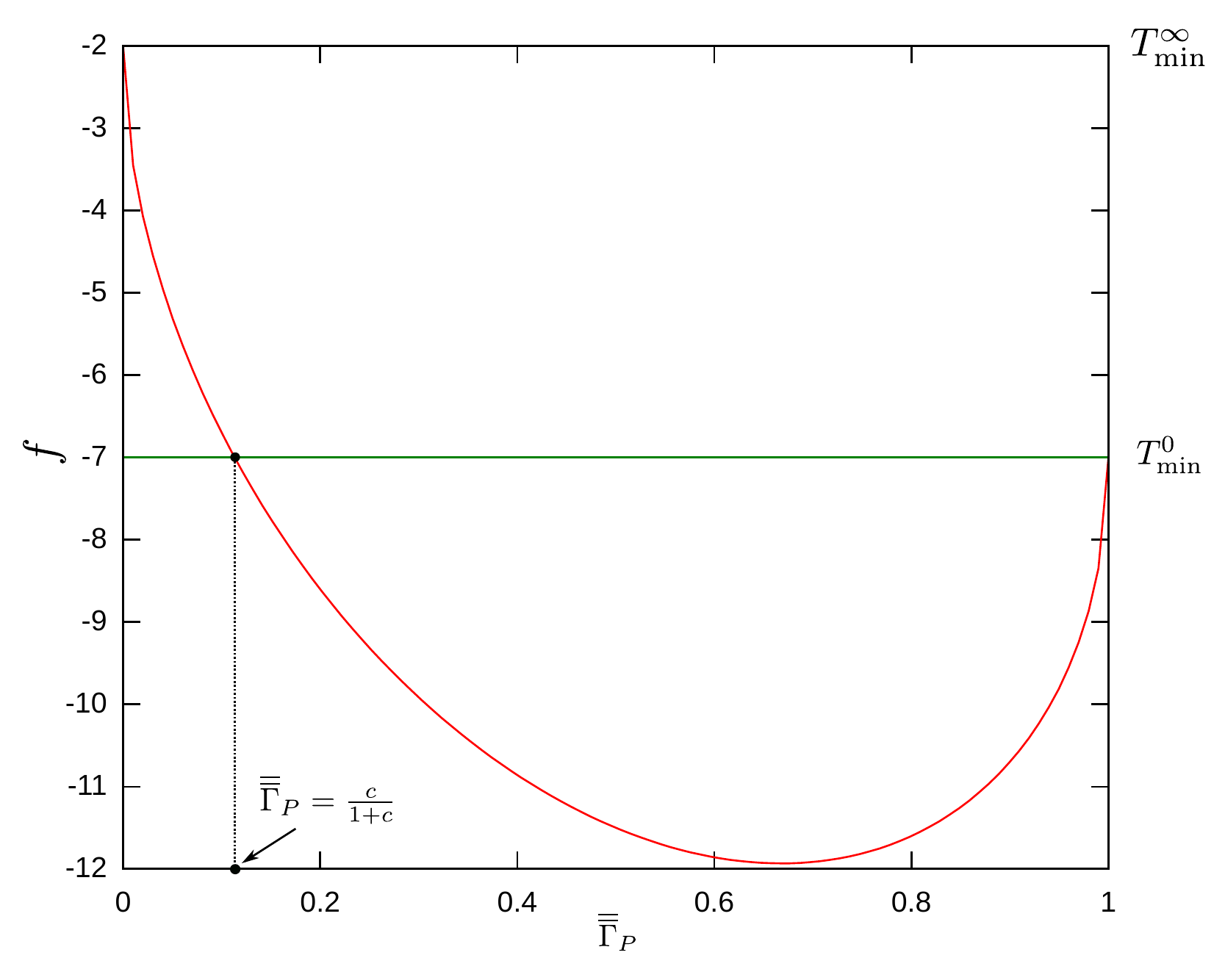}
\caption{\label{nlcon} The inequality curve $f\left(\overline{\overline{\Gamma}}_P\right)$ for a 6-site lattice with 5 particles.}
\end{figure}
This constraint is necessary as long as $f\left(\overline{\overline{\Gamma}}_P\right) \geq T^0_\text{min}$, {\it i.e.} when 
\begin{equation}
\overline{\overline{\Gamma}}_P \leq \frac{c}{1+c}~, \qquad\text{with}\qquad c=\left(\frac{T^\infty_\text{min} - T^0_\text{min}}{2T^0_\text{min}}\right)^2~.
\end{equation}
In Fig.~\ref{nlcon} an example is given for the inequality curve $f\left(\overline{\overline{\Gamma}}_P\right)$ for a 6-site lattice with 5 particles, in which case $T^0_\text{min} =-7$ and $T^\infty_\text{min} = -2$. Note that when $\overline{\overline{\Gamma}}_P\rightarrow 0$  the correct strong-interaction limit is restored. For larger values the curve quickly descends to the point where it crosses $T^0_\text{min}$, after which the constraint becomes redundant.

We have implemented this constraint in the dual-only potential reduction program discussed in Section~\ref{pr_sdp}, as it is the most straightforward algorithm to which a non-linear constraint can be added. This potential in Eq.~(\ref{dual_only_pot}) (here expressed as a function of the 2DM $\Gamma$) simply acquires an extra term:
\begin{equation}
\phi(\Gamma) = \mathrm{Tr}~\Gamma H^{(2)} - t\ln\det Z(\Gamma) - t \ln{\left[\mathrm{Tr}~\Gamma T - f^*(\Gamma)\right]}~,
\label{nlpot}
\end{equation}
where
\begin{equation}
f^*(\Gamma) = \left\{
\begin{matrix}
(1 - \overline{\overline{\Gamma}}_P)T^\infty_\text{min} + T^0_\text{min}\left(\overline{\overline{\Gamma}}_P + 2\sqrt{\overline{\overline{\Gamma}}_P \left(1 - \overline{\overline{\Gamma}}_P\right)}\right)&\text{for}&\overline{\overline{\Gamma}}_P \leq \frac{c}{1+c}\\
\frac{2\mathrm{Tr}~\Gamma}{N(N-1)}T^0_\text{min}&\text{for}&\overline{\overline{\Gamma}}_P > \frac{c}{1+c}
\end{matrix}\right.~.
\end{equation}
The algorithm's workhorse is Newton's method, for which we need the gradient and Hessian of the potential in Eq.~(\ref{nlpot}). The gradient can be written in matrix form as in Eq.~(\ref{gradient}), with the addition of an extra term coming from the non-linear constraint:
\begin{equation}
\nabla\phi = \hat{P}_{\mathrm{Tr}}\left(H^{(2)} - t\sum_j \mathcal{L}_j^\dagger\left[\mathcal{L}_j\left(\Gamma\right)^{-1}\right] -\frac{t}{\left[\mathrm{Tr}~\Gamma T - f^*(\Gamma)\right]}\left[T-g(\Gamma)\mathbb{1}_P\right]\right)~,
\label{gradient_nl}
\end{equation}
where the matrix $\mathbb{1}_P$ defined in such a way that:
\begin{equation}
\mathrm{Tr}~\Gamma \mathbb{1}_P = \overline{\overline{\Gamma}}_P~.
\end{equation}
The non-linear gradient function $g$ reads:
\begin{equation}
g(\Gamma) = \left\{
\begin{matrix}
T^0_\text{min} -T^{\infty}_{\text{min}} + \frac{T^0_\text{min}}{\sqrt{\overline{\overline{\Gamma}}_P\left(1 - \overline{\overline{\Gamma}}_P\right)}}\left(1 - 2\overline{\overline{\Gamma}}_P\right)& \text{for}&  \overline{\overline{\Gamma}}_P \leq \frac{c}{1+c}\\
0& \text{for}&  \overline{\overline{\Gamma}}_P > \frac{c}{1+c}
\end{matrix}
\right.
~.
\end{equation}
As in Eq.~(\ref{hessian}), the Hessian can be seen as a map from traceless two-particle matrix space on itself. Compared to Eq.~(\ref{hessian}) there are two extra terms coming from the non-linear constraint. The action of the Hessian map on a traceless two-particle matrix $\Delta$ is given by:
\begin{align}
\nonumber\mathcal{H}\Delta =& t \hat{P}_{\mathrm{Tr}}\left[\sum_k \mathcal{L}^\dagger_k\left(\mathcal{L}_k(\Gamma)^{-1}\mathcal{L}_k(\Delta)\mathcal{L}_k(\Gamma)^{-1}\right)
+ \left(\frac{\mathrm{Tr}~T\Delta -g(\Gamma)\overline{\overline{\Delta}}_P}{\left[\mathrm{Tr}~T\Gamma - f^*(\Gamma)\right]^2}\right)\left[T-g(\Gamma)\mathbb{1}_P\right]\right.\\
&\left.\qquad\qquad\qquad\qquad\qquad+\left(\frac{h(\Gamma)\overline{\overline{\Delta}}_P}{\mathrm{Tr}~T\Gamma - f^*(\Gamma)}\right) \mathbb{1}_P
\right]~,
\label{hessian_nl}
\end{align}
with the non-linear Hessian function $h$ defined as:
\begin{equation}
h(\Gamma) = 
\left\{
\begin{matrix}
-\left[\frac{T^0_\text{min}}{2\left(\overline{\overline{\Gamma}}_P\left(1 - \overline{\overline{\Gamma}}_P\right)\right)^{\frac{3}{2}}}\left(1-2\overline{\overline{\Gamma}}_P\right)^2 + \frac{2T^0_\text{min}}{\sqrt{\overline{\overline{\Gamma}}_P\left(1-\overline{\overline{\Gamma}}_P\right)}}\right]& \text{for}&  \overline{\overline{\Gamma}}_P \leq \frac{c}{1+c}\\
0 &\text{for}&  \overline{\overline{\Gamma}}_P > \frac{c}{1+c}
\end{matrix}
\right.~.
\end{equation}
After the Newton step $\Delta$ has been determined, we perform a search in the direction of $\Delta$ to find the optimal stepsize $\alpha$. In Section~\ref{pr_sdp} we showed that this can be done rapidly using a bisection algorithm to determine the roots of the scalar function in Eq.~(\ref{lsfunc}). An extra term is added to this function through the inclusion of the non-linear constraint, leading to the new line-search function:
\begin{equation}
\nabla_\alpha\phi(\alpha) = \mathrm{Tr}~\Delta H^{(2)} - t \sum_j\left(\sum_i\frac{\lambda_i^{\mathcal{L}_j}}{1 + \alpha\lambda^{\mathcal{L}_j}_i}\right) -t\left(\frac{\mathrm{Tr}~T\Delta - \frac{\partial f^*}{\partial \alpha}\left(\Gamma + \alpha\Delta\right)}{{\mathrm{Tr}~\left(\Gamma+\alpha\Delta\right)T - f^*\left(\Gamma + \alpha\Delta\right)}}\right)~,
\label{lsfunc_nl}
\end{equation}
where the derivative of $f^*$ with respect to $\alpha$ is:
\begin{equation}
\frac{\partial f^*}{\partial \alpha}\left(\Gamma + \alpha \Delta\right) = \left(T_\text{min} - T^0_\text{min}\right)\bar{\bar{\Delta}}_P + \frac{T_\text{min}\bar{\bar{\Delta}}_P}{\sqrt{\left(\bar{\bar{\Gamma}}_P + \alpha\bar{\bar{\Delta}}_P\right)\left(1-\bar{\bar{\Gamma}}_P - \alpha\bar{\bar{\Delta}}_P \right)}}\left[1-2\left(\bar{\bar{\Gamma}}_P -\bar{\bar{\Delta}}_P\right)\right]~.
\end{equation}
\begin{table}
\centering
\begin{tabular}{|c|cccc|cccc|}
\hline
&\multicolumn{4}{c}{$L=6$\qquad$N=5$}\vline&\multicolumn{4}{c}{$L=10$\qquad$N=9$}\vline\\
\hline
$U$ & $\mathcal{IQG}$ & $\mathcal{IQGT}$ &non-lin&exact& $\mathcal{IQG}$ & $\mathcal{IQGT}$ & non-lin&exact\\
\hline
50 & -3.55 & -2.29 & -3.06 & -2.20& -4.89 & -2.54 &-4.64 & -2.46 \\
100 & -3.49 & -2.15 & -2.51& -2.08 & -4.77  & -2.27 &-3.66 & -2.22\\
1000 & -3.44 & -2.03 & -2.05& -2.01 & -4.67 & -2.03 & -2.15 &-2.02\\
\hline
\end{tabular}
\caption{\label{energy_nl_obhf}The ground-state energy of a 6-site lattice with 5 particles (left), and a 10-site lattice with 9 particles (right), for $U = 50,\ 100$ and 1000, exact results compared with v2DM results using $\mathcal{IQG}$ and $\mathcal{IQGT}$ results, and $\mathcal{IQG}$ results with the non-linear constraint.}
\end{table}

The results of such a calculation, on a 6-site lattice with 5 particles and on a 10-site lattice with 9 particles for large values of $U$ are shown in Table~\ref{energy_nl_obhf}, the non-linear hopping constraint results are compared to $\mathcal{IQG}$, $\mathcal{IQGT}$ and exact results. For the 6-site lattice the exact results were calculated using diagonalization, for the 10-site lattice this no longer possible and an MPS algorithm was used \cite{schollwock,verstraete,chan,sebastian}.  It can be seen that the non-linear constraint is succesful as the strong-correlation limit is correctly recovered. It is, however, also observed that the constraint only becomes active at very large values of $U > 30$, and as such does not improve the quality of the $\mathcal{IQG}$ energy curves for the $U$ range displayed in Fig.~\ref{energy_1dhub}. Calculations performed at larger lattice sizes indicate that the value of $U$ where the constraint becomes active increases with lattice size, which renders it even less valuable as a practical tool. In conclusion, we can say that the non-linear hopping constraint guarantees that the energy converges to the right value in the large-$U$ limit, which is what we set out to achieve. But it fails to improve the quality of the energy and 2DM at intermediate values of $U$.
\subsubsection{The Gutzwiller projection constraint}
In the previous Section we constructed a new constraint by looking at the symptom {\it i.e.} the energy in the strong-correlation limit is far below what it should be. We imposed an exact lower bound to the energy for a certain double occupation $\overline{\overline{\Gamma}}_P$. In this Section we derive a constraint that fixes the origin of the problem, being the faulty description of spinless particles hopping on a lattice. The creation and annihilation of particles on a singly-occupied lattice can be described by the so-called Gutzwiller projection operators \cite{gutz_1,gutz_2}:
\begin{eqnarray}
\label{gutz_anni}g_{\alpha} &=& a_{\alpha}\left(1 - a^\dagger_{\bar{\alpha}}a_{\bar{\alpha}}\right),\\
\label{gutz_create}g^\dagger_{\alpha} &=& \left(1 - a^\dagger_{\bar{\alpha}}a_{\bar{\alpha}}\right)a^\dagger_{\alpha}~,
\end{eqnarray}
where $\alpha$ and $\bar{\alpha}$ are single-particle indices on the same site with opposite spin. One can see in Eq.~(\ref{gutz_create}) that a particle in the state $\alpha$ can only by created if there is no particle present in the opposite spin state $\bar{\alpha}$; if there is, the Gutzwiller operators annihilate the state. In analogy with the necessary and sufficient conditions on the 1DM derived in Section~\ref{n_rep_1dm}, we now propose that every Hamiltonian that is expressed as a first-order operator in the Gutzwiller creation and annihilation operators will be correctly optimized if the following conditions are fulfilled:
\begin{align}
\label{gutz_rho}\rho^G \succeq 0 \qquad\text{with}\qquad \rho^G_{\alpha\beta} = \bra{\Psi^N}g^\dagger_{\alpha}g_{\beta}\ket{\Psi^N}~,\\
\label{gutz_q}q^G \succeq 0 \qquad\text{with}\qquad q^G_{\alpha\beta} = \bra{\Psi^N}g_{\alpha}g^\dagger_{\beta}\ket{\Psi^N}~.
\end{align}
It is clear from Eqs. (\ref{gutz_anni}) and (\ref{gutz_create}), that these constraints are third-order operators, in the underlying fermion creation/annihilation operators. This explains why three-index constraints are needed to describe the strong-correlation limit, and why the two-index conditions fail. By imposing the Gutzwiller conditions (\ref{gutz_rho}) and (\ref{gutz_q}) we introduce the relevant correlations present when $U$ increases, without having to resort to the full three-index conditions. The problem is that the Gutzwiller conditions cannot be expressed as a function of the 2DM, so if we want to stick to a 2DM framework, we have to manipulate the conditions somehow. Using the same approach as for the $\mathcal{T}$ conditions, we can anticommute the two Gutzwiller conditions to reduce the rank of the operators by one:
\begin{equation}
G\succeq 0\qquad\text{with}\qquad G_{\alpha\beta} = 
\left\{g^\dagger_\alpha,g_\beta\right\} = \delta_{\alpha\beta}(1 - a^\dagger_{\bar{\alpha}}a_{\bar{\alpha}})~.
\end{equation}
This is obviously a trivial constraint, in fact it is already included in the $\mathcal{IQG}$ conditions. If we expand the Gutzwiller conditions we can express them as a function of first-, second- and third-order operators:
\begin{eqnarray}
\label{rho_G_2DM}\rho^{\text{G}}_{\alpha\beta} &=& \rho_{\alpha\beta} - \Gamma_{\alpha\bar{\beta};\beta\bar{\beta}} - \Gamma_{\alpha\bar{\alpha};\beta\bar{\alpha}} + \bra{\Psi^N}a^\dagger_\alpha a^\dagger_{\bar{\alpha}} a_{\bar{\alpha}} a^\dagger_{\bar{\beta}} a_{\bar{\beta}} a_{\beta}\ket{\Psi^N}~,\\
\label{q_G_2DM}q^\text{G}_{\alpha\beta} &=& q_{\alpha\beta} - \mathcal{G}(\Gamma)_{\bar{\beta}\alpha;\bar{\beta}\beta} - \mathcal{G}(\Gamma)_{\bar{\alpha}\alpha;\bar{\alpha}\beta} +  \bra{\Psi^N}a^\dagger_{\bar{\alpha}} a_{\bar{\alpha}} a_{\alpha} a^\dagger_{\beta}a^\dagger_{\bar{\beta}} a_{\bar{\beta}}\ket{\Psi^N}~.
\end{eqnarray}
One can reduce the order of these constraints by \emph{only} anticommutating the third-order part, which leads to the following two matrix functions:
\begin{eqnarray}
\tilde{\rho}^{\text{G}}\left(\Gamma\right)_{\alpha\beta} &=& \rho_{\alpha\beta} - \Gamma_{\alpha\bar{\beta};\beta\bar{\beta}} - \Gamma_{\alpha\bar{\alpha};\beta\bar{\alpha}} + \mathcal{T}_2(\Gamma)_{\alpha\bar{\alpha}\bar{\alpha};\beta\bar{\beta}\bar{\beta}}~,\\
\tilde{q}^\text{G}\left(\Gamma\right)_{\alpha\beta} &=& q_{\alpha\beta} - \mathcal{G}(\Gamma)_{\bar{\beta}\alpha;\bar{\beta}\beta} - \mathcal{G}(\Gamma)_{\bar{\alpha}\alpha;\bar{\alpha}\beta} + \mathcal{T}_2(\Gamma)_{\alpha\bar{\alpha}\bar{\alpha};\beta\bar{\beta}\bar{\beta}}~.
\end{eqnarray}
These have to be positive-semidefinite, since a positive matrix was added to the original Gutzwiller conditions. Expressing $\mathcal{T}_2$ as a function of the 2DM one can rewrite:
\begin{eqnarray}
\tilde{\rho}^{\text{G}}\left(\Gamma\right)_{\alpha\beta} &=& \rho_{\alpha\beta} - \Gamma_{\alpha\bar{\beta};\beta\bar{\beta}} - \Gamma_{\alpha\bar{\alpha};\beta\bar{\alpha}} + \delta_{\alpha\beta}\rho_{\bar{\alpha}\bar{\alpha}}~,\\
\tilde{q}^{\text{G}}\left(\Gamma\right)_{\alpha\beta} &=& \Gamma_{\bar{\beta}\beta;\bar{\beta}\alpha} + \Gamma_{\bar{\alpha}\beta;\bar{\alpha}\alpha} - \rho_{\alpha\beta} + \delta_{\alpha\beta}\left( \frac{2\mathrm{Tr}~\Gamma}{N(N-1)} - \rho_{\bar{\alpha}\bar{\alpha}}\right)~.
\end{eqnarray}
These matrix positivity conditions are easily included in any of the algorithms described in Chapter~\ref{SDP}, and at a very small computational cost as the dimension of the matrices is the same as that of the 1DM. Taking translational invariance into account, these constraints even reduce to $L$ linear constraints, which can straightforwardly be included in the SDP algorithms from Chapter~\ref{SDP}. 
\begin{figure}
\centering
$
\begin{array}{c}
\includegraphics[scale=0.7]{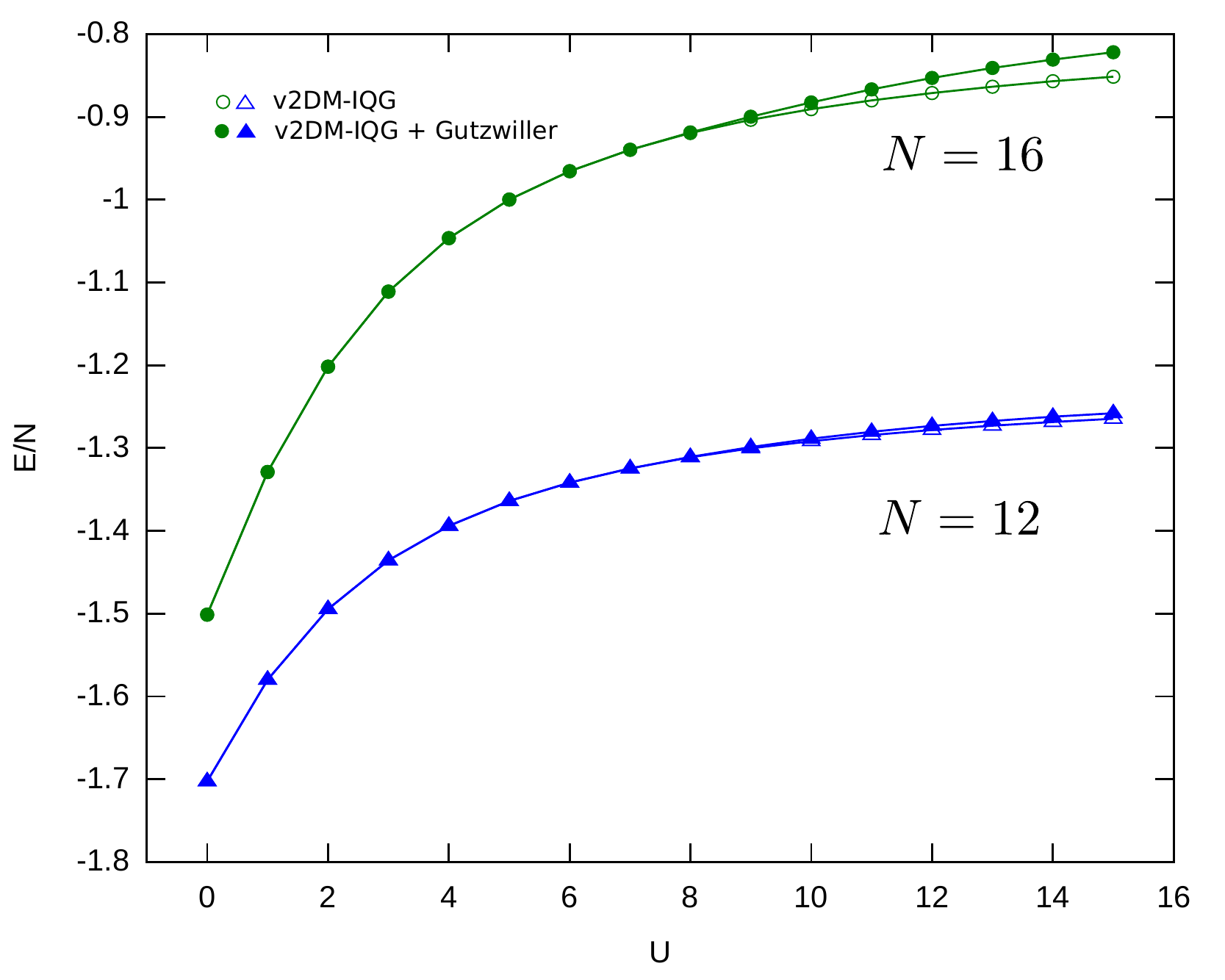}\\
\includegraphics[scale=0.7]{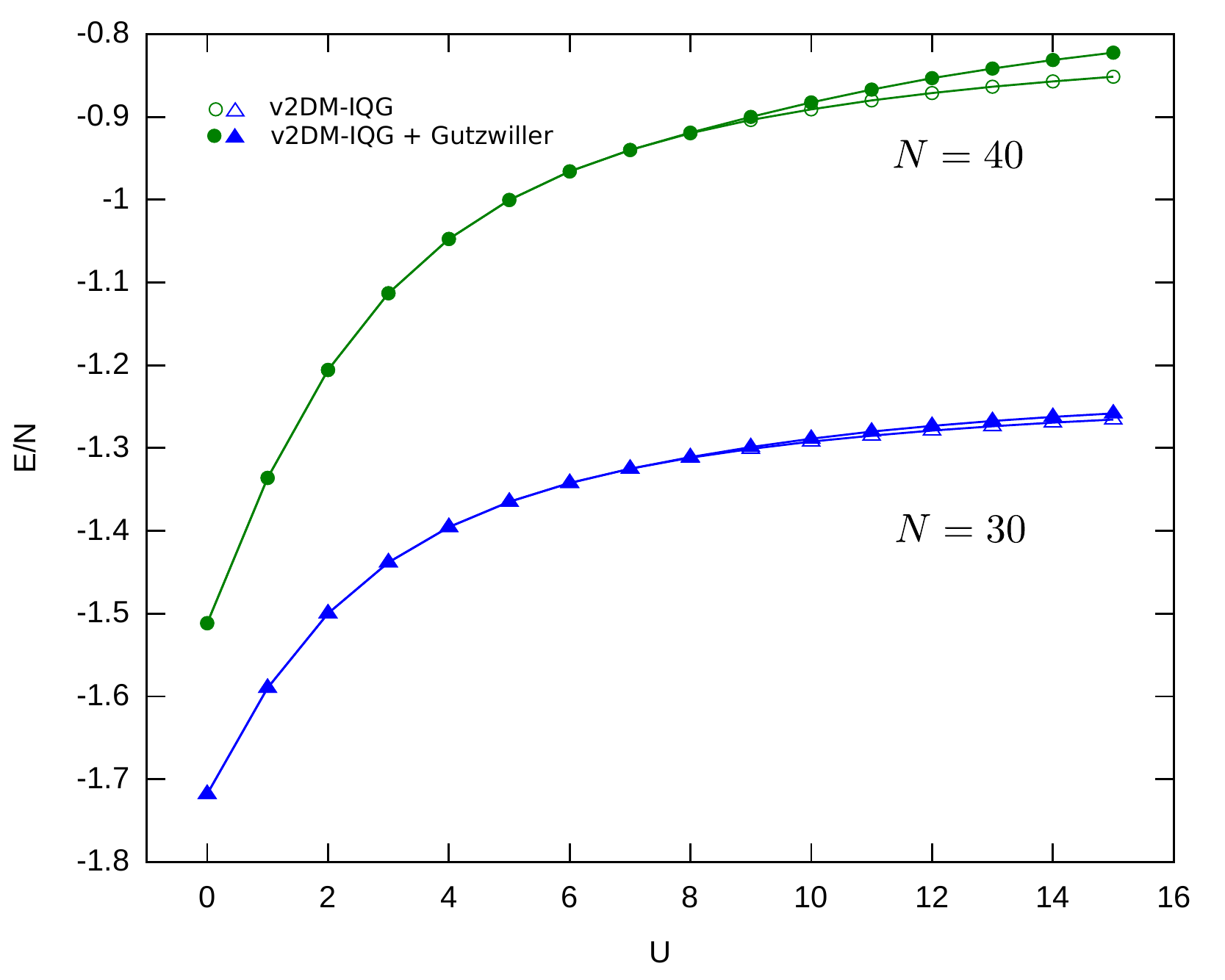}
\end{array}
$
\caption{\label{gutz_energy_1dhub} Ground-state energy per particle as a function of on-site repulsion $U$ of the Hubbard model, for a 20-site (top) and 50-site (bottom) lattice at $\frac{4}{10}$  and $\frac{3}{10}$ filling. A comparison is made between an optimization using only $\mathcal{IQG}$ and one using $\mathcal{IQG}$ and the Gutzwiller conditions.}
\end{figure}
The new constraints were implemented and added to the translationally invariant, parity-symmetric primal-dual program. Calculations were again performed for lattice sizes of 20 and 50, with $\frac{3}{10}$  and $\frac{4}{10}$ filling, the results of which are shown in Fig.~\ref{gutz_energy_1dhub}. It is seen that the constraints clearly include something that is missing in the $\mathcal{IQG}$ conditions, because they become active at reasonably small values of $U > 5$. The amount of improvement when including the Gutzwiller conditions, however, is quite disappointing. Moreover, calculations at higher values of $U$ indicate that the $U\rightarrow\infty$ limit is not restored by adding these conditions. The reason for this is probably the addition of the anticommutator term to the third-order terms in Eqs.~(\ref{rho_G_2DM}) and (\ref{q_G_2DM}). In conclusion we can say that we have identified the origin of the failure of the $\mathcal{IQG}$ conditions, and derived constraints on a subspace of three-particle space that have to be included to fix the limit. If we manipulate these constraints so that they can be expressed as a function of the 2DM, the resulting conditions improve the results, but are not strong enough to fix the strong-correlation limit. The only way out is to introduce an object beyond the 2DM, which is the subject of the next Chapter.

\chapter{\label{v2.5DM}Restoring the strong-correlation limit in the Hubbard model}
\markboth{CHAPTER 6. RESTORING THE STRONG-CORRELATION LIMIT}{CHAPTER 6. RESTORING THE STRONG-CORRELATION LIMIT}
Previously (Section~\ref{1dhub_sec}) we discussed the v2DM study of the one-dimensional Hubbard model. We found that the two-index conditions fail to describe the strong-correlation limit, ({\it i.e.} strong on-site repulsion or $U\rightarrow\infty$), and that three-index conditions $\mathcal{T}_1$, $\mathcal{T}_2$ are needed to incorporate the physics in this limit. We identified the origin of the two-index failure and tried to fix it, cheaply, by applying constraints on a subspace of three-particle space, the so-called Gutzwiller conditions. This was unsuccesful because we approximated these constraints in order to express them as a function of the 2DM. In this Chapter we go beyond the 2DM and identify the minimal object from which both the genuine Gutzwiller conditions (Eqs. (\ref{rho_G_2DM}) and (\ref{q_G_2DM})), \emph{and} the two-index conditions can be simultaneously derived. We establish that the strong-correlation limit is restored, without resorting to the computationally heavy three-index constraints.
\section{An intermediary object: the 2.5DM}
The most compact object from which both the Gutzwiller conditions and the two-index conditions can be derived is the 3DM with one index diagonal, which we call the 2.5DM:
\begin{equation}
W^\lambda_{\alpha\beta;\gamma\delta} = \sum_i w_i \bra{\Psi^N_i}a^\dagger_\alpha a^\dagger_\beta a^\dagger_\lambda a_\lambda a_\delta a_\gamma \ket{\Psi^N_i}~.
\label{2.5DM_unc}
\end{equation}
Just as the 2DM, the 2.5DM can be used as the basic variable in a variational scheme. By permutation of the indices one can derive 6 conditions on this object, called the lifting conditions \cite{mazziotti,hammond,mazz_book}. Instead of using Eq.~(\ref{2.5DM_unc}) we preferred to use the slightly larger object:
\begin{equation}
W^{l[\sigma_l;\sigma_l']}_{a\sigma_ab\sigma_b;c\sigma_cd\sigma_d} = \sum_i w_i \bra{\Psi^N_i}a^\dagger_{a\sigma_a}a^\dagger_{b\sigma_b}a^\dagger_{l\sigma_l}a_{l\sigma_l'}a_{d\sigma_d}a_{c\sigma_c}\ket{\Psi^N_i}~,
\label{2.5DM_spin_unc}
\end{equation}
where the third index is diagonal in the site index only, and not in the spin index. This is a more desirable object since it captures more correlations. In addition we found that the extra flexibility makes a significant difference in the final energy prediction. It is also advantageous compared to the 2.5DM in Eq.~(\ref{2.5DM_unc}) because a spin-coupled version can be constructed, defined as:
\begin{equation}
W^{l}|^{S(S_{ab};S_{cd})}_{ab;cd} = \sum_i w_i \frac{1}{[\mathcal{S}]^2}\sum_\mathcal{M} \bra{\Psi^N_{\mathcal{SM},i}}{B^\dagger}^S_{ab(S_{ab})l}~{B}^S_{cd(S_{cd})l}\ket{\Psi^N_{\mathcal{SM},i}}~,
\label{2.5DM_sc}
\end{equation}
with $B^\dagger$ the three-particle creation operator as defined in Eq.~(\ref{dp_sc_op}):
\begin{equation}
{B^\dagger}^S_{ab(S_{ab})c} = \frac{1}{\sqrt{(ab)}}\left[\left[a^\dagger_a\otimes a^\dagger_b\right]^{S_{ab}}\otimes a^\dagger_c\right]^S~.
\label{2.5DM_op}
\end{equation}
In the same manner as for the spin discussion in Chapter~\ref{symmetry}, a spin-averaged ensemble is used to decompose the 2.5DM into blocks with different three-particle spin $S$. Because of this, the spin-coupled 2.5DM can be optimized faster than the spin-uncoupled 2.5DM in Eq.~(\ref{2.5DM_unc}). The spin-coupled 2.5DM in Eq.~(\ref{2.5DM_sc}) can only be written as a function of the 2.5DM in Eq.~(\ref{2.5DM_spin_unc}), since off-diagonal spin terms are needed to produce the correct coupling, as can be seen from expanding Eq.~(\ref{2.5DM_sc}):
\begin{align}
\nonumber W^{l}|^{S(S_{ab};S_{cd})}_{ab;cd} =& \frac{1}{\sqrt{(ab)(cd)}}\sum_{\sigma_a\sigma_b}\sum_{\sigma_lM_{ab}} \braket{\frac{1}{2}\sigma_a\frac{1}{2}\sigma_b}{S_{ab}M_{ab}}\braket{S_{ab}M_{ab}\frac{1}{2}\sigma_l}{SM}\\
&\qquad\qquad\sum_{\sigma_c\sigma_d}\sum_{M_{cd}\sigma_l'}\braket{\frac{1}{2}\sigma_c\frac{1}{2}\sigma_d}{S_{cd}M_{cd}}\braket{S_{cd}M_{cd}\frac{1}{2}\sigma_l'}{SM}~W^{l[\sigma_l;\sigma_l']}_{a\sigma_ab\sigma_b;c\sigma_cd\sigma_d}~.
\end{align}
In practice we store $L$ blockmatrices, one for each site. Every blockmatrix consist of two blocks, one with spin-$\frac{1}{2}$ and one with spin-$\frac{3}{2}$, with dimensions and degeneracies given in Table~\ref{2.5DM_basis}.
\begin{table}
\centering
\begin{tabular}{|ccccc|}
\hline
$S$&$S_{ab}$&sp ordering&dim&deg\\
\hline
$\frac{1}{2}$ & 0 & $a\leq b$ & $\frac{L(L+1)}{2}$ & 2\\[1pt]
$\frac{1}{2}$ & 1 & $a < b$ & $\frac{L(L-1)}{2}$ & 2\\[1pt]
$\frac{3}{2}$ & 1 & $a < b$ & $\frac{L(L-1)}{2}$ & 4\\
\hline
\end{tabular}
\caption{\label{2.5DM_basis}The dimensions, degeneracies and orbital ordering of the $S=\frac{1}{2}$ and $S=\frac{3}{2}$ blocks appearing in the 2.5DM object defined in Eq.~(\ref{2.5DM_sc}).}
\end{table}
There is another complication that arises when using the 2.5DM, which is that because of the antisymmetry between some single-particle indices in the 2.5DM, relations exist between elements in different blocks. These relations have to be imposed as extra consistency conditions.
\subsection{\label{consistent}The consistency conditions}
To get insight into the problem, we introduce the consistency constraints for the spin-diagonal 2.5DM in Eq.~(\ref{2.5DM_unc}). There, $2L$ blocks are stored, all of dimension ${(2L-1)(L-1)}$, with $\alpha < \beta$ and both $\alpha$ and $\beta$ different from $\lambda$. When $\beta = \delta$ in Eq.~(\ref{2.5DM_unc}) it follows that:
\begin{equation}
W^\lambda_{\alpha\beta;\gamma\beta} = W^\beta_{\alpha\lambda;\gamma\lambda}~.
\end{equation}
When $\beta=\delta$ \emph{and} $\alpha=\gamma$ there are two equalities to take into account:
\begin{equation}
W^\lambda_{\alpha\beta;\alpha\beta} = W^\beta_{\alpha\lambda;\alpha\lambda} = W^\alpha_{\lambda\beta;\lambda\beta}~.
\end{equation}
These equalities have to be included in the SDP program using the method of linear equality constraints explained in Section~\ref{pr_sdp_le}. We have to find an operator that projects on the right subspace (see Eq.~(\ref{P_f})). In the present case the projection operator is very simple: for the case of one equal index pair,
\begin{equation}
W^\lambda_{\alpha\beta;\gamma\beta} \rightarrow \frac{1}{2}\left[W^\lambda_{\alpha\beta;\gamma\beta} + W^\beta_{\alpha\lambda;\gamma\lambda}\right]~,
\end{equation}
and for the case of two equal index pairs,
\begin{equation}
W^\lambda_{\alpha\beta;\alpha\beta} \rightarrow \frac{1}{3}\left[W^\lambda_{\alpha\beta;\alpha\beta} + W^\beta_{\alpha\lambda;\alpha\lambda} + W^\alpha_{\lambda\beta;\lambda\beta}\right]~.
\end{equation}

For the spin-coupled 2.5DM in Eq.~(\ref{2.5DM_sc}) the consistency conditions become a lot more complicated, for two reasons. First, the first and/or second lower index can be equal to the upper index defining the block, which leads to relations between elements in the \emph{same} block. Second, exchanging upper and lower indices is not straightforward, but involves recoupling of the intermediate spin, as was the case for the $\mathcal{T}_1$  conditions (See Eq.~(\ref{recoupling_T1})). This merely leads to a lot of bookkeeping problems. In what follows we list one example for every class of possible symmetries.
\subsubsection{Case of one equality}
When there is one equality between the indices, two different types of projection appear:
\paragraph{$\mathbf{a = l}$~:}
when one lower index is equal to the block index $l$; the $W$ matrix must satisfy the identities:
\begin{equation}
W^l|^{S(S_{ab};S_{cd})}_{lb;cd} = [S_{ab}]\sum_{S_{lb}}[S_{lb}] 
\left\{
\begin{matrix}
S&\frac{1}{2}&S_{lb}\\
\frac{1}{2}&\frac{1}{2}&S_{ab}
\end{matrix}
\right\}
W^l|^{S(S_{lb};_{cd})}_{lb;cd}~,
\end{equation}
and the corresponding projection operator reads:
\begin{equation}
W^l|^{S(S_{ab};S_{cd})}_{lb;cd} \rightarrow \frac{1}{2}\left[W^l|^{S(S_{ab};S_{cd})}_{lb;cd} + [S_{ab}]\sum_{S_{lb}}[S_{lb}]\left\{
\begin{matrix}
S&\frac{1}{2}&S_{lb}\\
\frac{1}{2}&\frac{1}{2}&S_{ab}
\end{matrix}
\right\}
W^l|^{S(S_{lb};_{cd})}_{lb;cd} \right]~.
\end{equation}
\paragraph{$\mathbf{a = c}$~:}
when two lower indices are equal, consistency requires that:
\begin{equation}
W^l|^{S(S_{ab};S_{cd})}_{ab;ad} = [S_{ab}][S_{cd}]\sum_{S_{lb}S_{ld}}[S_{lb}] [S_{ld}]
\left\{
\begin{matrix}
S&\frac{1}{2}&S_{lb}\\
\frac{1}{2}&\frac{1}{2}&S_{ab}
\end{matrix}
\right\}
\left\{
\begin{matrix}
S&\frac{1}{2}&S_{ld}\\
\frac{1}{2}&\frac{1}{2}&S_{cd}
\end{matrix}
\right\}
W^a|^{S(S_{lb};_{ld})}_{lb;ld}~,
\end{equation}
which leads to the projection:
\begin{align}
\nonumber W^l|^{S(S_{ab};S_{cd})}_{ab;ad} \rightarrow& \frac{1}{2}\left[W^l|^{S(S_{ab};S_{cd})}_{lb;cd}+[S_{ab}][S_{cd}]\sum_{S_{lb}S_{ld}}[S_{lb}] [S_{ld}]\right.\\
&\left.\qquad\qquad\qquad\qquad
\left\{
\begin{matrix}
S&\frac{1}{2}&S_{lb}\\
\frac{1}{2}&\frac{1}{2}&S_{ab}
\end{matrix}
\right\}
\left\{
\begin{matrix}
S&\frac{1}{2}&S_{ld}\\
\frac{1}{2}&\frac{1}{2}&S_{cd}
\end{matrix}
\right\}
W^a|^{S(S_{lb};_{ld})}_{lb;ld}
\right]~.
\end{align}
\subsubsection{Case of two equalities}
Three different cases appear, where two equalities exist between the indices:
\paragraph{$\mathbf{a = c}$ and $\mathbf{c = l}$~:}
Now there are \emph{three} symmetry relationships the elements have to satisfy:
\begin{align}
\label{acl_1}W^l|^{S(S_{ab};S_{cd})}_{lb;ld} =& [S_{ab}]\sum_{S_{lb}} [S_{lb}]
\left\{
\begin{matrix}
S&\frac{1}{2}&S_{lb}\\
\frac{1}{2}&\frac{1}{2}&S_{ab}
\end{matrix}
\right\}
W^l|^{S(S_{lb};S_{cd})}_{lb;ld}~,\\
\label{acl_2}W^l|^{S(S_{ab};S_{cd})}_{lb;ld} =& [S_{cd}]\sum_{S_{ld}} [S_{ld}]
\left\{
\begin{matrix}
S&\frac{1}{2}&S_{ld}\\
\frac{1}{2}&\frac{1}{2}&S_{cd}
\end{matrix}
\right\}
W^l|^{S(S_{ab};S_{ld})}_{lb;ld}~,\\
\label{acl_3}W^l|^{S(S_{ab};S_{cd})}_{lb;ld} =& [S_{ab}][S_{cd}]\sum_{S_{lb}S_{ld}}[S_{lb}][S_{ld}]
\left\{
\begin{matrix}
S&\frac{1}{2}&S_{lb}\\
\frac{1}{2}&\frac{1}{2}&S_{ab}
\end{matrix}
\right\}
\left\{
\begin{matrix}
S&\frac{1}{2}&S_{ld}\\
\frac{1}{2}&\frac{1}{2}&S_{cd}
\end{matrix}
\right\}
W^l|^{S(S_{lb};S_{ld})}_{lb;ld}~.
\end{align}
The corresponding projection reads:
\begin{equation}
W^l|^{S(S_{ab};S_{cd})}_{lb;ld} \rightarrow \frac{1}{4}\left[W^l|^{S(S_{ab};S_{cd})}_{lb;ld}+ (\ref{acl_1})_r + (\ref{acl_2})_r + (\ref{acl_3})_r\right]~,
\end{equation}
where the notation $(\ref{acl_1})_r$ is short for the right-hand side of Eq.~(\ref{acl_1}).
\paragraph{$\mathbf{a = c}$ and $\mathbf{b = l}$~:}
two different relationships have to be satisfied:
\begin{align}
\label{acbl_1}W^l|^{S(S_{ab};S_{cd})}_{al;ad} =& [S_{ab}][S_{cd}]\sum_{S_{lb}S_{ld}}[S_{lb}][S_{ld}]
\left\{
\begin{matrix}
S&\frac{1}{2}&S_{lb}\\
\frac{1}{2}&\frac{1}{2}&S_{ab}
\end{matrix}
\right\}
\left\{
\begin{matrix}
S&\frac{1}{2}&S_{ld}\\
\frac{1}{2}&\frac{1}{2}&S_{cd}
\end{matrix}
\right\}
W^a|^{S(S_{lb};S_{ld})}_{ll;ld}~,\\
\label{acbl_2}W^l|^{S(S_{ab};S_{cd})}_{al;ad} =& [S_{ab}](-1)^{S_{ab}}\sum_{S_{al}}[S_{al}](-1)^{S_{al}}
\left\{
\begin{matrix}
S&\frac{1}{2}&S_{al}\\
\frac{1}{2}&\frac{1}{2}&S_{ab}
\end{matrix}
\right\}
W^l|^{S(S_{al};S_{cd})}_{al;ad}~,
\end{align}
which are imposed using the following projection:
\begin{equation}
W^l|^{S(S_{ab};S_{cd})}_{al;ad} \rightarrow \frac{1}{3}\left[W^l|^{S(S_{ab};S_{cd})}_{al;ad}+ (\ref{acbl_1})_r + (\ref{acbl_2})_r\right]~.
\end{equation}
\paragraph{$\mathbf{a = c}$ and $\mathbf{b = d}$~:}
for the last type two equalities have to be satisfied:
\begin{align}
\label{acbd_1}W^l|^{S(S_{ab};S_{cd})}_{ab;ab} =& [S_{ab}][S_{cd}]\sum_{S_{lb}S_{ld}}[S_{lb}][S_{ld}]
\left\{
\begin{matrix}
S&\frac{1}{2}&S_{lb}\\
\frac{1}{2}&\frac{1}{2}&S_{ab}
\end{matrix}
\right\}
\left\{
\begin{matrix}
S&\frac{1}{2}&S_{ld}\\
\frac{1}{2}&\frac{1}{2}&S_{cd}
\end{matrix}
\right\}
W^a|^{S(S_{lb};S_{ld})}_{lb;lb}~,\\
\nonumber W^l|^{S(S_{ab};S_{cd})}_{ab;ab} =& [S_{ab}][S_{cd}](-1)^{S_{ab}+S_{cd}}\sum_{S_{al}S_{cl}}[S_{al}][S_{cl}](-1)^{S_{al}+S_{cl}}\\
\label{acbd_2}&\qquad\qquad\qquad\qquad\qquad\left\{
\begin{matrix}
S&\frac{1}{2}&S_{al}\\
\frac{1}{2}&\frac{1}{2}&S_{ab}
\end{matrix}
\right\}
\left\{
\begin{matrix}
S&\frac{1}{2}&S_{cl}\\
\frac{1}{2}&\frac{1}{2}&S_{cd}
\end{matrix}
\right\}
W^a|^{S(S_{al};S_{cl})}_{al;al}~,
\end{align}
which can be imposed using the projection:
\begin{equation}
W^l|^{S(S_{ab};S_{cd})}_{ab;ab} \rightarrow \frac{1}{3}\left[W^l|^{S(S_{ab};S_{cd})}_{al;ad}+ (\ref{acbd_1})_r + (\ref{acbd_2})_r\right]~.
\end{equation}
\subsubsection{Case of three equalities}
There are two different instances where three equalities hold between the indices.
\paragraph{$\mathbf{a=b}$, $\mathbf{b = c}$ and $\mathbf{d = l}$~:}
For the first type, consistency requires that the following three conditions that are satisfied:
\begin{align}
\label{abcdl_1}W^l|^{S(S_{ab};S_{cd})}_{aa;al} =& [S_{cd}](-1)^{S_{cd}}\sum_{S_{cl}}[S_{cl}](-1)^{S_{cl}}
\left\{
\begin{matrix}
S&\frac{1}{2}&S_{cl}\\
\frac{1}{2}&\frac{1}{2}&S_{cd}
\end{matrix}
\right\}
W^l|^{S(S_{ab};S_{cl})}_{aa;al}~,\\
\label{abcdl_2}W^l|^{S(S_{ab};S_{cd})}_{aa;al} =& [S_{ab}][S_{cd}]\sum_{S_{lb}S_{ld}}[S_{lb}][S_{ld}]
\left\{
\begin{matrix}
S&\frac{1}{2}&S_{lb}\\
\frac{1}{2}&\frac{1}{2}&S_{ab}
\end{matrix}
\right\}
\left\{
\begin{matrix}
S&\frac{1}{2}&S_{ld}\\
\frac{1}{2}&\frac{1}{2}&S_{cd}
\end{matrix}
\right\}
W^a|^{S(S_{lb};S_{ld})}_{la;ll}~,\\
\nonumber W^l|^{S(S_{ab};S_{cd})}_{aa;al} =& [S_{ab}][S_{cd}](-1)^{S_{ab}}\sum_{S_{al}S_{ld}}[S_{al}][S_{ld}](-1)^{S_{al}}\\
\label{abcdl_3}&\qquad\qquad\qquad\qquad\qquad \left\{
\begin{matrix}
S&\frac{1}{2}&S_{al}\\
\frac{1}{2}&\frac{1}{2}&S_{ab}
\end{matrix}
\right\}
\left\{
\begin{matrix}
S&\frac{1}{2}&S_{ld}\\
\frac{1}{2}&\frac{1}{2}&S_{cd}
\end{matrix}
\right\}
W^a|^{S(S_{al};S_{ld})}_{al;ll}~,
\end{align}
leading to the following projection:
\begin{equation}
W^l|^{S(S_{ab};S_{cd})}_{aa;al} \rightarrow \frac{1}{4}\left[W^l|^{S(S_{ab};S_{cd})}_{aa;al}+ (\ref{abcdl_1})_r + (\ref{abcdl_2})_r + (\ref{abcdl_3})_r\right]~.
\end{equation}
\paragraph{$\mathbf{a=c}$, $\mathbf{b = d}$ and $\mathbf{a = l}$~:}
For the final type, there are \emph{four} conditions that have to be fulfilled:
\begin{align}
\label{acbdal_1}W^l|^{S(S_{ab};S_{cd})}_{lb;lb} =& [S_{ab}]\sum_{S_{lb}} [S_{lb}]
\left\{
\begin{matrix}
S&\frac{1}{2}&S_{lb}\\
\frac{1}{2}&\frac{1}{2}&S_{ab}
\end{matrix}
\right\}
W^l|^{S(S_{lb};S_{cd})}_{lb;lb}~,\\
\label{acbdal_2}W^l|^{S(S_{ab};S_{cd})}_{lb;lb} =& [S_{cd}]\sum_{S_{ld}} [S_{ld}]
\left\{
\begin{matrix}
S&\frac{1}{2}&S_{ld}\\
\frac{1}{2}&\frac{1}{2}&S_{cd}
\end{matrix}
\right\}
W^l|^{S(S_{ab};S_{ld})}_{lb;lb}~,\\
\label{acbdal_3}W^l|^{S(S_{ab};S_{cd})}_{lb;lb} =& [S_{ab}][S_{cd}]\sum_{S_{lb}S_{ld}}[S_{lb}][S_{ld}]
\left\{
\begin{matrix}
S&\frac{1}{2}&S_{lb}\\
\frac{1}{2}&\frac{1}{2}&S_{ab}
\end{matrix}
\right\}
\left\{
\begin{matrix}
S&\frac{1}{2}&S_{ld}\\
\frac{1}{2}&\frac{1}{2}&S_{cd}
\end{matrix}
\right\}
W^l|^{S(S_{lb};S_{ld})}_{lb;lb}~,\\
\nonumber W^l|^{S(S_{ab};S_{cd})}_{lb;lb} =& 2[S_{ab}][S_{cd}](-1)^{S_{ab}+S_{cd}}\sum_{S_{al}S_{cl}}[S_{al}][S_{cl}](-1)^{S_{al}+S_{cl}}\\
\label{acbdal_4}&\qquad\qquad\qquad\qquad\qquad \left\{
\begin{matrix}
S&\frac{1}{2}&S_{al}\\
\frac{1}{2}&\frac{1}{2}&S_{ab}
\end{matrix}
\right\}
\left\{
\begin{matrix}
S&\frac{1}{2}&S_{cl}\\
\frac{1}{2}&\frac{1}{2}&S_{cd}
\end{matrix}
\right\}
W^b|^{S(S_{al};S_{cl})}_{ll;ll}~.
\end{align}
This is imposed by the following projection:
\begin{equation}
W^l|^{S(S_{ab};S_{cd})}_{lb;lb} \rightarrow \frac{1}{5}\left[W^l|^{S(S_{ab};S_{cd})}_{lb;lb}+ (\ref{acbdal_1})_r + (\ref{acbdal_2})_r + (\ref{acbdal_3})_r + (\ref{acbdal_4})_r\right]~.
\end{equation}

Symmetry relations between elements in the same site block $l$ causes some of the elements to be linear dependent, implying that a $W$ matrix which satisfies the consistency conditions has zero eigenvalues. This can be circumvented by taking the pseudo-inverse of the $W$ matrix (see also Section~\ref{spinconstraints_singlet}),  which excludes the eigenvectors with zero eigenvalues from the inversion process.
\section{The spin-adapted lifting conditions}
There are six independent matrix positivity conditions that can be expressed as a function of the 2.5DM. They are derived in the standard way, by finding manifestly positive Hamiltonians:
\begin{equation}
\hat{H} = \sum_i \hat{B}^\dagger_i \hat{B}_i~,
\end{equation}
that can be written as a function of the 2.5DM using anticommutation relations and spin recoupling.
\paragraph{The $\mathcal{I}_1$ condition}
The $\mathcal{I}_1$ condition trivially expresses the positivity of each individual site block of the 2.5DM, analogous to the $\mathcal{I}$ condition for the 2DM:
\begin{equation}
\mathcal{I}_1(W)^l\succeq 0\qquad\text{with}\qquad \mathcal{I}_1(W)^l|^{S(S_{ab};S_{cd})}_{ab;cd} = W^l|^{S(S_{ab};S_{cd})}_{ab;cd}~.
\end{equation}
The positivity of this object is clear from its definition in Eq.~(\ref{2.5DM_sc}). Bear in mind that every site block is composed out of two spin blocks, both of which have to be positive semidefinite.
\paragraph{The $\mathcal{Q}_2$ condition:}
the first non-trivial positivity constraint is the $\mathcal{Q}_2$ condition, which is defined as:
\begin{equation}
(\mathcal{Q}_2)^{l}|^{S(S_{ab};S_{cd})}_{ab;cd} = \sum_i w_i \frac{1}{[\mathcal{S}]^2}\sum_\mathcal{M} \bra{\Psi^N_{\mathcal{SM},i}}{B}^S_{ab(S_{ab})l}~{B^\dagger}^S_{cd(S_{cd})l}\ket{\Psi^N_{\mathcal{SM},i}}~,
\label{Q2_sc}
\end{equation}
with $B^\dagger$ given by Eq.~(\ref{2.5DM_op}).
This is the equivalent of the $\mathcal{Q}$ condition for the 2DM, and it is readily seen that every site block has to be positive semidefinite. Using anticommutation relations and some straightforward angular momentum recoupling algebra we can express $\mathcal{Q}_2$ as a function of the 2.5DM:
\begin{align}
\nonumber\mathcal{Q}_2&(W)^l|^{S(S_{ab};S_{cd})}_{ab;cd} =   \left[\frac{2\mathrm{Tr}~W}{N(N -1)(N-2)}\right]\mathbb{1}_{2.5}^l|^{S(S_{ab};S_{cd})}_{ab;cd}-W^l|^{S(S_{ab};S_{cd})}_{ab;cd} + \delta_{S_{ab}S_{cd}}\Gamma^{S_{ab}}_{ab;cd}\\
\nonumber& - \frac{\delta_{S_{ab}S_{cd}}}{\sqrt{(ab)(cd)}}\left[\delta_{bd}\rho_{ac}+(-1)^{S_{ab}}\delta_{ad}\rho_{bc} + (-1)^{S_{cd}}\delta_{bc}\rho_{ad} + \delta_{ac}\rho_{bd}\right.\\
\nonumber&\left.\qquad\qquad\qquad\qquad\qquad\qquad\qquad\qquad\qquad\qquad + \left(\delta_{ac}\delta_{bd} + (-1)^{S_{ab}}\delta_{bc}\delta_{ad}\right)\rho_{ll}\right]\\
\nonumber&-\frac{[S_{ab}][S_{cd}]}{\sqrt{(ab)(cd)}} \left\{\begin{matrix}S&\frac{1}{2}&S_{ab}\\\frac{1}{2}&\frac{1}{2}&S_{cd}\end{matrix}\right\}\left[ (-1)^{S_{ab}+S_{cd}}\delta_{bl}\delta_{dl}\rho_{ac} + (-1)^{S_{cd}}\delta_{al}\delta_{dl}\rho_{bc}+(-1)^{S_{ab}}\delta_{bl}\delta_{cl}\rho_{ad} \right.\\
\nonumber& \left.\qquad\qquad\qquad +\delta_{al}\delta_{cl}\rho_{bd} \left(\delta_{bd}\delta_{lc} +(-1)^{S_{cd}}\delta_{bc}\delta_{ld}\right)\rho_{al} + \left(\delta_{al}\delta_{bd}+(-1)^{S_{ab}}\delta_{bl}\delta_{ad}\right)\rho_{cl} \right.\\
\nonumber&\left.\qquad\qquad\qquad + (-1)^{S_{ab}}\left(\delta_{ad}\delta_{cl}+(-1)^{S_{cd}}\delta_{ld}\delta_{ac}\right)\rho_{bl} + (-1)^{S_{cd}}(\delta_{al}\delta_{bc}+(-1)^{S_{ab}}\delta_{bl}\delta_{ac})\rho_{ld}\right]\\
\nonumber& + [S_{ab}][S_{cd}]\left\{\begin{matrix}S&\frac{1}{2}&S_{ab}\\\frac{1}{2}&\frac{1}{2}&S_{cd}\end{matrix}\right\} \left[\delta_{al}\sqrt{\frac{(bl)}{(ab)}}\Gamma^{S_{cd}}_{lb;cd} + \delta_{bl}(-1)^{S_{ab}+S_{cd}}\sqrt{\frac{(al)}{(ab)}}\Gamma^{S_{cd}}_{al;cd}\right.\\
\nonumber&\left.\qquad\qquad\qquad\qquad\qquad\qquad\qquad+ \delta_{dl}(-1)^{S_{ab}+S_{cd}}\sqrt{\frac{(cl)}{(cd)}}\Gamma^{S_{ab}}_{ab;cl}+\delta_{cl}\sqrt{\frac{(dl)}{(cd)}}\Gamma^{S_{ab}}_{ab;ld}\right]\\
\nonumber&+\frac{[S_{ab}][S_{cd}]}{\sqrt{(ab)(cd)}}\sum_{S'} [S']^2\left\{\begin{matrix}S&\frac{1}{2}&S'\\\frac{1}{2}&\frac{1}{2}&S_{ab}\end{matrix}\right\}\left\{\begin{matrix}S&\frac{1}{2}&S'\\\frac{1}{2}&\frac{1}{2}&S_{cd}\end{matrix}\right\}\left[ \delta_{bd}(-1)^{S_{ab}+S_{cd}}\sqrt{(al)(cl)}\Gamma^{S'}_{al;cl}\right.\\
&\left.\quad+\delta_{ac}\sqrt{(bl)(dl)}\Gamma^{S'}_{bl;dl} + \delta_{bc}(-1)^{S_{ab}}\sqrt{(al)(dl)}\Gamma^{S'}_{al;dl}+ \delta_{ad}(-1)^{S_{cd}}\sqrt{(bl)(cl)}\Gamma^{S'}_{bl;cl} \right]~.
\label{Q2}
\end{align}
This is quite similar to the $\mathcal{T}_1$ condition with one site-index pair diagonal. The difference lies in the fact that the third-order term does not disappear (as in $\mathcal{T}_1$, due to the anticommutator). But the third-order term is in fact the 2.5DM matrix, so we have that:
\begin{equation}
\mathcal{T}_1(\Gamma)^{S(S_{ab};S_{cd})}_{abl;cdl} = \mathcal{I}_1(W)^l|^{S(S_{ab};S_{cd})}_{ab;cd} + \mathcal{Q}_2(W)^l|^{S(S_{ab};S_{cd})}_{ab;cd}~.
\end{equation}
The unit matrix on 2.5DM space used in the $\mathcal{Q}_2$ map (first term on the right in Eq.~(\ref{Q2})) is \emph{not} a diagonal matrix, as a result of the consistency conditions inside a site block, which introduces terms with off-diagonal intermediate spin:
\begin{align}
\nonumber\mathbb{1}_{2.5}^l|^{S(S_{ab}S_{cd})}_{ab;cd} =& \frac{1}{\sqrt{(ab)(cd)}}\left[\delta_{S_{ab}S_{cd}}(\delta_{ac}\delta_{bd}+(-1)^{S_{ab}}\delta_{ad}\delta_{bc})\right.\\
\nonumber&\left.\qquad\qquad +[S_{ab}][S_{cd}]\left\{\begin{matrix}S&\frac{1}{2}&S_{ab}\\\frac{1}{2}&\frac{1}{2}&S_{cd}\end{matrix}\right\}\left(\delta_{al}\delta_{cl}\delta_{bd}+(-1)^{S_{ab}}\delta_{bl}\delta_{cl}\delta_{ad} \right.\right.\\
&\left.\left.\qquad\qquad\qquad\qquad\qquad+ (-1)^{S_{cd}}\delta_{al}\delta_{dl}\delta_{bc} + (-1)^{S_{ab}+S_{cd}}\delta_{bl}\delta_{dl}\delta_{ac}\right)\right]~.
\end{align}
In the $\mathcal{Q}_2$ map of Eq.~(\ref{Q2}), the 2DM appears; it can be derived from the 2.5DM by tracing over the block index:
\begin{equation}
\Gamma^{S_{ab}}_{ab;cd} = \frac{1}{N-2}\sum_S \frac{[S]^2}{[S_{ab}]^2}\sum_l W^l|^{S(S_{ab};S_{ab})}_{ab;cd}~.
\end{equation}
\paragraph{The $\mathcal{I}_2$ condition:}
the $\mathcal{I}_2$ condition has no equivalent in the v2DM formalism. It is defined in a spin-averaged ensemble as:
\begin{equation}
(\mathcal{I}_2)^{l}|^{S(S_{ab};S_{cd})}_{ab;cd} = \sum_i w_i \frac{1}{[\mathcal{S}]^2}\sum_\mathcal{M} \bra{\Psi^N_{\mathcal{SM},i}}{B^\dagger}^S_{ab(S_{ab})l}~{B}^S_{cd(S_{cd})l}\ket{\Psi^N_{\mathcal{SM},i}}~,
\label{I2_sc}
\end{equation}
with $B^\dagger$ a two-particle-one-hole operator as defined in Eq.~(\ref{pph_sc_op}):
\begin{equation}
{B^\dagger}^S_{ab(S_{ab})l}= \frac{1}{\sqrt{(ab)}}\left[\left[a^\dagger_a \otimes a^\dagger_b \right]^{S_{ab}}\otimes \tilde{a}_l\right]^S~.
\label{tp1h_2.5DM}
\end{equation}
Again, every site block of the matrix has to be positive semidefinite. Using anticommutation relations and spin recoupling we can express the $\mathcal{I}_2$ condition as a function of the 2.5DM:
\begin{eqnarray*}
\mathcal{I}_2(\Gamma)^l|^{S(S_{ab};S_{cd})}_{ab;cd} &=& \delta_{S_{ab}S_{cd}}\Gamma^{S_{ab}}_{ab;cd} + \sum_{S'}[S']^2 
\left\{
\begin{matrix}
S_{ab}&S&\frac{1}{2}\\
S_{cd}&S'&\frac{1}{2}
\end{matrix}
\right\}
W^l|^{S'(S_{ab}S_{cd})}_{ab;cd}~.
\end{eqnarray*}
\paragraph{The $\mathcal{Q}_1$ condition:}
the $\mathcal{Q}_1$ condition is of the same type as the $\mathcal{I}_2$ condition, and reads:
\begin{equation}
(\mathcal{Q}_1)^{l}|^{S(S_{ab};S_{cd})}_{ab;cd} = \sum_i w_i \frac{1}{[\mathcal{S}]^2}\sum_\mathcal{M} \bra{\Psi^N_{\mathcal{SM},i}}{B}^S_{ab(S_{ab})l}~{B^\dagger}^S_{cd(S_{cd})l}\ket{\Psi^N_{\mathcal{SM},i}}~,
\label{Q1_sc}
\end{equation}
with $B^\dagger$, the two-particle-one-hole operator as in Eq.~(\ref{tp1h_2.5DM}). In combination with the $\mathcal{I}_2$ condition it can form the $\mathcal{T}_2$ condition diagonal in the third site-index:
\begin{equation}
\mathcal{T}_2(\Gamma)^{S(S_{ab};S_{cd})}_{abl;cdl} = \mathcal{I}_2(W)^l|^{S(S_{ab};S_{cd})}_{ab;cd} + \mathcal{Q}_1(W)^l|^{S(S_{ab};S_{cd})}_{ab;cd}~.
\end{equation}
Retracing familiar steps, one obtains an expression of the $\mathcal{Q}_1$ condition as a function of the 2.5DM:
\begin{align}
\nonumber\mathcal{Q}_1&(W)^l|^{S(S_{ab};S_{cd})}_{ab;cd} = \frac{\delta_{S\frac{1}{2}}[S_{ab}][S_{cd}]}{2\sqrt{(ab)(cd)}} \left[\delta_{al}\delta_{cl}\delta_{bd}+(-1)^{S_{cd}}\delta_{al}\delta_{dl}\delta_{bc} + (-1)^{S_{ab}}\delta_{bl}\delta_{cl}\delta_{ad}\right.\\
\nonumber&\left.+(-1)^{S_{ab}+S_{cd}}\delta_{bl}\delta_{dl}\delta_{ac} - \delta_{al}\delta_{cl}\rho_{bd}-(-1)^{S_{cd}}\delta_{al}\delta_{dl}\rho_{bc} - (-1)^{S_{ab}}\delta_{bl}\delta_{cl}\rho_{ad}\right.\\
\nonumber&\left.-(-1)^{S_{ab}+S_{cd}}\delta_{bl}\delta_{dl}\rho_{ac} -\left(\delta_{bd}\delta_{lc} + (-1)^{S_{cd}}\delta_{bc}\delta_{ld}\right)\rho_{al} - \left(\delta_{bd}\delta_{al}+(-1)^{S_{ab}}\delta_{ad}\delta_{bl}\right)\rho_{cl}\right.\\
\nonumber& \left. - (-1)^{S_{ab}+S_{cd}}\left(\delta_{ac}\delta_{ld}+(-1)^{S_{cd}}\delta_{ad}\delta_{lc}\right)\rho_{lb} - (-1)^{S_{ab}+S_{cd}}\left(\delta_{ac}\delta_{bl}+(-1)^{S_{ab}}\delta_{bc}\delta_{al}\right)\rho_{dl}\right]\\
\nonumber&+\frac{\delta_{S_{ab}S_{cd}}}{\sqrt{(ab)(cd)}}\left(\delta_{ac}\delta_{bd} + (-1)^{S_{ab}}\delta_{ad}\delta_{bc}\right)\rho_{ll} -\sum_{S'}[S']^2 
\left\{
\begin{matrix}
S_{ab}&S&\frac{1}{2}\\
S_{cd}&S'&\frac{1}{2}
\end{matrix}
\right\}
W^l|^{S'(S_{ab}S_{cd})}_{ab;cd}
\\
\nonumber&+ \delta_{S\frac{1}{2}}\frac{[S_{ab}][S_{cd}]}{2}\left(\delta_{al}\sqrt{\frac{(lb)}{(ab)}}\Gamma^{S_{cd}}_{lb;cd} +(-1)^{S_{ab}+S_{cd}}\delta_{bl}\sqrt{\frac{(al)}{(ab)}}\Gamma^{S_{cd}}_{al;cd}\right.\\
\nonumber&\left.\qquad\qquad\qquad\qquad\qquad\qquad+ \delta_{cl}\sqrt{\frac{(dl)}{(cd)}}\Gamma^{S_{ab}}_{ab;ld} + (-1)^{S_{ab}+S_{cd}}\delta_{ld}\sqrt{\frac{(cl)}{(cd)}}\Gamma^{S_{ab}}_{ab;cl}\right)\\
\nonumber&-\sum_{S'}[S']^2 \frac{[S_{ab}][S_{cd}]}{\sqrt{(ab)(cd)}}\left\{\begin{matrix}S& S_{ab} & \frac{1}{2}\\ S_{cd} & \frac{1}{2} & \frac{1}{2}\\ \frac{1}{2}&\frac{1}{2}&S' \end{matrix}\right\} \left( (-1)^{S_{ab}+S_{cd}}\delta_{bd}\sqrt{(al)(cl)}\Gamma^{S'}_{al;cl} \right.\\
&\left.\quad+(-1)^{S_{cd}}\delta_{ad}\sqrt{(bl)(cl)}\Gamma^{S'}_{bl;cl} + (-1)^{S_{ab}}\delta_{bc}\sqrt{(al)(dl)}\Gamma^{S'}_{al;dl} + \delta_{ac}\sqrt{(bl)(dl)}\Gamma^{S'}_{bl;dl}\right)~.
\end{align}
\paragraph{The $\mathcal{G}_1$ condition:}
the remaining two conditions are of a type we haven't encountered before. The $\mathcal{G}_1$ condition is defined through a spin-averaged ensemble as:
\begin{equation}
(\mathcal{G}_1)^{l}|^{S(S_{bl};S_{dl})}_{ab;cd} = \sum_i w_i \frac{1}{[\mathcal{S}]^2}\sum_\mathcal{M} \bra{\Psi^N_{\mathcal{SM},i}}{B^\dagger}^S_{abl(S_{bl})}~{B}^S_{cdl(S_{dl})}\ket{\Psi^N_{\mathcal{SM},i}}~,
\label{G1_sc}
\end{equation}
where $B^\dagger$ is a one-hole-two-particle operator, {\it i.e.}:
\begin{align}
\nonumber{B^\dagger}^S_{abl(S_{bl})} =& \left[\tilde{a}_a\otimes\left[a^\dagger_b\otimes a^\dagger_l\right]^{S_{bl}}\right]^S\\
=&\sum_{\sigma_b\sigma_l}\sum_{M_{bl}\sigma_a}(-1)^{\frac{1}{2}-\sigma_a}\braket{\frac{1}{2}-\sigma_a S_{bl}M_{bl}}{SM}\braket{\frac{1}{2}\sigma_b\frac{1}{2}\sigma_l}{S_{bl}M_{bl}}a_{a\sigma_a}a^\dagger_{b\sigma_b}a^\dagger_{l\sigma_l}~.
\end{align}
Notice the difference with the regular three-particle spin coupling. First the second and third index are coupled to intermediate spin $S_{bl}$, after which the first index is coupled with $S_{bl}$ to total spin $S$. Using anticommutation relations and spin-recoupling algebra we obtain the following expression of the $\mathcal{G}_1$ map as a function of the 2.5DM:
\begin{align}
\nonumber\mathcal{G}_1&(W)^{l}|^{S(S_{bl};S_{dl})}_{ab;cd} = [S_{bl}][S_{dl}] \delta_{S\frac{1}{2}}\left(\delta_{al}\delta_{cl}\rho_{bd} + (-1)^{S_{bl}}\delta_{ab}\delta_{cl}\rho_{dl} + (-1)^{S_{dl}}\delta_{cd}\delta_{al}\rho_{bl}  \right.\\
\nonumber&\left.\qquad\qquad\qquad\qquad\qquad\qquad + (-1)^{S_{bl}+S_{dl}}\delta_{ab}\delta_{cd}\rho_{ll}\right) + \sqrt{(bl)(dl)}\delta_{S_{bl}S_{dl}}\delta_{ac}\Gamma^{S_{bl}}_{bl;dl}\\
\nonumber&-\frac{[S_{bl}][S_{dl}]}{2}\delta_{S\frac{1}{2}}(-1)^{S_{bl}+S_{dl}}\left(\delta_{ab}\sqrt{{(cl)}{(dl)}}\Gamma^{S_{dl}}_{cl;dl} + (-1)^{S_{bl}}\delta_{al}\sqrt{{(cb)}{(dl)}}\Gamma^{S_{dl}}_{cb;dl}\right.\\
\nonumber&\left.\qquad\qquad\qquad\qquad\qquad\qquad\qquad + (-1)^{S_{dl}}\delta_{cl}\sqrt{{(ad)}{(bl)}}\Gamma^{S_{bl}}_{ad;bl} + \delta_{cd}\sqrt{{(al)}{(bl)}}\Gamma^{S_{bl}}_{al;bl}\right)\\
\nonumber&+\sqrt{{(ad)(cb)}}\sum_{S'}\sum_{S_{ab}S_{cd}}[S']^2[S_{ab}][S_{cd}][S_{bl}][S_{dl}]\\*
&\qquad\qquad \qquad\qquad\qquad\left\{\begin{matrix}S'&\frac{1}{2}&S_{dl}\\\frac{1}{2}&\frac{1}{2}&S_{ab}\end{matrix}\right\}
\left\{\begin{matrix}S'&\frac{1}{2}&S_{bl}\\\frac{1}{2}&\frac{1}{2}&S_{cd}\end{matrix}\right\}
\left\{\begin{matrix}S'&S_{bl}&\frac{1}{2}\\S&S_{dl}&\frac{1}{2} \end{matrix}\right\}
W^l|^{S'(S_{ab};S_{cd})}_{ad;cb}~.
\label{G1_2.5DM}
\end{align}
\paragraph{The $\mathcal{G}_2$ condition:}
the $\mathcal{G}_2$ condition is of the same type as $\mathcal{G}_1$, and defined through a spin-averaged ensemble as:
\begin{equation}
(\mathcal{G}_2)^{l}|^{S(S_{bl};S_{dl})}_{ab;cd} = \sum_i w_i \frac{1}{[\mathcal{S}]^2}\sum_\mathcal{M} \bra{\Psi^N_{\mathcal{SM},i}}{B^\dagger}^S_{abl(S_{bl})}~{B}^S_{cdl(S_{dl})}\ket{\Psi^N_{\mathcal{SM},i}}~,
\label{G2_sc}
\end{equation}
where this time the $B^\dagger$ is a one-particle-two-hole operator, given by:
\begin{align}
\nonumber{B^\dagger}^S_{abl(S_{bl})} =& \left[{a}^\dagger_a\otimes\Big[\tilde{a}_b\otimes \tilde{a}_l\Big]^{S_{bl}}\right]^S\\
=&\sum_{\sigma_b\sigma_l}\sum_{M_{bl}\sigma_a}(-1)^{1-\sigma_b-\sigma_l}\braket{\frac{1}{2}\sigma_a S_{bl}M_{bl}}{SM}\braket{\frac{1}{2}-\sigma_b\frac{1}{2}-\sigma_l}{S_{bl}M_{bl}}a_{a\sigma_a}a^\dagger_{b\sigma_b}a^\dagger_{l\sigma_l}~.
\end{align}
Anticommuting the creation and annihilation operators, and spin-recoupling the emergent terms, leads to the following for the $\mathcal{G}_2$ condition as a function of the 2.5DM:
\begin{align}
\nonumber\mathcal{G}_2&(W)^l|^{S(S_{bl};S_{dl})}_{ab;cd} = \delta_{S_{bl}S_{dl}}\left(\delta_{bd}+(-1)^{S_{bl}}\delta_{bl}\delta_{dl}\right)\rho_{ac}\\
\nonumber&-[S_{bl}][S_{dl}]\sum_{S'} [S']^2 \left\{\begin{matrix}\frac{1}{2}&\frac{1}{2}&S'\\S&S_{dl}&\frac{1}{2}\\S_{bl}&\frac{1}{2}&\frac{1}{2}\end{matrix}\right\}
\left[
\sqrt{(ad)(cb)}\Gamma^{S'}_{ad;cb}
+ (-1)^{S_{bl}}\delta_{bl}\sqrt{(ad)(cl)}\Gamma^{S'}_{ad;cl}\right.\\
\nonumber&\left. \qquad\qquad\qquad\qquad
+ (-1)^{S_{dl}}\delta_{dl}\sqrt{(al)(cb)}\Gamma^{S'}_{al;cb}
+ (-1)^{S_{bl}+S_{dl}}\delta_{bd}\sqrt{(al)(cl)}\Gamma^{S'}_{al;cl}
\right]\\
\nonumber&-\sqrt{{(ad)(cb)}}\sum_{S'}\sum_{S_{ab}S_{cd}}[S']^2[S_{ab}][S_{cd}][S_{bl}][S_{dl}]\\
&\qquad\qquad \qquad\qquad\qquad\left\{\begin{matrix}S'&\frac{1}{2}&S_{dl}\\\frac{1}{2}&\frac{1}{2}&S_{ab}\end{matrix}\right\}
\left\{\begin{matrix}S'&\frac{1}{2}&S_{bl}\\\frac{1}{2}&\frac{1}{2}&S_{cd}\end{matrix}\right\}
\left\{\begin{matrix}S'&S_{bl}&\frac{1}{2}\\S&S_{dl}&\frac{1}{2} \end{matrix}\right\}
W^l|^{S'(S_{ab};S_{cd})}_{ad;cb}~.
\end{align}
\section{Formulation as a semidefinite program}
We can reformulate the whole optimization problem using the 2.5DM as the central variable. For some Hamiltonian $\hat{H}$, we optimize the matrix $W$ under the constraint that is has the correct particle number, fulfills the consistency conditions, and has positive semidefinite linear matrix maps, as discussed in the previous Section:
\begin{eqnarray}
\label{2.5DM_vprob}E^N_{\text{SDP}}\left(H\right) &=& \min_{W} \mathrm{Tr}~\left[W H^{(2.5)}\right]~,\\
\nonumber\text{u.c.t.}&&\left\{
   \begin{array}{l}
   \mathrm{Tr}~W = \frac{N(N-1)(N-2)}{2}~,\\
   W\text{ is consistent}~,\\
   \mathcal{L}_i\left(W\right) \succeq 0~\qquad\forall \mathcal{L}_i \in \{\mathcal{I}_1,\mathcal{I}_2,\mathcal{Q}_1,\mathcal{Q}_2,\mathcal{G}_1,\mathcal{G}_2\}~.
   \end{array}
   \right.
\end{eqnarray}
The 2.5-Hamiltonian introduced above is defined by the relation:
\begin{equation}
\mathrm{Tr}~WH^{(2.5)} = \mathrm{Tr}~\Gamma H^{(2)}~.
\end{equation}
This optimization problem can again be formulated as a semidefinite program, by expanding $W$ in a complete orthogonal basis $\{f^i_W\}$ of traceless 2.5DM-space:
\begin{equation}
W = \frac{N(N-1)(N-2)}{M(M-1)(M-2)}\mathbb{1}_{2.5} + \sum_i w_i f^i_W~,\qquad\text{with}\qquad w_i = \mathrm{Tr}~Wf^i_W~,
\end{equation}
in which the $f^i_W$'s satisfy the consistency conditions. We can now define the v2.5DM problem as a dual-form semidefinite program with the following structure matrices:
\begin{equation}
u_W^0 = \frac{N(N-1)(N-2)}{M(M-1)(M-2)}\bigoplus_j \mathcal{L}_j\left(\mathbb{1}_{2.5}\right)\qquad\text{and}\qquad u_W^i = \bigoplus_j \mathcal{L}_j\left(f_W^i\right)~.
\end{equation}
The whole formalism developed in Chapter~\ref{SDP} can now be taken over, provided we find correct Hermitian adjoint maps. These are derived in Section~\ref{herm_adj_2.5DM}.
\subsection{The overlap matrix}
For the primal-dual interior point method and the boundary point method in Chapter~\ref{SDP} we introduced the overlap matrix (See Section~\ref{overlapmatrix}). For the translation of this algorithm to the v2.5DM formalism, we need to introduce a new overlap matrix:
\begin{equation}
\mathcal{S}^{ij}_W = \mathrm{Tr}~u^i_W u^j_W~.
\end{equation}
Analogous to the 2DM case, the overlap matrix can be interpreted as a map from 2.5DM space on itself:
\begin{equation}
\mathcal{S}\Delta = \hat{P}_W\left[\sum_i\mathcal{L}^\dagger_i\left(\mathcal{L}(\Delta)\right)\right]~,
\end{equation}
in which we sum over all the conditions introduced in the previous Section, and with $\hat{P}_W$ a projection on fully consistent traceless 2.5DM space, as explained in Section~\ref{consistent}. Unfortunately this overlap matrix can not be easily inverted, as was the case for the 2DM overlap matrix, because the lifting conditions are \emph{not} invariant under unitary transformations on single-particle space. Instead of inverting it analytically, we use the linear conjugate gradient method to compute the action of the inverse overlap matrix on a 2.5DM. This slows down the program, but not dramatically, as the number of iterations needed for the conjugate gradient loop to converge is small, and remains constant during the program. What is more, the computational cost of one overlap matrix-vector product scales as $M^5$, whereas the heaviest computations in the algorithm scale as $M^7$. This should be compared to the $M^9$ scaling of the full three-index conditions.
\section{Results}
\begin{figure}
\centering
$
\begin{array}{c}
\includegraphics[scale=0.7]{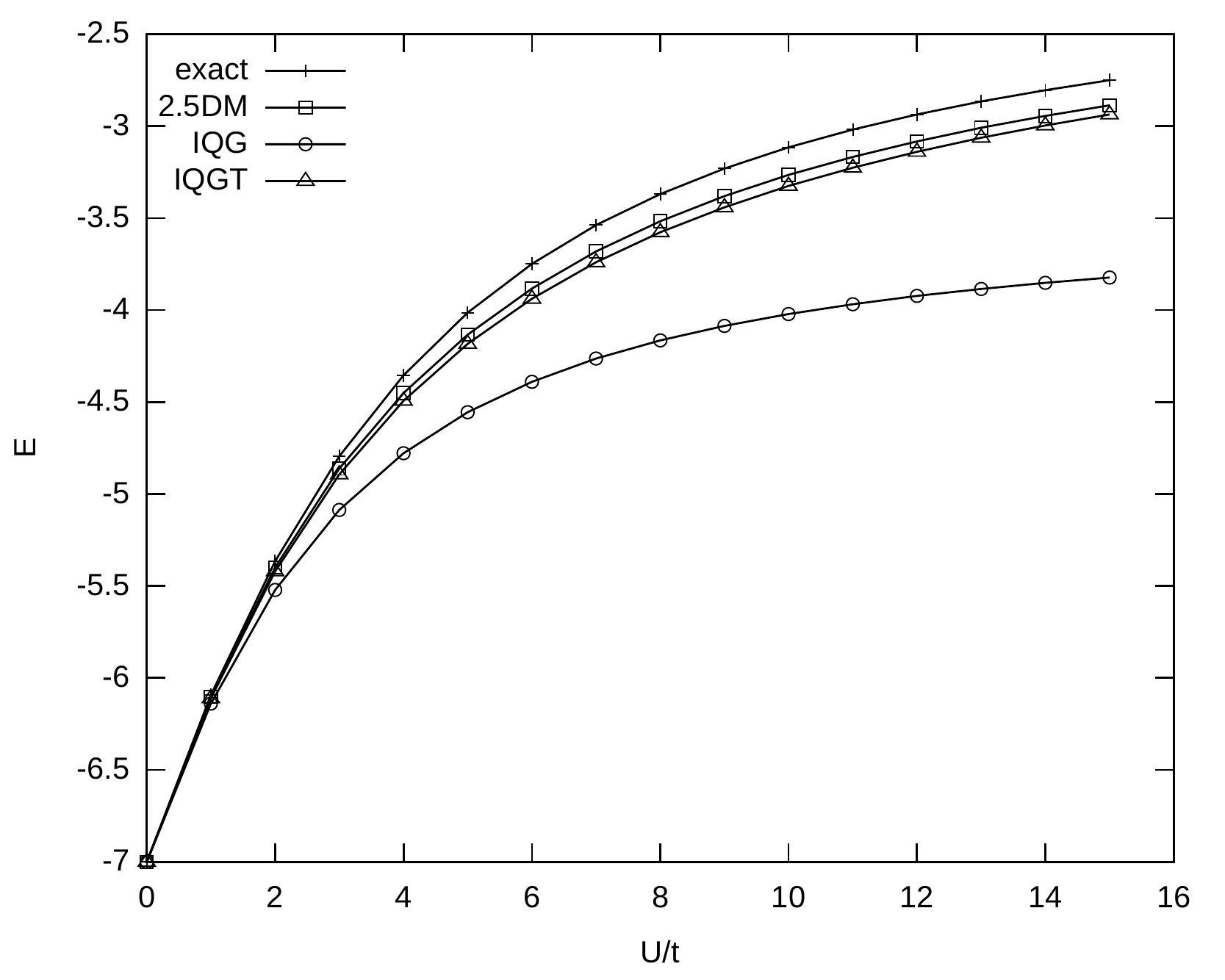}\\
\includegraphics[scale=0.7]{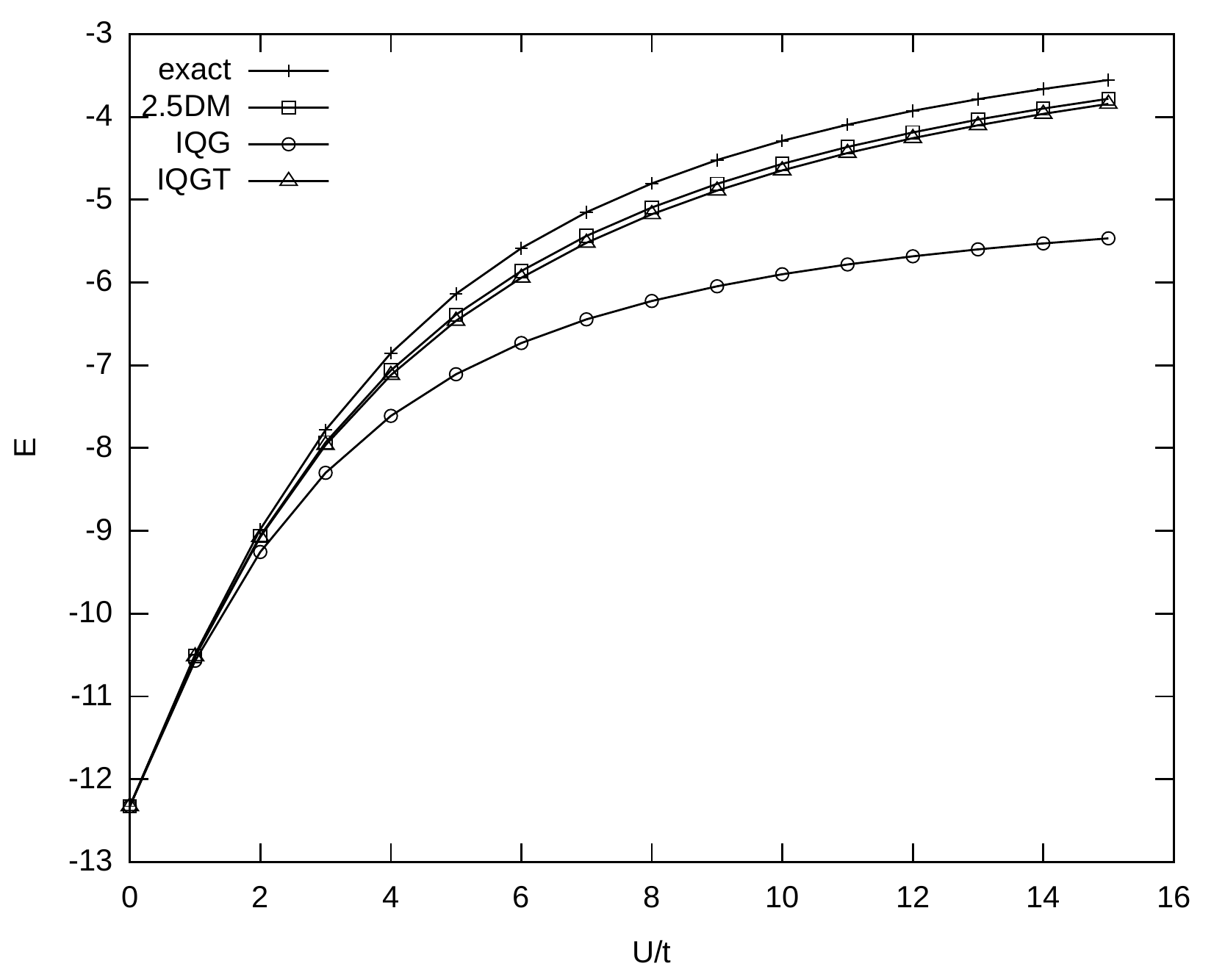}
\end{array}
$
\caption{\label{v2.5DM_ener}Ground-state energy as a function of on-site repulsion $U$ of the Hubbard model, for a 6-site with 5 particles (top) and for a 10-site lattice with 9 particles (bottom). A comparison is made between an optimization using $\mathcal{IQG}$ conditions, $\mathcal{IQGT}$ conditions, v2.5DM calculations and exact results.}
\end{figure}
The formalism introduced above has been implemented for the boundary point algorithm (See Section~\ref{bp_sdp}).
The results of such a v2.5DM calculation are shown in Fig.~\ref{v2.5DM_ener}, where the ground-state energy of the one-dimensional Hubbard model is plotted as a function of the on-site repulsion $U$, for a 6-site lattice with 5 particles, and for a 10-site lattice with 9 particles. The v2.5DM results are compared to the v2DM results using $\mathcal{IQG}$ and $\mathcal{IQGT}$ conditions, and to exact results. For the 6-site lattice exact diagonalization was used, for the 10-site lattice this is no longer computationally feasible, but we can again compare to quasi-exact results calculated with an MPS optimization \cite{schollwock,verstraete,chan,sebastian}. In this Figure, one can see that the v2.5DM results are always of $\mathcal{IQGT}$ quality without resorting to the full $\mathcal{IQGT}$ framework. In fact, the v2.5DM results are slightly better than those obtained with $\mathcal{IQGT}$. This is because the $\mathcal{T}_1$ and $\mathcal{T}_2$ conditions express the positivity of an anticommutator of three-particle operators (see Section~\ref{three_index}), whereas in v2.5DM positivity is imposed on all possible individual products of three-particle operators, be it of a restricted class.
In Table~\ref{v2.5DM_scl_energy} one can see that for very large values of $U$, the exact limit is restored, which we expect from the discussion about the Gutzwiller conditions in the previous Chapter.
\begin{table}
\centering
\begin{tabular}{|c|cccc|cccc|}
\hline
&\multicolumn{4}{c}{$L=6$\qquad$N=5$}\vline&\multicolumn{4}{c}{$L=10$\qquad$N=9$}\vline\\
\hline
$U$ & $\mathcal{IQG}$ & $\mathcal{IQGT}$ &v2.5DM&exact& $\mathcal{IQG}$ & $\mathcal{IQGT}$ & v2.5DM&exact\\
\hline
50 & -3.55 & -2.29 & -2.28 & -2.20 & -4.89 & -2.54 &-2.53 & -2.46 \\
100 & -3.49 & -2.15 & -2.14& -2.08 & -4.77  & -2.27 &-2.26 & -2.22\\
1000 & -3.44 & -2.03 & -2.01& -2.01 & -3.44 & -2.03 & -2.01& -2.01\\
\hline
\end{tabular}
\caption{\label{v2.5DM_scl_energy}The ground-state energy of a 10-site lattice with 9 particles (right) and 6-site lattice with five particles (left) for $U = 50,\ 100$ and 1000, exact results compared with v2DM results using $\mathcal{IQG}$ and $\mathcal{IQGT}$ results, and v2.5DM results.}
\end{table}

Using these spin-adapted lifting conditions in the v2.5DM framework, we were able not only to fix the strong correlation limit, but also improve the quality of the results at intermediary values of $U$ to $\mathcal{IQGT}$ quality. The computational cost, however, is two orders of magnitude smaller than when enforcing the full $\mathcal{IQGT}$ conditions.
It must be stressed that up to now we have only included the spin symmetry of the model in our code. If translational invariance, parity and pseudospin symmetry are taken into account much larger lattices can be considered. As an example, our fully symmetric $\mathcal{IQG}$ version allows lattice sizes up to 100 sites, and the fully symmetric $\mathcal{IQGT}$ program up to 20 sites. We expect a fully symmetric version of v2.5DM to be applicable to lattice sizes of about 50 sites, thereby enabling us to study two-dimensional lattices of reasonable size.

The diagonality of the third index in the 2.5DM implies that the result will depend on the chosen single-particle basis. For the Hubbard model it is clear that the site basis is the optimal basis to use for the diagonal third index. It would be interesting to study other systems where it is less clear what the best choice of the single-particle basis would be. An appealing application, {\it e.g.}, are molecules, where one can hope to get three-index precision by applying the v2.5DM method with a carefully chosen basis. A first guess of what the best basis would be is the basis of natural orbitals, for which it has been shown that the full-CI expansion has the fastest convergence \cite{lowdin}.

\chapter{\label{conclusions}Conclusions and outlook}
In this work we have provided an overview of a quantum many-body technique in which the two-particle density matrix (2DM) is determined variationally, and thereby replaces the wave function as the central object. It is our hope that we have convinced the reader that this is a promising technique, appealing in the simplicity of its underlying idea, and complementary to other many-body methods. At the same time we have tried to indicate where the problems and difficulties lie, and shown that the method has a long way to go before it can be used as a `black box' electronic structure method.

In Chapter~\ref{n_rep} we discussed the $N$-representability problem, which consists of finding the necessary and sufficient conditions a 2DM has to fulfil to be derivable from a physical wave function. Although the general problem is nearly impossible to solve, one can derive necessary conditions using the dual definition of $N$-representability. First we showed how the standard two- and three-index conditions can be derived using manifestly positive Hamiltonians. These conditions can be formulated as matrix positivity constraints, and are the ones commonly used in literature. The second part of the Chapter deals with constraints that are non-standard, {\it i.e.} the constraint Hamiltonian is not manifestly positive, but a lower bound can be found in some different, computationally cheap, way. The formalism introduced in Chapter~\ref{n_rep} is very flexible, which makes it easy to derive new necessary constraints. It is hard, however, to derive constraints that are active (in the sense that the inequality is violated after applying the standard conditions), and improve the result for relevant physical systems.

The constrained optimization problem (v2DM) introduced in Chapter~\ref{n_rep} can be translated to a standard numerical optimization technique called semidefinite programming. In Chapter~\ref{SDP} we start by formulating variational density matrix optimization in the two forms (primal and dual) in which a semidefinite program can be formulated. It is seen that in v2DM the dual form is the most efficient formulation, and three different algorithms are adapted to the specific form of v2DM. Two of these are so-called interior point methods, which stay inside the feasible region during optimization. The third one is a boundary point method, which stays on the hyperplane where the complementary slackness conditions holds, and moves towards $N$-representability. At the end of the Chapter, a comparative study is made between the different algorithms. The drawbacks and advantages of the different methods are listed, and their performance for the specific case of the one-dimensional Hubbard model at half filling is compared. The boundary point method is seen to be the most suited for this highly symmetrical type of problem. Although a lot of progress has been made, the different algorithms all remain far slower than electronic structure methods with a comparable accuracy ({\it e.g.} coupled-cluster with single and double excitations), and this remains the biggest drawback to the method at this point.

In Chapter~\ref{symmetry} we show how the efficiency of the v2DM method can be increased by including symmetries specific to the problem under study. We start by explaining how to introduce the most commonly available symmetry in electronic structure problems, {\it i.e.} spin symmetry. Then we consider symmetries specific to certain types of systems, such as rotational symmetry in atomic systems and translational invariance in the Hubbard model with periodic boundary conditions. It is shown that these symmetries can always be included in a straightforward matter, although the practical implementation can sometimes be a technical and tedious task. The inclusion of symmetry shows one way to overcome the computational complexity of the current implementations of semidefinite programming algorithms, up to a point where they are more competitive to other methods. It must, however, be stressed that the advantage is limited to the study of highly symmetrical systems.

In Chapter~\ref{applications}, all different aspects of the v2DM method discussed in the preceding Chapters come together. The first Section discusses the application of the standard two-index constraints to the isoelectronic series of Beryllium, Neon and Silicon. For this study full use was made of the available symmetry present in atomic systems, which allows the v2DM method to be used on far larger basissets than previously possible. It was found that the two-index conditions succesfully describe the static electron correlation arising from the near-degeneracy of single-particle levels at higher values of the central charge $Z$. The quality of the variationally obtained 2DM was verified through the extraction of different physical properties ({\it e.g.} ionization energy) and its physical content was found to be reliable.
The next Section focussed on the potential energy surfaces of diatomic molecules. It was found that the standard two, and even three-index conditions fail to describe the dissociation limit. To fix this we derived a new type of $N$-representability condition which imposes constraints on the separate atomic subsystems of molecule. The last Section treats the one-dimensional Hubbard model, for which spin symmetry, translational invariance and inversion symmetry allow a study for large lattice sizes. In previous v2DM studies only the half-filled model was studied, and only the ground-state energy was considered. We show that the two-index conditions fail to describe the strong-correlation limit below half filling, and that three-index constraints are needed. Not only the ground-state energy, but also the spin and charge correlation functions are seen to deviate substantially from the expected result. The origin of the failure can be traced to a bad description of correlations in a subspace of three-particle space, which is responsible for the hopping on singly-occupied lattices. Two new necessary constraints on the 2DM are derived based on this analysis. It is found that these constraints, while helpful, are too weak to cure the pathological behaviour, and a larger object is needed. In Chapter~\ref{v2.5DM} the 2.5DM, which is a 3DM diagonal in one spatial index, is introduced. Six standard matrix-positivity conditions can be derived for this object, and a proof-of-principle implementation shows that this approach effectively fixes the strong-correlation limit, while being two orders of magnitude less computationally complex than the three-index conditions.

\section*{Outlook}
Although the method discussed in this thesis has been around since the 1960's, it remains an obscure, even somewhat mysterious, many-body method which still needs more investigation. Fortunately, activity in this area has picked up in recent years. The single most important thing for the advancement of the field is the development of an algorithm that scales genuinely, and not just nominally, as $O(M^6$. If we fail to develop a method that is actually competitive with other electronic structure methods, progress will be slow and the method will remain in the margins. Apart from the scaling issue, there are a number of directions where I think progress is possible in the field, or for which the method could actually contribute information where other methods fail.
\begin{itemize}
\item It is notoriously difficult for many-body methods to obtain decent results for the ground-state properties of the two-dimensional Hubbard model. This model is highly symmetrical, and a v2DM approach could be used to obtain complementary results to those obtained by other methods. Results on the one-dimensional Hubbard model indicate that, if one exploits all the symmetries in a v2.5DM approach of the two-dimensional Hubbard model, decent results for reasonably sized lattices could be obtained.
\item The subsystem constraints that were introduced for diatomic systems, only become active for relatively large internuclear distances. An idea to improve on this is to use a previous $\mathcal{IQG}$ calculation on the diatomic molecule to construct a better subsystem Hamiltonian.
\item One thing that, in my view, has not been exploited enough, is the power of exactly solvable models as conditions, possibly in combination with the subsystem constraints. {\it E.g.} one could imagine using the Bethe-ansatz energies for the one-dimensional Hubbard model as subsystem constraints in two-dimensional models.
\item The three-index constraints fix a lot of problems, and generally improve the results by an order of magnitude. They are however computationally very heavy to impose. Because they are expressible as a function of the 2DM alone, a matrix-vector product can be performed very efficiently, and one has the feeling this could be exploited to impose the constraints in a much more efficient way.
\item At this point, there is no formulation of reduced density matrix optimization in the thermodynamic limit. This is because the reduced density matrix becomes ill-defined in the thermodynamic limit. A better suited object would be the related density cumulant matrix. It is our ambition to reformulate the density matrix optimization in terms of the density cumulant, and use it in a study of the electron gas.
\end{itemize}

\appendix

\chapter{\label{math_notes}Mathematical concepts and notation}
Throughout the thesis mathematical concepts and symbols are used which might be unfamiliar to the reader. In this appendix some of these conceps are briefly explained.
\paragraph{Convexity}
A subset $V$ of a linear vector space is convex if, for every two elements of this set, all elements on the line segment connecting these two elements are also elements of the set, {\it i.e.}:
\begin{equation}
\forall x,y \in V \qquad z = \alpha x + (1 - \alpha) y \in V \qquad \text{for} \qquad 0 \leq \alpha \leq 1~.
\end{equation}
Convexity shows up in the discussion of ensemble $N$-representability, where an ensemble $N$-representable $p$DM is defined as:
\begin{equation}
^p_N\Gamma_{\alpha_1\ldots\alpha_p;\beta_1\ldots\beta_p} = \sum_i w_i~ \bra{\Psi^N_i}a^\dagger_{\alpha_1}\ldots a^\dagger_{\alpha_p}a_{\beta_{p}}\ldots a_{\beta_1}\ket{\Psi^N_i}~,
\label{pDM_app}
\end{equation}
in which
\begin{equation}
w_i \geq 0 \qquad\text{and}\qquad\sum_i w_i = 1~.
\end{equation}
The weighed sum of two $N$-representable $p$DM's is also expressable in the form (\ref{pDM_app}), and is by definition also a $p$DM, which means that the set of ensemble $N$-representable $p$DM's is convex.
\paragraph{$p$-particle matrix space}
$p$-matrix space is the linear vector space formed by symmetric matrices on $p$-particle space. We define a scalar product of two $p$-matrices $A$ and $B$ as:
\begin{equation}
\mathrm{Tr}~\left[AB\right] = \frac{1}{(p!)^2}\sum_{\alpha_1\ldots\alpha_p}\sum_{\beta_1\ldots\beta_{p}}A_{\alpha_1\ldots\alpha_p;\beta_1\ldots\beta_p}B_{\alpha_1\ldots\alpha_p;\beta_1\ldots\beta_p}~.
\end{equation}
The trace of a $p$-matrix is given by:
\begin{equation}
\mathrm{Tr}~A = \frac{1}{p!}\sum_{\alpha_1\ldots\alpha_p}A_{\alpha_1\ldots\alpha_p;\alpha_1\ldots\alpha_p}~,
\end{equation}
and the norm of a $p$-matrix is defined as:
\begin{equation}
\|A\| = \sqrt{\mathrm{Tr}~AA}~.
\end{equation}
\paragraph{Carrier space}
Carrier space is the direct sum of the different constraint matrix spaces, {\it e.g.} the matrices $Z$ and $X$ appearing in the primal and dual formulation of semidefinite programs (see Chapter \ref{SDP}) live in carrier space:
\begin{equation}
Z = \bigoplus_i \mathcal{L}_i(\Gamma) \qquad\text{and}\qquad X = \bigoplus_i X_{\mathcal{L}_i} ~.
\end{equation}
The scalar product and trace on this space is defined as the sum of the scalar products and traces on the constraint matrix spaces:
\begin{equation}
\mathrm{Tr}~[AB] = \sum_i \mathrm{Tr}~[A_{\mathcal{L}_i}B_{\mathcal{L}_i}]\qquad\text{and}\qquad\mathrm{Tr}~A = \sum_i\mathrm{Tr}~A_{\mathcal{L}_i}~.
\end{equation}
\paragraph{Wedge product}
The wedge product or Grassman product of a $p$ and a $q$ particle matrix is the direct product of these two matrices, antisymmetrized in the single-particle indices, so it becomes a $(p+q)$-particle matrix, {\it e.g.} for two 1DM's we have:
\begin{equation}
\left(\rho \wedge \rho\right)_{\alpha\beta;\gamma\delta} = \rho_{\alpha\gamma}\rho_{\beta\delta} - \rho_{\alpha\delta}\rho_{\beta\gamma}~.
\end{equation}
\paragraph{Chemical and physical notation}
In this thesis physical notation has been used for the 2DM. This means that the single-particle indices on the left of the semicolon form the row two-particle index of the 2DM, and those on the right the column two-particle index. In chemical notation one groups the single-particle indices belonging to the same `particle':
\begin{align}
\Gamma_{\alpha\beta;\gamma\delta} =& \Gamma^{\alpha\gamma}_{\beta\delta}\\
\nonumber\text{physical notation} \quad&\quad \text{chemical notation}
\end{align}
\paragraph{Notation}
In this paragraph we define some symbols and abbreviations that are frequently used throughout the text:
\begin{itemize}
\item The $\mathcal{I}$ condition on the 2DM, defined in Chapter~\ref{n_rep}, is also referred to as the $P$ or $D$ condition in literature.
\item $M$: The dimension of single-particle Hilbert space, or for chemists, the number of spin-orbitals.
\item $N$: The number of particles in the system.
\item $L$: The size of the lattice in the one-dimensional Hubbard model.
\item sp: single-particle
\item tp: two-particle
\item $p$DM: $p$-particle reduced density matrix
\item v2DM: the variational determination of the 2DM
\item v2.5DM: the variational determination of the 2.5DM
\item $[j] = \sqrt{2j+1}$
\item $\hat{j} = \sqrt{j(j+1)}$
\item $(ab) = 1 + \delta_{ab}\rightarrow$  abbreviation of the norm used in the spin-coupled matrix maps.
\end{itemize}

\chapter{\label{angular_momentum}Angular momentum algebra}
This appendix contains some of the relations from angular momentum algebra that are used in Chapters \ref{symmetry} and \ref{v2.5DM}. For more detailed information on this topic we refer to \cite{angmom}. 

\section{Spin coupling}
The direct product of two spins $j_1$ an $j_2$ is coupled to good total spin $J$ using the Clebsch-Gordan coefficients:
\begin{equation}
\ket{j_1j_2;JM} = \sum_{m_1m_2}\braket{j_1m_1j_2m_2}{JM}\ket{j_1m_1}\ket{j_2m_2}~.
\end{equation}
The inverse transformation is given by:
\begin{equation}
\ket{j_1m_1}\ket{j_2m_2} = \sum_{JM}\braket{j_1m_1j_2m_2}{JM}\ket{j_1j_2;JM}~.
\end{equation}
Because this is a unitary transformation the Clebsch-Gordan coefficients satisfy the following orthogonality relations:
\begin{eqnarray}
\sum_{m_1m_2}\braket{j_1m_1j_2m_2}{JM}\braket{j_1m_1j_2m_2}{J'M'} &=& \delta_{JJ'}\delta_{MM'}~,\\
\sum_{JM}\braket{j_1m_1j_2m_2}{JM}\braket{j_1m_1'j_2m_2'}{JM} &=& \delta_{m_1m_1'}\delta_{m_2m_2'}~.
\end{eqnarray}
The Clesch-Gordan coefficients can be written in a more symmetrical form, called Wigner 3j-symbols:
\begin{equation}
\braket{j_1m_1j_2m_2}{j_3m_3} = (-1)^{j_1-j_2+m_3}[j_3]
\left(
\begin{matrix}
j_1&j_2&j_3\\
m_1&m_2&-m_3
\end{matrix}
\right)~,
\end{equation}
with orthgonality relations:
\begin{eqnarray}
\sum_{m_1m_2}
\left(
\begin{matrix}
j_1&j_2&j_3\\
m_1&m_2&m_3
\end{matrix}
\right)
\left(
\begin{matrix}
j_1&j_2&j_3'\\
m_1&m_2&m_3'
\end{matrix}
\right)
&=& \frac{\delta_{j_3j_3'}\delta_{m_3m_3'}}{[j_3]^2}~,\\
\sum_{j_3m_3}[j_3]^2
\left(
\begin{matrix}
j_1&j_2&j_3\\
m_1&m_2&m_3
\end{matrix}
\right)
\left(
\begin{matrix}
j_1&j_2&j_3\\
m_1'&m_2'&m_3
\end{matrix}
\right)
&=& \delta_{m_1m_1'}\delta_{m_2m_2'}~.
\end{eqnarray}
The Wigner-3j symbol is invariant under even permutations of columns, for odd permutations a phase $(-1)^{j_1+j_2+j_3}$ has to be added. The same phase arises when the sign of all $m_i$ is reversed:
\begin{equation}
\left(
\begin{matrix}
j_1&j_2&j_3\\
m_1&m_2&m_3
\end{matrix}
\right)
= (-1)^{j_1+j_2+j_3}
\left(
\begin{matrix}
j_1&j_2&j_3\\
-m_1&-m_2&-m_3
\end{matrix}
\right)~.
\end{equation}

When three spins are coupled there are multiple choices we can make for the coupled basis, depending on which two spins we couple to some intermediate spin, {\it e.g.}
\begin{eqnarray}
\nonumber\ket{j_1j_2j_3;(J_{23})JM} &=& \sum_{m_1M_{23}}\sum_{m_2m_3}\braket{j_1m_1J_{23}M_{23}}{JM}\\
&&\qquad\qquad\qquad\braket{j_2m_2j_3m_3}{J_{23}M_{23}}\ket{j_1m_1}\ket{j_2m_2}\ket{j_3m_3}~,\\
\nonumber\ket{j_1j_2j_3;(J_{12})JM} &=& \sum_{m_1m_{2}}\sum_{m_3M_{12}}\braket{j_1m_1j_2m_2}{J_{12}M_{12}}\\
&&\qquad\qquad\qquad\braket{J_{12}M_{12}j_3m_3}{JM}\ket{j_1m_1}\ket{j_2m_2}\ket{j_3m_3}~.
\end{eqnarray}
The unitary transformation that connects these different coupling schemes is given by:
\begin{eqnarray}
\nonumber\ket{j_1j_2j_3;(J_{23})JM} &=& \sum_{J_{12}}(-1)^{j_1+j_2+j_3+J}[J_{12}][J_{23}]\\
&&\qquad\qquad\qquad
\left\{
\begin{matrix}
j_1&j_2&J_{12}\\
j_3&J&J_{23}
\end{matrix}
\right\}
\ket{j_1j_2j_3;(J_{12})JM}~,
\label{recoupling_6j}
\end{eqnarray}
in which the Wigner-$6j$ symbol appears, which is symmetric under all permutations of the columns, as well as invariant under switching the upper and lower row of two columns, leaving the third column fixed. We introduce it here because there are some very useful recoupling relations that are employed in the coupling of the matrix maps in Chapter \ref{symmetry} and \ref{v2.5DM}:
\begin{eqnarray}
\sum_j [j]^2 
\left\{
\begin{matrix}
j_1&j_2&j\\
j_3&j_4&j'
\end{matrix}
\right\}
\left\{
\begin{matrix}
j_1&j_2&j\\
j_3&j_4&j''
\end{matrix}
\right\}
&=& \frac{\delta_{j'j''}}{[j']^2}~,\\
\sum_j [j]^2
\left\{
\begin{matrix}
j_1&j_2&j'\\
j_4&j_3&j
\end{matrix}
\right\}
\left\{
\begin{matrix}
j_1&j_4&j''\\
j_2&j_3&j
\end{matrix}
\right\}
&=&
\left\{
\begin{matrix}
j_1&j_2&j'\\
j_3&j_4&j''
\end{matrix}
\right\}~,
\end{eqnarray}
and
\begin{eqnarray}
\nonumber&&\sum_m (-1)^{j-m}
\left(
\begin{matrix}
j_1&j_2&j\\
m_1&m_2&m
\end{matrix}
\right)
\left(
\begin{matrix}
j_3&j_4&j\\
m_3&m_4&-m
\end{matrix}
\right)
=\\
&&~~~\sum_{j'm'}[j']^2(-1)^{j+j_2+j_3+m'}
\left\{
\begin{matrix}
j_2&j_4&j'\\
j_3&j_1&j
\end{matrix}
\right\}
\left(
\begin{matrix}
j_1&j_3&j'\\
m_1&m_3&m'
\end{matrix}
\right)
\left(
\begin{matrix}
j_2&j_4&j'\\
m_2&m_4&-m'
\end{matrix}
\right)~.
\label{extremely_useful}
\end{eqnarray}
When four spins are coupled there are even more coupling schemes available, depending on which two pairs of spins are coupled first. The unitary transformation relating those coupling schemes contains a Wigner-$9j$ symbol:
\begin{align}
\nonumber\ket{j_1j_2j_3j_4;(J_{13})(J_{24})JM} &= \sum_{J_{12}J_{34}}[J_{12}][J_{34}][J_{13}][J_{24}]\\
&\qquad\left\{
\begin{matrix}
j_1&j_2&J_{12}\\
j_3&j_4&J_{34}\\
J_{13}&J_{24}&J
\end{matrix}
\right\}
\ket{j_1j_2j_3j_4;(J_{12})(J_{34})JM}~. 
\end{align}
There are many useful relations relating these symbols to one another and expressing orthogonality, the only one that is used in this thesis is the reduction of three $6j$-symbols to one $9j$:
\begin{equation}
\left\{
\begin{matrix}
j_1&j_2&j_{3}\\
j_4&j_5&j_{6}\\
j_{7}&j_{8}&j_9
\end{matrix}
\right\}
=
\sum_j (-1)^{2j}[j]^2 
\left\{
\begin{matrix}
j_1&j_4&j_{7}\\
j_8&j_9&j
\end{matrix}
\right\}
\left\{
\begin{matrix}
j_2&j_5&j_{8}\\
j_4&j&j_6
\end{matrix}
\right\}
\left\{
\begin{matrix}
j_3&j_6&j_{9}\\
j&j_1&j_2
\end{matrix}
\right\}~.
\label{6j_to_9j}
\end{equation}
\section{Spherical tensor operators}
Spherical tensor operators $A^j_m$ are the generalization of the eigenstates $\ket{jm}$ to operators. They transform similarly under a rotation, and the action of the angular momemtum operators on them is:
\begin{equation}
[\mathcal{J}_{\pm},A^j_m] = \sqrt{(j\pm m + 1)(j\mp m)}A^j_{m\pm1}~,\qquad\text{and}\qquad[\mathcal{J}_z,A^j_m] = mA^j_m~.
\end{equation}
The direct product of two tensor operators can be expressed as a new tensor operator using Clebsch-Gordan operators:
\begin{equation}
\left[A^{j_1}\otimes B^{j_2}\right]^{j_3}_{m_3} = \sum_{m_1m_2}\braket{j_1m_1j_2m_2}{j_3m_3}A^{j_1}_{m_1}B^{j_2}_{m_2}~.
\label{coupling_op}
\end{equation}
A very useful theorem is the Wigner-Eckart theorem, which allows to express the matrix elements of a spherical tensor operator in a spherical basis, by extracting the dependence on the $m$ values:
\begin{equation}
\bra{j_1m_1}A^j_m\ket{j_2m_2} = (-1)^{j_1-m_1}
\left(
\begin{matrix}
j_1 & j & j_2\\
-m_1 & m & m_2
\end{matrix}
\right)
\langle j_1\|A^j\| j_2\rangle~.
\label{wigner_eckart}
\end{equation}
The Hermitian adjoint of a spherical tensor operator is not a spherical tensor operator, we have to replace it by:
\begin{equation}
B^j_m = (-1)^{j+m}\left(A^j_{-m}\right)^\dagger~.
\end{equation}
Some examples of spherical tensor operators used throughout the thesis are:
\begin{equation}
a^\dagger_{jm} \qquad \text{and}\qquad \tilde{a}_{jm} = (-1)^{j+m}a_{j-m}~,
\end{equation}
which can be used to construct higher-order spherical tensor operators using Eq.~(\ref{coupling_op}), {\it e.g.} the two-particle creation operator:
\begin{equation}
\left[a^\dagger_{j_1}\otimes a^\dagger_{j_2}\right]^{j_3}_{m_3} = \sum_{m_1m_2}\braket{j_1m_1j_2m_2}{j_3m_3}a^\dagger_{j_1m_1}a^\dagger_{j_2m_2}~,
\end{equation}
or the particle-hole operator:
\begin{equation}
\left[a^\dagger_{j_1}\otimes \tilde{a}_{j_2}\right]^{j_3}_{m_3} = \sum_{m_1m_2}(-1)^{j_2+m_2}\braket{j_1m_1j_2m_2}{j_3m_3} a^\dagger_{j_1m_1}a_{j_2-m_2}~.
\end{equation}

\chapter{\label{herm_adj}Hermitian adjoint maps}
The Hermitian adjoint maps, as defined in Chapter~\ref{SDP}, are essential in the formalism of the semidefinite programming algorithms. With every symmetry that is included in the 2DM, the Hermitian adjoint maps have a different analytical expression and need to be adapted. The specific form of the adjoint maps does not contribute to a better understanding, but for one who is interested in implementing the symmetries they are of vital importance, which is why they are included in this appendix.

\section{v2DM formalism}
For every symmetry the Hermitian adjoint maps are still defined by the relation:
\begin{equation}
\mathrm{Tr}~\mathcal{L}(\Gamma) A = \mathrm{Tr}~\mathcal{L}^\dagger(A) \Gamma~.
\end{equation}
The $\mathcal{I}$ and $\mathcal{Q}$ map are Hermitian, so no adjoint map needs to be calculated. In what follows the adjoint maps for the $\mathcal{G}$, $\mathcal{T}_1$, $\mathcal{T}_2$ and $\mathcal{T}_2'$ maps are listed for the different symmetries discussed in Chapter~\ref{symmetry}.
\subsection{\label{herm_adj_sc}Spin symmetry}
Since the matrices $\mathcal{L}(A)$ are two-particle matrices, the spin-coupled version of the Hermitian adjoint maps is defined as:
\begin{equation}
\mathcal{L}^\dagger(A)^S_{ab;cd} = \frac{1}{\sqrt{(ab)(cd)}}\sum_{\sigma_a\sigma_b}\sum_{\sigma_c\sigma_d}\braket{\frac{1}{2}\sigma_a\frac{1}{2}\sigma_b}{SM}\braket{\frac{1}{2}\sigma_c\frac{1}{2}\sigma_d}{SM}\mathcal{L}^\dagger(A)_{a\sigma_ab\sigma_b;c\sigma_cd\sigma_d}~.
\label{gen_dagger_sc}
\end{equation}
\paragraph{The $\mathcal{G}^\dagger$ map:}
the first non-trivial Hermitian adjoint map is the $\mathcal{G}^\dagger$, the spin-coupled form can be derived by subsituting Eq.~(\ref{G_down}) in Eq.~(\ref{gen_dagger_sc}) and performing the necessary angular momentum algebra, this leads to:
\begin{align}
\mathcal{G}^\dagger(\Gamma)^S_{ab;cd} &= \frac{1}{\sqrt{(ab)(cd)}}\left(\frac{1}{N-1}\left[\delta_{ac}\overline{A}_{bd}
 + (-1)^S\delta_{ad}\overline{A}_{bc}
 + (-1)^S\delta_{bc}\overline{A}_{ad}
 + \delta_{bd}\overline{A}_{ac}
\right]\right.\\
&
\nonumber\left.
\qquad\qquad- \sum_{S'}[S']^2
\left\{
\begin{matrix}
\frac{1}{2}&\frac{1}{2}&S\\
\frac{1}{2}&\frac{1}{2}&S'
\end{matrix}
\right\}
\left[
A^{S'}_{ad;cb}
+(-1)^S A^{S'}_{bd;ca}
+(-1)^S A^{S'}_{ac;db}
+A^{S'}_{bc;da}
\right]\right).
\end{align}
In which the \emph{bar} function for spin-coupled particle-hole matrices is defined as:
\begin{equation}
\overline{A}_{ac} = \frac{1}{2}\sum_{S}[S]^2 \sum_b A^S_{ab;cb}~.
\end{equation}
\paragraph{The $\mathcal{T}_1^\dagger$ map:}
the $\mathcal{T}^\dagger_1$ map is a $\mathcal{Q}$-like map in $\overline{A}$, as is seen from Eq.~(\ref{T1_down}), and is therefore easily recoupled to:
\begin{align}
\nonumber\mathcal{T}_1^\dagger(A)^{S}_{ab;cd} =& \frac{1}{\sqrt{(ab)(cd)}}\left(\delta_{ac}\delta_{bd} + (-1)^S \delta_{ad}\delta_{bc}\right)\frac{2\mathrm{Tr}~A}{N(N-1)} + \overline{A}^S_{ab;cd}\\
&\frac{1}{\sqrt{(ab)(cd)}}\frac{1}{N-1}\left[
\delta_{bd}\overline{\overline{A}}_{ac}
+(-1)^S\delta_{ad}\overline{\overline{A}}_{bc}
+(-1)^S\delta_{bc}\overline{\overline{A}}_{ad}
+ \delta_{ac}\overline{\overline{A}}_{bd}
\right]~.
\end{align}
The difficulty lies in the derivation of the spin-coupled two-particle matrix $\overline{A}$ from a spin-coupled three-particle matrix. This can be achieved by substituting the inverse of (\ref{T1_formal_sc}) into:
\begin{align}
\nonumber\overline{A}^S_{ab;de} = \frac{1}{\sqrt{(ab)(de)}}\sum_{\sigma_a\sigma_b}\sum_{\sigma_d\sigma_e}&\braket{\frac{1}{2}\sigma_a\frac{1}{2}\sigma_b}{SM}\braket{\frac{1}{2}\sigma_d\frac{1}{2}\sigma_e}{SM}\\
&\sum_{c\sigma_c}A_{(a\sigma_a)(b\sigma_b)(c\sigma_c);(d\sigma_d)(e\sigma_e)(c\sigma_c)}~,
\end{align}
and performing some basic angular momentum algebra to obtain:
\begin{equation}
\overline{A}^S_{ab;de} = \sum_Z  \frac{[Z]^2}{[S]^2}\sum_c A^{Z(S;S)}_{abc;dec}~.
\label{T1_dagger_bar_sc}
\end{equation}
The \emph{double bar} function can then be obtained by:
\begin{equation}
\overline{\overline{A}}_{ad} = \frac{1}{2} \sum_{S}{[S]^2}\sum_b \sqrt{(ab)(db)}~\overline{A}^{S}_{ab;db}~.
\end{equation}
\paragraph{The $\mathcal{T}^\dagger_2$ map:}
the spin-coupled $\mathcal{T}^\dagger_2$ map is derived by subsituting Eq.~(\ref{T2_down}) into (\ref{gen_dagger_sc}). The $\mathcal{T}_2^\dagger$ is quite similar to the $\mathcal{G}^\dagger$ map, and it is therefore quite straightforward to obtain:
\begin{align}
\nonumber\mathcal{T}_2^\dagger(A)^S_{ab;cd} =& \frac{1}{\sqrt{(ab)(cd)}}\frac{1}{N-1}\left(
\delta_{bd}\tilde{\tilde{A}}_{ac} 
+ (-1)^S \delta_{ad}\tilde{\tilde{A}}_{bc} 
+ (-1)^S \delta_{bc}\tilde{\tilde{A}}_{ad} 
+\delta_{ac}\tilde{\tilde{A}}_{bd} 
\right) + \bar{A}^S_{ab;cd}\\
\nonumber& -\frac{1}{\sqrt{(ab)(cd)}}\sum_{S'}[S']^2
\left\{
\begin{matrix}
\frac{1}{2}&\frac{1}{2}&S\\
\frac{1}{2}&\frac{1}{2}&S'\\
\end{matrix}
\right\}\left[\tilde{A}^{S'}_{da;bc}
+ (-1)^S\tilde{A}^{S'}_{db;ac}\right.\\
&\left.\qquad\qquad \qquad\qquad\qquad\qquad\qquad\qquad\qquad\qquad+ (-1)^S\tilde{A}^{S'}_{ca;bd}
+ \tilde{A}^{S'}_{cb;ad}
\right]~.
\end{align}
The most difficult part is again to construct the relation between a spin-coupled two-particle-one-hole matrix and the partially traced matrices in the above equation. Using the same strategy as before we can derive the spin-coupled version of the \emph{bar} function, which maps a two-particle-one-hole matrix on a two-particle matrix, as:
\begin{equation}
\bar{A}^S_{ab;cd}=\sum_Z\frac{[Z]^2}{[S]^2}\sum_c A^{Z(S;S)}_{abc;dec}~.
\label{T2_dagger_bar_sc}
\end{equation}
For the \emph{tilde} function, which maps a two-particle-one-hole matrix on a particle-hole matrix, the inverse of Eq.~(\ref{pph_coupling}) is substituted in:
\begin{align}
\nonumber\tilde{A}^Z_{bc;ez} =  \sum_{\sigma_b\sigma_c}\sum_{\sigma_e\sigma_z}(-1)^{1 - \sigma_c-\sigma_z}&\braket{\frac{1}{2}\sigma_b\frac{1}{2}-\sigma_{c}}{ZM} \braket{\frac{1}{2}\sigma_e\frac{1}{2}-\sigma_{z}}{ZM}\\
&\sum_{a\sigma_a}A_{(a\sigma_a)(b\sigma_b)(c\sigma_c);(a\sigma_a)(e\sigma_e)(z\sigma_z)}~.
\label{T2_dagger_tilde_sc}
\end{align}
Performing two recouplings using Eq.~(\ref{extremely_useful}) introduces two $6j$-symbols, and one obtains:
\begin{align}
\tilde{A}^Z_{bc;ez} =& \sum_{S}\sum_{S_{ab}S_{de}}[S]^2[S_{ab}][S_{de}]\left\{\begin{matrix}S&\frac{1}{2}&Z\\\frac{1}{2} & \frac{1}{2} & S_{ab}\end{matrix}\right\}\left\{\begin{matrix}S&\frac{1}{2}&Z\\\frac{1}{2} & \frac{1}{2} & S_{de}\end{matrix}\right\} \sum_a \sqrt{(ab)(ae)} A^{S(S_{ab};S_{de})}_{abc;aez}.
\end{align}
The much easier \emph{double tilde} can be derived by subsituting the inverse of Eq.~(\ref{pph_coupling}) in:
\begin{equation}
\tilde{\tilde{A}}_{cz} = \sum_{ab}\sum_{\sigma_a\sigma_b}A_{(a\sigma_a)(b\sigma_b)(c\sigma_c);(a\sigma_a)(b\sigma_b)(z\sigma_c)}~,
\end{equation}
which immediately leads to:
\begin{equation}
\tilde{\tilde{A}}_{cz} = \frac{1}{2}\sum_{S}[S]^2\sum_{S_{ab}}\sum_{ab}(ab)A^{S(S_{ab};S_{ab})}_{abc;abz}~.
\end{equation}
\paragraph{The ${\mathcal{T}_2'}^\dagger$ map:}
the spin-coupled version of the ${\mathcal{T}_2'}^\dagger$ map is found by substituting Eq.~(\ref{T2P_down}) into (\ref{gen_dagger_sc}). This results into the spin-coupled regular $\mathcal{T}_2^\dagger$ derived in the last paragraph and some extra terms, which are easily recoupled leading to:
\begin{align}
\nonumber{\mathcal{T}_2'}^\dagger(A)^S_{ab;cd} =& \mathcal{T}_2^\dagger(A_\mathcal{T})^S_{ab;cd} -\frac{\sqrt{2}}{[S]}
\left( \frac{1}{\sqrt{(cd)}}\left[ (A_\omega)^S_{abc;d} + (-1)^S(A_\omega)^S_{abd;c}\right]\right.\\
\nonumber&\left.\qquad\qquad\qquad\qquad\qquad\qquad+\frac{1}{\sqrt{(ab)}}\left[(A_\omega)^S_{cda;b} + (-1)^S(A_\omega)^S_{cdb;a}\right]\right)\\
\nonumber& + \frac{1}{\sqrt{(ab)(cd)}} \frac{1}{N-1} \left( \delta_{bd}\left(A_\rho\right)_{ca} + (-1)^S\delta_{da}\left(A_\rho\right)_{cb}\right.\\
&\left. \qquad\qquad\qquad\qquad\qquad\qquad\qquad + (-1)^S \delta_{bc}\left(A_\rho\right)_{da}  + \delta_{ac}\left(A_\rho\right)_{db}  \right).
\label{gen_dagger_scac}
\end{align}
\subsection{Spin and angular momentum symmetry}
The spin and angular momentum coupled form of the Hermitian adjoint maps is derived in a similar way as in the previous Section, by substituting the correct expression in:
\begin{align}
\nonumber\mathcal{L}^\dagger(A)^{L^\pi S}_{ab;cd} = \frac{1}{\sqrt{(ab)(cd)}}&
\sum_{\sigma_a\sigma_b}\sum_{m_am_b}\braket{\frac{1}{2}\sigma_a\frac{1}{2}\sigma_b}{SM_S}\braket{l_am_al_bm_b}{LM_L}\\
\nonumber&\sum_{\sigma_c\sigma_d}\sum_{m_cm_d}\braket{\frac{1}{2}\sigma_c\frac{1}{2}\sigma_d}{SM_S}\braket{l_cm_cl_dm_d}{LM_L}\\
&\qquad\qquad\qquad\mathcal{L}^\dagger(A)_{(am_a\sigma_a)(bm_b\sigma_b);(cm_c\sigma_c)(dm_d\sigma_d)}~.
\end{align}
\paragraph{The $\mathcal{G}^\dagger$ map:} the spin and angular momentum coupled form of the $\mathcal{G}^\dagger$ map is derived by subsituting Eq.~(\ref{G_down}) in Eq.~(\ref{gen_dagger_scac}) and performing the necessary angular momentum algebra:
\begin{align}
\nonumber\mathcal{G}^\dagger(A)^{X}_{ab;cd} =& \frac{1}{\sqrt{(ab)(cd)}}\left(\frac{1}{N-1}\left[\delta_{ac}\delta_{l_bl_d}\bar{A}^{l_b}_{n_bn_d}
+ (-1)^{X}\delta_{ad}\delta_{l_bl_c}\bar{A}^{(l_b)}_{n_bn_c}\right.\right.\\
\nonumber&\left.\left.\qquad\qquad\qquad\qquad\qquad\qquad\qquad
+ (-1)^{X}\delta_{bc}\delta_{l_al_d}\bar{A}^{(l_a)}_{n_an_d}
+ \delta_{bd}\delta_{l_al_c}\bar{A}^{(l_a)}_{n_an_c}
\right]\right.\\
\nonumber&\left.-\sum_{X'}[X']^2
\left\{
\begin{matrix}
\frac{1}{2}&\frac{1}{2}&S\\
\frac{1}{2}&\frac{1}{2}&S'
\end{matrix}
\right\}
\left[
\left\{
\begin{matrix}
l_a&l_b&L\\
l_c&l_d&L'
\end{matrix}
\right\}
\left(A^{X'}_{ad;cb} + A^{X'}_{bc;da}
\right)\right.\right.\\
&\left.\left.\qquad\qquad\qquad\qquad
+(-1)^X
\left\{
\begin{matrix}
l_b&l_a&L\\
l_c&l_d&L'
\end{matrix}
\right\}
\left(A^{X'}_{ac;db} + A^{X'}_{bd;ca}\right)
\right]
\right)~.
\end{align}
The \emph{bar} function for a spin and angular momentum coupled particle-hole matrix is automaticly diagonal in the single-particle angular momentum $l$:
\begin{equation}
\bar{A}^{(l)}_{n_an_c} = \frac{1}{2[l]^2}\sum_X [X]^2 \sum_b A^X_{ab;cb}~.
\end{equation}
\paragraph{The $\mathcal{T}_1^\dagger$ map:} 
the $\mathcal{T}_1^\dagger$ map is a $\mathcal{Q}$-like map of the \emph{barred} three-particle matrix $\bar{A}$, so in spin and angular momentum coupled form it becomes:
\begin{align}
\nonumber\mathcal{T}_1^\dagger(A)^{X}_{ab;cd} =& \frac{1}{\sqrt{(ab)(cd)}}\left(\delta_{ac}\delta_{bd}+(-1)^X\delta_{ad}\delta_{bc}\right)\frac{2\mathrm{Tr}~{A}}{N(N-1)} + \bar{A}^X_{ab;cd}\\
\nonumber&\frac{1}{\sqrt{(ab)(cd)}}\frac{1}{N-1}\left[
\delta_{bd}\delta_{l_al_c}\bar{\bar{A}}^{(l_a)}_{n_an_c}
+(-1)^X\delta_{bc}\delta_{l_al_d}\bar{\bar{A}}^{(l_a)}_{n_an_d}\right.\\
&\left.\qquad\qquad\qquad\qquad\qquad
+(-1)^X\delta_{ad}\delta_{l_bl_c}\bar{\bar{A}}^{(l_b)}_{n_bn_c}
+\delta_{ac}\delta_{l_bl_d}\bar{\bar{A}}^{(l_b)}_{n_bn_d}
\right]~,
\end{align}
with the \emph{bar} function defined as:
\begin{align}
\bar{A}^X_{ab;de} =& \sum_Y \frac{[Y]^2}{[X]^2}\sum_c A^{Y(X;X)}_{abc;dec}~,
\end{align}
in which the variable $Y$ is shorthand for the three-particle quantum numbers $L^\pi S$. The \emph{double bar} is derived in the same way a 1DM is derived out of a 2DM:
\begin{equation}
\bar{\bar{A}}^{(l_a)}_{n_an_c} = \frac{1}{2[l]^2}\sum_X [X]^2 \sum_b \sqrt{(ab)(cb)}\bar{A}^X_{ab;cb}~.
\end{equation}
\paragraph{The $\mathcal{T}_2^\dagger$ map:}
the $\mathcal{T}_2^\dagger$ map in spin and angular momentum coupled form is again similar to the $\mathcal{G}^\dagger$ map:
\begin{align}
\nonumber\mathcal{T}_2^\dagger(A)^X_{ab;cd} =& \frac{1}{\sqrt{(ab)(cd)}}\frac{1}{N-1}\left[
\delta_{bd}\delta_{l_al_c}\tilde{\tilde{A}}^{(l_a)}_{n_an_c}
+(-1)^X\delta_{bc}\delta_{l_al_d}\tilde{\tilde{A}}^{(l_a)}_{n_an_d}\right.\\
\nonumber&\left.\qquad\qquad\qquad\qquad\qquad\
+(-1)^X\delta_{ad}\delta_{l_bl_c}\tilde{\tilde{A}}^{(l_b)}_{n_bn_c}
+\delta_{ac}\delta_{l_bl_d}\tilde{\tilde{A}}^{(l_b)}_{n_bn_d}
\right] + \bar{A}^X_{ab;cd}\\
\nonumber&-\frac{1}{\sqrt{(ab)(cd)}}\sum_{X'}[X']^2
\left\{
\begin{matrix}
\frac{1}{2}&\frac{1}{2}&S\\
\frac{1}{2}&\frac{1}{2}&S'
\end{matrix}
\right\}
\left[
\left\{
\begin{matrix}
l_a&l_b&L\\
l_c&l_d&L'
\end{matrix}
\right\}
\left(
\tilde{A}^{X'}_{da;bc} + \tilde{A}^{X'}_{cb;ad}
\right)
\right.
\\
& \left. \qquad\qquad\qquad\qquad\qquad\qquad+(-1)^X 
\left\{
\begin{matrix}
l_b&l_a&L\\
l_c&l_d&L'
\end{matrix}
\right\}
\left(
\tilde{A}^{X'}_{db;ac} + \tilde{A}^{X'}_{ca;bd}
\right)
\right]~.
\end{align}
The regular \emph{bar} function, which maps a two-particle-one-hole matrix on a two-particle matrix, is in spin and angular momentum form:
\begin{equation}
\bar{A}^{X}_{ab;cd} = \sum_{Y}\frac{[Y]^2}{[X]^2}A^{Y(X;X)}_{abc;dec}~.
\end{equation}
The \emph{tilde} function, which maps a two-particle-one-hole on a particle-hole matrix, is a bit more complicated:
\begin{align}
\nonumber\tilde{A}^X_{bc;ez} =& \sum_{X'}\sum_{X_{ab}X_{de}}[X']^2[X_{ab}][X_{de}]
\left\{
\begin{matrix}
S' &\frac{1}{2}&S\\
\frac{1}{2} &\frac{1}{2}&S_{ab}
\end{matrix}
\right\}
\left\{
\begin{matrix}
S' &\frac{1}{2}&S\\
\frac{1}{2} &\frac{1}{2}&S_{de}
\end{matrix}
\right\}\\
& \qquad\qquad \sum_a \sqrt{(ab)(ae)} 
\left\{
\begin{matrix}
L' & l_a & L\\
l_b & l_c & L_{ab}
\end{matrix}
\right\}
\left\{
\begin{matrix}
L' & l_a & L\\
l_e & l_z & L_{ab}
\end{matrix}
\right\}
A^{X'(X_{ab};X_{de})}_{abc;aez}~.
\end{align}
Finally, the spin and angular momentum coupled \emph{double tilde} function is given by:
\begin{equation}
\tilde{\tilde{A}}^{(l_c)}_{n_cn_z} =\frac{1}{2}\frac{1}{[l_c]^2}\sum_X [X]^2 \sum_{X_{ab}}(ab)A^{X(X_{ab};X_{ab})}_{abc;abz}~.
\end{equation}
\paragraph{The ${\mathcal{T}_2'}^\dagger$ map:}
the spin and angular momentum coupled version of the ${\mathcal{T}_2'}^\dagger$ map is again almost identical to the $\mathcal{T}_2^\dagger$, with some extra terms that need to be recoupled, leading to: 
\begin{align}
\nonumber{\mathcal{T}_2'}^\dagger(A)^X_{ab;cd} =& \mathcal{T}_2^\dagger(A_\mathcal{T})^X_{ab;cd} -\frac{\sqrt{2}}{[X]}
\left( \frac{1}{\sqrt{(cd)}}\left[ [l_d](A_\omega)^X_{abc;d} + (-1)^X[l_c](A_\omega)^X_{abd;c}\right]\right.\\
\nonumber&\left.\qquad\qquad\qquad\qquad\qquad+\frac{1}{\sqrt{(ab)}}\left[[l_b](A_\omega)^X_{cda;b} + (-1)^X[l_a](A_\omega)^X_{cdb;a}\right]\right)\\
\nonumber& + \frac{1}{\sqrt{(ab)(cd)}}\frac{1}{N-1}  \left( \delta_{bd}\delta_{l_cl_a}\left(A_\rho\right)^{(l_c)}_{n_cn_a} + (-1)^X\delta_{da}\delta_{l_cl_b}\left(A_\rho\right)^{(l_c)}_{n_cn_b}\right.\\
&\left.\qquad\qquad\qquad\qquad\qquad + (-1)^X \delta_{bc}\delta_{l_dl_a}\left(A_\rho\right)^{(l_d)}_{n_dn_a}  + \delta_{ac}\delta_{l_dl_b}\left(A_\rho\right)^{(l_d)}_{n_dn_b}  \right).
\end{align}
\subsection{Translational invariance}
For the one-dimensional Hubbard model, we first include translational invariance to the spin-coupled adjoint maps derived in Section~\ref{herm_adj_sc}. The inclusion of this symmetry actually simplifies the equations.
\paragraph{The $\mathcal{G}^\dagger$ map:}
the $\mathcal{G}^\dagger$ map becomes:
\begin{align}
\label{G_dagger_sc_ti}
{\mathcal{G}^\dagger}(A)^{SK}_{ab;cd} =& \frac{1}{\sqrt{(ab)(cd)}}\left(\frac{1}{N-1}\left[\delta_{ac}\delta_{bd}+(-1)^S\delta_{ad}\delta_{bc}\right]\left[\bar{A}_a+\bar{A}_b\right]\right.\\
\nonumber&\left.-\sum_{S'}[S']^2
\left\{
\begin{matrix}
\frac{1}{2}&\frac{1}{2}&S\\
\frac{1}{2}&\frac{1}{2}&S'
\end{matrix}
\right\}
\left[A^{S'K_{a\bar{d}}}_{a\bar{d};c\bar{b}} + (-1)^SA^{S'K_{b\bar{d}}}_{b\bar{d};c\bar{a}} + (-1)^SA^{S'K_{a\bar{c}}}_{a\bar{c};d\bar{b}} + A^{S'K_{b\bar{c}}}_{b\bar{c};d\bar{a}}\right]~
\right)~,
\end{align}
where the \emph{bar} function maps a particle-hole matrix on a single-particle matrix:
\begin{equation}
\bar{A}_k = \frac{1}{2}\sum_S[S]^2\sum_{k'}A^{SK_{kk'}}_{kk';kk'}~.
\end{equation}
\paragraph{The $\mathcal{T}_1^\dagger$ map:}
the $\mathcal{T}_1^\dagger$ map is a $\mathcal{Q}$-like map of the \emph{barred} three-particle matrix $\bar{A}$, and reduces in translationally invariant form to:
\begin{align}
\nonumber\mathcal{T}_1^\dagger(A)^{SK}_{ab;cd} =& \frac{1}{\sqrt{(ab)(cd)}}\left(\delta_{ac}\delta_{bd}+(-1)^S\delta_{ad}\delta_{bc}\right)\left[\frac{2\mathrm{Tr}~A}{N(N-1)}-\frac{1}{N-1}\left(\overline{\overline{A}}_{a}+\overline{\overline{A}}_b\right)\right]\\
&+\overline{A}^{SK}_{ab;cd}~.
\end{align}
The \emph{bar} function has exactly the same form as in Eq.~(\ref{T1_dagger_bar_sc}). The \emph{double bar} is different in that it is automatically diagonal in the single-particle momentum:
\begin{equation}
\overline{\overline{A}}_{k} = \frac{1}{2}\sum_{S}[S]^2\sum_{k'}(kk')\overline{A}^{SK_{kk'}}_{kk';kk'}~.
\end{equation}
\paragraph{The $\mathcal{T}_2^\dagger$ map:}
the $\mathcal{T}_2^\dagger$ is also simplified in translationally invariant form:
\begin{align}
\nonumber\mathcal{T}_2^\dagger(A)^{SK}_{ab;cd} =& \frac{1}{\sqrt{(ab)(cd)}}\frac{1}{N-1}\left(\delta_{ac}\delta_{bd}+(-1)^S\delta_{ad}\delta_{bc}\right)\left[\tilde{\tilde{A}}_{\bar{a}}+\tilde{\tilde{A}}_{\bar{b}}\right] + \overline{A}^{SK}_{ab;cd}\\
\nonumber& -\frac{1}{\sqrt{(ab)(cd)}}\sum_{S'}[S']^2
\left\{
\begin{matrix}
\frac{1}{2}&\frac{1}{2}&S\\
\frac{1}{2}&\frac{1}{2}&S'\\
\end{matrix}
\right\}\left[\tilde{A}^{S'K_{d\bar{a}}}_{d\bar{a};b\bar{c}}
+ (-1)^S\tilde{A}^{S'K_{d\bar{b}}}_{d\bar{b};a\bar{c}}\right.\\
&\left. \qquad\qquad\qquad\qquad\qquad\qquad\qquad\qquad+ (-1)^S\tilde{A}^{S'K_{c\bar{a}}}_{c\bar{a};b\bar{d}}
+ \tilde{A}^{S'K_{c\bar{b}}}_{c\bar{b};a\bar{d}}
\right]~.
\label{T2_dagger_sc_ti}
\end{align}
The \emph{bar} and \emph{tilde} function are again identical to those defined in Eq.~(\ref{T2_dagger_bar_sc}) and (\ref{T2_dagger_tilde_sc}), the \emph{double tilde} becomes diagonal in the single-particle momentum $k$:
\begin{equation}
\tilde{\tilde{A}}_k = \frac{1}{2}\sum_S[S]^2\sum_{S_{ab}}\sum_{ab}A^{SK(S_{ab};S_{ab})}_{abk;abk}~.
\end{equation}
\paragraph{The ${\mathcal{T}_2'}^\dagger$ map:}
the extra terms added to the regular $\mathcal{T}_2^\dagger$ map to form the $\mathcal{T}_2'^\dagger$ map become, in translationally invariant form:
\begin{align}
\nonumber{\mathcal{T}_2'}^\dagger(A)^{SK}_{ab;cd} =& \mathcal{T}_2^\dagger(A_\mathcal{T})^{SK}_{ab;cd} -\frac{\sqrt{2}}{[S]}
\left( \frac{1}{\sqrt{(cd)}}\left[ (A_\omega)^S_{abc;d} + (-1)^S(A_\omega)^S_{abd;c}\right]\right.\\
\nonumber&\left.\qquad\qquad\qquad\qquad\qquad\qquad+\frac{1}{\sqrt{(ab)}}\left[(A_\omega)^S_{cda;b} + (-1)^S(A_\omega)^S_{cdb;a}\right]\right)\\
& + \frac{1}{\sqrt{(ab)(cd)}} \frac{1}{N-1} \left( \delta_{ac}\delta_{bd}+(-1)^S\delta_{ad}\delta_{bc}\right)\left[(A_\rho)_a + (A_\rho)_b\right]~.
\label{T2P_dagger_sc_ti}
\end{align}
\subsection{Translational invariance with parity}
In this Section the parity-symmetric form of the Hermitian adjoint maps is derived. As all the adjoint maps are defined on two-particle space, the parity-symmetric form can be found in the same way as in Eq.~(\ref{2DM_ti_par_K}) for $0<\tilde{K}<\pi$ and Eq.~(\ref{2DM_ti_par_pi}) for $\tilde{K} = 0$ or $\pi$.
\paragraph{The $\mathcal{G}^\dagger$ map:}
the parity-symmetric version of the $\mathcal{G}^\dagger$ map has the same form as in Eq.~(\ref{G_dagger_sc_ti}), with some minor changes. The whole expression has to be multiplied by the appropriate norms $^\Gamma N^K_{ab}~^\Gamma N^K_{cd}$, the single-particle terms change as:
\begin{equation}
\overline{A}_k\rightarrow\overline{A}_{\tilde{k}}~,
\end{equation}
with
\begin{equation}
\overline{A}_{\tilde{k}}= \sum_S\frac{[S]^2}{2}\sum_{k'}\frac{1}{4{~^{\mathcal{G}}N^{kk'}_K}^2}\sum_\pi A^{S\tilde{K}^\pi}_{kk';kk'}~.
\end{equation}
As was the case for the 1DM, $\tilde{k}$ lies in the range $0 \leq \tilde{k} \leq \pi$, whereas the momentum that is summed over, $k'$, runs over all possible momenta $0 \leq k' < 2\pi$. There are two expressions for the four remaining particle-hole terms, for $0 < \tilde{K} < \pi$ we have to replace:
\begin{equation}
A^{S'K_{a\bar{d}}}_{a\bar{d};c\bar{b}}\rightarrow\frac{\sum_{\pi'}A^{S'K_{a\bar{d}}^{\pi'}}_{a\bar{d};c\bar{b}}}{2~^{\mathcal{G}}N^{K_{a\bar{d}}}_{a\bar{d}}~^{\mathcal{G}}N^{K_{a\bar{d}}}_{c\bar{b}}}~,
\end{equation}
while for $\tilde{K} = 0$ or $\pi$ the correct expression is:
\begin{align}
A^{S'K_{a\bar{d}}}_{a\bar{d};c\bar{b}}\rightarrow& 
\frac{\sum_{\pi'}A^{S'K_{a\bar{d}}^{\pi'}}_{a\bar{d};c\bar{b}}}{2~^{\mathcal{G}}N^{K_{a\bar{d}}}_{a\bar{d}}~^{\mathcal{G}}N^{K_{a\bar{d}}}_{c\bar{b}}}
+\pi\frac{\sum_{\pi'}A^{S'K_{a\bar{d}}^{\pi'}}_{a\bar{d};c\bar{b}}}{2~^{\mathcal{G}}N^{K_{a\bar{d}}}_{a\bar{d}}~^{\mathcal{G}}N^{K_{a\bar{d}}}_{c\bar{b}}}
~.
\end{align}
\paragraph{The $\mathcal{T}_1^\dagger$ map:}
the $\mathcal{T}_1^\dagger$ map is a $\mathcal{Q}$-like map of the \emph{barred} three-particle matrix $A$, and as such it changes in the same way as Eq.~(\ref{Q_2DM_sc_ti}) when including parity. The only difficulty lies in deriving a parity-symmetric two-particle matrix $\overline{A}$ from a parity-symmetric three-particle matrix $A$. There are two different expressions, when $0<\tilde{K}<\pi$:
\begin{equation}
\overline{A}^{S\tilde{K}^\pi}_{ab;de} = ~^\Gamma N^{K}_{ab}~^\Gamma N^{K}_{de}\sum_{S'}\frac{[S']^2}{[S]^2}\sum_c \frac{\sum_{\pi'}A^{S'(S;S)K'^{\pi'}}_{abc;dec}}{~^{\mathcal{T}_1}N^{SK'}_{ab(S)c}~^{\mathcal{T}_1}N^{SK'}_{de(S)c}}~,
\end{equation}
and when $\tilde{K} = 0$ or $\pi$:
\begin{align}
\overline{A}^{S\tilde{K}^\pi}_{ab;de} =& ~^\Gamma N^{K}_{ab}~^\Gamma N^{K}_{de}\sum_{S'}\frac{[S']^2}{[S]^2}\sum_c
\left(
\frac{\sum_{\pi'}A^{S'(S;S)K'^{\pi'}}_{abc;dec}}{~^{\mathcal{T}_1}N^{SK'}_{ab(S)c}~^{\mathcal{T}_1}N^{SK'}_{de(S)c}}
+\pi\frac{\sum_{\pi'}A^{S'(S;S)K'^{\pi'}}_{abc;\bar{d}\bar{e}c}}{~^{\mathcal{T}_1}N^{SK'}_{ab(S)c}~^{\mathcal{T}_1}N^{SK'}_{\bar{d}\bar{e}(S)c}}
\right)~.
\end{align}
\paragraph{The $\mathcal{T}_2^\dagger$ map:}
the parity-symmetric expression of the $\mathcal{T}^\dagger_2$ map is also similar to the translationally invariant form defined in Eq.~(\ref{T2_dagger_sc_ti}). The difference is that we have to multiply the whole expression by the appropriate norm $^\Gamma N^{K}_{ab}~^\Gamma N^K_{cd}$, replace the single-particle term $\tilde{\tilde{A}}_k$ by the parity-symmetric $\tilde{\tilde{A}}_{\tilde{k}}$, the two-particle term $\overline{A}^{SK}_{ab;cd}$ by $\overline{A}^{S\tilde{K}^\pi}_{ab;cd}$, and the four particle-hole terms by:
\begin{equation}
\tilde{A}^{S'K_{d\bar{a}}}_{d\bar{a};b\bar{c}}\rightarrow\frac{\sum_{\pi'}\tilde{A}^{S'\tilde{K}_{d\bar{a}}^{\pi'}}_{d\bar{a};b\bar{c}}}{2~^{\mathcal{G}}N^{K_{d\bar{a}}}_{d\bar{a}}~^{\mathcal{G}}N^{K_{d\bar{a}}}_{b\bar{c}}}~,
\end{equation}
when $0<\tilde{K}<\pi$ and by
\begin{equation}
\tilde{A}^{S'K_{d\bar{a}}}_{d\bar{a};b\bar{c}}\rightarrow
\frac{\sum_{\pi'}\tilde{A}^{S'\tilde{K}_{d\bar{a}}^{\pi'}}_{d\bar{a};b\bar{c}}}{2~^{\mathcal{G}}N^{K_{d\bar{a}}}_{d\bar{a}}~^{\mathcal{G}}N^{K_{d\bar{a}}}_{b\bar{c}}}
+\pi \frac{\sum_{\pi'}\tilde{A}^{S'\tilde{K}_{\bar{d}\bar{a}}^{\pi'}}_{\bar{d}\bar{a};b{c}}}{2~^{\mathcal{G}}N^{K_{\bar{d}\bar{a}}}_{\bar{d}\bar{a}}~^{\mathcal{G}}N^{K_{\bar{d}\bar{a}}}_{b{c}}}~,
\end{equation}
when $\tilde{K} = 0$ or $\pi$. The parity-symmetric two-particle matrix $\bar{A}$ can be derived from a parity-symmetric two-particle-one-hole matrix $A$ by performing:
\begin{equation}
\bar{A}^{S\tilde{K}^\pi}_{ab;de} =~^\Gamma N^K_{ab}~^\Gamma N^K_{de} \sum_{S'}\frac{[S']^2}{[S]^2}\sum_c \frac{\sum_{\pi'}A^{S'(S;S)K'^{\pi'}}_{abc;dec}}{2~^{\mathcal{T}_2}N^{S'K'}_{ab(S)c}~^{\mathcal{T}_2}N^{S'K'}_{de(S)c} }~,
\end{equation}
when $0<\tilde{K}<\pi$ or:
\begin{align}
\nonumber\bar{A}^{S\tilde{K}^\pi}_{ab;de} =& 
~^\Gamma N^K_{ab}~^\Gamma N^K_{de} \sum_{S'}\frac{[S']^2}{[S]^2}\sum_c\left(
\frac{\sum_{\pi'}A^{S'(S;S)K'^{\pi'}}_{abc;dec}}{2~^{\mathcal{T}_2}N^{S'K'}_{ab(S)c}~^{\mathcal{T}_2}N^{S'K'}_{de(S)c} }\right.\\
&\left. \qquad\qquad\qquad\qquad\qquad\qquad\qquad\qquad +\pi\frac{\sum_{\pi'}A^{S'(S;S)K'^{\pi'}}_{\bar{a}\bar{b}c;dec}}{2~^{\mathcal{T}_2}N^{S'K'}_{\bar{a}\bar{b}(S)c}~^{\mathcal{T}_2}N^{S'K'}_{de(S)c} }
\right)~,
\end{align}
when $\tilde{K} = \pi$ or 0. The parity-symmetric particle-hole matrix $\tilde{A}$ is derived by:
\begin{align}
\nonumber
\tilde{A}^{S\tilde{K}^\pi}_{bc;ez}=&~^\mathcal{G}N^{K}_{bc}~^\mathcal{G}N^{K}_{ez}\sum_{S'}\sum_{S_{ab}S_{de}}[S_{ab}][S_{de}]
\left\{
\begin{matrix}
S'&\frac{1}{2}&S\\
\frac{1}{2}&\frac{1}{2}&S_{ab}
\end{matrix}
\right\}
\left\{
\begin{matrix}
S'&\frac{1}{2}&S\\
\frac{1}{2}&\frac{1}{2}&S_{de}
\end{matrix}
\right\}\\
&\qquad\qquad\qquad\qquad\sum_a \sqrt{(ab)(ae)} \frac{\sum_{\pi'} A^{S'(S_{ab};S_{de})K'^{\pi'}}_{abc;aez}}{ 2~^{\mathcal{T}_2}N^{S'K'}_{abc} ~^{\mathcal{T}_2}N^{S'K'}_{aez} }~,
\end{align}
when $0<\tilde{K}<\pi$ and by
\begin{align}
\nonumber
\tilde{A}^{S\tilde{K}^\pi}_{bc;ez}=&~^\mathcal{G}N^{K}_{bc}~^\mathcal{G}N^{K}_{ez}\sum_{S'}\sum_{S_{ab}S_{de}}[S_{ab}][S_{de}]
\left\{
\begin{matrix}
S'&\frac{1}{2}&S\\
\frac{1}{2}&\frac{1}{2}&S_{ab}
\end{matrix}
\right\}
\left\{
\begin{matrix}
S'&\frac{1}{2}&S\\
\frac{1}{2}&\frac{1}{2}&S_{de}
\end{matrix}
\right\}\\
&\sum_a\left(
\sqrt{(ab)(ae)} \frac{\sum_{\pi'} A^{S'(S_{ab};S_{de})K'^{\pi'}}_{abc;aez}}{ 2~^{\mathcal{T}_2}N^{S'K'}_{abc} ~^{\mathcal{T}_2}N^{S'K'}_{aez} }
+\pi\sqrt{(a\bar{b})(ae)} \frac{\sum_{\pi'} A^{S'(S_{ab};S_{de})K'^{\pi'}}_{a\bar{b}\bar{c};aez}}{ 2~^{\mathcal{T}_2}N^{S'K'}_{a\bar{b}\bar{c}} ~^{\mathcal{T}_2}N^{S'K'}_{aez} }
\right)~.
\end{align}
when $\tilde{K} = 0$ or $\pi$. Finally, the parity-symmetric form of the \emph{double tilde} function which maps a two-particle-one-hole matrix on a single-particle matrix is:
\begin{equation}
\tilde{\tilde{A}}_{\tilde{k}} = \frac{1}{2}\sum_S[S]^2\sum_{S_{ab}}\sum_{ab}\frac{(ab)}{4~^{\mathcal{T}_2}{N^{SK}_{abk}}^2}\sum_{\pi'}A^{S(S_{ab};S_{ab})K^{\pi'}}_{abk;abk}~.
\end{equation}
\paragraph{The ${\mathcal{T}_2'}^\dagger$ map:}
the parity-symmetric form of the ${\mathcal{T}_2'}^\dagger$ map is almost identical to the translationally invariant expression in Eq.~(\ref{T2P_dagger_sc_ti}), the only thing that changes is the $A_\omega$ term, which has to be replaced by:
\begin{equation}
(A_\omega)^S_{abc;d} \rightarrow \frac{\sum_{\pi'}(A_\omega)^{S^{\pi'}}_{abc;d}}{^{\mathcal{T}_2}N^{\frac{1}{2}d}_{abc}}
\end{equation}
for $0<\tilde{K}<\pi$, and 
\begin{equation}
(A_\omega)^S_{abc;d} \rightarrow
\frac{\sum_{\pi'}(A_\omega)^{S^{\pi'}}_{abc;d}}{^{\mathcal{T}_2}N^{\frac{1}{2}d}_{abc}}
+\pi\frac{\sum_{\pi'}(A_\omega)^{S^{\pi'}}_{\bar{a}\bar{b}c;d}}{^{\mathcal{T}_2}N^{\frac{1}{2}d}_{\bar{a}\bar{b}c}}
\end{equation}
for $\tilde{K} = 0$ or $\pi$. Of course the whole expression has to be multiplied by the appropriate norms $^\Gamma N^K_{ab}~^\Gamma N^K_{cd}$.
\section{\label{herm_adj_2.5DM}v2.5DM formalism}
For the v2.5DM formalsm, the Hermitian adjoint maps are defined through the relation:
\begin{equation}
\mathrm{Tr}~\mathcal{L}(W) A = \mathrm{Tr}~\mathcal{L}^\dagger(A) W~,
\label{ha_2.5DM}
\end{equation}
in which $A$ is a block matrix of the same dimension as $\mathcal{L}(W)$, and the traces sum over the appropriate indices. There is an additional complication compared to the Hermitian adjoint maps of the v2DM formalism, {\it i.e.} the matrix $A$ does not necessarily have the right consistency symmetry (see Section~\ref{consistent}). For this reason the $\mathcal{Q}_2$-map, which is the equivalent of the $\mathcal{Q}$-map for the 2DM, is \emph{not} identical to its adjoint $\mathcal{Q}_2^\dagger$. The only maps which are Hermitian are the $\mathcal{I}_1$ and $\mathcal{I}_2$ maps. 
\paragraph{The $\mathcal{Q}_2^\dagger$-map:}
using Eq.~(\ref{ha_2.5DM}) one can derive the form of the $\mathcal{Q}_2^\dagger$ map:
\begin{align}
\nonumber\mathcal{Q}^\dagger_2&(A)^l|^{S(S_{ab};S_{cd})}_{ab;cd} = \left[\frac{2\mathrm{Tr}~A\mathbb{1}_{2.5}}{N(N-1)(N-2)}\right]\frac{\delta_{S_{ab}S_{cd}}}{\sqrt{(ab)(cd)}}(\delta_{ac}\delta_{bd}+(-1)^{S_{ab}}\delta_{ad}\delta_{bc}) - A^l|^{S(S_{ab};S_{cd})}_{ab;cd}\\
\nonumber& -\frac{\delta_{S_{ab}S_{cd}}}{\sqrt{(ab)(cd)}}\left[\delta_{bd}\bar{\bar{A}}_{ac} + (-1)^{S_{ab}}\delta_{ad}\bar{\bar{A}}_{bc} + (-1)^{S_{cd}}\delta_{bc}\bar{\bar{A}}_{ad}+ \delta_{ac}\bar{\bar{A}}_{bd}+ \delta_{ac}\breve{A}_{bd}\right.\\
\nonumber&\left.\qquad\qquad\qquad  + (-1)^{S_{cd}}\delta_{ad}\breve{A}_{bc} + (-1)^{S_{ab}}\delta_{bc}\breve{A}_{ad} + \delta_{bd}\breve{A}_{ac}+ \delta_{ac}(\bar{A}^b_{bd} + \bar{A}^d_{bd})\right.\\
\nonumber&\left.\qquad\qquad\qquad + (-1)^{S_{cd}}\delta_{ad}(\bar{A}^b_{bc}+\bar{A}^c_{bc})+ (-1)^{S_{ab}}\delta_{bc}(\bar{A}^a_{ad}+\bar{A}^d_{ad}) + \delta_{bd}(\bar{A}^a_{ac}+\bar{A}^c_{ac})\right.\\
\nonumber&\left.\qquad\qquad\qquad   + \frac{1}{2}\left(\delta_{ac}\delta_{bd}+(-1)^{S_{ab}}\delta_{ad}\delta_{bc}\right)\left(\bar{\bar{A}}^a + \bar{\bar{A}}^b\right) \right]\\
&\nonumber+\delta_{S_{ab}S_{cd}}\left(\bar{A}^{S_{ab}}_{ab;cd}+\hat{A}^{S_{cd}}_{ab;cd} + (-1)^{S_{cd}}\hat{A}^{S_{cd}}_{ba;cd} + (-1)^{S_{ab}}\hat{A}^{S_{ab}}_{dc;ab} + \hat{A}^{S_{cd}}_{cd;ab}\right)\\
&+\frac{\delta_{S_{ab}S_{cd}}}{\sqrt{(ab)(cd)}}\left[\delta_{ac}\tilde{A}^a|^{S_{ab}}_{bd} + (-1)^{S_{ab}}\delta_{bc}\tilde{A}^b|^{S_{ab}}_{ad} + (-1)^{S_{ab}}\delta_{ad}\tilde{A}^a|^{S_{ab}}_{bc} + \delta_{bd}\tilde{A}^b|^{S_{ab}}_{ac}\right]~,
\label{Q_2_dagger}
\end{align}
where we have introduced a lot of different partial traces of the matrix $A$. The first term of the right in Eq.~(\ref{Q_2_dagger}) shows the inproduct of $A$ with the unity matrix $\mathbb{1}_{2.5}$, which is not the regular trace of $A$, but rather defined by:
\begin{align}
\nonumber\mathrm{Tr}~A\mathbb{1} =& \sum_l\left( \frac{1}{2}\sum_S [S]^2 \sum_{S_{ab}S_{cd}}\sum_{ab}(1+\delta_{ab})A^l|^{S(S_{ab}S_{cd})}_{ab;ab}\right.\\
&\left.\qquad\qquad+2\sum_{S_{ab}S_{cd}}[S_{ab}][S_{cd}]\left\{\begin{matrix}S&\frac{1}{2}&S_{ab}\\\frac{1}{2}&\frac{1}{2}&S_{cd}\end{matrix}\right\}\sum_b (1+\delta_{lb})A^l|^{S(S_{ab}S_{cd})}_{lb;lb}\right)~.
\end{align}
There are four types of 'single-particle-like' matrices, given by:
\begin{align}
\bar{\bar{A}}^{l} =& \frac{1}{2n_{\text{sp}}} \sum_{S}[S]^2\sum_{S_{ab}}\sum_{ab} (ab)A^l|^{S(S_{ab};S_{ab})}_{ab;ab}~,\\
\bar{\bar{A}}_{ac} =& \frac{1}{2n_{\text{sp}}}\sum_l \sum_{S}[S]^2\sum_{S_{ab}}\sum_b \sqrt{(ab)(cb)}A^l|^{S(S_{ab};S_{ab})}_{ab;cb}~,\\
\breve{A}_{bd}=&\frac{1}{2n_{\text{sp}}}\sum_S [S]^2\sum_{S_{ab}S_{cd}}[S_{ab}][S_{cd}]\left\{\begin{matrix}S&\frac{1}{2}&S_{ab}\\ \frac{1}{2} & \frac{1}{2} & S_{cd}\end{matrix}\right\}\sum_l \sqrt{(bl)(dl)}A^l|^{S(S_{ab};S_{cd})}_{lb;ld}~,\\
\bar{A}^l_{ac}=&\frac{1}{2n_{\text{sp}}}\sum_S [S]^2\sum_{S_{ab}S_{cd}}[S_{ab}][S_{cd}]\left\{\begin{matrix}S&\frac{1}{2}&S_{ab}\\ \frac{1}{2} & \frac{1}{2} & S_{cd}\end{matrix}\right\}\sum_b \sqrt{(ab)(cb)}A^l|^{S(S_{ab}S_{cd})}_{ab;cb}~,
\end{align}
where $n_{\text{sp}} = (N-1)(N-2)$, and three types of `two-particle-like' contractions:
\begin{align}
\label{bar_Q2}\bar{A}^{S_{ab}}_{ab;cd}=&\frac{1}{n_{\text{tp}}}\sum_{S}\frac{[S]^2}{[S_{ab}]^2}\sum_l A^l|^{S(S_{ab};S_{cd})}_{ab;cd}~,\\
\hat{A}^{S_{cd}}_{ab;cd} =& \frac{1}{n_{\text{tp}}}\sum_S [S]^2\sum_{S_{ab}}\frac{[S_{ab}]}{[S_{cd}]}\left\{\begin{matrix}S&\frac{1}{2}&S_{ab}\\ \frac{1}{2} & \frac{1}{2} & S_{cd}\end{matrix}\right\}A^{a}|^{S(S_{ab};S_{cd})}_{ab;cd}~,\\
\tilde{A}^l|^{S'}_{bd}=&\frac{1}{n_{\text{tp}}}\sum_S [S]^2 \sum_{S_{ab}S_{cd}}[S_{ab}][S_{cd}]\left\{\begin{matrix}S&\frac{1}{2}&S'\\\frac{1}{2}&\frac{1}{2}&S_{ab}\end{matrix}\right\}\left\{\begin{matrix}S&\frac{1}{2}&S'\\\frac{1}{2}&\frac{1}{2}&S_{cd}\end{matrix}\right\}\sum_a \sqrt{(ab)(ad)}A^l|^{S(S_{ab};S_{cd})}_{ab;ad}~,
\end{align}
with $n_{\text{tp}} = N-2$.
\paragraph{The $\mathcal{Q}_1^\dagger$-map:}
this maps a block-diagonal two-hole-one-particle matrix $A$, on a block-diagonal three-particle matrix, its explicit expression is given by:
\begin{align}
\nonumber\mathcal{Q}_1^\dagger&(A)^{S(S_{ab};S_{cd})}_{ab;cd} = \frac{\delta_{S_{ab}S_{cd}}}{\sqrt{(ab)(cd)}}\left(\delta_{ac}\delta_{bd}+(-1)^{S_{ab}}\delta_{ad}\delta_{bc}\right)\left[\frac{2\bar{\breve{A}}}{N(N-1)(N-2)}\right]\\
\nonumber& -\frac{\delta_{S_{ab}S_{cd}}}{\sqrt{(ab)(cd)}}\left[\delta_{ac}\breve{A}_{bd}+(-1)^{S_{cd}}\delta_{ad}\breve{A}_{bc} + (-1)^{S_{ab}}\delta_{bc}\breve{A}_{ad} + \delta_{bd}\breve{A}_{ac}+\delta_{bd}\left(\bar{A}^c_{ac} + \bar{A}^a_{ac}\right)\right.\\
\nonumber&\left.\qquad\qquad\qquad + (-1)^{S_{ab}}\delta_{ad}\left(\bar{A}^c_{bc} + \bar{A}^b_{bc}\right) + (-1)^{S_{cd}}\delta_{bc}\left(\bar{A}^d_{ad} + \bar{A}^a_{ad}\right) +\delta_{ac}\left(\bar{A}^d_{bd} + \bar{A}^b_{bd}\right)\right.\\
\nonumber&\left. \qquad\qquad\qquad- \frac{1}{2}\left(\delta_{ac}\delta_{bd}+(-1)^{S_{ab}} \delta_{ad}\delta_{bd}\right)\left(\bar{\bar{A}}^a + \bar{\bar{A}}^b\right)
\right]\\
\nonumber&-\frac{\delta_{S_{ab}S_{cd}}}{\sqrt{(ab)(cd)}}\left[\delta_{ac}\bar{A}^a|^{S_{ab}}_{bd} + (-1)^{S_{ab}}\delta_{bc}\bar{A}^b|^{S_{ab}}_{ad} + (-1)^{S_{ab}}\delta_{ad}\bar{A}^a|^{S_{ab}}_{bc} + \delta_{bd}\bar{A}^b|^{S_{ab}}_{ac}\right]\\
\nonumber&\qquad\qquad\qquad+\delta_{S_{ab}S_{cd}}\left(\hat{A}^{S_{cd}}_{ab;cd} + (-1)^{S_{cd}}\hat{A}^{S_{cd}}_{ba;cd} + (-1)^{S_{ab}}\hat{A}^{S_{ab}}_{dc;ab} + \hat{A}^{S_{ab}}_{cd;ab}\right)\\
&\qquad\qquad\qquad- \sum_{S'}[S']^2 
\left\{
\begin{matrix}
S_{ab}&S&\frac{1}{2}\\
S_{cd}&S'&\frac{1}{2}
\end{matrix}
\right\}
A^l|^{S'(S_{ab}S_{cd})}_{ab;cd}~,
\end{align}
where many zero-, single-, and two-particle contractions are introduced. There is only one zero-particle contraction, given by: 
\begin{equation}
\bar{\breve{A}} = \sum_{S_{ab}S_{cd}}[S_{ab}][S_{cd}]\sum_{lb}(lb)A^l|^{\frac{1}{2}(S_{ab};S_{cd})}_{lb;lb}~.
\end{equation}
There are three types of single-particle contractions of $A$, similar to those defined for the $\mathcal{Q}_2$-map, but not the same:
\begin{align}
\breve{A}_{bd} =& \frac{1}{2n_{\text{sp}}}\sum_{S_{ab}S_{cd}}[S_{ab}][S_{cd}]\sum_{l}\sqrt{(lb)(ld)}A^l|^{\frac{1}{2}(S_{ab};S_{cd})}_{lb;ld}~,\\
\bar{\bar{A}}^l=& \frac{1}{2n_{\text{sp}}}\sum_S[S]^2\sum_{S_{ab}}\sum_{ab}(ab)A^l|^{S(S_{ab};S_{cd})}_{ab;ab}~,\\
\bar{A}^l_{ac} =& \frac{1}{2n_{\text{sp}}}\sum_{S_{ab}S_{cd}}\sum_b \sqrt{(ab)(cb)}A^l_{ab;cb}~,
\end{align}
and two two-particle contractions:
\begin{align}
\hat{A}^{S_{cd}}_{ab;cd} =& \frac{1}{n_{\text{tp}}}\sum_{S_{ab}}\frac{[S_{ab}]}{[S_{cd}]}A^a|^{\frac{1}{2}(S_{ab};S_{cd})}_{ab;cd}~,\\
\bar{A}^l|^{S'}_{bd}=& \frac{1}{n_\text{tp}}\sum_S[S]^2 \sum_{S_{ab}S_{cd}}[S_{ab}][S_{cd}]\left\{\begin{matrix}S& S_{ab} & \frac{1}{2}\\ S_{cd} & \frac{1}{2} & \frac{1}{2}\\ \frac{1}{2}&\frac{1}{2}&S' \end{matrix}\right\}\sum_a \sqrt{(ab)(ad)}A^l|^{S(S_{ab};S_{cd})}_{ab;ad}~.
\end{align}
\paragraph{The $\mathcal{G}_1^\dagger$-map:}
this maps a block-diagonal one-hole-two-particle matrix $A$, on a block-diagonal three-particle matrix, using Eq.~(\ref{ha_2.5DM}) with Eq.~(\ref{G1_2.5DM}) one can derive the explicit form:
\begin{align}
\nonumber\mathcal{G}_1^\dagger&(A)^{S(S_{ab};S_{cd})}_{ab;cd} = \frac{\delta_{S_{ab}S_{cd}}}{\sqrt{(ab)(cd)}}
\left[
\delta_{ac}\breve{A}_{bd}
+(-1)^{S_{ab}}\delta_{ad}\breve{A}_{bc}
+(-1)^{S_{ab}}\delta_{bc}\breve{A}_{ad}
+\delta_{bd}\breve{A}_{ac}\right.\\
&\nonumber\left.\qquad\qquad\qquad+\delta_{bd}(\tilde{A}^a_{ac} + \tilde{A}^c_{ca})
+(-1)^{S_{ab}}\delta_{ad}(\tilde{A}^b_{bc} + \tilde{A}^c_{cb})+ (-1)^{S_{ab}}\delta_{bc}(\tilde{A}^a_{ad} + \tilde{A}^d_{da})
\right.\\
\nonumber&\left.\qquad\qquad\qquad + \delta_{ac}(\tilde{A}^b_{bd} + \tilde{A}^d_{db}) + 
\left(\delta_{ac}\delta_{bd}+(-1)^{S_{ab}}\delta_{ad}\delta_{bc}\right)\left(\tilde{\tilde{A}}^a + \tilde{\tilde{A}}^b\right)
\right]\\
\nonumber&+\frac{\delta_{S_{ab}S_{cd}}}{\sqrt{(ab)(cd)}}\left[
\delta_{ac}\bar{A}^a|^{S_{ab}}_{bd}
+ (-1)^{S_{ab}}\delta_{ad}\bar{A}^a|^{S_{ab}}_{bc}
+ (-1)^{S_{ab}}\delta_{bc}\bar{A}^b|^{S_{ab}}_{ad}
+ \delta_{bd}\bar{A}^b|^{S_{ab}}_{ac}
\right.\\
\nonumber&\left.\qquad\qquad\qquad- \delta_{bd}\left(\tilde{A}^b|^{S_{ab}}_{ac}+\tilde{A}^b|^{S_{ab}}_{ca}\right)
- (-1)^{S_{ab}}\delta_{ad}\left(\tilde{A}^a|^{S_{ab}}_{bc}+\tilde{A}^a|^{S_{ab}}_{cb} \right)\right.\\
\nonumber&\left.\qquad\qquad\qquad- (-1)^{S_{ab}}\delta_{bc}\left(\tilde{A}^b|^{S_{ab}}_{ad}+\tilde{A}^b|^{S_{ab}}_{da}\right)
- \delta_{ac}\left(\tilde{A}^a|^{S_{ab}}_{bd}+\tilde{A}^a|^{S_{ab}}_{db}\right)\right.\\
\nonumber&\left.\qquad\qquad\qquad\qquad-\left(\hat{A}^{S_{ab}}_{ad;cb} +  \hat{A}^{S_{ab}}_{cb;ad}
+ \hat{A}^{S_{ab}}_{bc;da} +  \hat{A}^{S_{ab}}_{da;bc}\right)\right.\\
\nonumber&\left.\qquad\qquad\qquad\qquad\qquad\qquad (-1)^{S_{ab}}\left(
\hat{A}^{S_{ab}}_{ac;db} +  \hat{A}^{S_{ab}}_{db;ac}
+ \hat{A}^{S_{ab}}_{bd;ca} +  \hat{A}^{S_{ab}}_{ca;bd}
\right)
\right]\\
\nonumber&+\frac{1}{\sqrt{{(ab)(cd)}}}\sum_{S'}\sum_{S_{bl}S_{dl}}[S']^2[S_{ab}][S_{cd}][S_{bl}][S_{dl}]\left\{\begin{matrix}S&\frac{1}{2}&S_{dl}\\\frac{1}{2}&\frac{1}{2}&S_{ab}\end{matrix}\right\}
\left\{\begin{matrix}S&\frac{1}{2}&S_{bl}\\\frac{1}{2}&\frac{1}{2}&S_{cd}\end{matrix}\right\}
\left\{\begin{matrix}S&S_{bl}&\frac{1}{2}\\S'&S_{dl}&\frac{1}{2} \end{matrix}\right\}
\\
&\times\left(A^l|^{S'(S_{bl};S_{dl})}_{ad;cb}
+(-1)^{S_{ab}}A^l|^{S'(S_{bl};S_{dl})}_{bd;ca}
+ (-1)^{S_{cd}}A^l|^{S'(S_{bl};S_{dl})}_{ac;db}
+(-1)^{S_{ab}+S_{cd}}A^l|^{S'(S_{bl};S_{dl})}_{bc;da}\right)~.
\end{align}
There are three single-particle type contractions,
\begin{align}
\label{breve}\breve{A}_{bd} =& \frac{1}{2n_{\text{sp}}}\sum_l \sum_{S_{bl}S_{dl}}[S_{bl}][S_{dl}]A^l|^{\frac{1}{2}(S_{bl};S_{dl})}_{bl;dl}~,\\
\tilde{A}^l_{ac}=&\frac{1}{2n_{\text{sp}}}\sum_{S_{bl}S_{dl}}[S_{bl}][S_{dl}](-1)^{S_{bl}}\sum_b A^l|^{\frac{1}{2}(S_{bl};S_{dl})}_{bb;ac}~,\\
\tilde{\tilde{A}}^l =& \frac{1}{2n_{\text{sp}}}\sum_{S_{bl}S_{dl}}[S_{bl}][S_{dl}](-1)^{S_{bl}+S_{dl}}\sum_{ac}A^l|^{\frac{1}{2}(S_{bl};S_{dl})}_{aa;cc}~,
\end{align}
and three types of two-particle contractions:
\begin{align}
\bar{A}^l|^{S_{bl}}_{bd} =& \frac{1}{n_{\text{tp}}}\sum_{S}\frac{[S]^2}{[S_{bl}]^2}\sum_a A^l|^{S(S_{bl};S_{bl})}_{ab;ad}~,\\
\tilde{A}^l|^{S_{dl}}_{cd} =& \frac{1}{n_{\text{tp}}}\sum_{S_{bl}}\frac{[S_{bl}]}{[S_{dl}]}(-1)^{S_{bl}+S_{dl}}\sum_a A^l|^{\frac{1}{2}(S_{bl};S_{dl})}_{aa;cd}~,\\
\hat{A}^{S_{cd}}_{ab;cd} =& \frac{1}{n_{\text{tp}}}\sum_{S_{ab}}\frac{[S_{ab}]}{[S_{cd}]}A^a|^{\frac{1}{2}(S_{ab};S_{cd})}_{ab;cd}~.
\end{align}
\paragraph{The $\mathcal{G}_2^\dagger$-map:} this maps a one-particle-two-hole matrix $A$ on a three-particle matrix. We derive its explicit expression by substituting Eq.~(\ref{G2_sc}) into Eq.~(\ref{ha_2.5DM}):
\begin{align}
\nonumber\mathcal{G}_2^\dagger&(A)^{S(S_{ab};S_{cd})}_{ab;cd} = \frac{\delta_{S_{ab}S_{cd}}}{\sqrt{(ab)(cd)}}
\left[
\delta_{bd}(\bar{\bar{A}}_{ac} + \breve{A}'_{ac})
+(-1)^{S_{ab}}\delta_{ad}(\bar{\bar{A}}_{bc} + \breve{A}'_{bc})+(-1)^{S_{ab}}\delta_{bc}(\bar{\bar{A}}_{ad} + \breve{A}'_{ad})
\right.\\
&\left.
\nonumber\qquad+ \delta_{ac}(\bar{\bar{A}}_{bd} + \breve{A}'_{bd})- \bar{A}^{S_{ab}}_{ad;cb} - (-1)^{S_{ab}}\bar{A}^{S_{ab}}_{ac;db} - (-1)^{S_{ab}}\bar{A}^{S_{ab}}_{bd;ca} - \bar{A}^{S_{ab}}_{bc;da}- \delta_{bd}\bar{A}^b|^{S_{ab}}_{ac}\right.\\
&\nonumber\left.\qquad -(-1)^{S_{ab}}\delta_{ad}\bar{A}^a|^{S_{ab}}_{bc}
-(-1)^{S_{ab}}\delta_{bc}\bar{A}^b|^{S_{ab}}_{ad}
- \delta_{ac}\bar{A}^a|^{S_{ab}}_{bd}
-(-1)^{S_{ab}}\hat{A}^{S_{ab}}_{ac;db}
-(-1)^{S_{ab}}\hat{A}^{S_{ab}}_{bd;ca}
\right.\\
\nonumber&\left.\qquad-\hat{A}^{S_{ab}}_{ad;cb}
- \hat{A}^{S_{ab}}_{bc;da} -\hat{A}^{S_{ab}}_{cb;ad}
-(-1)^{S_{ab}}\hat{A}^{S_{ab}}_{db;ac}
-(-1)^{S_{ab}}\hat{A}^{S_{ab}}_{ca;bd}
- \hat{A}^{S_{ab}}_{da;bc}
\right]\\
\nonumber&-\frac{1}{\sqrt{{(ab)(cd)}}}\sum_{S'}\sum_{S_{bl}S_{dl}}[S']^2[S_{ab}][S_{cd}][S_{bl}][S_{dl}]\left\{\begin{matrix}S&\frac{1}{2}&S_{dl}\\\frac{1}{2}&\frac{1}{2}&S_{ab}\end{matrix}\right\}
\left\{\begin{matrix}S&\frac{1}{2}&S_{bl}\\\frac{1}{2}&\frac{1}{2}&S_{cd}\end{matrix}\right\}
\left\{\begin{matrix}S&S_{bl}&\frac{1}{2}\\S'&S_{dl}&\frac{1}{2} \end{matrix}\right\}
\\
&\times\left(A^l|^{S'(S_{bl};S_{dl})}_{ad;cb}
+(-1)^{S_{ab}}A^l|^{S'(S_{bl};S_{dl})}_{bd;ca}
+ (-1)^{S_{cd}}A^l|^{S'(S_{bl};S_{dl})}_{ac;db}
+(-1)^{S_{ab}+S_{cd}}A^l|^{S'(S_{bl};S_{dl})}_{bc;da}\right).
\end{align}
There are two types of single-particle contractions:
\begin{align}
\bar{\bar{A}}_{ac} =& \frac{1}{2n_{\text{sp}}}\sum_l \sum_S [S]^2 \sum_{S_{bl}}A^l|^{S(S_{bl};S_{bl})}_{ab;cb}~,\\
\label{breve_prime}\breve{A}'_{ac} =& \frac{1}{2n_{\text{sp}}}\sum_l \sum_S [S]^2\sum_{S_{bl}}(-1)^{S_{bl}}A^l|^{S(S_{bl};S_{bl})}_{al;cl}~.
\end{align}
There is a prime added to the \emph{breve} contraction $\breve{A}$ in Eq.~(\ref{breve_prime}), because it is slightly different than the breve contraction appearing in Eq.~(\ref{breve}). There are also three two-particle contractions, given by:
\begin{align}
\bar{A}^{S'}_{ab;cd}=& \frac{1}{n_{\text{tp}}}\sum_l\sum_{S}[S]^2\sum_{S_{bl}S_{dl}}[S_{bl}][S_{dl}]
\left\{\begin{matrix}\frac{1}{2}&\frac{1}{2}&S'\\S&S_{dl}&\frac{1}{2}\\S_{bl}&\frac{1}{2}&\frac{1}{2}\end{matrix}\right\}
A^l|^{S(S_{bl};S_{dl})}_{ab;cd}~,\\
\bar{A}^l|^{S'}_{ac}=&\frac{1}{n_{\text{tp}}}\sum_{S}[S]^2\sum_{S_{bl}S_{dl}}[S_{bl}][S_{dl}](-1)^{S_{bl}+S_{dl}}
\left\{\begin{matrix}\frac{1}{2}&\frac{1}{2}&S'\\S&S_{dl}&\frac{1}{2}\\S_{bl}&\frac{1}{2}&\frac{1}{2}\end{matrix}\right\}
\sum_b A^l|^{S(S_{bl};S_{dl})}_{ab;cb}~,\\
\hat{A}^{S'}_{ab;cd}=&\frac{1}{n_{\text{tp}}}\sum_{S}[S]^2\sum_{S_{bl}S_{dl}}[S_{bl}][S_{dl}](-1)^{S_{bl}}
\left\{\begin{matrix}\frac{1}{2}&\frac{1}{2}&S'\\S&S_{dl}&\frac{1}{2}\\S_{bl}&\frac{1}{2}&\frac{1}{2}\end{matrix}\right\}
A^b|^{S(S_{bl};S_{dl})}_{ab;cd}~.
\end{align}

\bibliography{phd}

\backmatter
\printindex

\end{document}